\documentclass[runningheads]{llncs}
\usepackage{amsmath}%
\usepackage{pifont}%
\usepackage{amssymb}%
\usepackage{stmaryrd}%
\usepackage{array}%
\usepackage{epic}%
\usepackage[fixamsmath]{mathtools}
\usepackage{cancel}
\usepackage{url}
\usepackage{bigstrut}
\usepackage{wrapfig}
\usepackage{ulem}%

\DeclareMathAlphabet\mathbb{U}{msb}{m}{n}
\DeclareMathAlphabet\mathbfcal{OMS}{cmsy}{b}{n}

\DeclareMathAlphabet{\mathrm}    {OT1}{cmr}{m}{n}
\DeclareMathAlphabet{\mathrmbf}  {OT1}{cmr}{bx}{n}
\DeclareMathAlphabet{\mathrmit}  {OT1}{cmr}{m}{it}
\DeclareMathAlphabet{\mathrmbfit}{OT1}{cmr}{bx}{it}
\DeclareMathAlphabet{\mathsf}    {OT1}{cmss}{m}{n}
\DeclareMathAlphabet{\mathsfbf}  {OT1}{cmss}{bx}{n}
\DeclareMathAlphabet{\mathsfit}  {OT1}{cmss}{m}{sl}
\DeclareMathAlphabet{\mathtt}    {OT1}{cmtt}{m}{n}
\DeclareMathAlphabet{\mathttbf}  {OT1}{cmtt}{bx}{n}
\DeclareMathAlphabet{\mathttit}  {OT1}{cmtt}{m}{it}
\DeclareMathAlphabet{\mathpzc}   {OT1}{pzc}{m}{it}
\newcommand{\keywords}[1]{\par\addvspace\baselineskip\noindent\enspace\ignorespaces{\bfseries Keywords:\,}#1}
\newcommand{\comment}[1]{}
\setcounter{secnumdepth}{3}
\setcounter{tocdepth}{3}
\newenvironment{aside}
{\begin{flushleft}\begin{minipage}{340pt}}
{\end{minipage}\end{flushleft}}




%
%

\includeonly{preface,intro,base,struc,spec,log,db,append}

\makeatletter
\def\dual#1{\expandafter\dual@aux#1\@nil}
\def\dual@aux#1/#2\@nil{\begin{tabular}{@{}c@{}}#1\\#2\end{tabular}}
\makeatother
\begin{document}

\pagestyle{headings}
\title{{\ttfamily FOLE} Equivalence} 
\titlerunning{{\ttfamily FOLE} Equivalence}  
\author{Robert E. Kent}
\institute{Ontologos}
\maketitle

\begin{abstract}
The first-order logical environment {\ttfamily FOLE} 
provides a rigorous and principled approach to 
distributed interoperable first-order information systems.
{\ttfamily FOLE} has been developed in two forms: 
a classification form and an interpretation form.
%
Two papers represent {\ttfamily FOLE} in a {\it{classification form}}
corresponding to ideas of the Information Flow Framework 
(\cite{kent:sem:integ:iff:iswc2003},\cite{kent:sem:integ:iff:dsp04391},\cite{iff}):
the first paper \cite{kent:fole:era:found} provides a foundation 
that connects elements of the {\ttfamily ERA} data model \cite{chen:76}
with components of the first-order logical environment {\ttfamily FOLE}; 
the second paper \cite{kent:fole:era:supstruc} provides a superstructure 
that extends {\ttfamily FOLE} to the formalisms of first-order logic.
The formalisms in the classification form of {\ttfamily FOLE} 
provide an appropriate framework for developing the {\tt{relational calculus}}.
%
Two other papers 
represent {\ttfamily FOLE} in an {\it{interpretation form}}:
the first paper \cite{kent:fole:era:tbl}
develops the notion of the \texttt{FOLE} table
following the relational model \cite{codd:90};
the second paper \cite{kent:fole:era:db}
discusses the notion of a \texttt{FOLE} relational database.
All the operations of the {\sf{relational algebra}} 
have been rigorously developed 
\cite{kent:fole:rel:ops}
using the interpretation form of \texttt{FOLE}.
%
The present study demonstrates that
the classification form of {\ttfamily FOLE}
is informationally equivalent to 
the interpretation form of {\ttfamily FOLE}.
In general,
the {\ttfamily FOLE} representation uses a conceptual structures approach,
that is completely compatible with the theory of institutions, formal concept analysis and information flow.
\keywords{structures, specifications, sound logics, databases.}
\end{abstract}

\tableofcontents


\newpage
\section{Preface.}\label{sec:preface}

The architecture for 
the first-order logical environment
{\ttfamily FOLE}
is displayed in Fig.~\ref{fig:big:pix}.
%
\footnote{
{\ttfamily FOLE} is described in the following papers:
``The {\ttfamily ERA} of {\ttfamily FOLE}: Foundation''     \cite{kent:fole:era:found},
``The {\ttfamily ERA} of {\ttfamily FOLE}: Superstructure'' \cite{kent:fole:era:supstruc},
``The {\ttfamily FOLE} Table''                              \cite{kent:fole:era:tbl},
``The {\ttfamily FOLE} Database''                           \cite{kent:fole:era:db}, 
``{\ttfamily FOLE} Equivalence''                            [this paper], and
``Relational Operations in \texttt{FOLE}''                  \cite{kent:fole:rel:ops}
.}
%
The left side of Fig.~\ref{fig:big:pix} represents {\ttfamily FOLE} in classification form.
The right side of Fig.~\ref{fig:big:pix} represents {\ttfamily FOLE} in interpretation form.

\begin{figure}
\begin{center}
{{\begin{tabular}{c}
\setlength{\unitlength}{0.47pt}
\begin{picture}(160,160)(0,-40)
%
\put(-35,80){\makebox(0,0){\footnotesize{\bf{Sound Logic}}}}
\put(-30,0){\makebox(0,0){\footnotesize{\bf{Structure}}}}
\put(-30,-40){\makebox(0,0){
$\underset{\text{\footnotesize\it{classification form}}}{\underbrace{\text{\hspace{60pt}}}}$}}
\put(180,-40){\makebox(0,0){
$\underset{\text{\footnotesize\it{interpretation form}}}{\underbrace{\text{\hspace{60pt}}}}$}}
\put(190,80){\makebox(0,0){\footnotesize{\bf{Database}}}}
\put(190,0){\makebox(0,0){\footnotesize{\bf{Table}}}}
\put(-188,83){\makebox(0,0){\scriptsize{\tt{Relational}}}}
\put(-188,63){\makebox(0,0){\scriptsize{\tt{Calculus}}}}
\put(310,15){\makebox(0,0){\scriptsize{\sf{Relational}}}}
\put(310,-5){\makebox(0,0){\scriptsize{\sf{Operations}}}}
%
\qbezier(20,95)(80,115)(140,95)\put(140,95){\vector(3,-1){0}}
\put(84,80){\makebox(0,0){\scriptsize{$\equiv$}}}
\qbezier(20,65)(80,45)(140,65)\put(20,65){\vector(-3,1){0}}
%
\put(225,12){\oval(30,30)[tl]}
\put(210,7){\vector(0,-1){0}}
\put(225,27){\line(1,0){10}}
\put(235,7){\oval(40,40)[r]}
\put(225,-13){\line(1,0){10}}
\put(-105,65){\oval(30,30)[br]}
\put(-90,69){\vector(0,1){0}}
\put(-115,70){\oval(40,40)[l]}
\put(-115,90){\line(1,0){10}}
\put(-115,50){\line(1,0){10}}
%
%
%
\put(-29,15){\line(0,1){45}}
\put(-30,15){\line(0,1){45}}
\put(-31,15){\line(0,1){45}}
\put(189,15){\line(0,1){45}}
\put(190,15){\line(0,1){45}}
\put(191,15){\line(0,1){45}}
\end{picture}
\end{tabular}}}
\end{center}
\caption{{\footnotesize{{\ttfamily FOLE} Architecture}}}
\label{fig:big:pix}
\end{figure}
%



\paragraph{Classification form.}\label{par:class:form}

An ontology defines the primitives with which to model the knowledge resources for a community of discourse
(Gruber~\cite{gruber:eds2009}).
These primitives, which consist of classes, relationships and properties,
are represented by the \emph{entity-relationship-attribute} {\ttfamily ERA} model
(Chen~\cite{chen:76}). 
An ontology uses formal axioms to constrain the interpretation of these primitives. 
In short, an ontology specifies a logical theory.
\comment{
\begin{flushleft}
{\fbox{\fbox{\footnotesize{\begin{minipage}{330pt}
{\underline{\textsf{What are the elements of an ontology?}}}
An ontology is a model that clarifies and specifies a set of meanings in a formal language. 
Those meanings reflect the oncologist's understanding of the target subject matter, 
regarding the kinds of things that exist
and how those things are related to each other.
A taxonomy has a backbone with multiple link types each having a precise meaning.
%
\comment{
\newline
{\underline{\textsf{What are the elements of a database schema?}}}
A database schema defines the structure of a database in a formal language. 
There are three kinds:
conceptual, logical and physical. 
While we recognize that there are important differences, 
for convenience, 
we will conform to common usage and use the term ''database schema'' loosely to refer
to all of these.
}
\end{minipage}}}}}
\end{flushleft}
}

Two papers
provide a rigorous mathematical representation
for the {\ttfamily ERA} data model in particular,
and ontologies in general,
within the first-order logical environment {\ttfamily FOLE}.
%
These papers
represent the formalism and semantics of (many-sorted) first-order logic in a classification form
corresponding to ideas discussed in the Information Flow Framework 
(IFF~\cite{iff}).
The paper (Kent~\cite{kent:fole:era:found}) 
develops the notion of a \texttt{FOLE} \underline{\emph{structure}};
this
provides a {\sl foundation} 
that connects elements of the {\ttfamily ERA} data model with components of the first-order logical environment {\ttfamily FOLE}.
The paper (Kent \cite{kent:fole:era:supstruc}) 
develops the notion of a \texttt{FOLE} \underline{\emph{sound logic}};
this
provides a {\sl superstructure} 
that extends {\ttfamily FOLE} to the formalisms of first-order logic.
%

%


\newpage
\paragraph{Interpretation form.}\label{par:interpret:form}
%
The \emph{relational model} (Codd~\cite{codd:90}) 
is an approach for the information management of a ``community of discourse''
\footnote{Examples include:
an academic discipline;
a commercial enterprise;
the genetics research community,
library science;
the legal profession;
etc.}
using the semantics and formalism of (many-sorted) first-order predicate logic. 
The \emph{first-order logical environment} 
\texttt{FOLE}
(Kent~\cite{kent:iccs2013})
is a category-theoretic representation for this logic.
%
\footnote{Following the original discussion of {\ttfamily FOLE} (Kent~\cite{kent:iccs2013}), 
we use 
the term \emph{mathematical context} for the concept of a category,
the term \emph{passage} for the concept of a functor, and
the term \emph{bridge} for the concept of a natural transformation.
A context represents some ``species of mathematical structure''. 
A passage is a ``natural construction on structures of one species, 
yielding structures of another species'' 
(Goguen \cite{goguen:cm91}).}
%
Hence, the relational model can naturally be represented in \texttt{FOLE}.
A series of papers is being developed 
to provide a rigorous mathematical basis for {\ttfamily FOLE} 
by defining 
an architectural semantics for the relational data model.
%

Two papers provide a precise mathematical basis for \texttt{FOLE} interpretation.
Both of these papers expand on material found in the paper (Kent~\cite{kent:db:sem}).
The paper 
(Kent \cite{kent:fole:era:tbl})
develops the notion of a \texttt{FOLE} \underline{\emph{table}}
following the relational model (Codd~\cite{codd:90}).
The paper 
(Kent \cite{kent:fole:era:db})
develops the notion of a \texttt{FOLE} relational \underline{\emph{database}},
thus providing the foundation for
the 
semantics of first-order logical/relational database systems.
%


\paragraph{Equivalence.}\label{par:equiv}

%
%
The current paper 
``{\ttfamily FOLE} Equivalence''
defines an interpretation 
of {\ttfamily FOLE} in terms of the transformational passage,
first described in (Kent~\cite{kent:iccs2013}),
from the classification form of first-order logic
to an \underline{\emph{equivalent}} interpretation form,
thereby defining the formalism and semantics of first-order logical/relational database systems. 
%
Although 
the {\it{classification form}} 
follows the entity-relationship-attribute data model of Chen~\cite{chen:76},
the {\it{interpretation form}} incorporates the relational data model of Codd~\cite{codd:90}.
%
%
Relational operations are described in the paper 
(Kent \cite{kent:fole:rel:ops}).
The relational calculus will be discussed in a future paper.

%
\newpage
\section{Introduction.}\label{sec:intro}


\comment{
\begin{wrapfigure}{r}{3cm}
\begin{center}
\setlength{\unitlength}{0.36pt}
\begin{picture}(120,100)(3,30)
\put(60.5,80.5){\makebox(0,0){\tiny{$\equiv$}}}
\put(60.5,79.8){\makebox(0,0){\huge{$\circ$}}}
\put(60.5,79.8){\makebox(0,0){\huge{$\bullet$}}}
\qbezier(100,87)(60,102)(20,87)
\put(20,87){\vector(-4,-1){0}}
\qbezier(20,75)(60,60)(100,75)
\put(100,75){\vector(4,1){0}}
\put(0.3,80){\makebox(0,0){\huge{$\circ$}}}
\put(120.3,80){\makebox(0,0){\huge{$\circ$}}}
\put(2,10){\makebox(0,0){\huge{$\circ$}}}
\put(122,10){\makebox(0,0){\huge{$\circ$}}}
\put(0,68){\line(0,-1){40}}
\put(120,68){\line(0,-1){40}}
\put(12,-25){\scriptsize{\textsf{architecture}}}
\end{picture}
\end{center}
\caption{\texttt{FOLE}}
\label{hierarchy}
\end{wrapfigure}
This paper defines 
the equivalence between two forms of 
the first-order logical
environment {\ttfamily FOLE},
the classification form 
and
the interpretation form. 
Equivalence sits between 
the top of both forms,
briefly pictured in Fig.\ref{hierarchy}, 
and more completely in Fig.\ref{fig:big:pix}
in the preface \S\ref{sec:preface}.
The classification form hierarchy 
(left hand side of Fig.\ref{hierarchy})
consists of 
``The {\ttfamily FOLE} Foundation''
at the bottom
and
``The {\ttfamily FOLE} Superstructure''
at the top.
The interpretation form hierarchy 
(right hand side of Fig.\ref{hierarchy})
consists of 
``The {\ttfamily FOLE} Table''
at the bottom
and
``The {\ttfamily FOLE} Database''
at the top.
}

\begin{flushleft}
{{\setlength{\extrarowheight}{1.6pt}
{{
{\begin{tabular}[t]{l@{\hspace{20pt}}l}
{{{\begin{minipage}{230pt}
This paper defines 
the equivalence between two forms of 
the first-order logical
environment {\ttfamily FOLE},
the classification form 
and
the interpretation form. 
Equivalence sits between 
the top of both forms,
pictured briefly on the right,
and more completely in Fig.\ref{fig:big:pix}
in the preface \S\ref{sec:preface}.
The classification form hierarchy 
(left hand side)
consists of 
``The {\ttfamily FOLE} Foundation''
at the bottom
and
``The {\ttfamily FOLE} Superstructure''
at the top.
The interpretation form hierarchy 
(right hand side)
consists of 
``The {\ttfamily FOLE} Table''
at the bottom
and
``The {\ttfamily FOLE} Database''
at the top.
%
\end{minipage}}}}
&
{{\begin{tabular}{c@{\hspace{5pt}}}
\setlength{\unitlength}{0.36pt}
\begin{picture}(180,120)(-50,-20)
\put(60.5,80.5){\makebox(0,0){\tiny{$\equiv$}}}
\put(60.5,79.8){\makebox(0,0){\huge{$\bullet$}}}
%
\put(-60,93){\makebox(0,0){\tiny{\tt{Relational}}}}
\put(-60,73){\makebox(0,0){\tiny{\tt{Calculus}}}}
\put(180,10){\makebox(0,0){\tiny{\sf{Relational}}}}
\put(180,-10){\makebox(0,0){\tiny{\sf{Algebra}}}}
%
\qbezier(100,87)(60,102)(20,87)
\put(20,87){\vector(-4,-1){0}}
\qbezier(20,75)(60,60)(100,75)
\put(100,75){\vector(4,1){0}}
\put(0.3,80){\makebox(0,0){\huge{$\circ$}}}
\put(120.3,80){\makebox(0,0){\huge{$\circ$}}}
\put(1,10){\makebox(0,0){\huge{$\circ$}}}
\put(122,10){\makebox(0,0){\huge{$\circ$}}}
%
\put(0,68){\line(0,-1){40}}
\put(120,68){\line(0,-1){40}}
\put(46,-15){\scriptsize{\ttfamily FOLE}}
\put(22,-35){\scriptsize{\textsf{architecture}}}
\end{picture}
\end{tabular}}}
\end{tabular}}
}}}}
\end{flushleft}
%



\subsection{First Order Logical Environment}\label{subsec:FOLE}

%
This paper relates two forms of the first order logic environment {\ttfamily FOLE}: 
the classification form 
and
the interpretation form.
These two forms are shown to be ``informationally equivalent''
%
\footnote{See Prop.\,12 in \S\,A.1 of the paper
\cite{kent:fole:era:tbl}
``The {\ttfamily FOLE} Table''.}
%
to each other. 
Both forms have their own advantages:
the classification form allows formalism 
%
\footnote{See the paper \cite{kent:fole:era:supstruc}
``The {\ttfamily ERA} of {\ttfamily FOLE}: Superstructure''.}
%
to be easily defined,
whereas
the interpretation form allows relational operations 
%
\footnote{See the paper \cite{kent:fole:rel:ops}
``Relational Operations in {\ttfamily FOLE}''.}
to be easily defined.
The classification form of {\ttfamily FOLE} is realized in the notion of a (lax) sound logic,
whereas 
the interpretation form of {\ttfamily FOLE} is realized in the notion of a relational database.
To demonstrate the equivalence of 
the classification form
with
the interpretation form,
(lax) sound logics 
and relational databases
are shown
to be in a reflective relationship
(Fig.\,\ref{fig:equiv}).
%
\comment{See (Key) Prop.\,\ref{sat:tbl:interp}
in \S\,\ref{sub:sec:spec:sat}.}
%
Although a {\ttfamily FOLE} relational database can be taken as a whole,
a {\ttfamily FOLE} sound logic resolves into the two parts of
structure and specification,
plus the relationship of satisfaction.
%
\footnote{The satisfaction relation 
corresponds to the ``truth classification'' in Barwise and Seligman~\cite{barwise:seligman:97},
where the conceptual intent $\mathcal{M}^{\mathcal{S}}$
corresponds to the ``theory of $\mathcal{M}$''.}
%
%
%

In more detail
(Tbl.\,\ref{tbl:equiv:brief}),
the classification form of \texttt{FOLE} is represented by a (lax) sound logic
consisting of two components,
a (lax) structure and an abstract specification,
which are connected by a satisfaction relation.
The (lax) structure consists of 
a (lax) entity classification, 
an attribute classification (typed domain),
a schema, and 
one of the two equivalent descriptions:
either a tabular map
from predicates to
tables;
or
a bridge consisting of an indexed collection
of table tuple functions.
The abstract specification is the same as a the database schema.
Satisfaction is defined by
table interpretation using constraints.
%
The interpretation form of \texttt{FOLE} is represented by 
a relational database with constant type domain,
a tabular interpretation diagram,
whose projective components
consist of
a database schema,
a key diagram,
and
a tuple bridge.
The tabular interpretation diagram
is equivalent to the satisfaction relation between 
the (lax) structure 
and
the abstract specification. 
\footnote{See (Key) Prop.\,\ref{sat:tbl:interp}
in \S\,\ref{sub:sec:spec:sat}.}
%

%
\comment{
\newline....................................................................................
\item 
To show equivalence,
a relational database is carefully resolve into its components,
and each is related to the two parts of a sound logic 
plus the satisfaction relation.
\item 
The tabular interpretation passage
is equivalent to the satisfaction relation between the components:
the (lax) structure (database interpretation function)
\underline{and}
the abstract specification (database schema).
\item 
The tuple bridge corresponds to the (lax) structure of a sound logic.
Satisfaction is equivalent to tabular interpretation.
The relational database schema corresponds to 
the abstract specification of a sound logic.
and its satisfaction.
\comment{As explained below,
information equivalence is centered upon 
the table-relation reflection in the {\ttfamily FOLE}.}
}
%


%
%
\comment{
A relational database with constant type domain
consists of 
the fixed type domain, 
a database schema, a key diagram, and a tuple bridge.}
%
%


%


%

\comment{
\mbox{}\newline
....................................................
....................................................
\mbox{}\newline
\mbox{}\newline
....................................................
....................................................
\mbox{}\newline
\mbox{}\newline
....................................................
....................................................
\mbox{}\newline
}

\begin{figure}
\begin{center}
{{\begin{tabular}{c}
\setlength{\unitlength}{0.75pt}
\begin{picture}(180,40)(-5,0)
\put(0,1){\makebox(0,0){\normalsize{$\mathrmbf{Snd}$}}}
\put(80,2){\makebox(0,0){\normalsize{$\mathring{\mathrmbf{Snd}}$}}}
\put(160,0){\makebox(0,0){\normalsize{$\mathrmbf{Db}$}}}
\put(40,10){\makebox(0,0){\footnotesize{$\mathrmbfit{lax}$}}}
\put(120,12){\makebox(0,0){\footnotesize{$\mathrmbfit{im}$}}}
\put(120,0){\makebox(0,0){\tiny{$\dashv$}}}
\put(120,-12){\makebox(0,0){\footnotesize{$\mathrmbfit{inc}$}}}
%
\put(25,0){\vector(1,0){30}}
\put(100,-5){\vector(1,0){40}}
\put(140,5){\vector(-1,0){40}}
%
\end{picture}
\end{tabular}}}
\end{center}
\caption{{\ttfamily FOLE} Equivalence}
\label{fig:equiv}
\end{figure}

\comment{\begin{table}
\begin{center}
{\footnotesize{\setlength{\extrarowheight}{2pt}
{{\begin{tabular}{@{\hspace{5pt}}
r
@{\hspace{10pt}
$\xrightarrow[\text{relativization}]{\text{key-set}}$
\hspace{10pt}}
c
@{\hspace{10pt}
$\underset{\textstyle{\xrightarrow[\text{satisfaction}]{}}}{\xleftarrow{\text{components}}}$
\hspace{10pt}}
l
@{\hspace{5pt}}}
$\underset{(classification)}{\text{sound logic}}$
& 
$\underset{(interpretation)}{\text{sound logic}}$  
& 
database
\end{tabular}}}}}
\end{center}
\caption{Overview:  {\ttfamily FOLE} Equivalence}
\label{tbl:overview}
\end{table}}
%

\subsection{Overview.}\label{sub:sec:overview}


%
\begin{table}
\begin{center}
{\footnotesize{
\begin{tabular}{@{\hspace{5pt}}p{300pt}@{\hspace{5pt}}} 
\begin{description}
\item[$\mathring{\mathrmbf{Snd}}\xleftarrow{\mathrmbfit{im}}\mathrmbf{Db}$:] 
A database 
resolves into
a sound logic:
the specification is the database schema projection;
the (lax) structure is the constraint-free aspect of a database; and
tabular interpretation
defines satisfaction.
\item[$\mathring{\mathrmbf{Snd}}\xrightarrow{\mathrmbfit{inc}}\mathrmbf{Db}$:] 
A sound logic 
assembles 
a database:
the specification provides (is the same as) the database schema;
the (lax) structure provides a constraint-free interpretation
consisting of a collection of tables indexed by predicates; and
satisfaction defines the tabular interpretation.
%
\end{description}
\end{tabular}
}}
\end{center}
\caption{\texttt{FOLE} Equivalence in Brief} 
\label{tbl:equiv:brief}
\end{table}
Briefly,
\S\,\ref{sec:log:environ}
 defines some basic concepts of \texttt{FOLE}
(formula, interpretation and satisfaction);
\S\,\ref{sec:struc}
defines structures and converts them to a lax variety; 
\S\,\ref{sec:spec}
defines formal and abstract specifications,
and satisfaction of specifications by structures; 
\S\,\ref{sec:log}
defines (lax) sound logics
and converts these to 
relational databases; and
\S\,\ref{sec:db}
defines relational databases
and converts these to 
(lax) sound logics.
In overview,
Table \ref{tbl:figs:tbls} lists the figures and tables used in this paper.


\paragraph{Basic Concepts.}
%
\S\,\ref{sec:log:environ}
reviews some basic concepts 
(Tbl.\,\ref{tbl:fole:refs})
of the {\ttfamily FOLE} logical environment
from \S\,2 of the paper \cite{kent:fole:era:supstruc}
``Superstructure''.
This covers the concepts of formalism, tabular interpretation, and satisfaction.
\S\,\ref{sec:fmlism}
recapitulates the definition of formalism in 
\cite{kent:fole:era:supstruc}.
Formalism is defined in terms of the logical Boolean operations
and quantifiers
using syntactic flow
(Tbl.\,\ref{tbl:syn:flow}).  
\S\,\ref{sub:sec:interp:fml:constr}
extends to tables the definition of logical interpretation in 
\cite{kent:fole:era:supstruc}
using Boolean operations within fibers and quantifier flow between fibers
(Tbl.\,\ref{tbl:fml:int}).
It represents 
the syntactic flow operators of Tbl.\,\ref{tbl:syn:flow}
by their associated semantic flow operators
(Tbl.\,\ref{tbl:fml-sem:refl}).
\S\,\ref{sub:sec:sat} 
reviews the notion of satisfaction  in 
\cite{kent:fole:era:supstruc}, 
which links formalism and semantics via satisfaction. 
Satisfaction is defined for both sequents and constraints.
Satisfaction is reflective 
(Fig.\,\ref{fig:rel:tbl:refl})
between relations and tables.


\paragraph{Structures.}
\S\,\ref{sec:struc}
reviews and extends the basic concepts of {\ttfamily FOLE} structure
 and structure morphism
from \S\,4.3 of the paper \cite{kent:fole:era:found}``Foundation''.
In \S\,\ref{sub:sec:struc}, 
Def.\,\ref{def:struc}
defines a {\ttfamily FOLE} structure
with both a classification and interpretation form.
Fig.\,\ref{fig:fole:struc} illustrates this idea.
Note the global key sets 
used in the entity classification of this definition.
In \S\,\ref{sub:sec:struc:mor},
Def.\,\ref{def:struc:mor}
defines an extended notion of {\ttfamily FOLE} structure morphism
by adding internal bridges.
Fig.\,\ref{fig:fole:struc:mor} illustrates this idea.
Note the global key function 
in the entity infomorphism of this definition.
\S\,\ref{sub:sec:trans} defines a transition 
from strict to lax structures and structure morphisms.
This is accomplished by changing 
the entity classification and entity infomorphism to lax versions,
thus changing
the global key sets and key function
to local versions.
Note\;\ref{note:lax:ent:info}, 
Prop.\;\ref{prop:key} (Key) and Cor.\;\ref{cor:reduct:tup}
define a step-by-step process for this transition.
Fig.\,\ref{fig:transition}
illustrates the steps of this transition.
\S\,\ref{sub:sec:struc:lax}
and
\S\,\ref{sub:sec:struc:mor:lax}
introduce the idea of  
lax structures
and
lax structure morphisms.
We generalize to lax structures and lax structure morphisms 
by eliminating the global key sets and key functions.
Def.\,\ref{def:struc:lax} 
defines a lax {\ttfamily FOLE} structure
in terms of
either
a tabular interpretation function
or a tuple bridge.
Prop.\,\ref{prop:struc:2:|db|}
shows that a lax structure is the same as the constraint-free aspect of a database.
%
Def.\,\ref{def:lax:struc:mor}
defines a (lax) structure morphism.
Cor.\,\ref{cor:tbl:func}
shows that a (lax) structure morphism
defines a tabular interpretation bridge function.
Fig.\,\ref{fig:lax:struc:mor}
illustrates a (lax) structure morphism. 

\paragraph{Specifications.}
%
\S\,\ref{sec:spec}
reviews and extends the notion of a {\ttfamily FOLE} specification.
Formal {\ttfamily FOLE} specifications are covered 
in \S\,3.1 of the paper \cite{kent:fole:era:supstruc}
``Superstructure''.
Here we extend to abstract {\ttfamily FOLE} specifications.
\S\,\ref{sub:sec:spec}
defines 
a formal specification in Def.\,\ref{def:fml:spec}
and
an abstract specification in Def.\,\ref{def:abs:spec}.
Every abstract specification is closely linked to a companion formal specification.
\S\,\ref{sub:sec:spec:mor}
defines a abstract specification morphism in Def.\,\ref{def:abs:spec:mor}
and illustrates this in Fig.\,\ref{fig:spec:mor:list}.
Abstract specifications and morphisms 
are the same as 
database schemas and morphisms.
\S\,\ref{sub:sec:spec:sat}
defines specification satisfaction.
Def.\,\ref{satis:fml:spec}
defines satisfaction for formal specifications, whereas
Def.\,\ref{satis:abs:spec}
defines satisfaction for abstract specifications 
in terms of satisfaction for rheir formal companion.
Assuming satisfaction holds,
Def.\,\ref{def:abs:tbl:pass}
defines the abstract table passage
as the composition
of 
the object-identical companion passage,
the inclusion into the structure conceptual intent,
and
the 
relation interpretation passage
of Lem.\,\ref{lem:interp:pass}
in
\S\,\ref{sub:sec:sat}.
Prop.\,\ref{sat:tbl:interp} is \textbf{key}: 
it proves that 
satisfaction is equivalent to tabular interpretation,
This is central,
since it binds together the two parts of a sound logic,
its structure and its specification.
It
underpins the core idea that sound logics are equivalent to relational databases.
Fig.\,\ref{fig:spec:pass} 
illustrates and compares 
specification passages,
both in general and with satisfaction.
Tbl.\,\ref{tbl:spec:compare} 
illustrates and compares 
specifications with the sound logics
defined in \S\,\ref{sec:log}.

\paragraph{Sound Logics.}
%
\S\,\ref{sec:log}
reviews the notion of a {\ttfamily FOLE} sound logic.
\S\,\ref{sub:sub:sec:log}
discusses sound logics.
Def.\,\ref{def:snd:log}
defines (lax) sound logics
in terms of satisfaction between its two components: 
structure and abstract specification.
Prop.\,\ref{prop:snd:log:2:db}
proves that any (lax) sound logic
defines a relational database.
Tbl.\,\ref{tbl:snd:log}
lists and illustrates the various components of a (lax) sound logic.
\S\,\ref{sub:sub:sec:log:mor}
discusses sound logic morphisms.
Def.\,\ref{def:snd:log:mor}
defines the notion of a (lax) sound logic morphism.
Prop.\,\ref{prop:sat:preserve}
shows how (lax) sound logic morphisms
preserve and link
satisfaction
between their source and target logics.
The large figure Fig.\,\ref{fig:nat:comb}
expands in detail the naturality used in
this proposition.
Prop.\,\ref{prop:snd:log:mor:2:db:mor}
proves that a (lax) sound logic morphism
defines a database morphism.
Thm.\,\ref{thm:snd:log:2:db}
proves existence of a passage
from the context of (lax) \texttt{FOLE} sound logics 
  to the context of \texttt{FOLE} relational databases.
%
%
Tbl.\,\ref{tbl:snd:log:mor}
illustrates the components of a (lax) sound logic morphism.

\paragraph{Relational Databases.}
%
\S\,\ref{sec:db}
reviews the notion of a {\ttfamily FOLE} relational database.
\S\,\ref{sub:sec:db:obj} 
discusses {\ttfamily FOLE} relational databases.
%
Def.\,\ref{def:db}
defines a relational database
as 
an interpretation diagram
from predicates 
to tables for a fixed type domain.
Fig.\,\ref{fig:db} illustrates this definition.
Def.\,\ref{def:db:proj}
defines relational databases
using table projection passages.
%
Tbl.\,\ref{tbl:db}
lists the components of a \texttt{FOLE} relational database.
Prop.\,\ref{prop:db:to:struc}
shows that the constraint-free aspect of a
database
is the same as 
a (lax) structure.
%
Prop.\,\ref{prop:db:2:snd:log}
proves that a \texttt{FOLE} database defines a (lax) \texttt{FOLE} sound logic.
\S\,\ref{sub:sec:db:mor}
discusses \texttt{FOLE} relational database morphisms.
Def.\,\ref{def:db:mor}
defines a database morphism
to consist of 
a tabular passage 
between source/target predicate contexts, 
an infomorphism 
between source/target type domains, and 
a bridge connecting the
source/target tabular interpretations.
Fig.\,\ref{fig:db:mor} 
illustrates this definition.
%
Def.\,\ref{def:db:mor:proj}
defines relational database morphisms
using table projection passages.
%
Tbl.\,\ref{tbl:db:mor}
lists 
%
and
Fig.\,\ref{fig:db:mor:typ:dom}
illustrates the components of a \texttt{FOLE} relational database.
%
%
Prop.\,\ref{prop:db:mor:to:struc:mor}
shows that the constraint-free aspect of a
 \texttt{FOLE} database morphism
is the same as 
a (lax) \texttt{FOLE} structure morphism.
%
%
Prop.\,\ref{prop:db:mor:2:snd:log:mor}
proves that a \texttt{FOLE} database morphism defines a (lax) \texttt{FOLE} sound logic morphism.
%
%
Thm.\,\ref{thm:db:2:snd:log}
proves existence of a passage
from the context of \texttt{FOLE} relational databases
to the context of (lax) \texttt{FOLE} sound logics.
Thm.\,\ref{thm:db:iso:snd:log}
proves that the contexts of 
\texttt{FOLE} relational databases
and
\texttt{FOLE} (lax) sound logics are ``informationally equivelent''
by way of a reflection.

\paragraph{Appendix.}
\S\,\ref{sub:sec:adj:components}
lists the \texttt{FOLE} components used in this paper.
Tbl.\,\ref{tbl:adjoints}
in \S\,\ref{sub:sec:adj:components}
lists the equivalent and adjoint
versions of \texttt{FOLE} bridges.
Tbl.\,\ref{tbl:fole:morph}
in \S\,\ref{sub:sec:adj:components}
lists the \texttt{FOLE} Morphisms.
\S\,\ref{sub:sec:class:info}
reviews the concepts of classifications and infomorphisms.
Both are extended to lax versions.
%



\comment{
Sec.2 reviews the basic concepts of the FOLE logical environment
from \S.2 of the paper superstructure.
***
Sec.2 covers the basic concepts of formalism, tabular interpretation, and satisfaction.
***
Sec.2.1 recapitulates the definition of formalism in \S.2.1.1 of the paper superstructure. Here we define formulas in terms of the logical operations.
***
Sec.2.2 extends to tables the definition of logical interpretation in \S.2.2.1 of the paper superstructure using Boolean operations within fibers and logical quantifier flow between fibers.
***
Sec.2.3 reviews the notion of satisfaction  in \S.2.3 of the paper superstructure
links formalism and semantics via sataisfaction. Here we define satasfaction for sequents, constraints and (formal) specifications.
}  


%


%
\begin{table}
\begin{center}
{{\scriptsize{\setlength{\extrarowheight}{1.6pt}

{\begin{tabular}{l@{\hspace{10pt}}l}
\\
{\fbox{
\begin{tabular}[t]{|l@{\hspace{2pt}}l@{\hspace{2pt}:\hspace{8pt}}l|}
\hline
\S\ref{sec:preface}
&
Fig.\,\ref{fig:big:pix}
&
\texttt{FOLE} Architecture
\\\hline\hline
\S\ref{sec:intro} 
& Fig.\,\ref{fig:equiv} & \texttt{FOLE} Equivalence
\\\hline\hline
\S\ref{sec:log:environ} & Fig.\,\ref{fig:rel:tbl:refl} &  Table-Relation Reflection
\\\hline\hline
\S\ref{sec:struc} & Fig.\,\ref{fig:fole:struc} & \texttt{FOLE} Structure
\\\cline{2-2}
 & Fig.\,\ref{fig:fole:struc:mor} & Structure Morphism
\\\cline{2-2}
 & Fig.\,\ref{fig:transition} & Transition
\\\cline{2-2}
 & Fig.\,\ref{fig:lax:struc:mor} & Lax Structure Morphism
\\\hline\hline
\S\ref{sec:spec} & Fig.\,\ref{fig:spec:mor:list} & Specification Morphisms
\\\cline{2-2}
 & Fig.\,\ref{fig:spec:pass} & Specification Passages
\\\hline\hline
\S\ref{sec:log} & Fig.\,\ref{fig:nat:comb}
& Naturality Combined
\\\hline\hline
\S\ref{sec:db} & Fig.\,\ref{fig:db} & Database: Type Domain
\\\cline{2-2} & Fig.\,\ref{fig:db:mor} & Database Morphism
\\\cline{2-2} & Fig.\,\ref{fig:incl:bridge:tbl} & Inclusion Bridge: Tables
\\\cline{2-2} & Fig.\,\ref{fig:db:mor:typ:dom} & Database Morphism: Type Domain
\\\hline\hline
\S\ref{sec:append} & Fig.\,\ref{fig:info:mor} & Infomorphism
\\ & Fig.\,\ref{fig:lax:info:mor} & Lax Infomorphism
\\\hline
\end{tabular}}}
&
{\fbox{
\begin{tabular}[t]{|l@{\hspace{2pt}}l@{\hspace{2pt}:\hspace{8pt}}l|}
\hline
\S\,\ref{sec:intro} 
& Tbl.\,\ref{tbl:equiv:brief}     & \texttt{FOLE} Equivalence in Brief
\\\cline{2-2}
 & Tbl.\,\ref{tbl:figs:tbls} & Figures and Tables
\\\hline\hline
\S\,\ref{sec:log:environ}
   & Tbl.\,\ref{tbl:fole:refs}    & Basic Concepts 
\\\cline{2-2}
 & Tbl.\,\ref{tbl:syn:flow}     & Syntactic Flow
\\\cline{2-2}
 & Tbl.\,\ref{tbl:fml:int}      & Formula Interpretation 
\\\cline{2-2}
 & Tbl.\,\ref{tbl:fml-sem:refl} & Formal/Semantics Reflection 
\\\hline\hline
\S\,\ref{sec:spec}
   & Tbl.\,\ref{tbl:spec:compare} & Comparisons 
\\\hline
\S\,\ref{sec:log}
    & Tbl.\,\ref{tbl:snd:log}     & Sound Logic 
\\  & Tbl.\,\ref{tbl:snd:log:mor} & Sound Logic Morphism 
\\\hline
\S\,\ref{sec:db}
   & Tbl.\,\ref{tbl:db}     & Relational Database  
\\ & Tbl.\,\ref{tbl:db:mor} & Database Morphism 
\\\hline\hline
\S\,\ref{sec:append}
& Tbl.\,\ref{tbl:adjoints} & \texttt{FOLE} Adjoint Bridges
\\\newline
& Tbl.\,\ref{tbl:fole:morph} & \texttt{FOLE} Morphisms
\\\newline
& Tbl.\,\ref{tbl:lax:info} & Lax Infomorphisms
%
\\\hline
\end{tabular} 
}}
\end{tabular}}}}}
\end{center}
\caption{Figures and Tables}
\label{tbl:figs:tbls}
\end{table}
%


\newpage
\section{Basic Concepts}\label{sec:log:environ}

%

The basic concepts of \texttt{FOLE} are listed in
Tbl.\,\ref{tbl:fole:refs}.
\footnote{Satisfaction for (abstract) specifications
is define in
Def.\,\ref{satis:abs:spec} of \S\,\ref{sub:sec:spec:sat}.}
Except for the formula interpretation in Tbl.5,
we can refer all the formal material to 
the paper 
``The {\ttfamily ERA} of {\ttfamily FOLE}: Superstructure''
\cite{kent:fole:era:supstruc}.


%
\begin{table}
\begin{center}
{\footnotesize{
\setlength{\extrarowheight}{3.5pt}
\begin{tabular}{|
@{\hspace{5pt}}
l
@{\hspace{15pt}}
l
@{\hspace{5pt}}
|}
\hline
\S\,2.1 & Formalism (formulas, sequents, constraints)
\\
& Axioms (Tbl.\,3)
\\
\S\,2.2 & Semantics (formula interpretation)
\\
& Formal/Semantics Reflection (Tbl.\,6)
\\
\S\,2.3 & Satisfaction (sequents, constraints)
\\\hline
\end{tabular}}}
\end{center}
\caption{Basic Concepts}
\label{tbl:fole:refs}
\end{table}
%

\subsection{Formalism}\label{sec:fmlism}



\paragraph{Formulas.}

Let $\mathcal{S} = {\langle{R,\sigma,X}\rangle}$ be a fixed schema with 
a set of entity types $R$,
a set of sorts (attribute types) $X$
and a signature function
$R\xrightarrow{\sigma}\mathrmbf{List}(X)$.
The set of entity types $R$ is partitioned
$R = \bigcup_{{\langle{I,s}\rangle}\in\mathrmbf{List}(X)}R(I,s)$
into fibers,
where
$R(I,s){\;\subseteq\;}R$
is the fiber (subset) of all entity types with signature ${\langle{I,s}\rangle}$.
These are called ${\langle{I,s}\rangle}$-ary entity types. 
\footnote{This is a slight misnomer, 
since the signature of $r$ is $\sigma(R)={\langle{I,s}\rangle}$,
whereas the arity of $r$ is $\alpha(R)=I$.}
Here,
we follow the tuple, domain, and relation calculi from database theory,
using logical operations to extend the set of basic entity types $R$
to a set of defined entity types $\widehat{R}$ called formulas or queries.
%
%
%

%
Formulas, which are defined entity types corresponding to queries, 
are constructed by using logical connectives within a fiber and logical flow 
along signature morphisms between fibers (Tbl.~\ref{tbl:syn:flow}).
\footnote{An $\mathcal{S}$-signature morphism ${\langle{I',s'}\rangle}\xrightarrow{h}{\langle{I,s}\rangle}$
in $\mathrmbf{List}(X)$
is an arity function $I'\xrightarrow{\,h\,}I$ that preserves signature $s' = h{\,\cdot\,}s$.}
\footnote{The full version of \texttt{FOLE} 
(Kent~\cite{kent:iccs2013})
defines syntactic flow along term vectors.}
Logical connectives on formulas
express intuitive notions of natural language operations on the interpretation (extent, view) of formulas.
These connectives include:
conjunction, disjunction, negation, implication, etc.
For any signature ${\langle{I,s}\rangle}$,
let $\widehat{R}(I,s) \subseteq \widehat{R}$ 
denote the set of all formulas with this signature.
There are called ${\langle{I,s}\rangle}$-ary formulas.
The set of $\mathcal{S}$-formulas is partitioned as
$\widehat{R} = \bigcup_{{\langle{I,s}\rangle}\in\mathrmbf{List}(X)}\widehat{R}(I,s)$.
\begin{flushleft}
{{\footnotesize{\begin{minipage}{345pt}
\begin{itemize}
\item[\textsf{fiber:}]
Let ${\langle{I,s}\rangle}$ be any signature.
Any ${\langle{I,s}\rangle}$-ary entity type (relation symbol) is an 
${\langle{I,s}\rangle}$-ary formula;
that is, $R(I,s) \subseteq \widehat{R}(I,s)$.
For a pair of ${\langle{I,s}\rangle}$-ary formulas $\varphi$ and $\psi$, 
there are the following ${\langle{I,s}\rangle}$-ary formulas:
meet $(\varphi{\,\wedge\,}\psi)$, join $(\varphi{\,\vee\,}\psi)$,
implication $(\varphi{\,\rightarrowtriangle\,}\psi)$ and 
difference $(\varphi{\,\setminus\,}\psi)$. 
For ${\langle{I,s}\rangle}$-ary formula $\varphi$,
there is an ${\langle{I,s}\rangle}$-ary negation formula $\neg\varphi$.
There are top/bottom ${\langle{I,s}\rangle}$-ary formulas $\top_{{\langle{I,s}\rangle}}$ and $\bot_{{\langle{I,s}\rangle}}$.
\item[\textsf{flow:}]
Let ${\langle{I',s'}\rangle}\xrightarrow{h}{\langle{I,s}\rangle}$ be any signature morphism.
For ${\langle{I,s}\rangle}$-ary formula $\varphi$,
there are ${\langle{I',s'}\rangle}$-ary existentially/universally quantified formulas
${\scriptstyle\sum}_{h}(\varphi)$ and ${\scriptstyle\prod}_{h}(\varphi)$.
\footnote{For any index $i \in I$,
quantification for the complement inclusion signature function
${\langle{I\setminus\{i\},s'}\rangle}\xrightarrow{\text{inc}_{i}}{\langle{I,s}\rangle}$
gives the traditional syntactic quantifiers
$\forall_{i}\varphi,\exists_{i}{\varphi}$.}
For an ${\langle{I',s'}\rangle}$-ary formula $\varphi'$,
there is an ${\langle{I,s}\rangle}$-ary substitution formula
${h}^{\ast}(\varphi') = \varphi'(t)$.
\end{itemize}
\end{minipage}}}}
\end{flushleft}
\begin{table}[h]
\begin{center}
{{\footnotesize{\setlength{\extrarowheight}{2pt}{$\begin{array}{c}
{\langle{I',s'}\rangle}\xrightarrow{h}{\langle{I,s}\rangle}
\\
{\setlength{\unitlength}{0.6pt}\begin{picture}(120,60)(0,-25)
\put(0,0){\makebox(0,0){\footnotesize{$\widehat{R}(I',s')$}}}
\put(120,0){\makebox(0,0){\footnotesize{$\widehat{R}(I,s)$}}}
\put(60,20){\makebox(0,0){\scriptsize{${\scriptstyle\sum}_{{h}}$}}}
\put(62,2){\makebox(0,0){\scriptsize{${h}^{\ast}$}}}
\put(60,-22){\makebox(0,0){\scriptsize{${\scriptstyle\prod}_{{h}}$}}}
\put(85,12){\vector(-1,0){50}}
\qbezier(35,0)(43,0)(51,0)\qbezier(69,0)(77,0)(85,0)\put(85,0){\vector(1,0){0}}
\put(85,-12){\vector(-1,0){50}}
\end{picture}}
\end{array}$}}}}
\end{center}
\caption{Syntactic Flow}
\label{tbl:syn:flow}
\end{table}

\comment{
\begin{center}
{\fbox{\begin{tabular}{c}
\setlength{\unitlength}{0.78pt}
\begin{picture}(120,10)(0,-10)
\put(0,0){\makebox(0,0){\footnotesize{$\mathrmbfit{ext}_{\widehat{\mathcal{E}}}(I',s')$}}}
\put(120,0){\makebox(0,0){\footnotesize{$\mathrmbfit{ext}_{\widehat{\mathcal{E}}}(I,s)$}}}
\put(60,20){\makebox(0,0){\scriptsize{${\scriptstyle\sum}_{{h}}$}}}
\put(62,2){\makebox(0,0){\scriptsize{${h}^{\ast}$}}}
\put(60,-22){\makebox(0,0){\scriptsize{${\scriptstyle\prod}_{{h}}$}}}
\put(85,12){\vector(-1,0){50}}
\qbezier(35,0)(43,0)(51,0)\qbezier(69,0)(77,0)(85,0)\put(85,0){\vector(1,0){0}}
\put(85,-12){\vector(-1,0){50}}
\end{picture}
\end{tabular}}}
\end{center}
}

In general,
we regard formulas to be constructed entities or queries (defining views and interpretations; i.e., relations/tables),
not assertions.
Contrast this with the use of ``asserted formulas'' below.
For example, 
in a corporation data model the conjunction
$(\mathtt{Salaried}{\,\wedge\,}\mathtt{Married})$
is not an assertion, but a constructed entity type or query that defines the view
``salaried employees that are married''.
Formulas form a schema 
$\widehat{\mathcal{S}} = {\langle{\widehat{R},\widehat{\sigma},X}\rangle}$ 
that extends $\mathcal{S}$
with $\mathcal{S}$-formulas as entity types:
with the inductive definitions above,
the set of entity types 
is extended
to a set of logical formulas
$R\xhookrightarrow{\mathrmit{inc}_{\mathcal{S}}}\widehat{R}$,
and
the entity type signature function 
is extended
to a formula signature function $\widehat{R}\xrightarrow{\;\widehat{\sigma}\;}\mathrmbf{List}(X)$
with
$\sigma = \mathrmit{inc}_{\mathcal{S}}{\;\cdot\;}\widehat{\sigma}$.

\comment{

A schema morphism 
$\mathcal{S}_{2}\xRightarrow{{\langle{r,f}\rangle}}\mathcal{S}_{1}$
can be extended to a formula schema morphism 
$\mathrmbfit{fmla}(r,f) = {\langle{\hat{r},f}\rangle} :
\mathrmbfit{fmla}(\mathcal{S}_{2}) = {\langle{\widehat{R}_{2},\hat{\sigma}_{2},X_{2}}\rangle} \Longrightarrow
{\langle{\widehat{R}_{1},\hat{\sigma}_{1},X_{1}}\rangle} = \mathrmbfit{fmla}(\mathcal{S}_{1})$.
The formula function $\hat{r} : \widehat{R}_{2} \rightarrow \widehat{R}_{1}$,
which satisfies the condition
$\mathrmit{inc}_{\mathcal{S}_{2}}{\;\cdot\;}\hat{r} = r{\;\cdot\;}\mathrmit{inc}_{\mathcal{S}_{1}}$,
is recursively defined in Tbl.~\ref{tbl:fmla:fn}.
We can show, 
by induction on source formulas $\varphi_{2}\in\widehat{R}_{2}$,
that signatures are preserved
$\widehat{r}{\,\cdot\,}\widehat{\sigma}_{1} = \widehat{\sigma}_{2}{\,\cdot\,}{\scriptstyle\sum}_{f}$.
\footnote{This translation is the formal part of an ``interpretation in first-order logic'' 
(Barwise and Seligman~\cite{barwise:seligman:97}). 
The semantic part is the fiber passage of structures
$\mathrmbfit{struc}^{\curlywedge}_{{\langle{r,f}\rangle}} : \mathrmbf{Struc}(\mathcal{S}_{2})\leftarrow\mathrmbf{Struc}(\mathcal{S}_{1})$
along the schema morphism $\mathcal{S}_{2}\xRightarrow{{\langle{r,f}\rangle}}\mathcal{S}_{1}$
(see the appendix of Kent~\cite{kent:fole:era:found}).
The precise meaning of ``interpretation in first-order logic'',
as an infomorphism between truth classifications,
is given below by Prop.~\ref{prop:ins} on institutions and Prop.~\ref{prop:log:env} on logical environments.}
\begin{table}
\begin{center}
{\footnotesize{\setlength{\extrarowheight}{2pt}\begin{tabular}{|r@{\hspace{20pt}}l@{\hspace{10pt}$=$\hspace{10pt}}l|}
\multicolumn{3}{l}{\rule[-6pt]{0pt}{12pt}\textsf{fiber:} signature ${\langle{I_{2},s_{2}}\rangle}$}
\\ \hline
\textit{operator} & \multicolumn{1}{l}{} & 
\\
entity type
& $\hat{r}(r_{2})$
& $r(r_{2})$
\\
meet
& $\hat{r}(\varphi_{2}{\,\wedge_{{\langle{I_{2},s_{2}}\rangle}}\,}\psi_{2})$
& $\hat{r}(\varphi_{2}){\,\wedge_{{\scriptscriptstyle\sum}_{f}(I_{2},s_{2})}\,}\hat{r}(\psi_{2})$
\\
join
& $\hat{r}(\varphi_{2}{\,\vee_{{\langle{I_{2},s_{2}}\rangle}}\,}\psi_{2})$
& $\hat{r}(\varphi_{2}){\,\vee_{{\scriptscriptstyle\sum}_{f}(I_{2},s_{2})}\,}\hat{r}(\psi_{2})$
\\
negation
& $\hat{r}(\neg_{{\langle{I_{2},s_{2}}\rangle}}\,\varphi)$
& $\neg_{{\scriptscriptstyle\sum}_{f}(I_{2},s_{2})}\,\hat{r}(\varphi)$
\\
implication
& $\hat{r}(\varphi{\,\rightarrowtriangle_{{\langle{I_{2},s_{2}}\rangle}}\,}\psi)$
& $\hat{r}(\varphi){\,\rightarrowtriangle_{{\scriptscriptstyle\sum}_{f}(I_{2},s_{2})}\,}\hat{r}(\psi)$
\\
difference
& $\hat{r}(\varphi{\,\setminus_{{\langle{I_{2},s_{2}}\rangle}}\,}\psi)$
& $\hat{r}(\varphi){\,\setminus_{{\scriptscriptstyle\sum}_{f}(I_{2},s_{2})}\,}\hat{r}(\psi)$
\\ \hline
\multicolumn{3}{l}{}
\\
\multicolumn{3}{l}{\rule[-6pt]{0pt}{12pt}\textsf{flow:} signature morphism 
${\langle{I_{2}',s_{2}'}\rangle} \xrightarrow{h} {\langle{I_{2},s_{2}}\rangle}$}
\\ \hline
\textit{operator} & \multicolumn{1}{l}{} & 
\\
existential
& $\hat{r}({\scriptstyle\sum}_{h}(\varphi_{2}))$
& ${\scriptstyle\sum}_{h}(\hat{r}(\varphi_{2}))$ 
\\
universal
& $\hat{r}({\scriptstyle\prod}_{h}(\varphi_{2}))$
& ${\scriptstyle\prod}_{h}(\hat{r}(\varphi_{2}))$ 
\\
substitution
& $\hat{r}({h}^{\ast}(\varphi_{2}'))$
& ${h}^{\ast}(\hat{r}(\varphi_{2}'))$ 
\\ \hline
\end{tabular}}}
\end{center}
\caption{Formula Function}
\label{tbl:fmla:fn}
\end{table}
Hence,
there is an idempotent formula passage
$\mathrmbf{Sch}\xrightarrow{\mathrmbfit{fmla}}\mathrmbf{Sch}$
on schemas.

}


\paragraph{Sequents.}

To make an assertion about things, 
we use a sequent.
Let $\mathcal{S} = {\langle{R,\sigma,X}\rangle}$ be a schema.
A (binary) $\mathcal{S}$-sequent 
\footnote{Since {\ttfamily FOLE} formulas are not just types,
but are constructed using, 
inter alia, 
conjunction and disjunction operations, 
we can restrict attention to \emph{binary} sequents.} 
is a pair of formulas
$(\varphi,\psi){\,\in\,}\widehat{R}{\,\times}\widehat{R}$
with the same signature 
$\widehat{\sigma}(\varphi) = {\langle{I,s}\rangle} = \widehat{\sigma}(\psi)$. 
To be explicit that we are making an assertion,
we use the turnstile notation $\varphi{\;\vdash\;}\psi$ for a sequent.
Then, 
we are claiming that a specialization-generalization relationship exists between the formulas $\varphi$ and $\psi$.
A asserted sequent $\varphi{\;\vdash\;}\psi$ expresses interpretation widening,
with the interpretation (view) of $\varphi$ required to be within the interpretation (view) of $\psi$.
%
%
%
An \emph{asserted} formula $\varphi\in\widehat{R}$ 
can be identified with the sequent $\top{\;\vdash\;}\varphi$ in $\widehat{R}{\,\times}\widehat{R}$,
which asserts the universal view ``all entities'' of signature $\sigma(\varphi)={\langle{I,s}\rangle}$.
%
Hence,
from an entailment viewpoint we can say that ``formulas are sequents''.
In the opposite direction,
there is an enfolding map
$\widehat{R}{\,\times}\widehat{R}\rightarrow\widehat{R}$
that maps $\mathcal{S}$-sequents to 
$\mathcal{S}$-formulas
$({\varphi}{\;\vdash\;}{\psi})\mapsto(\varphi{\,\rightarrowtriangle\,}\psi)$.
%
The axioms 
in Tbl.\,3 of the paper
\cite{kent:fole:era:supstruc}
make sequents into an order.
Let $\mathrmbf{Con}_{\mathcal{S}}(I,s)={\langle{\widehat{R}(I,s),\vdash}\rangle}$ denote the fiber preorder of $\mathcal{S}$-formulas.


\paragraph{Constraints.}

Sequents only connect formulas within a particular fiber: 
an $\mathcal{S}$-sequent $\varphi{\;\vdash\;}\psi$
is between two formulas with the same signature
$\widehat{\sigma}(\varphi) = {\langle{I,s}\rangle} = \widehat{\sigma}(\psi)$,
and hence between elements in the same fiber $\varphi,\psi{\;\in\;}\widehat{R}(I,s)$. 
We now define a useful notion that connects formulas across fibers.
An $\mathcal{S}$-constraint 
$\varphi'{\,\xrightarrow{h\,}\,}\varphi$
consists of a signature morphism
$\widehat{\sigma}(\varphi')={\langle{I',s'}\rangle}\xrightarrow{h}{\langle{I,s}\rangle}=\widehat{\sigma}(\varphi)$
and 
a binary sequent ${\scriptstyle\sum}_{h}(\varphi){\;\vdash\;}\varphi'$ in $\mathrmbf{Con}_{\mathcal{S}}(I',s')$,
or equivalently by the axioms  
in Tbl.\,3 of the paper
\cite{kent:fole:era:supstruc},
a binary sequent $\varphi{\;\vdash\;}{h}^{\ast}(\varphi')$ in $\mathrmbf{Con}_{\mathcal{S}}(I,s)$.
Hence,
a constraint requires 
that the interpretation of the $h^{\mathrm{th}}$-projection of $\varphi$ be within the interpretation of $\varphi'$,
or equivalently,
that the interpretation of $\varphi$ be within the interpretation of the $h^{\mathrm{th}}$-substitution of $\varphi'$. 
\footnote{In some sense,
this formula/constraint approach to formalism turns the tuple calculus upside down,
with atoms in the tuple calculus becoming constraints here.}

Given any schema $\mathcal{S}$,
an $\mathcal{S}$-constraint ${\varphi'}\xrightarrow{h}{\varphi}$
has source formula ${\varphi'}$ and target formula ${\varphi}$.
Constraints are closed under composition:
${\varphi''}\xrightarrow{h'}{\varphi'}\xrightarrow{h}{\varphi}
={\varphi''}\xrightarrow{h'{\,\cdot\,}h}{\varphi}$.
Let $\mathrmbf{Cons}(\mathcal{S})$ 
%
\footnote{$\mathrmbf{Cons}(\mathcal{S})$,
which stands for ``$\mathcal{S}$-constraints'',
was defined in the paper
\cite{kent:fole:era:supstruc}.}
%
denote the mathematical context, 
whose set of objects are $\mathcal{S}$-formulas 
and whose set of morphisms are $\mathcal{S}$-constraints.
This context is fibered over the projection passage
$\mathrmbf{Cons}(\mathcal{S})\rightarrow\mathrmbf{List}(X)
:\bigl(\varphi'{\,\xrightarrow{h\,}\,}\varphi\bigr)
\mapsto
\bigl({\langle{I',s'}\rangle}\xrightarrow{h}{\langle{I,s}\rangle}\bigr)$.
Formula formation is idempotent:
$\widehat{{\widehat{R}}} = \widehat{R}$
and
$\mathrmbf{Cons}(\widehat{\mathcal{S}})
= \mathrmbf{Cons}(\mathcal{S})$.
%
\footnote{Formulas form a schema 
$\widehat{\mathcal{S}} = {\langle{\widehat{R},\widehat{\sigma},X}\rangle}$ 
that extends $\mathcal{S}$
with $\mathcal{S}$-formulas as entity types (relation names):
by induction,
the set of entity types 
is extended
to a set of logical formulas
$R\xhookrightarrow{\mathrmit{inc}_{\mathcal{S}}}\widehat{R}$,
and
the entity type signature function 
is extended
to a formula signature function $\widehat{R}\xrightarrow{\;\widehat{\sigma}\;}\mathrmbf{List}(X)$
with
$\sigma = \mathrmit{inc}_{\mathcal{S}}{\;\cdot\;}\widehat{\sigma}$.}
%

Sequents are special cases of constraints:
a sequent $\varphi'{\;\vdash\;}\varphi$
asserts a constraint ${\varphi}\xrightarrow{1}{\varphi'}$
that uses an identity signature morphism.
\footnote{A constraint in the fiber $\mathrmbf{Con}_{\mathcal{S}}(I,s)$
uses an identity signature morphism ${\varphi}\xrightarrow{1}{\varphi'}$,
and hence is a sequent $\varphi'{\;\vdash\;}\varphi$.}
Since an asserted formula $\varphi$ 
can be identified with the sequent $\top{\;\vdash\;}\varphi$,
it can also be identified with the constraint ${\varphi}\xrightarrow{1}{\top}$.
Thus,
from an entailment viewpoint we can say that
``formulas are sequents are constraints''.
In the opposite direction,
there are enfolding maps
that map $\mathcal{S}$-constraints to $\mathcal{S}$-formulas:
either
$\bigl({\varphi'}\xrightarrow{h}{\varphi}\bigr)\mapsto\bigl({\scriptstyle\sum}_{h}(\varphi){\,\rightarrowtriangle\,}\varphi'\bigr)$ 
with signature ${\langle{I',s'}\rangle}$, 
or
$\bigl({\varphi'}\xrightarrow{h}{\varphi}\bigr)\mapsto\bigl(\varphi{\,\rightarrowtriangle\,}{h}^{\ast}(\varphi')\bigr)$ 
with signature ${\langle{I,s}\rangle}$.


%
%
%
%

%
\comment{
%
\begin{sloppypar}
Given any schema morphism $\mathcal{S}_{2}\stackrel{{\langle{r,f}\rangle}}{\Longrightarrow}\mathcal{S}_{1}$,
there is a constraint passage
$\mathrmbf{Cons}(\mathcal{S}_{2})\xrightarrow{\mathrmbfit{cons}_{{\langle{r,f}\rangle}}}\mathrmbf{Cons}(\mathcal{S}_{1})$.
%
An $\mathcal{S}_{2}$-formula $\varphi_{2}\in\mathrmbfit{fmla}(\mathcal{S}_{2})$
is mapped to 
the $\mathcal{S}_{1}$-formula $\widehat{r}(\varphi_{2})\in\mathrmbfit{fmla}(\mathcal{S}_{1})$.
An $\mathcal{S}_{2}$-constraint $\varphi_{2}'\xrightarrow{h}\varphi_{2}$ in ${\mathrmbf{Cons}}(\mathcal{S}_{2})$
with $\mathcal{S}_{2}$-enfolding $\varphi_{2}{\,\rightarrowtriangle\,}{h}^{\ast}(\varphi_{2}')$
in $\mathrmbf{Con}_{\mathcal{S}_{2}}(I_{2},s_{2})$
is mapped to 
the $\mathcal{S}_{1}$-constraint $\widehat{r}(\varphi_{2}')\xrightarrow{h}\widehat{r}(\varphi_{2})$ in ${\mathrmbf{Cons}}(\mathcal{S}_{1})$
with $\mathcal{S}_{1}$-enfolding 
$\widehat{r}(\varphi_{2}){\,\rightarrowtriangle\,}{h}^{\ast}(\widehat{r}(\varphi_{2}'))
=\widehat{r}(\varphi_{2}){\,\rightarrowtriangle\,}\widehat{r}({h}^{\ast}(\varphi_{2}'))
=\widehat{r}\bigl(\varphi_{2}{\,\rightarrowtriangle\,}{h}^{\ast}(\varphi_{2}')\bigr)$
in $\mathrmbf{Con}_{\mathcal{S}_{1}}({\scriptstyle\sum}_{f}(I_{2},s_{2}))$
using Tbl.~\ref{tbl:fmla:fn}.
The passage $\mathrmbf{Sch}{\;\xrightarrow{\mathrmbfit{cons}}\;}\mathrmbf{Cxt}$ forms an indexed context of constraints.
\end{sloppypar}
%
}
%

%
\comment{
\begin{table}
\begin{center}
{\scriptsize{\setlength{\extrarowheight}{3.5pt}\begin{tabular}{|r@{\hspace{3pt}:\hspace{15pt}}l|}
\multicolumn{2}{l}{$\text{schema:}\;\mathcal{S}$} \\ 
\multicolumn{2}{l}{\rule[-6pt]{0pt}{16pt}\textsf{fiber:} signature ${\langle{I,s}\rangle}$}
\\ \hline
reflexivity 
&
$\varphi{\;\vdash\,}\varphi$
\\
transitivity
&
$\varphi{\;\vdash\;}\varphi'$ and $\varphi'{\;\vdash\;}\varphi''$ implies $\varphi{\;\vdash\;}\varphi''$
\\ \hline\hline
meet
&
$\psi{\;\vdash\;}(\varphi{\,\wedge\,}\varphi')$ 
iff 
$\psi{\;\vdash\;}\varphi$ 
and
$\psi{\;\vdash\;}\varphi'$ 
\\
\multicolumn{1}{|c}{}
&
$(\varphi{\,\wedge\,}\varphi'){\;\vdash\;}\varphi$, 
$(\varphi{\,\wedge\,}\varphi'){\;\vdash\;}\varphi'$ 
\\ \hline
join
&
$(\varphi{\,\vee\,}\varphi'){\;\vdash\;}\psi$ 
iff 
$\varphi{\;\vdash\;}\psi$ 
and
$\varphi'{\;\vdash\;}\psi$ 
\\
\multicolumn{1}{|c}{}
&
$\varphi'{\;\vdash\;}(\varphi{\,\vee\,}\varphi)$, 
$\varphi'{\;\vdash\;}(\varphi{\,\vee\,}\varphi')$ 
\\ \hline
implication
&
$(\varphi{\;\wedge\;}\varphi'){\;\vdash\;}\psi$
iff
$\varphi{\;\vdash\;}(\varphi'{\rightarrowtriangle\,}\psi)$
\\ \hline
negation
&
$\neg\,(\neg\,(\varphi)){\;\vdash\;}\varphi$ 
\\ \hline
top and bottom
&
$\bot_{{\langle{I,s}\rangle}}{\;\vdash\;}\varphi{\;\vdash\;}\top_{{\langle{I,s}\rangle}}$
\\ \hline
\multicolumn{2}{l}{\rule[-6pt]{0pt}{20pt}\textsf{flow:} signature morphism 
${\langle{I',s'}\rangle} \xrightarrow{h} {\langle{I,s}\rangle}$}
\\ \hline
${\scriptstyle\sum}_{h}$-monotonicity
&
$\varphi{\;\vdash\;}\psi$ 
implies
${\scriptstyle\sum}_{h}(\varphi){\;\vdash'\;}{\scriptstyle\sum}_{h}(\psi)$
\\
${h}^{\ast}$-monotonicity
&
$\varphi'{\;\vdash'\;}\psi'$ 
implies
${h}^{\ast}(\varphi'){\;\vdash\;}{h}^{\ast}(\psi')$
\\
${\scriptstyle\prod}_{h}$-monotonicity
&
$\varphi{\;\vdash\;}\psi$ 
implies
${\scriptstyle\prod}_{h}(\varphi){\;\vdash'\;}{\scriptstyle\prod}_{h}(\psi)$
\\ \hline\hline
adjointness
&
${\scriptstyle\sum}_{h}(\varphi){\;\vdash'\;}\varphi'$
iff
$\varphi{\;\vdash\;}{h}^{\ast}(\varphi')$
\\
\multicolumn{1}{|c}{}
&
$\varphi{\;\vdash\;}{h}^{\ast}({\scriptstyle\sum}_{h}(\varphi))$,
${\scriptstyle\sum}_{h}({h}^{\ast}(\varphi')){\;\vdash'\;}\varphi'$
\\ \hline
\multicolumn{2}{l}{\rule[-6pt]{0pt}{24pt}$\text{schema morphism:}\;\mathcal{S}_{2}\stackrel{{\langle{r,f}\rangle}}{\Longrightarrow}\mathcal{S}_{1}$} 
\\ \hline 
$\widehat{r}$-monotonicity
&
$\varphi_{2}{\,\vdash_{2}\,}\psi_{2}$ 
implies
$\widehat{r}(\varphi_{2}){\;\vdash_{1}\;}\widehat{r}(\psi_{2})$ 
\\ \hline
\end{tabular}}}
\end{center}
\caption{Axioms}
\label{tbl:axioms}
\end{table}
}
%

\newpage
\subsection{Interpretation}\label{sub:sec:interp:fml:constr}




The logical semantics of a structure $\mathcal{M}$ resides in its core,
which is defined by its formula interpretation function 
\begin{equation}\label{eqn:fmla:interp}
\mathrmbfit{T}_{\mathcal{M}} : \widehat{R} \rightarrow \mathrmbf{Tbl}(\mathcal{A}).
\end{equation}
%
%
%
The formula interpretation function,
%
which extends the traditional interpretation function 
$T_{\mathcal{M}} : R\rightarrow\mathrmbf{Tbl}(\mathcal{A})$
of a structure $\mathcal{M}$
(Def.\,\ref{def:struc} in \S\,\ref{sec:struc}),
is defined in Tbl.~\ref{tbl:fml:int}
by induction within fibers and flow between fibers.
To respect the logical semantics,
the formal flow operators 
(${\scriptstyle\sum}_{h}$,${\scriptstyle\prod}_{h}$,${h}^{\ast}$)
for existential/universal quantification and substitution
reflect the semantic flow operators
(${\scriptstyle\sum}_{h}$,${\scriptstyle\prod}_{h}$,${h}^{\ast}$)
via interpretation
(Tbl.~\ref{tbl:fml-sem:refl}).
\comment{This tabular/relational reflection 
is a special case of Prop.~5.59 in the text {\itshape Topos Theory}~\cite{johnstone:77},
suggesting that all of the development in this paper could be done in an arbitrary topos,
or even in a more general setting.}
%
\footnote{The morphic aspect of Tbl.~\ref{tbl:fml-sem:refl} 
anticipates the definition of satisfaction
in \S.\ref{sub:sec:sat}:
for sequents within the fibers
$\mathrmbfit{T}_{\mathcal{A}}(I,s)$
and
for constraints using flow between fibers;
${\bigl\langle{{\scriptscriptstyle\sum}_{h}
{\;\dashv\;}{h}^{\ast}
{\;\dashv\;}{\scriptstyle\prod}_{{h}}
\bigr\rangle}}:
\mathrmbf{Tbl}_{\mathcal{A}}(I',s')
{\;\leftrightarrows\;}
\mathrmbf{Tbl}_{\mathcal{A}}(I,s)$.}
%

%
\begin{description}
\item[\textsf{fiber}:] 
For each signature ${\langle{I,s}\rangle}{\,\in\,}\mathrmbf{List}(X)$, 
the fiber function
$\widehat{R}(I,s)
\xrightarrow{\mathrmbfit{T}^{\mathcal{M}}_{{\langle{I\!,s}\rangle}}}
\mathrmbf{Tbl}_{\mathcal{A}}(I,s)$
is defined in the top part of Tbl.~\ref{tbl:fml:int}
by induction on formulas.
%
At the base step
(first line in the top of Tbl.~\ref{tbl:fml:int}),
it defines the formula interpretation of an entity type $r\in\widehat{R}$ 
as the traditional interpretation of that type:
the
$\mathcal{A}$-table
$T_{\mathcal{M}}(r) 
= {\langle{\sigma(r),\mathrmbfit{K}(r),\tau_{r}}\rangle} 
\in \mathrmbf{Tbl}(\mathcal{A})$
(Fig.\,\ref{fig:fole:struc}
of
\S\ref{sec:struc}).
\comment{
whose signature is the
$X$-signature $\sigma(r)
\in \mathrmbf{List}(X)$,
whose key set is the extent
$\mathrmbfit{K}(r) = \mathrmbfit{ext}_{\mathcal{E}}(r)$
in the entity classification
$\mathcal{E}={\langle{R,\mathrmbfit{K}}\rangle}$
, and whose 
tuple function $\mathrmbfit{K}(r)\xrightarrow{\tau_{r}}\mathrmbfit{tup}_{\mathcal{A}}(\sigma(r))$,
which is a restriction of the tuple function $K\xrightarrow{\tau}\mathrmbf{List}(Y)$.
}
%
At the induction step
(remaining lines in the top of Tbl.~\ref{tbl:fml:int}),
it represents the logical operations by their associated boolean operations: 
meet of interpretations for conjunction,
join of interpretations for disjunction,
etc.;
see the boolean operations defined in \S\,3.2 of the paper
``Relational Operations in {\ttfamily FOLE}''
\cite{kent:fole:rel:ops}.
\newline
\item[\textsf{flow}:] 
(bottom Tbl.~\ref{tbl:fml:int}), 
%
It represents the syntactic flow operators 
in 
Tbl.~\ref{tbl:syn:flow} 
by their associated semantic flow operators;
see the adjoint flow for fixed type domain $\mathcal{A}$ 
defined in \S\,3.3.1 of the paper
``Relational Operations in {\ttfamily FOLE}''
\cite{kent:fole:rel:ops}.
%
\footnote{The paper \cite{kent:fole:rel:ops} used only two of the three operators
in the fiber adjunction of tables
\begin{equation}\label{def:fbr:adj:sign:mor}
{{\begin{picture}(120,10)(0,-4)
\put(60,0){\makebox(0,0){\footnotesize{$
\mathrmbf{Tbl}_{\mathcal{A}}(I',s')
{\;\xleftarrow
[{\bigl\langle{
{\scriptscriptstyle\sum}_{h}{\;\dashv\;}{h}^{\ast}
\bigr\rangle}}]
{{\bigl\langle{\acute{\mathrmbfit{tbl}}_{\mathcal{A}}(h)
{\;\dashv\;}
\grave{\mathrmbfit{tbl}}_{\mathcal{A}}(h)}\bigr\rangle}}
\;}
\mathrmbf{Tbl}_{\mathcal{A}}(I,s)
$}}}
\end{picture}}}
\end{equation}
defined by composition/pullback:
the left adjoint existential operation (called projection) and 
the right adjoint substitution or inverse image operation (called inflation).}
%
The 
function 
$\widehat{R}\xrightarrow{\mathrmbfit{T}_{\mathcal{M}}}\mathrmbf{Tbl}(\mathcal{A})$
is the parallel combination of its fiber functions
{\footnotesize{
$
\biggl\{ 
\widehat{R}(I,s)
\xrightarrow{\mathrmbfit{T}^{\mathcal{M}}_{{\langle{I\!,s}\rangle}}}
\mathrmbf{Tbl}_{\mathcal{A}}(I,s)
\mid {\langle{I,s}\rangle}{\,\in\,}\mathrmbf{List}(X) 
\biggr\}
$}\normalsize}
defined above.
The function
$\mathrmbfit{T}_{\mathcal{M}}$ 
is defined in terms of these fiber functions and the flow operators.
%
\end{description}
\begin{table}
\begin{center}
{\footnotesize{\setlength{\extrarowheight}{2pt}
\begin{tabular}{|@{\hspace{5pt}}r@{\hspace{20pt}}l@{\hspace{5pt}}|}
\multicolumn{2}{l}{\rule[-6pt]{0pt}{1pt}\textsf{fiber:} signature ${\langle{I,s}\rangle}$ 
with extent (tuple set) 
$\mathrmbfit{tup}_{\mathcal{A}}(I,s)
{\,=\,}\prod_{i\in{I}}\,\mathcal{A}_{s_{\!i}}$}
\\ \hline
\textit{operator} 
&
\textit{definition}
\hfill
$\mathrmbfit{T}_{\mathcal{M}}(\varphi)
\in
\mathrmbf{Tbl}_{\mathcal{A}}(I,s)
= \bigl(\mathrmbf{Set}{\,\downarrow\,}\mathrmbfit{tup}_{\mathcal{A}}(I,s)\bigr)
$
\\
entity type & 
$\mathrmbfit{T}_{\mathcal{M}}(r)
=T_{\mathcal{M}}(r) 
= {\langle{\sigma(r),\mathrmbfit{K}(r),\tau_{r}}\rangle}$ 
\hfill
$r{\,\in\,}R(I,s)\subseteq\widehat{R}(I,s)$
\\
meet & 
$\mathrmbfit{T}_{\mathcal{M}}(\varphi{\,\wedge\,}\psi)
=\mathrmbfit{T}_{\mathcal{M}}(\varphi){\,\wedge\,}\mathrmbfit{T}_{\mathcal{M}}(\psi)$
\hfill
$\varphi,\psi{\,\in\,}\widehat{R}(I,s)$
\\
join &
$\mathrmbfit{T}_{\mathcal{M}}(\varphi{\,\vee\,}\psi)
=\mathrmbfit{T}_{\mathcal{M}}(\varphi){\,\vee\,}\mathrmbfit{T}_{\mathcal{M}}(\psi)$
\\
top & 
$\mathrmbfit{T}_{\mathcal{M}}({\scriptstyle\top_{{\langle{I,s}\rangle}}})
={\langle{\mathrmbfit{tup}_{\mathcal{A}}(I,s),1}\rangle}$
\hfill$\text{terminal table}$
\\
bottom & 
$\mathrmbfit{T}_{\mathcal{M}}({\scriptstyle\bot_{{\langle{I,s}\rangle}}})
={\langle{\emptyset,0}\rangle}$
\hfill$\text{initial table}$
\\
negation & 
$\mathrmbfit{T}_{\mathcal{M}}(\neg\varphi)=
\mathrmbfit{T}_{\mathcal{M}}(\top_{{\langle{I,s}\rangle}}{\,\setminus\,}\varphi) =
\mathrmbfit{T}_{\mathcal{M}}(\top_{{\langle{I,s}\rangle}})
{-}\mathrmbfit{T}_{\mathcal{M}}(\varphi)$
\\
implication & 
$\mathrmbfit{T}_{\mathcal{M}}(\varphi{\,\rightarrowtriangle\,}\psi)
=\mathrmbfit{T}_{\mathcal{M}}(\varphi){\,\rightarrowtriangle\,}\mathrmbfit{T}_{\mathcal{M}}(\psi)
=(\neg\mathrmbfit{T}_{\mathcal{M}}(\varphi)){\,\vee\,}\mathrmbfit{T}_{\mathcal{M}}(\psi)$
\\
difference & 
$\mathrmbfit{T}_{\mathcal{M}}(\varphi{\,\setminus\,}\psi)
= \mathrmbfit{T}_{\mathcal{M}}(\varphi){-}\mathrmbfit{T}_{\mathcal{M}}(\psi)
=\mathrmbfit{T}_{\mathcal{M}}(\varphi){\,\wedge\,}(\neg\mathrmbfit{T}_{\mathcal{M}}(\psi))$
\rule[-5pt]{0pt}{10pt}
\\ \hline
\multicolumn{2}{l}{\rule{0pt}{20pt}\textsf{flow:} signature morphism
${\langle{I',s'}\rangle}\xrightarrow{h}{\langle{I,s}\rangle}$}
\\
\multicolumn{2}{l}{\rule[-6pt]{0pt}{1pt}with tuple map
$\mathrmbfit{tup}_{\mathcal{A}}(I',s')\xleftarrow{\mathrmbfit{tup}_{\mathcal{A}}(h)}\mathrmbfit{tup}_{\mathcal{A}}(I,s)$}
\\ \hline
\textit{operator} & \multicolumn{1}{l|}{\textit{definition}}
\\
existential
& $\mathrmbfit{T}_{\mathcal{M}}({\scriptstyle\sum}_{h}(\varphi))
={\scriptstyle\sum}_{h}(\mathrmbfit{T}_{\mathcal{M}}(\varphi))$
\hfill
$\varphi{\,\in\,}\widehat{R}(I,s)$
\\
universal
& $\mathrmbfit{T}_{\mathcal{M}}({\scriptstyle\prod}_{h}(\varphi))
={\scriptstyle\prod}_{h}(\mathrmbfit{T}_{\mathcal{M}}(\varphi))$
\\
substitution
& 
$\mathrmbfit{T}_{\mathcal{M}}({h}^{\ast}(\varphi'))
={h}^{\ast}(\mathrmbfit{I}_{\mathcal{M}}(\varphi'))$
\rule[-5pt]{0pt}{10pt}
\hfill
$\varphi'{\,\in\,}\widehat{R}(I',s')$
\\ \hline
\end{tabular}}}
\end{center}
\caption{Formula Interpretation}
\label{tbl:fml:int}
\end{table}
\begin{table}
\begin{center}
{\scriptsize{\begin{tabular}{@{\hspace{30pt}}c@{\hspace{50pt}}c}
{{\begin{tabular}[b]{c}
\setlength{\unitlength}{0.79pt}
\begin{picture}(120,90)(0,-10)
\put(-2,80){\makebox(0,0){\footnotesize{$
{\bigl\langle{\widehat{R}(I',s'),\vdash'}\bigr\rangle}$}}}
\put(120,80){\makebox(0,0){\footnotesize{$
{\bigl\langle{\widehat{R}(I,s),\vdash}\bigr\rangle}$}}}
\put(0,0){\makebox(0,0){\footnotesize{$
\mathrmbf{Tbl}_{\mathcal{A}}(I',s')$}}}
\put(120,0){\makebox(0,0){\footnotesize{$
\mathrmbf{Tbl}_{\mathcal{A}}(I,s)$}}}
\put(60,100){\makebox(0,0){\scriptsize{${\scriptstyle\sum}_{{h}}$}}}
\put(62,82){\makebox(0,0){\scriptsize{${h}^{\ast}$}}}
\put(60,58){\makebox(0,0){\scriptsize{${\scriptstyle\prod}_{{h}}$}}}
\put(60,20){\makebox(0,0){\scriptsize{${\scriptstyle\sum}_{{h}}$}}}
\put(62,2){\makebox(0,0){\scriptsize{${h}^{\ast}$}}}
\put(60,-20){\makebox(0,0){\scriptsize{${\scriptstyle\prod}_{{h}}$}}}
\put(6,40){\makebox(0,0)[l]{\scriptsize{$\mathrmbfit{T}^{\mathcal{M}}_{{\langle{I'\!,s'}\rangle}}$}}}
\put(126,40){\makebox(0,0)[l]{\scriptsize{$\mathrmbfit{T}^{\mathcal{M}}_{{\langle{I\!,s}\rangle}}$}}}
\put(85,92){\vector(-1,0){50}}
\qbezier(35,80)(43,80)(51,80)\qbezier(69,80)(77,80)(85,80)\put(85,80){\vector(1,0){0}}
\put(85,68){\vector(-1,0){50}}
\put(85,12){\vector(-1,0){50}}
\put(85,-12){\vector(-1,0){50}}
\qbezier(35,0)(43,0)(51,0)\qbezier(69,0)(77,0)(85,0)\put(85,0){\vector(1,0){0}}
\put(0,65){\vector(0,-1){50}}
\put(120,65){\vector(0,-1){50}}
\end{picture}
\end{tabular}}}
&
{{\begin{tabular}[b]{l}
{\footnotesize\setlength{\extrarowheight}{2.5pt}$\begin{array}[b]{l}
{\langle{I',s'}\rangle}\xrightarrow{h}{\langle{I,s}\rangle} \\ 
\mathrmbfit{tup}_{\mathcal{A}}(I',s')\xleftarrow{\mathrmbfit{tup}(h)}\mathrmbfit{tup}_{\mathcal{A}}(I,s)
\end{array}$}
\\ \\
{\footnotesize\setlength{\extrarowheight}{2.5pt}$\begin{array}[b]{|@{\hspace{5pt}}l@{\hspace{2pt}}|}
\hline
{\scriptstyle\sum}_{{h}} \dashv {h}^{\ast} \dashv {\scriptstyle\prod}_{{h}} 
\\ \hline
{\scriptstyle\sum}_{{h}} \circ \mathrmbfit{T}^{\mathcal{M}}_{{\langle{I'\!,s'}\rangle}}
 = \mathrmbfit{T}^{\mathcal{M}}_{{\langle{I\!,s}\rangle}} \circ {\scriptstyle\sum}_{{h}}
\\
{h}^{\ast} \circ \mathrmbfit{T}^{\mathcal{M}}_{{\langle{I\!,s}\rangle}} 
= \mathrmbfit{T}^{\mathcal{M}}_{{\langle{I'\!,s'}\rangle}} \circ {h}^{\ast}	
\\
{\scriptstyle\prod}_{{h}} \circ \mathrmbfit{T}^{\mathcal{M}}_{{\langle{I'\!,s'}\rangle}}
 = \mathrmbfit{T}^{\mathcal{M}}_{{\langle{I\!,s}\rangle}} \circ {\scriptstyle\prod}_{{h}}
\\ \hline
\end{array}$}
\end{tabular}}}
\\ & \\
\end{tabular}}}
\end{center}
\caption{Formal/Semantics Reflection}
\label{tbl:fml-sem:refl}
\end{table}
%


\newpage
\subsection{Satisfaction.}\label{sub:sec:sat} 






Satisfaction is a fundamental classification
between formalism and semantics.
The atom of formalism used in satisfaction is the {\ttfamily FOLE} constraint, whereas 
the atom of semantics used in satisfaction is the {\ttfamily FOLE} structure.
Satisfaction is defined in terms of 
the table formula interpretation function
$\mathrmbfit{T}_{\mathcal{M}} : \widehat{R} \rightarrow \mathrmbf{Tbl}(\mathcal{A})$
of \S\,\ref{sub:sec:interp:fml:constr}
and the associated
the relation formula interpretation function
$\mathrmbfit{R}_{\mathcal{M}} : \widehat{R} \rightarrow \mathrmbf{Rel}(\mathcal{A})$.
%
\footnote{Denoted by 
$\mathrmbfit{I}_{\mathcal{M}} : \widehat{R} \rightarrow \mathrmbf{Rel}(\mathcal{A})$
in
\S\,2,2,1 of the paper
``The {\ttfamily ERA} of {\ttfamily FOLE}: Superstructure''
\cite{kent:fole:era:supstruc}}
%
\comment{Important definitions follow a logical order:
\emph{formula interpretation}
$\;\Rightarrow\;$ \emph{satisfaction}
$\;\Rightarrow\;$ \emph{structure interpretation}
$\;\Rightarrow\;$ \emph{sound logic interpretation}.}
Assume that we are given a structure schema 
$\mathcal{S} = {\langle{R,\sigma,X}\rangle}$
with a set of predicate symbols $R$
and a signature (header) map 
$R \xrightarrow{\sigma} \mathrmbf{List}(X)$.


\paragraph{Sequent Satisfaction.}

\comment{
\begin{sloppypar}
An $\mathcal{S}$-structure $\mathcal{M} \in \mathrmbf{Struc}(\mathcal{S})$
\emph{satisfies} a formal $\mathcal{S}$-sequent $\varphi{\;\vdash\;}\varphi'$
when the interpretation widening of views asserted by the sequent 
actually holds in $\mathcal{M}$:
there is a morphism
$\mathrmbfit{T}_{\mathcal{M}}(\varphi){\;\xrightarrow{k}\;}\mathrmbfit{T}_{\mathcal{M}}(\psi)$
in $\mathrmbf{Tbl}_{\mathcal{A}}(I,s)$;
this has as its reflection the morphism
$\mathrmbfit{R}_{\mathcal{M}}(\varphi){\;\subseteq\;}\mathrmbfit{R}_{\mathcal{M}}(\psi)$
in $\mathrmbf{Rel}_{\mathcal{A}}(I,s)$.
Satisfaction is symbolized either by
$\mathcal{M}{\;\models_{\mathcal{S}}\;}(\varphi{\;\vdash\;}\psi)$
or by
$\varphi{\;\vdash_{\mathcal{M}}\;}\psi$.
For each $\mathcal{S}$-signature ${\langle{I,s}\rangle}$,
satisfaction defines the fiber order
$\mathcal{M}^{\mathcal{S}}(I,s)={\langle{\widehat{R}(I,s),\leq_{\mathcal{M}}}\rangle}$,
where
$\varphi{\;\leq_{\mathcal{M}}\;}\psi$
when
$\varphi{\;\vdash_{\mathcal{M}}\;}\psi$
for any two formulas $\varphi,\psi\in\widehat{R}(I,s)$.
%
\end{sloppypar}
}
%


\begin{sloppypar}
A (lax) $\mathcal{S}$-structure 
$\mathcal{M} = {\langle{\mathcal{E},\sigma,\tau,\mathcal{A}}\rangle} \in \mathrmbf{Struc}(\mathcal{S})$
\emph{satisfies} a formal $\mathcal{S}$-sequent $\varphi{\;\vdash\;}\varphi'$,
for a pair of formulas
$(\varphi,\varphi'){\,\in\,}\widehat{R}{\,\times}\widehat{R}$
with the same signature 
$\widehat{\sigma}(\varphi) = {\langle{I,s}\rangle} = \widehat{\sigma}(\varphi')$,
when the interpretation widening of views asserted by the sequent actually holds in $\mathcal{M}$:
%
\begin{center}
{{\footnotesize{$
\underset{\text{in}\;\mathrmbf{Rel}_{\mathcal{A}}(I,s)}
{\mathrmbfit{R}_{\mathcal{M}}(\varphi){\;\subseteq\;}\mathrmbfit{R}_{\mathcal{M}}(\varphi')}
\;\;\;
\text{\underline{reflectively}}
\footnote{\label{reflection}See the formal/semantics reflection 
defined in \S\,A.1 of the paper ``The {\ttfamily FOLE} Table''
\cite{kent:fole:era:tbl}
and
in \S\,3.1 of the paper ``Relational Operations in {\ttfamily FOLE}''
\cite{kent:fole:rel:ops}.}
\;\;\;
\underset{\text{in}\;\mathrmbf{Tbl}_{\mathcal{A}}(I,s)}
{\mathrmbfit{T}_{\mathcal{M}}(\varphi){\;\xrightarrow{\;k\;}\;}
\mathrmbfit{T}_{\mathcal{M}}(\varphi')}
$.}}}
\end{center}
Satisfaction is symbolized either by
$\mathcal{M}{\;\models_{\mathcal{S}}\;}(\varphi{\;\vdash\;}\varphi')$
or by
$\varphi{\;\vdash_{\mathcal{M}}\;}\varphi'$.
For each signature ${\langle{I,s}\rangle}\in\mathrmbf{List}(X)$,
satisfaction defines the fiber order
${\langle{\widehat{R}(I,s),\leq_{\mathcal{M}}}\rangle}$,
%
\comment{The set of formulas $\widehat{R}$ is partitioned into fibers:
for any signature ${\langle{I,s}\rangle}\in\mathrmbf{List}(X)$,
let $\widehat{R}(I,s){\;\subseteq\;}\widehat{R}$ denote the fiber (subset) of all 
formulas with this signature.
These are called ${\langle{I,s}\rangle}$-ary formulas.}
%
where
$\varphi{\;\leq_{\mathcal{M}}\;}\varphi'$
when
$\varphi{\;\vdash_{\mathcal{M}}\;}\varphi'$
for any two formulas $\varphi,\varphi'\in{\widehat{R}}(I,s)$.
%
\end{sloppypar}
%

%
\comment{Satisfaction using formula 
in the traditional approach
is define 
in \S\,2.3 of \cite{kent:fole:era:supstruc}.}
%

%
\begin{figure}
\begin{center}
{{\begin{tabular}{c}
\setlength{\unitlength}{0.51pt}
\begin{picture}(180,230)(0,-40)
\put(0,180){\makebox(0,0){\footnotesize{$\mathrmbfit{K}(\varphi')$}}}
\put(180,180){\makebox(0,0){\footnotesize{$\mathrmbfit{K}(\varphi)$}}}
\put(10,90){\makebox(0,0){\footnotesize{${\wp}t'_{\varphi'}(\mathrmbfit{K}(\varphi'))$}}}
\put(175,90){\makebox(0,0){\footnotesize{${\wp}t_{\varphi}(\mathrmbfit{K}(\varphi))$}}}
\put(-10,0){\makebox(0,0){\footnotesize{$\mathrmbfit{tup}_{\mathcal{A}}(\sigma(\varphi'))$}}}
\put(190,0){\makebox(0,0){\footnotesize{$\mathrmbfit{tup}_{\mathcal{A}}(\sigma(\varphi))$}}}
\put(90,190){\makebox(0,0){\scriptsize{$k$}}}
\put(90,55){\makebox(0,0){\scriptsize{$\mathrmbfit{R}_{\mathcal{M}}(h)$}}}
\put(95,100){\makebox(0,0){\scriptsize{$r$}}}
\put(90,14){\makebox(0,0){\scriptsize{$\mathrmbfit{tup}_{\mathcal{A}}(h)$}}}
\put(-45,105){\makebox(0,0)[r]{\scriptsize{$t_{\varphi'}$}}}
\put(230,105){\makebox(0,0)[l]{\scriptsize{$t_{\varphi}$}}}
\put(-1,50){\makebox(0,0)[r]{\scriptsize{$i_{\varphi'}$}}}
\put(187,50){\makebox(0,0)[l]{\scriptsize{$i_{\varphi}$}}}
\put(-1,140){\makebox(0,0)[r]{\scriptsize{$e_{\varphi'}$}}}
\put(187,140){\makebox(0,0)[l]{\scriptsize{$e_{\varphi}$}}}
\put(150,180){\vector(-1,0){115}}
\put(120,90){\vector(-1,0){55}}
\put(140,45){\vector(-1,0){100}}
\put(125,0){\vector(-1,0){70}}
\put(180,165){\vector(0,-1){60}}
\put(0,165){\vector(0,-1){60}}
\put(0,70){\vector(0,-1){55}}\put(6,70){\oval(12,12)[t]}
\put(180,70){\vector(0,-1){55}}\put(186,70){\oval(12,12)[t]}
\qbezier(-15,165)(-75,90)(-15,15)\put(-15,15){\vector(2,-3){0}}
\qbezier(195,165)(255,90)(195,15)\put(195,15){\vector(-2,-3){0}}
\put(-67,45){\makebox(0,0)[r]{\footnotesize{$
\mathrmbfit{R}_{\mathcal{M}}(\varphi')\left\{\rule{0pt}{28pt}\right.$}}}
\put(247,45){\makebox(0,0)[l]{\footnotesize{$\left.\rule{0pt}{28pt}\right\}\mathrmbfit{R}_{\mathcal{M}}(\varphi)$}}}
\put(0,-40){\makebox(0,0){\normalsize{$\underset{\textstyle{\mathrmbfit{T}_{\mathcal{M}}(\varphi')}}{\underbrace{\rule{40pt}{0pt}}}$}}}
\put(180,-40){\makebox(0,0){\normalsize{$\underset{\textstyle{\mathrmbfit{T}_{\mathcal{M}}(\varphi)}}{\underbrace{\rule{40pt}{0pt}}}$}}}
\put(140,-47){\vector(-1,0){100}}
\put(90,-37){\makebox(0,0){\scriptsize{$\mathrmbfit{T}_{\mathcal{M}}(h)$}}}
\put(90,-58){\makebox(0,0){\scriptsize{${\langle{h,k}\rangle}$}}}
\end{picture}
\end{tabular}}}
\end{center}
\caption{Table-Relation Reflection}
\label{fig:rel:tbl:refl}
\end{figure}
%

%
\newpage
\paragraph{Constraint Satisfaction.}

%
\comment{
An $\mathcal{S}$-structure $\mathcal{M}\in\mathrmbf{Struc}(\mathcal{S})$
\emph{satisfies} 
a formal $\mathcal{S}$-constraint $\varphi'{\;\xrightarrow{h}\;}\varphi$
when $\mathcal{M}$ satisfies the sequent 
${\scriptstyle\sum}_{h}(\varphi){\;\vdash\;}\varphi'$
\underline{iff} 
$\varphi'{\;\geq_{\mathcal{M}}\;}{\scriptstyle\sum}_{h}(\varphi)$
\underline{iff} 
$\mathrmbfit{T}_{\mathcal{M}}(\varphi')
{\;\xleftarrow{k}\;}
\mathrmbfit{T}_{\mathcal{M}}({\scriptstyle\sum}_{h}(\varphi))
={\scriptstyle\sum}_{h}(\mathrmbfit{T}_{\mathcal{M}}(\varphi))$;
equivalently,
%
\footnote{See the formal/semantics reflection 
defined 
in \S\,3.1 in the paper
``Relational Operations in {\ttfamily FOLE}''
\cite{kent:fole:rel:ops},
and discussed in Sec.~\ref{sub:sec:interp:fml:constr} above
with illustrations in Tbl.~\ref{tbl:fml-sem:refl}.}
%
when
$\mathcal{M}$ satisfies the sequent 
$\varphi{\;\vdash\;}{h}^{\ast}(\varphi')$
\underline{iff} 
${h}^{\ast}(\varphi'){\;\geq_{\mathcal{M}}\;}\varphi$
\underline{iff} 
${h}^{\ast}(\mathrmbfit{T}_{\mathcal{M}}(\varphi'))
=\mathrmbfit{T}_{\mathcal{M}}({h}^{\ast}(\varphi'))
{\;\xleftarrow{k'}\;}
\mathrmbfit{T}_{\mathcal{M}}(\varphi)$;
%
equivalently,
when there is an $\mathcal{A}$-table morphism
$\mathrmbfit{T}_{\mathcal{M}}(\varphi')
\xleftarrow{{\langle{h,k}\rangle}}
\mathrmbfit{T}_{\mathcal{M}}(\varphi)$.
%
Satisfaction is symbolized by
$\mathcal{M}{\;\models_{\mathcal{S}}\;}(\varphi'{\;\xrightarrow{h}\;}\varphi)$.
%
}
%


\begin{sloppypar}
A (lax) $\mathcal{S}$-structure $\mathcal{M}\in\mathrmbf{Struc}(\mathcal{S})$
\emph{satisfies} 
a formal $\mathcal{S}$-constraint $\varphi'{\;\xrightarrow{h}\;}\varphi$,
for a pair of formulas
$(\varphi',\varphi){\,\in\,}\widehat{R}{\,\times}\widehat{R}$
connected by a signature morphism
$\widehat{\sigma}(\varphi')={\langle{I',s'}\rangle}\xrightarrow{h}{\langle{I,s}\rangle}=\widehat{\sigma}(\varphi)$,
%
\comment{Each abstract constraint $\varphi'\xrightarrow{\,p\,}\varphi$ has an associated formal constraint
$\varphi'\xrightarrow[h]{\,\sigma(p)\,}\varphi$
consisting of a signature morphism
$\sigma(\varphi')={\langle{I',s'}\rangle}\xrightarrow{h}\mathcal{S}=\sigma(\varphi)$
and 
a binary sequent ${\scriptstyle\sum}_{h}(\varphi){\;\vdash\;}\varphi'$ in $\widehat{R}(I',s')$,
or equivalently 
a binary sequent $\varphi{\;\vdash\;}{h}^{\ast}(\varphi')$ in $\widehat{R}(\mathcal{S})$.
Satisfaction of this formal constraint
accords with satisfaction here:
${\exists}_{h}(\mathrmbfit{R}_{\mathcal{M}}(\varphi))
={\exists}_{h}(\widehat{\mathrmbfit{R}_{\mathcal{M}}}(\varphi))
=\widehat{\mathrmbfit{R}}({\scriptstyle\sum}_{h}(\varphi))
{\;\subseteq\;}
\widehat{\mathrmbfit{R}}(\varphi')=\mathrmbfit{R}(\varphi')$,
or equivalently, 
$\mathrmbfit{R}_{\mathcal{M}}(\varphi)=\widehat{\mathrmbfit{R}_{\mathcal{M}}}(\varphi)
{\;\subseteq\;}
\widehat{\mathrmbfit{R}}({h}^{\ast}(\varphi'))
={h}^{-1}(\widehat{\mathrmbfit{R}}(\varphi'))
={h}^{-1}(\mathrmbfit{R}(\varphi'))$.
}
when 
%
%
the following equivalence holds
%
\comment{{${\exists}_{h}(\mathrmbfit{R}(\varphi))
={\wp}\mathrmbfit{tup}_{\mathcal{A}}(h)({\wp}\tau(\mathrmbfit{K}(\varphi)))$
and
${h}^{-1}(\mathrmbfit{R}(\varphi'))
=\mathrmbfit{tup}_{\mathcal{A}}(h)^{-1}({\wp}\tau(\mathrmbfit{K}(\varphi')))$}.}
%
\begin{equation}\label{adj:fbr:ord}
{{\begin{picture}(100,50)(0,0)
\put(50,20){\makebox(0,0){{\footnotesize{$
\underset{\textstyle{
\underset{{\text{in}\;\mathrmbf{Rel}(\mathcal{A})}}
{\mathrmbfit{R}_{\mathcal{M}}(\varphi')
{\;\xleftarrow[\;{\langle{h,r}\rangle}\;]{\;\mathrmbfit{R}_{\mathcal{M}}(h)\;}\;}
\mathrmbfit{R}_{\mathcal{M}}(\varphi)}
}}
{\underbrace{\underset{\text{in}\;\mathrmbf{Rel}_{\mathcal{A}}(I',s')}
{\mathrmbfit{R}_{\mathcal{M}}(\varphi')
{\;\supseteq\;}
{\exists}_{h}(\mathrmbfit{R}_{\mathcal{M}}(\varphi))}
{\;\;\;\;\;\;\;\;\rightleftarrows\;\;\;\;\;\;\;\;}
\underset{\text{in}\;\mathrmbf{Rel}_{\mathcal{A}}(\mathcal{S})}
{{h}^{-1}(\mathrmbfit{R}_{\mathcal{M}}(\varphi'))
{\;\supseteq\;}
\mathrmbfit{R}_{\mathcal{M}}(\varphi)}
}}
$;}}}}
\end{picture}}}
\end{equation}
\underline{reflectively}$^{\ref{reflection}}$
{(visualized in Fig.\,\ref{fig:rel:tbl:refl})}

%
\comment{See the formal/semantics reflection 
discussed in Sec.~\ref{sub:sec:interp:fml:constr},
illustrated in Tbl.~\ref{tbl:fml-sem:refl}, and
defined in \S\,3.1 of the paper
``Relational Operations in {\ttfamily FOLE}''
\cite{kent:fole:rel:ops}.}
%
%
\begin{equation}\label{adj:fbr:cxt}
{{\begin{picture}(100,50)(0,0)
\put(50,25){\makebox(0,0){{\footnotesize{$
\underset{\textstyle{
\underset{{\text{in}\;\mathrmbf{Tbl}(\mathcal{A})}}
{\mathrmbfit{T}_{\mathcal{M}}(\varphi')
{\;\xleftarrow[{\langle{h,k}\rangle}]{\mathrmbfit{T}_{\mathcal{M}}(h)\;}\;}
\mathrmbfit{T}_{\mathcal{M}}(\varphi)}
}}
{\underbrace{
\underset{{\text{in}\;\mathrmbf{Tbl}_{\mathcal{A}}(I',s')}}
{\mathrmbfit{T}_{\mathcal{M}}(\varphi')\xleftarrow{\;k\;}{\scriptstyle\sum}_{h}(\mathrmbfit{T}_{\mathcal{M}}(\varphi))}
{\;\;\;\;\;\;\;\;\rightleftarrows\;\;\;\;\;\;\;\;}
\underset{{\text{in}\;\mathrmbf{Tbl}_{\mathcal{A}}(\mathcal{S})}}
{{h}^{\ast}(\mathrmbfit{T}_{\mathcal{M}}(\varphi'))\xleftarrow{\;k'\,}\mathrmbfit{T}_{\mathcal{M}}(\varphi)}}}
$.}}}}
\end{picture}}}
\end{equation}
Satisfaction is symbolized by
$\mathcal{M}{\;\models_{\mathcal{S}}\;}(\varphi'{\;\xrightarrow{h}\;}\varphi)$.
\end{sloppypar}


%
\begin{lemma}\label{lem:nat:cxt}
A (lax) $\mathcal{S}$-structure $\mathcal{M}$
determines a mathematical context
$\mathcal{M}^{\mathcal{S}}{\;\sqsubseteq\;}\mathrmbf{Cons}(\mathcal{S})$,
called the \emph{conceptual intent} of $\mathcal{M}$,
whose objects are $\mathcal{S}$-formulas and
whose morphisms are $\mathcal{S}$-constraints satisfied by $\mathcal{M}$.
The ${\langle{I,s}\rangle}^{\text{th}}$ fiber is the order
${\mathcal{M}^{\mathcal{S}}(I,s)}
={\langle{\widehat{R}(I,s),\geq_{\mathcal{M}}}\rangle}$.
\footnote{The satisfaction relation 
corresponds to the ``truth classification'' in Barwise and Seligman~\cite{barwise:seligman:97},
where the conceptual intent $\mathcal{M}^{\mathcal{S}}$
corresponds to the ``theory of $\mathcal{M}$''.}
\end{lemma}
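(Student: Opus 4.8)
The plan is to exhibit $\mathcal{M}^{\mathcal{S}}$ as a wide subcontext of the constraint context $\mathrmbf{Cons}(\mathcal{S})$. Since $\mathrmbf{Cons}(\mathcal{S})$ is already established as a mathematical context, with objects the $\mathcal{S}$-formulas $\widehat{R}$ and morphisms the $\mathcal{S}$-constraints (closed under the composition ${\varphi''}\xrightarrow{h'}{\varphi'}\xrightarrow{h}{\varphi}={\varphi''}\xrightarrow{h'{\,\cdot\,}h}{\varphi}$), and since $\mathcal{M}^{\mathcal{S}}$ keeps all objects while selecting only the satisfied constraints as morphisms, it suffices to verify two closure conditions: every identity constraint is satisfied, and the composite of two satisfied constraints is again satisfied. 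I will argue with the relational form of satisfaction in (\ref{adj:fbr:ord}), namely ${\exists}_{h}(\mathrmbfit{R}_{\mathcal{M}}(\varphi)){\;\subseteq\;}\mathrmbfit{R}_{\mathcal{M}}(\varphi')$ for a constraint $\varphi'\xrightarrow{h}\varphi$; the reflection between relations and tables (Tbl.\,\ref{tbl:fml-sem:refl}) then transports each conclusion to the tabular form (\ref{adj:fbr:cxt}).

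For identities, the constraint $\varphi\xrightarrow{1}\varphi$ carries the identity signature morphism, for which ${\exists}_{1}$ is the identity operator; the satisfaction inequality reduces to $\mathrmbfit{R}_{\mathcal{M}}(\varphi){\;\subseteq\;}\mathrmbfit{R}_{\mathcal{M}}(\varphi)$, which holds by reflexivity. For composition, suppose $\mathcal{M}$ satisfies both $\varphi''\xrightarrow{h'}\varphi'$ and $\varphi'\xrightarrow{h}\varphi$. The key observation is that the existential image is functorial along signature morphisms, ${\exists}_{h'{\,\cdot\,}h}={\exists}_{h'}{\,\circ\,}{\exists}_{h}$ — a consequence of the contravariance of the tuple passage $\mathrmbfit{tup}_{\mathcal{A}}$ together with the covariant functoriality of direct image — and that each ${\exists}_{h'}$ is monotone. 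Chaining the two satisfaction inequalities yields
\begin{equation*}
{\exists}_{h'{\,\cdot\,}h}(\mathrmbfit{R}_{\mathcal{M}}(\varphi))
={\exists}_{h'}\bigl({\exists}_{h}(\mathrmbfit{R}_{\mathcal{M}}(\varphi))\bigr)
{\;\subseteq\;}{\exists}_{h'}(\mathrmbfit{R}_{\mathcal{M}}(\varphi'))
{\;\subseteq\;}\mathrmbfit{R}_{\mathcal{M}}(\varphi''),
\end{equation*}
so $\mathcal{M}$ satisfies the composite $\varphi''\xrightarrow{h'{\,\cdot\,}h}\varphi$. This establishes closure under identities and composition, hence $\mathcal{M}^{\mathcal{S}}{\;\sqsubseteq\;}\mathrmbf{Cons}(\mathcal{S})$.

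For the fiber claim, I would specialize to a single signature ${\langle{I,s}\rangle}$, where constraints collapse to sequents: a fiber morphism $\varphi\xrightarrow{1}\varphi'$ in $\mathrmbf{Cons}(\mathcal{S})$ is exactly the sequent $\varphi'{\;\vdash\;}\varphi$. It lies in $\mathcal{M}^{\mathcal{S}}$ iff $\mathcal{M}$ satisfies this sequent, i.e.\ iff $\varphi'{\;\leq_{\mathcal{M}}\;}\varphi$, equivalently $\varphi{\;\geq_{\mathcal{M}}\;}\varphi'$. Thus the ${\langle{I,s}\rangle}$-fiber is the order ${\langle{\widehat{R}(I,s),\geq_{\mathcal{M}}}\rangle}$, matching the asserted description.

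The main obstacle is purely one of directional bookkeeping. The existential operators compose covariantly along the composite $h'{\,\cdot\,}h$, whereas the entailment order reverses when a fiber constraint is read as its underlying sequent; one must check that each containment in the displayed chain carries the correct orientation, and that the fiber order-reversal (from $\leq_{\mathcal{M}}$ to $\geq_{\mathcal{M}}$) is precisely the contravariance built into the identification $\bigl(\varphi\xrightarrow{1}\varphi'\bigr)=\bigl(\varphi'{\;\vdash\;}\varphi\bigr)$. Once these conventions are pinned down, the remaining content — reflexivity, monotonicity, and functoriality of ${\exists}$ — is routine.
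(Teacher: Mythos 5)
Your proposal is correct and takes essentially the same approach as the paper: the paper's proof likewise verifies closure of $\mathcal{M}^{\mathcal{S}}$ under constraint identities and constraint composition, obtaining the composite from $\varphi''{\;\geq_{\mathcal{M}}\;}{\scriptstyle\sum}_{h'}(\varphi')$ and $\varphi'{\;\geq_{\mathcal{M}}\;}{\scriptstyle\sum}_{h}(\varphi)$ via ${\scriptstyle\sum}_{h'}({\scriptstyle\sum}_{h}(\varphi))={\scriptstyle\sum}_{h'{\,\cdot\,}h}(\varphi)$, which is exactly your direct-image chain $\exists_{h'{\,\cdot\,}h}=\exists_{h'}\circ\exists_{h}$ read on the syntactic side of the formal/semantics reflection. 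Your explicit check of the fiber claim (identifying fiber morphisms $\varphi\xrightarrow{1}\varphi'$ with sequents $\varphi'{\;\vdash\;}\varphi$, whence the order $\geq_{\mathcal{M}}$) is left implicit in the paper's proof but agrees with its definitions.
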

\begin{proof}
\begin{sloppypar}
$\mathcal{M}^{\mathcal{S}}$ is closed under constraint identities and constraint composition.
\newline
{\bfseries 1:} $\mathcal{M}{\;\models_{\mathcal{S}}\;}(\varphi{\;\xrightarrow{1}\;}\varphi)$,
and
{\bfseries 2:}
if $\mathcal{M}{\;\models_{\mathcal{S}}\;}(\varphi''{\;\xrightarrow{h'}\;}\varphi')$
and $\mathcal{M}{\;\models_{\mathcal{S}}\;}(\varphi'{\;\xrightarrow{h}\;}\varphi)$,
then $\mathcal{M}{\;\models_{\mathcal{S}}\;}(\varphi''{\;\xrightarrow{h'{\,\cdot\,}h}\;}\varphi)$,
since $\varphi''{\;\geq_{\mathcal{M}}\;}{\scriptstyle\sum}_{h'}(\varphi')$
and $\varphi'{\;\geq_{\mathcal{M}}\;}{\scriptstyle\sum}_{h}(\varphi)$
implies
$\varphi''{\;\geq_{\mathcal{M}}\;}
{\scriptstyle\sum}_{h'}({\scriptstyle\sum}_{h}(\varphi))={\scriptstyle\sum}_{h'{\,\cdot\,}h}(\varphi)$.
\hfill\rule{5pt}{5pt}
\end{sloppypar}
\end{proof}
%


%
\begin{lemma}\label{lem:interp:pass}
For any structure $\mathcal{M}$,
there is 
a relation interpretation passage and
a table interpretation graph morphism,
%
\footnote{Relational interpretation is closed under composition;
tabular interpretation is closed under composition only up to key equivalence.}
%
\begin{equation}
{{\begin{picture}(120,30)(0,-4)
\put(20,20){\makebox(0,0)[l]{\footnotesize{$
{\mathcal{M}^{\mathcal{S}}}
^{\text{op}}
\!\xrightarrow{\;\mathrmbfit{R}_{\mathcal{M}}}
\mathrmbf{Rel}(\mathcal{A})
\xhookrightarrow{\mathrmbfit{inc}_{\mathcal{A}}} 
\mathrmbf{Tbl}(\mathcal{A})
$ and}}}
\put(14,0){\makebox(0,0)[l]{\footnotesize{$
{|\mathcal{M}^{\mathcal{S}}|}
^{\text{op}}
\!\xrightarrow{\;\mathrmbfit{T}_{\mathcal{M}}}
|\mathrmbf{Tbl}(\mathcal{A})|
$,}}}
\end{picture}}}
\end{equation}
which extend
the formula interpretation function
$\mathrmbfit{T}_{\mathcal{M}} : \widehat{R} \rightarrow \mathrmbf{Tbl}(\mathcal{A})$
of \S\,\ref{sub:sec:interp:fml:constr},
and hence 
the structure interpretation function 
$T_{\mathcal{M}} : R\rightarrow\mathrmbf{Tbl}(\mathcal{A})$
of \S\,\ref{sec:struc}.
\end{lemma}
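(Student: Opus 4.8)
The plan is to build both $\mathrmbfit{R}_{\mathcal{M}}$ and $\mathrmbfit{T}_{\mathcal{M}}$ by specifying their action on objects and on morphisms of $\mathcal{M}^{\mathcal{S}}$, and then to verify the functor laws for the relational version while claiming only a graph morphism for the tabular version. On objects, i.e. on $\mathcal{S}$-formulas, both assignments are already fixed by the formula interpretation function of \S\,\ref{sub:sec:interp:fml:constr}: $\varphi\mapsto\mathrmbfit{R}_{\mathcal{M}}(\varphi)$ and $\varphi\mapsto\mathrmbfit{T}_{\mathcal{M}}(\varphi)$. On morphisms, a constraint $\varphi'\xrightarrow{h}\varphi$ is, by the very definition of the conceptual intent in Lem.\,\ref{lem:nat:cxt}, one that $\mathcal{M}$ satisfies; the satisfaction condition \eqref{adj:fbr:ord} then yields a relation morphism $\mathrmbfit{R}_{\mathcal{M}}(\varphi')\xleftarrow{\mathrmbfit{R}_{\mathcal{M}}(h)}\mathrmbfit{R}_{\mathcal{M}}(\varphi)$ in $\mathrmbf{Rel}(\mathcal{A})$, and reflectively \eqref{adj:fbr:cxt} an $\mathcal{A}$-table morphism $\mathrmbfit{T}_{\mathcal{M}}(\varphi')\xleftarrow{\mathrmbfit{T}_{\mathcal{M}}(h)}\mathrmbfit{T}_{\mathcal{M}}(\varphi)$ in $\mathrmbf{Tbl}(\mathcal{A})$. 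Since each of these is directed from the interpretation of the target $\varphi$ to that of the source $\varphi'$, opposite to the constraint itself, both assignments are covariant on the opposite context, which is exactly why the source is taken to be $(\mathcal{M}^{\mathcal{S}})^{\text{op}}$.

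For $\mathrmbfit{R}_{\mathcal{M}}$ I would verify the two functor laws. Preservation of identities is immediate: the identity constraint $\varphi\xrightarrow{1}\varphi$ carries the identity signature morphism, whose inverse image is the identity, so $\mathrmbfit{R}_{\mathcal{M}}(1_{\varphi})$ is the identity relation morphism on $\mathrmbfit{R}_{\mathcal{M}}(\varphi)$. For composition, given satisfied constraints $\varphi''\xrightarrow{h'}\varphi'\xrightarrow{h}\varphi$ with composite $\varphi''\xrightarrow{h'{\,\cdot\,}h}\varphi$, I would invoke the composition laws of the adjoint flow along signature morphisms (functoriality of $\exists_{h}$ and of inverse image $h^{-1}$ in the fibered structure of the Relational Operations paper \cite{kent:fole:rel:ops}) together with the composition law in $\mathrmbf{Rel}(\mathcal{A})$, to conclude that $\mathrmbfit{R}_{\mathcal{M}}$ sends the composite constraint to the composite relation morphism. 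This is precisely the closure of relational interpretation under composition noted in the footnote, so $\mathrmbfit{R}_{\mathcal{M}}$ is a genuine passage; post-composing with the inclusion $\mathrmbfit{inc}_{\mathcal{A}}$ preserves functoriality and delivers the first displayed passage.

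For $\mathrmbfit{T}_{\mathcal{M}}$ I would claim \emph{only} a morphism of underlying graphs, and this is where the main subtlety lies. Sending each object to its table $\mathrmbfit{T}_{\mathcal{M}}(\varphi)$ and each satisfied constraint to the table morphism $\mathrmbfit{T}_{\mathcal{M}}(h)$ of \eqref{adj:fbr:cxt} respects sources and targets, hence defines a graph morphism ${|\mathcal{M}^{\mathcal{S}}|}^{\text{op}}\to|\mathrmbf{Tbl}(\mathcal{A})|$. One must resist asserting functoriality here, because the tabular flow operators ${\scriptstyle\sum}_{h}$ and $h^{\ast}$ compose only up to canonical key-isomorphism, so the tabular interpretation is closed under composition merely \emph{up to key equivalence} (the footnoted caveat). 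This is exactly why the statement passes to underlying graphs via $|{-}|$ on the tabular side, and why $\mathrmbfit{R}_{\mathcal{M}}$ -- living in $\mathrmbf{Rel}(\mathcal{A})$, where extents are honest subsets and composition is on-the-nose -- can be upgraded to a passage while $\mathrmbfit{T}_{\mathcal{M}}$ cannot. Keeping this distinction clean, rather than over-claiming composition for $\mathrmbfit{T}_{\mathcal{M}}$, is the principal obstacle in the write-up.

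Finally, that both assignments extend the formula interpretation function is true by construction, since on objects they are literally $\mathrmbfit{R}_{\mathcal{M}}$ and $\mathrmbfit{T}_{\mathcal{M}}$ restricted to $\widehat{R}$; and the base case of Tbl.\,\ref{tbl:fml:int} sets $\mathrmbfit{T}_{\mathcal{M}}(r)=T_{\mathcal{M}}(r)$ for entity types $r\in R\subseteq\widehat{R}$, so restricting along the inclusion $R\hookrightarrow\widehat{R}$ recovers the structure interpretation function $T_{\mathcal{M}}:R\rightarrow\mathrmbf{Tbl}(\mathcal{A})$, as required.
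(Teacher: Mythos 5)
Your proposal is correct and takes essentially the same approach as the paper's proof: both define the interpretations on objects via the formula interpretation function of Tbl.\,\ref{tbl:fml:int} and on morphisms via constraint satisfaction through Disp.\,\ref{adj:fbr:ord} and Disp.\,\ref{adj:fbr:cxt}, with the key function for a table morphism defined by choice, so that only a graph morphism is claimed on the tabular side. Your explicit check of the functor laws for $\mathrmbfit{R}_{\mathcal{M}}$ and the key-equivalence obstruction for $\mathrmbfit{T}_{\mathcal{M}}$ simply spells out what the paper asserts in its footnotes.
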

\begin{proof}
\mbox{}
\begin{description}
\item[objects:] 
An $\mathcal{S}$-formula
$\varphi \in \mathcal{M}^{\mathcal{S}}(I,s)
={\langle{\widehat{R}(I,s),\leq_{\mathcal{M}}}\rangle}$
with $\mathcal{S}$-signature ${\langle{I,s}\rangle}$
is mapped to an $\mathcal{A}$-table
$\mathrmbfit{T}_{\mathcal{M}}(\varphi) =
{\langle{I,s,K,t}\rangle}$
defined by induction in the 
top part of
Tbl.\;\ref{tbl:fml:int}
in
in \S\,\ref{sub:sec:interp:fml:constr}.
Reflectively,
the formula
$\varphi$
is mapped to an $\mathcal{A}$-relation
$\mathrmbfit{R}_{\mathcal{M}}(\varphi) =
{\langle{I,s,{\wp}t(K)}\rangle}$.
\item[morphisms:] 
An $\mathcal{S}$-constraint $\varphi'{\;\xrightarrow{h}\;}\varphi$
satisfied by
the $\mathcal{S}$-structure $\mathcal{M}\in\mathrmbf{Struc}(\mathcal{S})$,
$\mathcal{M}{\;\models_{\mathcal{S}}\;}(\varphi'{\;\xrightarrow{h}\;}\varphi)$,
is mapped by constraint satisfaction 
to the
$\mathrmbf{Tbl}(\mathcal{A})$-morphism 
$\mathrmbfit{T}_{\mathcal{M}}(\varphi')
=\mathcal{T}'={\langle{I',s',\mathrmbfit{K}(\varphi'),t_{\varphi'}}\rangle}
\xleftarrow{\langle{h,k}\rangle}
{\langle{I,s,\mathrmbfit{K}(\varphi),t_{\varphi}}\rangle}=
\mathcal{T}=\mathrmbfit{T}_{\mathcal{M}}(\varphi)$
with indexing $X$-sorted signature morphism 
${\langle{I',s'}\rangle}
\xrightarrow{\;h\;}
{\langle{I,s}\rangle}$ 
and 
a key function $\mathrmbfit{K}(\varphi')\xleftarrow{\,k\,}\mathrmbfit{K}(\varphi)$
%
\footnote{Defined by choice.
However,
the choice for a composite signature morphism 
is not identical to the composition of the component choices;
only equivalent.}
%
satisfying either of the adjoint fiber morphisms
in Disp.\ref{adj:fbr:cxt}
leading to the naturality condition 
{{$k{\;\cdot\;}t_{\varphi'}=
t_{\varphi}{\;\cdot\;}\mathrmbfit{tup}_{\mathcal{A}}(h)$}}
and
visualized in Fig.\,\ref{fig:rel:tbl:refl}.
Reflectively,
the constraint
$\varphi'{\;\xrightarrow{h}\;}\varphi$
is mapped to an $\mathcal{A}$-relation morphism
$\mathrmbfit{R}_{\mathcal{M}}(\varphi')
=\mathcal{R}'={\langle{I',s',
{\wp}t'_{\varphi'}(\mathrmbfit{K}(\varphi'))}\rangle}
\xleftarrow{\langle{h,r}\rangle}
{\langle{I,s,{\wp}t_{\varphi}(\mathrmbfit{K}(\varphi))}\rangle}=
\mathcal{R}=\mathrmbfit{R}_{\mathcal{M}}(\varphi)$
satisfying either of the adjoint fiber orderings
in Disp.\ref{adj:fbr:ord}
leading to the naturality condition 
{{$r{\;\cdot\;}i_{\varphi'}=
i_{\varphi}{\;\cdot\;}\mathrmbfit{tup}_{\mathcal{A}}(h)$}}
and visualized in Fig.\,\ref{fig:rel:tbl:refl}.
%
\mbox{}\hfill\rule{5pt}{5pt}
\end{description}
\comment{
The interpretation of two composable constraints is mapped, 
by \underline{choice} of key functions, 
to the composition of the individual table morphisms.}
\end{proof}
%



%
\newpage
\section{{\ttfamily FOLE} Structures.}\label{sec:struc}

%
We define two forms of structures and structure morphisms:
a classification (strict) form and an interpretation (lax) form.
%

\subsection{Structures.}\label{sub:sec:struc}

\comment{\fbox{{Add previous struc-cf-db equiv:}}}
\comment{\fbox{{non-lax structure morphisms with surjective predicate passage $\mathrmbf{R}_{2}\xrightarrow{\;\mathrmbfit{R}\;}\mathrmbf{R}_{1}$.}}}


Assume 
that we are given a schema 
$\mathcal{S} = {\langle{R,\sigma,X}\rangle}$
with a set of predicate symbols $R$
and a signature (header) map 
$R \xrightarrow{\sigma} \mathrmbf{List}(X)$.
\begin{definition}\label{def:struc}
An $\mathcal{S}$-structure 
$\mathcal{M} = 
{\langle{\mathcal{E},\sigma,\tau,\mathcal{A}}\rangle}$
(LHS Fig.\,\ref{fig:fole:struc})
consists of 
an entity classification
$\mathcal{E} = {\langle{R,K,\models_{\mathcal{E}}}\rangle}$
of predicate symbols $R$ and keys $K$,
an attribute classification (typed domain)
$\mathcal{A} = {\langle{X,Y,\models_{\mathcal{A}}}\rangle}$
of sorts $X$ and data values $Y$, and 
a list designation ${\langle{\sigma,\tau}\rangle} : \mathcal{E} \rightrightarrows \mathrmbf{List}(\mathcal{A})$
with signature map $R \xrightarrow{\sigma} \mathrmbf{List}(X)$,
tuple map $K \xrightarrow{\tau} \mathrmbf{List}(Y)$
and \underline{defining condition}
$k{\;\models_{\mathcal{E}}\;}r$
implies
$\tau(k){\;\models_{\mathrmbf{List}(\mathcal{A})}\;}\sigma(r)$.
\footnote{In more detail,
if entity $k{\,\in\,}K$ is of type $r{\,\in\,}R$,
then the description tuple $\tau(k)={\langle{J,t}\rangle}$ 
is the same ``size'' ($J=I$) as
the signature $\sigma(r)={\langle{I,s}\rangle}$ 
and for each index $n \in I$ the data value $t_{n}$ is of sort $s_{n}$.}
\end{definition}
Projections define
the type hypergraph (schema)
$\mathcal{S} = {\langle{R,\sigma,X}\rangle}$,
and
an instance hypergraph (universe)
$\mathcal{U} = {\langle{K,\tau,Y}\rangle}$.
%
\begin{figure}
\begin{center}
\begin{tabular}{@{\hspace{0pt}}c@{\hspace{10pt}}c@{\hspace{10pt}}c}
{{\begin{tabular}{c}
\begin{tabular}{c}
\setlength{\unitlength}{0.5pt}
\begin{picture}(140,120)(-10,-15)
\put(0,80){\makebox(0,0){\footnotesize{$R$}}}
\put(0,0){\makebox(0,0){\footnotesize{$K$}}}
\put(87,80){\makebox(0,0)[l]{\footnotesize{$\mathrmbf{List}(X)$}}}
\put(87,0){\makebox(0,0)[l]{\footnotesize{$\mathrmbf{List}(Y)$}}}
\put(8,40){\makebox(0,0)[l]{\scriptsize{$\models_{\mathcal{E}}$}}}
\put(128,40){\makebox(0,0)[l]{\scriptsize{$\models_{\mathrmbf{List}(\mathcal{A})}$}}}
\put(50,90){\makebox(0,0){\scriptsize{$\sigma$}}}
\put(50,10){\makebox(0,0){\scriptsize{$\tau$}}}
\put(20,80){\vector(1,0){60}}
\put(20,0){\vector(1,0){60}}
\put(0,65){\line(0,-1){50}}
\put(120,65){\line(0,-1){50}}
%
\end{picture}
\end{tabular}
\\\\
\end{tabular}}}
&
{{\begin{tabular}{c}
\setlength{\unitlength}{0.5pt}
\begin{picture}(80,120)(0,15)
\put(40,122){\makebox(0,0){\footnotesize{$\mathrmit{R}$}}}
\put(40,38){\makebox(0,0){\footnotesize{$\mathrmbf{Tbl}(\mathcal{A})$}}}
\put(34,82){\makebox(0,0)[r]{\scriptsize{$T_{\mathcal{M}}$}}}
\put(40,108){\vector(0,-1){56}}
%
%
\end{picture}
\end{tabular}}}
&
{{\begin{tabular}{c}
\setlength{\unitlength}{0.5pt}
\begin{picture}(180,120)(-30,-35)
\put(0,80){\makebox(0,0){\footnotesize{$R$}}}
\put(128,80){\makebox(0,0){\footnotesize{$\mathrmbf{List}(X)$}}}
\put(0,0){\makebox(0,0){\footnotesize{${\wp}K$}}}
\put(128,0){\makebox(0,0){\footnotesize{${\wp}\mathrmbf{List}(Y)$}}}
\put(60,-60){\makebox(0,0){\footnotesize{$\mathrmbf{Set}$}}}
\put(-1,35){\makebox(0,0)[r]{\scriptsize{$
{{\mathrmbfit{ext}_{\mathcal{E}}}}
$}}}
\put(130,35){\makebox(0,0)[l]{\scriptsize{$
{{\mathrmbfit{ext}_{\mathrmbf{List}(\mathcal{A})}}}
$}}}
\put(-54,-10){\makebox(0,0)[r]{\scriptsize{$\mathrmbfit{K}$}}}
\put(179,-10){\makebox(0,0)[l]{\scriptsize{$\mathrmbfit{tup}_{\mathcal{A}}$}}}
\put(50,90){\makebox(0,0){\scriptsize{$\sigma$}}}
\put(50,10){\makebox(0,0){\scriptsize{${\wp}\tau$}}}
\put(55,43){\makebox(0,0){{$\subseteq$}}}
\put(60,-19){\makebox(0,0){\large{$\xRightarrow{\tau_{{\scriptscriptstyle{\llcorner}}}\;}$}}}
\put(15,-25){\makebox(0,0)[r]{\scriptsize{$\mathrmbfit{inc}$}}}
\put(110,-25){\makebox(0,0)[l]{\scriptsize{$\mathrmbfit{inc}$}}}
\put(20,80){\vector(1,0){60}}
\put(20,0){\vector(1,0){60}}
\put(5,-10){\vector(1,-1){40}}
\put(115,-10){\vector(-1,-1){40}}
\put(0,65){\vector(0,-1){50}}
\put(120,65){\vector(0,-1){50}}
\qbezier(-17,67)(-120,-10)(35,-57)\put(35,-57){\vector(4,-1){0}}
\qbezier(137,67)(240,-10)(85,-57)\put(85,-57){\vector(-4,-1){0}}
\end{picture}
\end{tabular}}}
\\&&\\
{\footnotesize{\itshape{classification form}}}
&
{\footnotesize{\itshape{interpretation form (passage)}}}
&
{\footnotesize{\itshape{interpretation form (bridge)}}}
\end{tabular}
\end{center}
\caption{{\ttfamily FOLE} Structure}
\label{fig:fole:struc}
\end{figure}
\mbox{}\newline
By dropping the global key set $K$,
the entity classification 
$\mathcal{E} = {\langle{R,K,\models_{\mathcal{E}}}\rangle}$
can 
be regarded 
(RHS Fig.\,\ref{fig:fole:struc})
as a set-valued function
$\mathrmbfit{K} = 
R \xrightarrow[\mathrmbfit{K}]{\mathrmbfit{ext}_{\mathcal{E}}} {\wp}K\subseteq\mathrmbf{Set}$;
that is,
as an indexed collection 
$\mathcal{E}={\langle{R,\mathrmbfit{K}}\rangle}$
of subsets of keys
$\bigl\{ \mathrmbfit{ext}_{\mathcal{E}}(r)
= \mathrmbfit{K}(r) \subseteq \mathrmbf{Set} \mid r \in R \bigr\}$.
The defining condition
of the list designation 
demonstrates that a structure
can 
be regarded 
(MID Fig.\,\ref{fig:fole:struc})
as a table-valued function
$R \xrightarrow {T_{\mathcal{M}}} \mathrmbf{Tbl}(\mathcal{A})$;
that is,
as an $R$-indexed collection of 
$\mathcal{A}$-tables
$T_{\mathcal{M}}(r)={\langle{\sigma(r),\mathrmbfit{K}(r),\tau_{r}}\rangle}$,
whose tuple maps
$\bigl\{
\mathrmbfit{K}(r)\xrightarrow{\tau_{r}}\mathrmbfit{tup}_{\mathcal{A}}(\sigma(r))
\mid r \in R \bigr\}$
are restrictions 
of the tuple map $K \xrightarrow{\tau} \mathrmbf{List}(Y)$.
Here, 
an entity type 
(predicate symbol)
$r \in R$ 
is \underline{interpreted} 
as 
the
$\mathcal{A}$-table
$T_{\mathcal{M}}(r) = {\langle{\sigma(r),\mathrmbfit{K}(r),\tau_{r}}\rangle} \in \mathrmbf{Tbl}(\mathcal{A})$,
consisting of 
the $X$-signature $\sigma(r)
\in \mathrmbf{List}(X)$,
the key set $\mathrmbfit{K}(r) = \mathrmbfit{ext}_{\mathcal{E}}(r) \subseteq K$, and 
the 
tuple function $\mathrmbfit{K}(r)\xrightarrow{\tau_{r}}\mathrmbfit{tup}_{\mathcal{A}}(\sigma(r))$,
which is a restriction of the tuple function $K\xrightarrow{\tau}\mathrmbf{List}(Y)$.
The list classification 
$\mathrmbf{List}(\mathcal{A}) = 
{\langle{\mathrmbf{List}(X),\mathrmbf{List}(Y),\models_{\mathrmbf{List}(\mathcal{A})}}\rangle}$
can alternatively be regarded 
(RHS Fig.\,\ref{fig:fole:struc})
(Fig.\,4
of the paper
\cite{kent:fole:era:found})
as a set-valued function
$\mathrmbf{List}(X) \xrightarrow
[\mathrmbfit{tup}_{\mathcal{A}}]
{\mathrmbfit{ext}_{\mathrmbf{List}(\mathcal{A})}} {\wp}\mathrmbf{List}(Y)$;
that is,
as an indexed collection 
$\mathrmbf{List}(\mathcal{A}) =
{\langle{\mathrmbf{List}(X),\mathrmbfit{tup}_{\mathcal{A}},\mathrmbf{List}(Y)}\rangle}$
of subsets of tuples
$\bigl\{ 
\mathrmbfit{ext}_{\mathrmbf{List}(\mathcal{A})}(I,s) = 
\mathrmbfit{tup}_{\mathcal{A}}(I,s)
\subseteq \mathrmbf{List}(Y)
\mid (I,s) \in \mathrmbf{List}(X) \bigr\}$.
%
The bridge
$K\xRightarrow{\tau_{{\scriptscriptstyle{\llcorner}}}\;}{\wp}\tau{\;\circ\;}\mathrmbfit{inc}$
provides ``restrictions'' of the bridge
$K\xRightarrow{\;\tau\;}\mathrmbf{List}(Y)$
to subsets of $K$.
%

%
\newpage
\subsection{Structure Morphisms.}\label{sub:sec:struc:mor}




Assume that we are given a schema morphism
(Tbl.\,\ref{tbl:fole:morph}
in \S\,\ref{sub:sec:adj:components})
%
\comment{A (strict) schema morphism has an identity bridge 
${\sigma_{2}}{\;\cdot\;}{\scriptstyle\sum}_{f}{\;\xRightarrow[=]{\,\grave{\varphi}\;\,}\;}{r}{\;\cdot\;}{\sigma_{1}}$.}
%
\begin{equation}\label{struc:mor:assume}
\mathcal{S}_{2}={\langle{R_{2},\sigma_{2},X_{2}}\rangle} 
\xRightarrow{\;{\langle{r,\grave{\varphi},f}\rangle}\;}
{\langle{R_{1},\sigma_{1},X_{1}}\rangle}=\mathcal{S}_{1}
\end{equation}
%
%
consisting of 
a function on relation symbols
$R_{2}\xrightarrow{\,\mathrmit{r}\;}R_{1}$,
a sort function
$X_{2}\xrightarrow{f}\mathcal{X}_{1}$,
and
a schema bridge
$\sigma_{2}{\;\cdot\;}{\scriptstyle\sum}_{f}
{\;\xRightarrow{\,\grave{\varphi}\;\,}\;}r{\;\cdot\;}\sigma_{1}$,
whose $r_{2}^{\text{th}}$-component is the morphism
{{$
{{\scriptstyle\sum}_{f}(\sigma_{2}(r_{2}))}
\xrightarrow[h]{\grave{\varphi}_{r_{2}}}
{\sigma_{1}(r(r_{2}))}$}}
in $\mathrmbf{List}(X_{1})$.
\begin{definition}\label{def:struc:mor}
%
A structure morphism
$\mathcal{M}_{2}\xrightleftharpoons{{\langle{r,k,\grave{\varphi},\beta,f,g}\rangle}}\mathcal{M}_{1}$
from source structure 
$\mathcal{M}_{2}
={\langle{\mathcal{E}_{2},{\langle{\sigma_{2},\tau_{2}}\rangle},\mathcal{A}_{2}}\rangle}$
to target structure
$\mathcal{M}_{1}
={\langle{\mathcal{E}_{1},{\langle{\sigma_{1},\tau_{1}}\rangle},\mathcal{A}_{1}}\rangle}$
along 
(top Fig.\;\ref{fig:fole:struc:mor})
a schema morphism (Disp.\ref{struc:mor:assume})
is a list designation morphism
consisting of 
(back Fig.\;\ref{fig:fole:struc:mor})
an entity infomorphism 
$\mathcal{E}_{2}\xrightleftharpoons{{\langle{r,k}\rangle}}\mathcal{E}_{1}$,
and
(front Fig.\;\ref{fig:fole:struc:mor})
an attribute infomorphism 
$\mathcal{A}_{2}\xrightleftharpoons{{\langle{f,g}\rangle}}\mathcal{A}_{1}$.
These are bridged by
the above schema morphism 
and 
(bottom Fig.\;\ref{fig:fole:struc:mor})
a universe morphism
$\mathcal{U}_{2}\xLeftarrow{\;{\langle{k,\beta,g}\rangle}\;}\mathcal{U}_{1}$
with tuple bridge
${k{\;\cdot\;}\tau_{2}{\;\xRightarrow{\,\beta\,\;}\;}\tau_{1}{\;\cdot\;}{\scriptstyle\sum}_{g}}$,
whose $k_{1}^{\text{th}}$-component is the morphism
{{$
{\tau_{2}(k(k_{1}))}
\xrightarrow[h]{\beta_{k_{1}}}
{{\scriptstyle\sum}_{g}(\tau_{1}(k_{1}))}$}}
in $\mathrmbf{List}(Y_{2})$,
%
with equal arity
\begin{center}
${\footnotesize{
{\underset{{(I_{2}\xrightarrow[h]{\grave{\varphi}_{r_{2}}}I_{1})\;=\;(J_{2}\xrightarrow[h]{\beta_{k_{1}}}J_{1})}}
{\underbrace{
\grave{\varphi}_{r_{2}}{\;\circ\;}\mathrmbfit{arity}_{X_{1}}=\beta_{k_{1}}{\;\circ\;}\mathrmbfit{arity}_{Y_{2}}}}}
}}$
\;\underline{when}\;
${\footnotesize{
{\underset{{
k_{1}{\;\models_{{\langle{r,k}\rangle}}\;}r_{2}}}{\underbrace{k(k_{1}){\;\models_{\mathcal{E}_{2}}\;}r_{2}
\text{ \underline{iff} }
k_{1}{\;\models_{\mathcal{E}_{1}}\;}r(r_{2})}}}
}}$.
\end{center}
%
\end{definition}
%
\begin{figure}
\begin{center}
{{\begin{tabular}{c@{\hspace{15pt}}c}
{{\begin{tabular}{c}
\setlength{\unitlength}{0.65pt}
\begin{picture}(320,250)(-125,-55)
\put(-30,60){\begin{picture}(0,0)(0,0)
\put(0,80){\makebox(0,0){\scriptsize{$R_{2}$}}}
\put(120,80){\makebox(0,0){\scriptsize{$R_{1}$}}}
\put(0,0){\makebox(0,0){\scriptsize{$K_{2}$}}}
\put(120,0){\makebox(0,0){\scriptsize{$K_{1}$}}}
\put(60,88){\makebox(0,0){\scriptsize{$r$}}}
\put(60,8){\makebox(0,0){\scriptsize{$k$}}}
\put(4,31){\makebox(0,0)[l]{\scriptsize{$\scriptscriptstyle\models_{\mathcal{E}_{2}}$}}}
\put(124,31){\makebox(0,0)[l]{\scriptsize{$\scriptscriptstyle\models_{\mathcal{E}_{1}}$}}}
\put(20,80){\vector(1,0){80}}
\put(100,0){\vector(-1,0){80}}
\put(0,65){\line(0,-1){50}}
\put(120,65){\line(0,-1){50}}
\end{picture}}
\put(0,0){\begin{picture}(0,0)(0,0)
\put(0,80){\makebox(0,0){\scriptsize{$\mathrmbf{List}(X_{2})$}}}
\put(120,80){\makebox(0,0){\scriptsize{$\mathrmbf{List}(X_{1})$}}}
\put(0,0){\makebox(0,0){\scriptsize{$\mathrmbf{List}(Y_{2})$}}}
\put(120,0){\makebox(0,0){\scriptsize{$\mathrmbf{List}(Y_{1})$}}}
\put(60,88){\makebox(0,0){\scriptsize{${\scriptstyle\sum}_{f}$}}}
\put(60,8){\makebox(0,0){\scriptsize{${\scriptstyle\sum}_{g}$}}}
\put(45,115){\makebox(0,0){\footnotesize{$\xRightarrow{\;\;\grave{\varphi}\;\;}$}}}
%
%
\put(5,50){\makebox(0,0)[l]{\scriptsize{$\models_{\mathrmbf{List}(\mathcal{A}_{2})}$}}}
\put(125,50){\makebox(0,0)[l]{\scriptsize{$\models_{\mathrmbf{List}(\mathcal{A}_{1})}$}}}
\put(30,80){\vector(1,0){57}}
\put(90,0){\vector(-1,0){60}}
\put(0,65){\line(0,-1){50}}
\put(120,65){\line(0,-1){50}}
\end{picture}}
\put(-5,110){\makebox(0,0)[l]{\scriptsize{$\sigma_{2}$}}}
\put(115,110){\makebox(0,0)[l]{\scriptsize{$\sigma_{1}$}}}
\put(-20,30){\makebox(0,0)[r]{\scriptsize{$\tau_{2}$}}}
\put(100,30){\makebox(0,0)[r]{\scriptsize{$\tau_{1}$}}}
\put(-23,130){\vector(1,-2){18}}
\put(97,130){\vector(1,-2){18}}
\put(-23,50){\vector(1,-2){18}}
\put(97,50){\vector(1,-2){18}}
\put(45,35){\makebox(0,0){\footnotesize{$\xRightarrow{\;\;\beta\;\;}$}}}
\put(30,180){\makebox(0,0){\scriptsize{$\overset{\textstyle{\stackrel{\text{\emph{lax schema morphism}}}{
\mathrmbfit{sch}(\mathcal{M}_{2})\xRightarrow{\;{\langle{r,\grave{\varphi},f}\rangle}\;}\mathrmbfit{sch}(\mathcal{M}_{1})}}}{\overbrace{\rule{90pt}{0pt}}}$}}}
\put(-60,100){\makebox(0,0)[r]{\scriptsize{$
\underset{\scriptstyle{\text{\emph{entity infomorphism}}}}
{\mathcal{E}_{2}\xrightleftharpoons{{\langle{r,k}\rangle}}\mathcal{E}_{1}}
\left\{\rule{0pt}{30pt}\right.
$}}}
\put(-40,40){\makebox(0,0)[r]{\scriptsize{$
\underset{\scriptstyle{\text{\emph{typed domain morphism}}}}
{\mathrmbf{List}\bigl(\mathcal{A}_{2}\xrightleftharpoons{{\langle{f,g}\rangle}}\mathcal{A}_{1}\bigr)}
\left\{\rule{0pt}{30pt}\right.
$}}}
\put(60,-40){\makebox(0,0){\scriptsize{$\underset{\textstyle{\stackrel{\textstyle{
\mathrmbfit{univ}(\mathcal{M}_{2})\xLeftarrow{\;{\langle{k,\beta,g}\rangle}\;}\mathrmbfit{univ}(\mathcal{M}_{1})
}}{\scriptstyle{\text{\emph{lax universe morphism}}}}}}{\underbrace{\rule{90pt}{0pt}}}$}}}
\end{picture}
\end{tabular}}}
&
{{\begin{tabular}{c}
\setlength{\unitlength}{0.65pt}
\begin{picture}(140,120)(-15,-15)
\put(0,80){\makebox(0,0){\footnotesize{$\mathcal{E}_{2}$}}}
\put(0,0){\makebox(0,0){\footnotesize{$\mathrmbf{List}(\mathcal{A}_{2})$}}}
\put(90,80){\makebox(0,0){\footnotesize{$\mathcal{E}_{1}$}}}
\put(90,0){\makebox(0,0){\footnotesize{$\mathrmbf{List}(\mathcal{A}_{1})$}}}
\put(-11,40){\makebox(0,0)[r]{\scriptsize{${\langle{\sigma_{2},\tau_{2}}\rangle}$}}}
\put(102,40){\makebox(0,0)[l]{\scriptsize{${\langle{\sigma_{1},\tau_{1}}\rangle}$}}}
\put(47,94){\makebox(0,0){\scriptsize{${\langle{r,k}\rangle}$}}}
\put(47,15){\makebox(0,0){\scriptsize{${\mathrmbf{List}({\langle{f,g}\rangle})}$}}}
\put(47,-15){\makebox(0,0){\scriptsize{${\langle{{\scriptscriptstyle\sum}_{f},{\scriptscriptstyle\sum}_{g}}\rangle}$}}}
%
\put(45,45){\makebox(0,0){\large{$\xRightarrow{{\langle{\grave{\varphi},\beta}\rangle}}$}}}
%
\put(20,84){\vector(1,0){50}}
\put(70,76){\vector(-1,0){50}}
\put(30,4){\vector(1,0){30}}
\put(60,-4){\vector(-1,0){30}}
\put(-5,65){\vector(0,-1){50}}
\put(5,65){\vector(0,-1){50}}
\put(85,65){\vector(0,-1){50}}
\put(95,65){\vector(0,-1){50}}
\end{picture}
\end{tabular}}}
\end{tabular}}}
\end{center}
\caption{Structure Morphism}
\label{fig:fole:struc:mor}
\end{figure}
Let 
$\mathrmbf{Struc}$
denote the mathematical context of structures and their morphisms.
%

%
\newpage
\subsection{Transition.}\label{sub:sec:trans}


%
In this section (\S\,\ref{sec:struc}),
we define two forms of structures and structure morphisms:
a classification (strict) form 
%
and an interpretation (lax) form.
%
Fig.\,\ref{fig:fole:struc:mor}
illustrates the classification (strict) form; 
whereas
Fig.\,\ref{fig:lax:struc:mor}
illustrates the interpretation (lax) form.
Here we define a transition from the strict form to the lax form.
%


%
\begin{note}\label{note:lax:ent:info}
By forgetting the full key sets and the full key function,
the entity infomorphism becomes a \textbf{(lax)} entity infomorphism
$\overset{\mathrmbfit{K}_{2}}{\mathrmbfit{ext}_{\mathcal{E}_{2}}}
\xLeftarrow{\;\,\kappa\;}
{r}{\;\circ\;}
\overset{\mathrmbfit{K}_{1}}{\mathrmbfit{ext}_{\mathcal{E}_{1}}}$
consisting of
an $R_{2}$-indexed collection of key functions 
\newline\mbox{}\hfill
{\footnotesize{$\bigl\{
\overset{{\mathrmbfit{K}_{2}}}
{\mathrmbfit{ext}_{\mathcal{E}_{2}}}(r_{2})
\xleftarrow[k]{\kappa_{r_{2}}}
\overset{{\mathrmbfit{K}_{1}}}
{\mathrmbfit{ext}_{\mathcal{E}_{1}}}(r(r_{2}))
\mid r_{2} \in R_{2}
\bigr\}$.}}
\hfill\mbox{}\newline
Map
$k_{1}{\,\in\,}\overset{{\mathrmbfit{K}_{1}}}{\mathrmbfit{ext}_{\mathcal{E}_{1}}}(r(r_{2}))$
with
$k_{1}{\;\models_{\mathcal{E}_{1}}\;}r(r_{2})$
to 
$\kappa_{r_{2}}(k_{1}) = k(k_{1}){\,\in\,}\overset{{\mathrmbfit{K}_{2}}}{\mathrmbfit{ext}_{\mathcal{E}_{2}}}(r_{2})$
with
$k(k_{1}){\;\models_{\mathcal{E}_{2}}\;}r_{2}$.
%
%
The collection of $\mathrmbf{List}(X_{1})$-signature morphisms
{\footnotesize{$\bigl\{
{\scriptstyle\sum}_{f}(\overset{\mathrmbfit{S}_{2}}{\sigma_{2}}(r_{2}))
\xrightarrow[h]{\;\grave{\varphi}_{r_{2}}\;}\overset{\mathrmbfit{S}_{1}}{\sigma_{1}}(r(r_{2}))
\mid r_{2} \in R_{2} \bigr\}$}}
defines the $\mathrmbf{List}(Y_{1})$-tuple functions
\newline\mbox{}\hfill
{\footnotesize{$\bigl\{
\mathrmbfit{tup}_{\mathcal{A}_{1}}({\scriptstyle\sum}_{f}(\sigma_{2}(r_{2})))
\xleftarrow
[\mathrmbfit{tup}_{\mathcal{A}_{1}}(h)]
{\;\mathrmbfit{tup}_{\mathcal{A}_{1}}(\grave{\varphi}_{r_{2}})\;}
\mathrmbfit{tup}_{\mathcal{A}_{1}}(\sigma_{1}(r(r_{2})))
\mid r_{2} \in R_{2}
\bigr\}$}}.
\hfill\mbox{}\newline
%
%
\begin{figure}
\begin{center}
{{\footnotesize{\begin{tabular}{c}
\setlength{\unitlength}{0.55pt}
\begin{picture}(220,220)(-40,-80)
\put(-60,140){\makebox(0,0){\scriptsize{$K_{2}$}}}
\put(-60,-80){\makebox(0,0){\scriptsize{$\mathrmbf{List}(Y_{2})$}}}
\put(180,140){\makebox(0,0){\scriptsize{$K_{1}$}}}
\put(180,-80){\makebox(0,0){\scriptsize{$\mathrmbf{List}(Y_{1})$}}}
\put(0,80){\makebox(0,0){\scriptsize{$\mathrmbfit{K}_{2}(r_{2})$}}}
\put(125,80){\makebox(0,0){\scriptsize{$\mathrmbfit{K}_{1}(r(r_{2}))$}}}
\put(-27,0){\makebox(0,0){\scriptsize{$\mathrmbfit{tup}_{\mathcal{A}_{2}}(\sigma_{2}(r_{2}))$}}}
\put(150,0){\makebox(0,0){\scriptsize{$\mathrmbfit{tup}_{\mathcal{A}_{1}}(\sigma_{1}(r(r_{2})))$}}}
\put(60,-40){\makebox(0,0){\scriptsize{$
\mathrmbfit{tup}_{\mathcal{A}_{1}}({\scriptstyle\sum}_{f}(\sigma_{2}(r_{2})))$}}}
\put(63,90){\makebox(0,0){\scriptsize{$\kappa_{r_{2}}$}}}
\put(63,12){\makebox(0,0){\tiny{$\mathrmbfit{tup}(\grave{\varphi}_{r_{2}},f,g)$}}}
\put(0,40){\makebox(0,0)[r]{\scriptsize{$\tau_{2,r_{2}}$}}}
\put(125,40){\makebox(0,0)[l]{\scriptsize{$\tau_{1,r(r_{2})}$}}}
\put(105,-22){\makebox(0,0)[l]
{\tiny{$\mathrmbfit{tup}_{\mathcal{A}_{1}}(\grave{\varphi}_{r_{2}})
=h{\,\cdot\,}{(\mbox{-})}$}}} 
\put(15,-22){\makebox(0,0)[r]{\tiny{$
{(\mbox{-})}{\,\cdot\,}g=
\grave{\tau}_{{\langle{f,g}\rangle}}(\sigma_{2}(r_{2}))$}}}
\put(60,150){\makebox(0,0){\scriptsize{$k$}}}
\put(60,-95){\makebox(0,0){\scriptsize{${\scriptstyle\sum}_{g}$}}}
\put(-65,40){\makebox(0,0)[r]{\scriptsize{$\tau_{2}$}}}
\put(185,40){\makebox(0,0)[l]{\scriptsize{$\tau_{1}$}}}
\put(60,120){\makebox(0,0){\footnotesize{$=$}}}
\put(-33,55){\makebox(0,0){\footnotesize{$=$}}}
\put(153,55){\makebox(0,0){\footnotesize{$=$}}}
\put(20,-65){\makebox(0,0){\footnotesize{$=$}}}
\put(131,-54){\makebox(0,0){$\xRightarrow{\tau_{\grave{\varphi}_{r_{2}}}}$}}
\put(60,35){\makebox(0,0){\footnotesize{$\xRightarrow{\;\;\beta\;\;}$}}}
\dottedline{3}(-60,29)(180,29)
\put(60,57.5){\makebox(0,0){\footnotesize{$\xRightarrow{\,\beta_{r_{2}}\,}$}}}
\dottedline{3}(0,50)(120,50)
%
\put(0,65){\vector(0,-1){50}}
\put(120,65){\vector(0,-1){50}}
\put(80,80){\vector(-1,0){50}}
\put(90,0){\vector(-1,0){60}}
\put(45,-30){\vector(-3,2){30}}
\put(105,-10){\vector(-3,-2){30}}
\put(-60,125){\vector(0,-1){180}}
\put(155,140){\vector(-1,0){190}}
\put(140,-80){\vector(-1,0){160}}
\put(180,125){\vector(0,-1){180}}
\thinlines
\dottedline{3}(-10,95)(-40,125)\put(-40,125){\vector(-1,1){0}}
\dottedline{3}(140,95)(170,125)\put(170,125){\vector(1,1){0}}
\dottedline{3}(-16,-15)(-39,-61)\put(-39,-61){\vector(-1,-2){0}}
\dottedline{3}(136,-15)(159,-61)\put(159,-61){\vector(1,-2){0}}
\dottedline{3}(75,-50)(145,-73)\put(145,-73){\vector(3,-1){0}}
\end{picture}
\end{tabular}}}}
\end{center}
\caption{Transition}
\label{fig:transition}
\end{figure}
%
For any source predicate $r_{2} \in R_{2}$,
there is a bridge
$\kappa_{r_{2}}{\,\cdot\,}\bar{\tau}_{2,r_{2}}
\xRightarrow{\;\beta_{r_{2}}\;}
\bar{\tau}_{1,r(r_{2})}{\,\cdot\,}{\scriptstyle\sum}_{g}$
that is a restriction of the bridge
$k{\;\cdot\;}\tau_{2}{\;\xRightarrow{\,\beta\,\;}\;}\tau_{1}{\;\cdot\;}{\scriptstyle\sum}_{g}$
to the subset
$\mathrmbfit{K}_{1}(r(r_{2})){\;\subseteq\;}K_{1}$,
where 
$\bar{\tau}_{2,r_{2}} = \tau_{2,r_{2}}{\;\cdot\;}\mathrmbfit{inc}$ and
$\bar{\tau}_{1,r(r_{2})} = \tau_{1,r(r_{2})}{\;\cdot\;}\mathrmbfit{inc}$.
For each key
$k_{1}{\in}\mathrmbfit{K}_{1}(r(r_{2}))$,
the $k_{1}^{\text{th}}$-component of the bridge $\beta_{r_{2}}$ is the $\mathrmbf{List}(Y_{2})$-morphism
\newline\mbox{}\hfill
{{$\underset{\tau_{2}(k(k_{1}))}
{\bar{\tau}_{2,r_{2}}(\kappa_{r_{2}}(k_{1}))}
\xrightarrow[h]{\beta_{k_{1}}}
\underset{{\scriptstyle\sum}_{g}(\tau_{1}(k_{1}))}
{{\scriptstyle\sum}_{g}(\bar{\tau}_{1,r(r_{2})}(k_{1}))}$}}.
\footnote{Note that we have used the equal arity condition for a structure morphism
(Def\;\ref{def:struc:mor}):
$\grave{\varphi}_{r_{2}}{\;\circ\;}\mathrmbfit{arity}_{X_{1}}=\beta_{k_{1}}{\;\circ\;}\mathrmbfit{arity}_{Y_{2}}$
when
${\footnotesize{k(k_{1}){\;\models_{\mathcal{E}_{2}}\;}r_{2}
\text{ \underline{iff} }
k_{1}{\;\models_{\mathcal{E}_{1}}\;}r(r_{2})}}$.}
\hfill\mbox{}\newline
\end{note}
\begin{proposition}[Key]\label{prop:key}
Any structure morphism 
$\mathcal{M}_{2}\xrightleftharpoons{{\langle{r,k,\grave{\varphi},\beta,f,g}\rangle}}\mathcal{M}_{1}$
(Def\;\ref{def:struc:mor}) satisfies the following condition:
for any source predicate $r_{2} \in R_{2}$,
\newline\mbox{}\hfill
{\footnotesize{$
\kappa_{r_{2}}{\;\cdot\;}\tau_{2,r_{2}}
= 
\tau_{1,r(r_{2})}
{\;\cdot\;}
{\mathrmbfit{tup}_{\mathcal{A}_{1}}(
{\grave{\varphi}_{r_{2}}})}
{\;\cdot\;}
{\grave{\tau}_{{\langle{f,g}\rangle}}(
{\sigma_{2}}(r_{2}))} 
$.}}
%
\footnote{\label{tup:bridge}
The tuple bridge
$\mathrmbfit{tup}_{\mathcal{A}_{2}}
\xLeftarrow{\;\grave{\tau}_{{\langle{f,g}\rangle}}\;}
({\scriptstyle\sum}_{f})^{\mathrm{op}}\!{\circ\;}\mathrmbfit{tup}_{\mathcal{A}_{1}}$
is defined and used in the paper
``The {\ttfamily FOLE} Table''\cite{kent:fole:era:tbl}.}
%
\rule[6pt]{0pt}{10pt}
\hfill\mbox{}\newline
\end{proposition}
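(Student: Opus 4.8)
The plan is to recognize the claimed identity as the fiberwise repackaging of the global tuple bridge $\beta$ carried by the structure morphism (Def.~\ref{def:struc:mor}), equivalently of its restriction $\beta_{r_2}$ from Note~\ref{note:lax:ent:info}, and to verify it pointwise on the fiber $\mathrmbfit{K}_1(r(r_2))$. First I would fix a source predicate $r_2 \in R_2$ and note that both composites in the statement are functions $\mathrmbfit{K}_1(r(r_2)) \to \mathrmbfit{tup}_{\mathcal{A}_2}(\sigma_2(r_2))$, so it suffices to show they agree on each key $k_1 \in \mathrmbfit{K}_1(r(r_2))$, i.e.\ each $k_1$ with $k_1 \models_{\mathcal{E}_1} r(r_2)$. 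For such a $k_1$ the entity infomorphism gives $k(k_1) \models_{\mathcal{E}_2} r_2$, so $\kappa_{r_2}(k_1) = k(k_1) \in \mathrmbfit{K}_2(r_2)$, and the left side evaluates to $\tau_{2,r_2}(k(k_1)) = \tau_2(k(k_1))$, since the fiber map $\tau_{2,r_2}$ is the restriction of the global tuple map $\tau_2$.

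Next I would compute the right side using the two operator descriptions read off Fig.~\ref{fig:transition}: the reindexing $\mathrmbfit{tup}_{\mathcal{A}_1}(\grave{\varphi}_{r_2}) = h{\,\cdot\,}(\mbox{-})$, precomposition with the arity function $h$ of $\grave{\varphi}_{r_2}$, and the relabeling $\grave{\tau}_{{\langle{f,g}\rangle}}(\sigma_2(r_2)) = (\mbox{-}){\,\cdot\,}g$, postcomposition with the data-value map $g$. Starting from $\tau_{1,r(r_2)}(k_1) = \tau_1(k_1)$ and applying these in order produces $h{\,\cdot\,}{\scriptstyle\sum}_g(\tau_1(k_1))$, the $\mathrmbf{List}(Y_2)$-tuple over the index set $I_2$ of $\sigma_2(r_2)$ sending $i \mapsto g\bigl(\tau_1(k_1)(h(i))\bigr)$. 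Its well-definedness — that it lands in $\mathrmbfit{tup}_{\mathcal{A}_2}(\sigma_2(r_2))$ — follows from the signature-preservation $f{\,\circ\,}s_2 = h{\,\cdot\,}s_1$ of $\grave{\varphi}_{r_2}$ together with the defining condition of the attribute infomorphism ${\langle{f,g}\rangle}$.

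Finally I would invoke the $k_1^{\text{th}}$ component of the tuple bridge supplied by Def.~\ref{def:struc:mor}, namely the $\mathrmbf{List}(Y_2)$-morphism $\tau_2(k(k_1)) \xrightarrow[h]{\beta_{k_1}} {\scriptstyle\sum}_g(\tau_1(k_1))$. By the very definition of a morphism in $\mathrmbf{List}(Y_2)$ this asserts the equation $\tau_2(k(k_1)) = h{\,\cdot\,}{\scriptstyle\sum}_g(\tau_1(k_1))$, which is exactly the equality of the two sides computed above. The equal-arity clause of Def.~\ref{def:struc:mor} guarantees that the arity $h$ occurring in $\beta_{k_1}$ coincides with that of $\grave{\varphi}_{r_2}$ precisely when $k_1 \models_{\mathcal{E}_1} r(r_2)$, so the component condition applies across the whole fiber; as $k_1$ was arbitrary, the two composites coincide.

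The only genuine difficulty is bookkeeping of directions: the sort map $f$ and the data-value map $g$ run oppositely, the arity $h$ reindexes tuples contravariantly, and $\kappa_{r_2}$, $\tau_{2,r_2}$, $\tau_{1,r(r_2)}$ are all restrictions of global data to the relevant fibers. One must confirm that ``reindex by $h$, then relabel by $g$'' on the $\mathcal{A}_1$-side yields exactly the list over $Y_2$ that the bridge component $\beta_{k_1}$ equates with $\tau_2(k(k_1))$. Once the operators are interpreted correctly via Fig.~\ref{fig:transition}, the identity is just the naturality of the tuple bridge restricted fiber by fiber, so no computation beyond Def.~\ref{def:struc:mor} and Note~\ref{note:lax:ent:info} is required.
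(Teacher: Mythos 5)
Your proposal is correct and follows essentially the same route as the paper's own proof: fix $r_{2}$, evaluate both composites pointwise on keys $k_{1}\in\mathrmbfit{ext}_{\mathcal{E}_{1}}(r(r_{2}))$, use the entity infomorphism to identify $\kappa_{r_{2}}(k_{1})=k(k_{1})$, and read off from the $k_{1}^{\text{th}}$ component $\tau_{2}(k(k_{1}))\xrightarrow[h]{\beta_{k_{1}}}{\scriptstyle\sum}_{g}(\tau_{1}(k_{1}))$ of the tuple bridge the equation $t_{2}=h{\,\cdot\,}t_{1}{\,\cdot\,}g$, which is then recognized as the composite $\mathrmbfit{tup}_{\mathcal{A}_{1}}(\grave{\varphi}_{r_{2}})$ followed by $\grave{\tau}_{{\langle{f,g}\rangle}}(\sigma_{2}(r_{2}))$. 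Your explicit attention to the equal-arity clause and to well-definedness of the target fiber is bookkeeping the paper leaves implicit, but the argument is the same.
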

\begin{proof}
\begin{sloppypar}
For suppose that
$k_{1}{\,\in\,}\mathrmbfit{ext}_{\mathcal{E}_{1}}(r(r_{2}))$
with
$k_{2}=\kappa_{r_{2}}(k_{1})=k(k_{1}){\,\in\,}\mathrmbfit{ext}_{\mathcal{E}_{2}}(r_{2})$.
Define
$(I_{1},t_{1})=\tau_{1,r(r_{2})}(k_{1})=\tau_{1}(k_{1}){\,\in\,}\mathrmbfit{tup}_{\mathcal{A}_{1}}(\sigma_{1}(r(r_{2})))$
and
$(I_{2},t_{2})
=\tau_{2,r_{2}}(\kappa_{r_{2}}(k_{1}))
=\tau_{2}(k_{2}){\,\in\,}\mathrmbfit{tup}_{\mathcal{A}_{2}}(\sigma_{2}(r_{2}))$.
We have the $\mathrmbf{List}(Y_{2})$ morphism
$(I_{2},t_{2})=\tau_{2}(k(k_{1}))
\xrightarrow[h]{\beta_{k_{1}}}
{\scriptstyle\sum}_{g}(\tau_{1}(k_{1}))={\scriptstyle\sum}_{g}(I_{1},t_{1})$.
Thus,
$t_{2} = h{\,\cdot\,}t_{1}{\,\cdot\,}g$.
This means that
$(I_{2},t_{2})
= \mathrmbfit{tup}(h,f,g)(I_{1},t_{1})
= \grave{\tau}_{{\langle{f,g}\rangle}}(\sigma_{2}(r_{2}))(\mathrmbfit{tup}_{\mathcal{A}_{1}}(h)(I_{1},t_{1}))
= \grave{\tau}_{{\langle{f,g}\rangle}}(\sigma_{2}(r_{2}))(
\mathrmbfit{tup}_{\mathcal{A}_{1}}(\grave{\varphi}_{r_{2}})(I_{1},t_{1}))$.
Hence,
$\tau_{2,r_{2}}(\kappa_{r_{2}}(k_{1}))
= \grave{\tau}_{{\langle{f,g}\rangle}}(\sigma_{2}(r_{2}))(\mathrmbfit{tup}_{\mathcal{A}_{1}}(\grave{\varphi}_{r_{2}})(
\tau_{1,r(r_{2})}(k_{1})))
= \mathrmbfit{tup}(\grave{\varphi}_{r_{2}},f,g)(\tau_{1,r(r_{2})}(k_{1}))$,
for all
$k_{1}{\,\in\,}\mathrmbfit{ext}_{\mathcal{E}_{1}}(r(r_{2}))$.
%
\rule{5pt}{5pt}
\end{sloppypar}
\end{proof}
\begin{corollary}\label{cor:reduct:tup}
For $r_{2} \in R_{2}$,
each bridge 
$\kappa_{r_{2}}{\,\cdot\,}\bar{\tau}_{2,r_{2}}
\xRightarrow
{\;\beta_{r_{2}}\;}\bar{\tau}_{1,r(r_{2})}{\,\cdot\,}{\scriptstyle\sum}_{g}$
reduces to the composite
$\beta_{r_{2}} =
\tau_{1,r(r_{2})}{\;\cdot\;}\tau_{\grave{\varphi}_{r_{2}}}{\;\circ\;}{\scriptstyle\sum}_{g}$
for the bridge
$\mathrmbfit{tup}_{\mathcal{A}_{1}}(\grave{\varphi}_{r_{2}})
{\,\circ\,}\mathrmbfit{inc}
\xRightarrow{\tau_{\grave{\varphi}_{r_{2}}}}
\mathrmbfit{inc}$
associated with the function
$\mathrmbfit{tup}_{\mathcal{A}_{1}}({\scriptstyle\sum}_{f}(\sigma_{2}(r_{2})))
\xleftarrow[h{\,\cdot\,}{(\mbox{-})}]
{\mathrmbfit{tup}_{\mathcal{A}_{1}}(\grave{\varphi}_{r_{2}})}
\mathrmbfit{tup}_{\mathcal{A}_{1}}(\sigma_{1}(r(r_{2})))$.
\footnote{For tuple
$(I_{1},t_{1}){\,\in\,}\mathrmbfit{tup}_{\mathcal{A}_{1}}(\sigma_{1}(r(r_{2})))$,
the $(I_{1},t_{1})^{\text{th}}$ component 
of $\tau_{\grave{\varphi}_{r_{2}}}$
is the $\mathrmbf{List}(Y_{1})$-morphism 
$\mathrmbfit{tup}_{\mathcal{A}_{1}}(\grave{\varphi}_{r_{2}})(I_{1},t_{1})
=(I_{1},h{\,\cdot\,}t_{1})
\xrightarrow[h]{\tau_{\grave{\varphi}_{r_{2}}}(I_{1},t_{1})}(I_{1},t_{1})$.}
\end{corollary}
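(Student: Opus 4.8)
The plan is to prove the claimed factorization one bridge-component at a time, since both the restricted bridge $\beta_{r_2}$ and the proposed composite $\tau_{1,r(r_2)}{\,\cdot\,}\tau_{\grave{\varphi}_{r_2}}{\,\circ\,}{\scriptstyle\sum}_g$ are indexed by the keys $k_1{\,\in\,}\mathrmbfit{K}_1(r(r_2))$, and all the analytic work has already been done in Prop.\,\ref{prop:key}. First I would fix a key $k_1{\,\in\,}\mathrmbfit{K}_1(r(r_2))$, set $(I_1,t_1)=\tau_{1,r(r_2)}(k_1)$, and recall from Note\,\ref{note:lax:ent:info} that the corresponding component of $\beta_{r_2}$ is the $\mathrmbf{List}(Y_2)$-morphism $\bar{\tau}_{2,r_2}(\kappa_{r_2}(k_1)){\;\xrightarrow{\beta_{k_1}}\;}{\scriptstyle\sum}_g(\bar{\tau}_{1,r(r_2)}(k_1))$ lying over the arity function $h$ of $\grave{\varphi}_{r_2}$.

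Next I would evaluate the proposed composite at the same $k_1$: pre-composing the bridge $\tau_{\grave{\varphi}_{r_2}}$ with the function $\tau_{1,r(r_2)}$ selects the $(I_1,t_1)$-component of $\tau_{\grave{\varphi}_{r_2}}$, which by its defining description is the canonical $\mathrmbf{List}(Y_1)$-morphism $\mathrmbfit{tup}_{\mathcal{A}_1}(\grave{\varphi}_{r_2})(I_1,t_1){\;\xrightarrow{h}\;}(I_1,t_1)$; post-composing by ${\scriptstyle\sum}_g$ then applies $g$ to every data value, producing a $\mathrmbf{List}(Y_2)$-morphism over the same arity function $h$ whose target is ${\scriptstyle\sum}_g(I_1,t_1)={\scriptstyle\sum}_g(\tau_{1,r(r_2)}(k_1))$. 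The two morphisms thus already agree on target and on underlying arity. To match their sources I would invoke Prop.\,\ref{prop:key}, which gives $\tau_{2,r_2}(\kappa_{r_2}(k_1))=\mathrmbfit{tup}(\grave{\varphi}_{r_2},f,g)(\tau_{1,r(r_2)}(k_1))$; unwinding this composite tuple operation exhibits the source of $\beta_{k_1}$ as exactly ${\scriptstyle\sum}_g$ applied to $\mathrmbfit{tup}_{\mathcal{A}_1}(\grave{\varphi}_{r_2})(I_1,t_1)$, i.e. the source of the composite's component.

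Since a morphism in $\mathrmbf{List}(Y_2)$ is determined by its endpoints together with its underlying arity function, the agreement of source, target and arity forces $\beta_{k_1}$ to coincide with the $k_1$-component of $\tau_{1,r(r_2)}{\,\cdot\,}\tau_{\grave{\varphi}_{r_2}}{\,\circ\,}{\scriptstyle\sum}_g$; as $k_1$ was arbitrary, the two bridges coincide. The only real obstacle is bookkeeping: keeping straight the contravariance of $\mathrmbfit{tup}_{\mathcal{A}_1}(\grave{\varphi}_{r_2})$ (precomposition by $h$) against the covariant post-composition by $g$ inside ${\scriptstyle\sum}_g$, and confirming that the factorization of $\mathrmbfit{tup}(\grave{\varphi}_{r_2},f,g)$ furnished by the Key proposition lines the two up in the correct order. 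Once that factorization is read off correctly, the verification is purely formal.
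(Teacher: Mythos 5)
Your proof is correct and takes essentially the same route as the paper: the paper's proof is recorded only as ``Straightforward,'' but its intended argument is exactly your component-wise check, namely fixing a key $k_{1}{\,\in\,}\mathrmbfit{K}_{1}(r(r_{2}))$, using Prop.\,\ref{prop:key} to rewrite the source tuple $\tau_{2,r_{2}}(\kappa_{r_{2}}(k_{1}))$ as ${\scriptstyle\sum}_{g}\bigl(\mathrmbfit{tup}_{\mathcal{A}_{1}}(\grave{\varphi}_{r_{2}})(\tau_{1,r(r_{2})}(k_{1}))\bigr)$, and recognizing the $k_{1}^{\text{th}}$-component of $\beta_{r_{2}}$ as the ${\scriptstyle\sum}_{g}$-image of the canonical $(I_{1},t_{1})^{\text{th}}$-component of $\tau_{\grave{\varphi}_{r_{2}}}$. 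Your closing observation that a $\mathrmbf{List}(Y_{2})$-morphism is determined by its endpoints and underlying arity function only makes explicit what the paper leaves tacit.
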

\begin{proof}
Straightforward.
\comment{
Let
$k_{1}{\in}\mathrmbfit{K}_{1}(r(r_{2}))$
be a key with associated tuple
\newline
$\tau_{1,r(r_{2})}(k_{1})
=(I_{1},t_{1})
{\,\in\,}\mathrmbfit{tup}_{\mathcal{A}_{1}}(\sigma_{1}(r(r_{2})))
{\;\subseteq\;}
\mathrmbf{List}(Y_{2})$.
\newline
Since
$\grave{\tau}_{{\langle{f,g}\rangle}}(\sigma_{2}(r_{2}))(
\mathrmbfit{tup}_{\mathcal{A}_{1}}(\grave{\varphi}_{r_{2}})(
\tau_{1,r(r_{2})}(k_{1})))
={\scriptstyle\sum}_{g}(\mathrmbfit{tup}_{\mathcal{A}_{1}}(\grave{\varphi}_{r_{2}})(\tau_{1}(k_{1})))$,
\newline
the $k_{1}^{\text{th}}$-component of the bridge $\beta_{r_{2}}$ is the $\mathrmbf{List}(Y_{2})$-morphism
\newline
${\scriptstyle\sum}_{g}(\mathrmbfit{tup}_{\mathcal{A}_{1}}(\grave{\varphi}_{r_{2}})(\tau_{1}(k_{1})))
\xrightarrow{\;h\;}
{{\scriptstyle\sum}_{g}(\tau_{1}(k_{1}))}$.
This is the ${\scriptstyle\sum}_{g}$-image of the $\mathrmbf{List}(Y_{1})$-morphism
$\mathrmbfit{tup}_{\mathcal{A}_{1}}(\grave{\varphi}_{r_{2}})(\tau_{1}(k_{1}))\xrightarrow{\;h\;}\tau_{1}(k_{1})$.
}
\mbox{}\hfill\rule{5pt}{5pt}
\end{proof}
%

%
\newpage
\subsection{Lax Structures.}\label{sub:sec:struc:lax}



%
To show equivalence between sound logics and databases,
we need to use lax structures.
%
\footnote{A structure becomes lax when we forget the global set of keys $K$
and use only a lax entity classification
$\mathcal{E}={\langle{R,\mathrmbfit{K}}\rangle}$.
We can retrieve a ``crisp'' entity classification by defining the disjoint union of key subsets
$\coprod_{r \in R} \mathrmbfit{K}(r)$.
But,
using the extent form of a structure (laxness),
we have lost a certain coordination of tuple functions
$\{
\mathrmbfit{ext}_{\mathcal{E}}(r)\xrightarrow{\tau_{r}}\mathrmbfit{tup}_{\mathcal{A}}(\sigma(r))
\mid r \in R \bigr\}$ here,
which may or may not be important.}
Again, 
assume that we are given a schema 
$\mathcal{S} = {\langle{R,\sigma,X}\rangle}$
with a set of predicate symbols $R$
and a signature (header) map 
$R \xrightarrow{\sigma} \mathrmbf{List}(X)$.
%
%
%
\begin{definition}\label{def:struc:lax}
A (lax) $\mathcal{S}$-structure $\mathcal{M} = {\langle{\mathcal{E},\sigma,\tau,\mathcal{A}}\rangle}$
consists of 
a lax entity classification
$\mathcal{E}={\langle{R,\mathrmbfit{K}}\rangle}$,
an attribute classification (typed domain)
$\mathcal{A} = {\langle{X,Y,\models_{\mathcal{A}}}\rangle}$,
the schema 
$\mathcal{S} = {\langle{R,\sigma,X}\rangle}$,
and one of the two equivalent descriptions:
%
\begin{itemize}
\item 
either
(MID Fig.\,\ref{fig:fole:struc})
a function
{{$R \xrightarrow{T_{\mathcal{M}}} \mathrmbf{Tbl}(\mathcal{A})$}}
consisting of an $R$-indexed collection of 
$\mathcal{A}$-tables
$T_{\mathcal{M}}(r)={\langle{\sigma(r),\mathrmbfit{K}(r),\tau_{r}}\rangle}$; 
\newline
\item 
or (RHS Fig.\,\ref{fig:fole:struc})
a tuple bridge
{{$\mathrmbfit{K}
\xRightarrow{\;\tau\;}\sigma{\;\circ\;}\mathrmbfit{tup}_{\mathcal{A}}$}}
consisting of 
an indexed collection of tuple functions 
$\{
\mathrmbfit{K}(r)\xrightarrow{\tau_{r}}\mathrmbfit{tup}_{\mathcal{A}}(\sigma(r))
\mid r \in R \}$.
\end{itemize}
\end{definition}
%
\comment{
\begin{figure}
\begin{center}
\begin{tabular}{@{\hspace{20pt}}c@{\hspace{60pt}}c}
{{\begin{tabular}{c}
\begin{tabular}{c}
\setlength{\unitlength}{0.6pt}
\begin{picture}(140,80)(-5,10)
\put(0,80){\makebox(0,0){\footnotesize{$R$}}}
\put(0,0){\makebox(0,0){\footnotesize{$K$}}}
\put(87,80){\makebox(0,0)[l]{\footnotesize{$\mathrmbf{List}(X)$}}}
\put(87,0){\makebox(0,0)[l]{\footnotesize{$\mathrmbf{List}(Y)$}}}
\put(8,40){\makebox(0,0)[l]{\scriptsize{$\models_{\mathcal{E}}$}}}
\put(128,40){\makebox(0,0)[l]{\scriptsize{$\models_{\mathrmbf{List}(\mathcal{A})}$}}}
\put(50,90){\makebox(0,0){\scriptsize{$\sigma$}}}
\put(50,10){\makebox(0,0){\scriptsize{$\tau$}}}
\put(20,80){\vector(1,0){60}}
\put(20,0){\vector(1,0){60}}
\put(0,65){\line(0,-1){50}}
\put(120,65){\line(0,-1){50}}
%
\end{picture}
\end{tabular}
\\\\
\end{tabular}}}
&
{{\begin{tabular}{c}
\setlength{\unitlength}{0.75pt}
\begin{picture}(140,80)(0,0)
\put(20,71){\makebox(0,0){\footnotesize{$
\overset{\underbrace{\mathrmbfit{ext}_{\mathcal{E}}(r)}}
{\mathrmbfit{K}(r)}$}}}
\put(102,72){\makebox(0,0){\footnotesize{$
\overset{\underbrace{\mathrmbfit{ext}_{\mathrmbf{List}(\mathcal{A})}(\sigma(r))}}
{\mathrmbfit{tup}_{\mathcal{A}}(\sigma(r))}$}}}
\put(0,0){\makebox(0,0){\footnotesize{$K$}}}
\dottedline{3}(16,48)(4,12)\put(4,12){\vector(-1,-3){0}}
\dottedline{3}(96,48)(108,12)\put(108,12){\vector(1,-3){0}}
\put(120,0){\makebox(0,0){\footnotesize{$\mathrmbf{List}(Y)$}}}
\put(5,35){\makebox(0,0)[r]{\scriptsize{$\mathrmbfit{inc}$}}}
\put(105,35){\makebox(0,0)[l]{\scriptsize{$\mathrmbfit{inc}$}}}
\put(52,66){\makebox(0,0){\scriptsize{$\tau_{r}$}}}
\put(55,35){\makebox(0,0){\scriptsize\textsl{restriction}}}
\put(55,8){\makebox(0,0){\scriptsize{$\tau$}}}
\put(40,60){\vector(1,0){25}}
\put(12,0){\vector(1,0){80}}
\end{picture}
\end{tabular}}}
\\&\\
\textsl{structure}
&
\textsl{tuple maps}
\end{tabular}
\end{center}
\caption{{\ttfamily FOLE} Structure}
\label{fig:fole:struc:old}
\end{figure}
}
%
Hence,
we can think of a lax structure
$\mathcal{M} = {\langle{\mathcal{E},\sigma,\tau,\mathcal{A}}\rangle}$
as extending 
the {\ttfamily FOLE} schema
$R \xrightarrow{\,\sigma} \mathrmbf{List}(X)$
to the tabular interpretation function
$R \xrightarrow{\;T_{\mathcal{M}}\;} \mathrmbf{Tbl}(\mathcal{A})$.
%
%
%
\begin{proposition}\label{prop:struc:2:|db|}
A (lax) $\mathcal{S}$-structure 
$\mathcal{M} = {\langle{\mathcal{E},\sigma,\tau,\mathcal{A}}\rangle}$
defines,
is the same as,
the constraint-free aspect of a database
$\mathcal{R} = {\langle{R,\sigma,\mathcal{A},\mathrmbfit{K},\tau}\rangle}$
consisting of
a set of predicate types $R$,
a key function
$R\xrightarrow{\;\mathrmbfit{K}\;}\mathrmbf{Set}$, 
a schema $\mathcal{S} = {\langle{R,\sigma,X}\rangle}$ 
with
a signature function
$R\xrightarrow{\;\sigma\;}\mathrmbf{List}(X)$,
and
a (constraint-free)
tuple bridge
$\mathrmbfit{K}\xRightarrow{\;\tau\;}{S}\,{\,\circ\,}\mathrmbfit{tup}_{\mathcal{A}}$
consisting of
an $R$-indexed collection of tuple maps
$\bigl\{
\mathrmbfit{K}(r)\xrightarrow{\;\tau_{r}\;}\mathrmbfit{tup}_{\mathcal{A}}(\sigma(r))
\mid r \in R \bigr\}$.
\end{proposition}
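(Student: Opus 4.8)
The plan is to prove the statement by direct definitional unfolding, since the assertion is one of identity (``is the same as'') rather than of mere isomorphism: I will exhibit that the data comprising a lax structure and the data comprising the constraint-free aspect of a database are, component for component, the very same thing. First I would recall from Def.~\ref{def:struc:lax} that a lax $\mathcal{S}$-structure $\mathcal{M}={\langle{\mathcal{E},\sigma,\tau,\mathcal{A}}\rangle}$ is assembled from a lax entity classification $\mathcal{E}={\langle{R,\mathrmbfit{K}}\rangle}$, a typed domain $\mathcal{A}={\langle{X,Y,\models_{\mathcal{A}}}\rangle}$, the schema $\mathcal{S}={\langle{R,\sigma,X}\rangle}$, and (in the tuple-bridge presentation) a bridge $\mathrmbfit{K}\xRightarrow{\;\tau\;}\sigma{\,\circ\,}\mathrmbfit{tup}_{\mathcal{A}}$ given by the family $\{\mathrmbfit{K}(r)\xrightarrow{\tau_{r}}\mathrmbfit{tup}_{\mathcal{A}}(\sigma(r))\mid r\in R\}$.

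Next I would read off the target quintuple $\mathcal{R}={\langle{R,\sigma,\mathcal{A},\mathrmbfit{K},\tau}\rangle}$ and match the pieces: the underlying predicate set $R$ and the key function $R\xrightarrow{\mathrmbfit{K}}\mathrmbf{Set}$ are precisely the two constituents of the lax entity classification $\mathcal{E}={\langle{R,\mathrmbfit{K}}\rangle}$; the signature function $R\xrightarrow{\sigma}\mathrmbf{List}(X)$ together with $X$ is precisely the schema $\mathcal{S}$; the typed domain $\mathcal{A}$ is carried over verbatim; and the constraint-free tuple bridge $\mathrmbfit{K}\xRightarrow{\tau}\sigma{\,\circ\,}\mathrmbfit{tup}_{\mathcal{A}}$ is, componentwise, the same indexed family of tuple maps $\mathrmbfit{K}(r)\xrightarrow{\tau_{r}}\mathrmbfit{tup}_{\mathcal{A}}(\sigma(r))$. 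Thus each datum of $\mathcal{M}$ supplies one and only one datum of $\mathcal{R}$, and conversely, so the passage in both directions is the identity on components. Writing $S$ for the schema signature passage, the database bridge $\mathrmbfit{K}\xRightarrow{\tau}S{\,\circ\,}\mathrmbfit{tup}_{\mathcal{A}}$ and the structure bridge $\mathrmbfit{K}\xRightarrow{\tau}\sigma{\,\circ\,}\mathrmbfit{tup}_{\mathcal{A}}$ coincide.

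The only point requiring a remark is the interchangeability of the two presentations offered in Def.~\ref{def:struc:lax}. If instead $\mathcal{M}$ is presented in its tabular form $R\xrightarrow{T_{\mathcal{M}}}\mathrmbf{Tbl}(\mathcal{A})$ with $T_{\mathcal{M}}(r)={\langle{\sigma(r),\mathrmbfit{K}(r),\tau_{r}}\rangle}$, I would observe that the signature slot $\sigma(r)$ and the key-set slot $\mathrmbfit{K}(r)$ of each table are already determined by the schema and the entity classification, so that $T_{\mathcal{M}}(r)$ carries exactly the additional information of the tuple map $\mathrmbfit{K}(r)\xrightarrow{\tau_{r}}\mathrmbfit{tup}_{\mathcal{A}}(\sigma(r))$. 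Passing between the collection $\{T_{\mathcal{M}}(r)\mid r\in R\}$ and the bridge $\tau$ is therefore a bijection, as already asserted in that definition, and the match with $\mathcal{R}$ goes through unchanged.

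I do not expect a genuine obstacle here: the content is purely that two notations package identical data, and no coherence, naturality, or well-definedness condition must be checked beyond those already built into Def.~\ref{def:struc:lax}. The subtlest bookkeeping step is confirming that the tabular presentation contributes nothing beyond the tuple bridge --- that is, that the signatures and key sets appearing in the tables are redundant with $\sigma$ and $\mathrmbfit{K}$ --- which is immediate from the shape ${\langle{\sigma(r),\mathrmbfit{K}(r),\tau_{r}}\rangle}$ of $T_{\mathcal{M}}(r)$.
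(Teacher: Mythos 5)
Your proposal is correct and matches the paper's own argument: the paper's proof simply says ``See the previous discussion,'' which refers to exactly the component-by-component definitional unfolding you carry out (predicate set and key function from the lax entity classification, schema and signature function carried over, typed domain verbatim, and the tuple bridge identified with the indexed family of tuple maps). Your extra remark on the redundancy of the signature and key-set slots in the tabular presentation $T_{\mathcal{M}}(r)={\langle{\sigma(r),\mathrmbfit{K}(r),\tau_{r}}\rangle}$ is the same observation the paper makes when noting the two equivalent descriptions in Def.\,\ref{def:struc:lax}, so nothing is missing.
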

\begin{proof}
See the previous discussion.
\end{proof}
%
\comment{
The constraint-free aspect of a database
$\mathcal{R} = {\langle{R,T,\mathcal{A}}\rangle}$
(Def.\;\ref{def:db} in \S~\ref{sub:sec:db:obj})
is a tabular interpretation
$R\xrightarrow{\;T\;}\mathrmbf{Tbl}(\mathcal{A})$
that maps a predicate symbol $r\in{R}$
to the $\mathcal{A}$-table
$T_{\mathcal{M}}(r)={\langle{\sigma(r),\mathrmbfit{K}(r),\tau_{r}}\rangle}$.
Using projections,
the constraint-free aspect of a database
$\mathcal{R} = {\langle{R,\sigma,\mathcal{A},\mathrmbfit{K},\tau}\rangle}$
consists of
a set of predicate types $R$,
a signature function
{\footnotesize{
$R \xrightarrow[T{\;\circ\;}\mathrmbfit{sign}_{\mathcal{A}}]{\sigma} \mathrmbf{List}(X) :
r \mapsto {\sigma}(r)$,}} 
a key function
{\footnotesize{
$R \xrightarrow[T{\;\circ\;}\mathrmbfit{key}_{\mathcal{A}}]{\mathrmbfit{K}} \mathrmbf{Set} :
r \mapsto \mathrmbfit{K}(r)$,}}
and
a bridge {\footnotesize{
$\mathrmbfit{K}
\xRightarrow[T{\;\circ\;}\tau_{\mathcal{A}}]{\;\tau\;}
{\sigma}{\;\circ\;}\mathrmbfit{tup}_{\mathcal{A}}$,}}
consisting of an $R$-indexed collection of tuple maps ($\mathcal{A}$-tables)
$\bigl\{
\underset{T_{\mathcal{M}}(r)}
{\underbrace{\mathrmbfit{K}(r)\xrightarrow{\;\tau_{r}\;}\mathrmbfit{tup}_{\mathcal{A}}({\sigma}(r))}}
\mid r \in R \bigr\}$.
But,
this is exactly the definition of a lax structure given above (Def.\;\ref{def:lax:struc}).
\mbox{}\hfill\rule{5pt}{5pt}
}
%

\comment{
{{\begin{tabular}{c}
\setlength{\unitlength}{0.45pt}
\begin{picture}(120,0)(-320,-5)
\put(0,180){\makebox(0,0){\footnotesize{$R$}}}
\put(0,118){\makebox(0,0){\footnotesize{$\mathrmbf{Tbl}(\mathcal{A})$}}}
\put(75,60){\makebox(0,0){\footnotesize{$\mathrmbf{List}(X)^{\mathrm{op}}$}}}
\put(0,0){\makebox(0,0){\footnotesize{$\mathrmbf{Set}$}}}
\put(6,148){\makebox(0,0)[l]{\scriptsize{$\mathrmbfit{T}$}}}
\put(-75,95){\makebox(0,0)[r]{\scriptsize{$\mathrmbfit{K}$}}}
\put(-37,72){\makebox(0,0)[r]{\scriptsize{$\mathrmbfit{key}_{\mathcal{A}}$}}}
\put(36,93){\makebox(0,0)[l]{\scriptsize{$\mathrmbfit{sign}_{\mathcal{A}}^{\mathrm{op}}$}}}
\put(65,130){\makebox(0,0)[l]{\scriptsize{$\mathrmbfit{S}$}}}
\put(36,26){\makebox(0,0)[l]{\scriptsize{$\mathrmbfit{tup}_{\mathcal{A}}$}}}
\put(0,60){\makebox(0,0){\shortstack{\scriptsize{$\;\tau_{\mathcal{A}}$}\\\large{$\Longrightarrow$}}}}
\put(0,165){\vector(0,-1){34}}
\put(15,105){\vector(1,-1){30}}
\put(45,45){\vector(-1,-1){30}}
\qbezier(-18,167)(-120,90)(-20,13)\put(-20,13){\vector(1,-1){0}}
\qbezier(-12,105)(-60,60)(-12,15)\put(-12,15){\vector(1,-1){0}}
\qbezier(18,167)(70,140)(66,76)\put(66,76){\vector(0,-1){0}}
\end{picture}
\end{tabular}}}
}
%

%
%

\subsection{Lax Structure Morphisms.}\label{sub:sec:struc:mor:lax}



In order to make 
sound logic morphisms equivalent to 
database morphisms,
we need to eliminate the global key function
$K_{2}\xleftarrow{\;k\,}K_{1}$.
To do this we define a lax version of Def.\;\ref{def:struc:mor}.
Note\;\ref{note:lax:ent:info} eliminated the need for the global key function in the entity infomorphism>
Prop.\;\ref{prop:key} and Cor.\;\ref{cor:reduct:tup} 
eliminated the need for the universe morphism
with its global key function.
However,
we do need to include the condition of Prop.\;\ref{prop:key}. 

%
\newpage

\begin{definition}\label{def:lax:struc:mor}
For any two (lax) structures 
$\mathcal{M}_{2} = {\langle{\mathcal{E}_{2},\sigma_{2},\tau_{2},\mathcal{A}_{2}}\rangle}$
and
$\mathcal{M}_{1} = {\langle{\mathcal{E}_{1},\sigma_{1},\tau_{1},\mathcal{A}_{1}}\rangle}$,
a (lax) structure morphism
(Tbl.\,\ref{tbl:fole:morph}
in \S\,\ref{sub:sec:adj:components})
(Fig.~\ref{fig:lax:struc:mor})
\begin{center}
{\footnotesize{
$\mathcal{M}_{2} = {\langle{\mathcal{E}_{2},\sigma_{2},\tau_{2},\mathcal{A}_{2}}\rangle}
\xrightleftharpoons{{\langle{r,\kappa,\grave{\varphi},f,g}\rangle}}
{\langle{\mathcal{E}_{1},\sigma_{1},\tau_{1},\mathcal{A}_{1}}\rangle} = \mathcal{M}_{1}$
}}
\end{center}
%
consists of
(RHS Fig.\;\ref{fig:lax:struc:mor})
a schemed domain morphism 
\begin{center}
$\mathcal{D}_{2} = {\langle{R_{2},\sigma_{2},\mathcal{A}_{2}}\rangle} 
\xrightarrow{\;{\langle{\mathrmit{r},\grave{\varphi},f,g}\rangle}\;}
{\langle{R_{1},\sigma_{1},\mathcal{A}_{1}}\rangle} = \mathcal{D}_{1}$
\end{center}
consisting of
a
typed domain morphism
$\mathcal{A}_{2}\xrightleftharpoons{{\langle{f,g}\rangle}}\mathcal{A}_{1}$
and
a schema morphism 
$\mathcal{S}_{2} = {\langle{R_{2},\sigma_{2},X_{2}}\rangle} 
\xRightarrow{\;{\langle{\mathrmit{r},\grave{\varphi},f}\rangle}\;}
{\langle{R_{1},\sigma_{1},X_{1}}\rangle} = \mathcal{S}_{1}$
(Disp.\ref{struc:mor:assume} in \S\,\ref{sub:sec:struc:mor})
with common type function
$X_{2}\xrightarrow{\;\mathrmit{f}\;\,}X_{1}$,
%
\comment{
This has a signature bridge
${\sigma_{2}{\;\cdot\;}{\scriptstyle\sum}_{f}{\;\xRightarrow{\,\grave{\varphi}\;\,}\;}r{\;\cdot\;}\sigma_{1}}$
consisting of 
a collection of morphisms
$\bigl\{
{{\scriptstyle\sum}_{f}(\sigma_{2}(r_{2}))}\xrightarrow[h]{\grave{\varphi}_{r_{2}}}{\sigma_{1}(r(r_{2}))}
\mid r_{2} \in R_{2}
\bigr\}$
in $\mathrmbf{List}(X_{1})$.
}
%
and
(LHS Fig.\;\ref{fig:lax:struc:mor})
a 
lax entity infomorphism
\begin{center}
$\mathcal{E}_{2}
 = {\langle{R_{2},\mathrmbfit{K}_{2}}\rangle} 
\xleftharpoondown{{\langle{r,\kappa}\rangle}}
{\langle{R_{1},\mathrmbfit{K}_{1}}\rangle} = 
\mathcal{E}_{1}$
\end{center}
with 
a predicate function $R_{2}\xrightarrow{\;r\,}R_{1}$
and
a key bridge $\mathrmbfit{K}_{2}\xLeftarrow{\;\,\kappa\;}r{\,\circ\,}\mathrmbfit{K}_{1}$
consisting of 
the key functions
$\bigl\{{\mathrmbfit{K}_{2}}(r_{2})\xleftarrow{\kappa_{r_{2}}}{\mathrmbfit{K}_{1}}(r(r_{2}))
\mid r_{2} \in R_{2}
\bigr\}$,
%
which satisfy the condition
%
\begin{equation}\label{lax:struc:mor:cond}
{{
\Big\{
\kappa_{r_{2}}{\;\cdot\;}\tau_{2,r_{2}}
= 
\tau_{1,r(r_{2})}
{\;\cdot\;}{\mathrmbfit{tup}_{\mathcal{A}_{1}}({\grave{\varphi}_{r_{2}}})}
{\;\cdot\;}{\grave{\tau}_{{\langle{f,g}\rangle}}({\sigma_{2}}(r_{2}))} 
\mid r_{2} \in R_{2}
\Bigr\}
.}}
%
\footnote{
This is the constraint-free aspect of the database morphism condition
\newline\mbox{}\hfill
{\footnotesize$
\kappa{\;\bullet\;}\tau_{2}
= 
(\mathrmbfit{R}^{\mathrm{op}}\!{\circ\;}\tau_{1})
{\;\bullet\;}(\grave{\varphi}^{\mathrm{op}}\!{\circ\;}\mathrmbfit{tup}_{\mathcal{A}_{1}})
{\;\bullet\;}(\mathrmbfit{S}_{2}^{\mathrm{op}}{\;\circ\;}\grave{\tau}_{{\langle{f,g}\rangle}})
$.\normalsize}
\hfill\mbox{}\newline
See Def. \ref{def:db:mor:proj} in \S\,\ref{sub:sec:db:mor}
}
%
\end{equation}
See the (Key) Prop.\;\ref{prop:key}.
\end{definition}
\comment{
\begin{note}\label{note:lax:struc:mor:info}
Disp.\,\ref{lax:struc:mor:cond}
defines a table morphism
\newline\mbox{}\hfill
{\footnotesize{$
{\scriptstyle{
\overset{\textstyle{T_{2}(r_{2})}}
{\overbrace{\langle{\sigma_{2}(r_{2}),\mathcal{A}_{2},\mathrmbfit{K}_{2}(r_{2}),\tau_{2,r_{2}}}\rangle}}
\;\xleftarrow[{\langle{\grave{\varphi}_{r_{2}},f,g,\kappa_{r_{2}}}\rangle}]{\xi_{r_{2}}}\;
\overset{\textstyle{T_{1}(r(r_{2}))}}
{\overbrace{\langle{\sigma_{1}(r(r_{2})),\mathcal{A}_{1},\mathrmbfit{K}_{1}(r(r_{2})),\tau_{1,r(r_{2})}}\rangle}}
.}}
$}}
\end{note}
}
%
%
\begin{corollary}\label{cor:tbl:func}
%
\comment{When defining the database of a structure $\mathcal{M}$,
the entity classification
$\mathcal{E} = {\langle{R,K,\models_{\mathcal{E}}}\rangle}$ 
is required to be \emph{pseudo-partitioned}.
(see \S\;\ref{sec:intro})}
%
A (lax) structure morphism
$\mathcal{M}_{2}\xrightleftharpoons{{\langle{r,\kappa,\grave{\varphi},f,g}\rangle}}\mathcal{M}_{1}$
defines a tabular interpretation bridge function
$\mathrmit{R}_{2}\xrightarrow{\;\xi\;}\mathrmbf{mor}(\mathrmbf{Tbl})$,
which maps a predicate symbol
$r_{2}\in{\mathrmit{R}_{2}}$
with image
$r(r_{2})\in{\mathrmit{R}_{1}}$
to the table morphism
\begin{equation}\label{disp:lax:struc:mor:interp}
{\scriptstyle{
\overset{\textstyle{\mathrmbfit{T}_{2}(r_{2})}}
{\overbrace{\langle{\sigma_{2}(r_{2}),\mathcal{A}_{2},\mathrmbfit{K}_{2}(r_{2}),\tau_{2,r_{2}}}\rangle}}
\;\xleftarrow[{\langle{\grave{\varphi}_{r_{2}},f,g,\kappa_{r_{2}}}\rangle}]{\xi_{r_{2}}}\;
\overset{\textstyle{\mathrmbfit{T}_{1}(r(r_{2}))}}
{\overbrace{\langle{\sigma_{1}(r(r_{2})),\mathcal{A}_{1},\mathrmbfit{K}_{1}(r(r_{2})),\tau_{1,r(r_{2})}}\rangle}}
.}}
\end{equation}
visualized by the diagram
%
\begin{center}
{{\begin{tabular}{c}
\setlength{\unitlength}{0.66pt}
\begin{picture}(320,120)(-20,-30)
\put(-5,80){\makebox(0,0){\footnotesize{$\mathrmbfit{K}_{2}(r_{2})$}}}
\put(285,80){\makebox(0,0){\footnotesize{$\mathrmbfit{K}_{1}(r(r_{2}))$}}}
\put(-20,0){\makebox(0,0){\footnotesize{$\mathrmbfit{tup}_{\mathcal{A}_{2}}(\sigma_{2}(r_{2}))$}}}
\put(130,0){\makebox(0,0){\footnotesize{$
\mathrmbfit{tup}_{\mathcal{A}_{1}}({\scriptstyle\sum}_{f}(\sigma_{2}(r_{2})))$}}}
\put(280,0){\makebox(0,0){\footnotesize{$\mathrmbfit{tup}_{\mathcal{A}_{1}}(\sigma_{1}(r(r_{2})))$}}}
\put(140,90){\makebox(0,0){\scriptsize{$\kappa_{r_{2}}$}}}
\put(55,10){\makebox(0,0){\scriptsize{$\grave{\tau}_{{\langle{f,g}\rangle}}(\sigma_{2}(r_{2}))$}}}
\put(185,10){\makebox(0,0)[l]{\scriptsize{$\mathrmbfit{tup}_{\mathcal{A}_{1}}(\grave{\varphi}_{r_{2}})$}}} 
\put(180,-23){\makebox(0,0)[r]{\scriptsize{$\mathrmbfit{tup}(\grave{\varphi}_{r_{2}},f,g)$}}}
\put(-5,40){\makebox(0,0)[r]{\scriptsize{$\tau_{r_{2}}$}}}
\put(285,40){\makebox(0,0)[l]{\scriptsize{$\tau_{r_{1}}'$}}}
\put(245,80){\vector(-1,0){210}}
\put(70,0){\vector(-1,0){40}}
\put(230,0){\vector(-1,0){40}}
\put(0,65){\vector(0,-1){50}}
\put(280,65){\vector(0,-1){50}}
\put(10,-20){\oval(20,20)[bl]}
\put(10,-30){\line(1,0){260}}
\put(0,-14){\vector(0,1){0}}
\put(270,-20){\oval(20,20)[br]}
\put(-70,40){\makebox(0,0)[r]{\scriptsize{$
\stackrel{\textstyle{\mathrmbfit{source}}}{\mathrmbfit{table}}\left\{\rule{0pt}{33pt}\right.$}}}
\put(330,40){\makebox(0,0)[l]{\scriptsize{$
\left.\rule{0pt}{33pt}\right\}\stackrel{\textstyle{\mathrmbfit{target}}}{\mathrmbfit{table}}$}}}
\end{picture}
\end{tabular}}}
\end{center}
%
where
\begin{itemize}
\item 
{\footnotesize{$T_{2}(r_{2})=
{\langle{\sigma_{2}(r_{2}),\mathcal{A}_{2},\mathrmbfit{K}_{2}(r_{2}),\tau_{2,r_{2}}}\rangle}$}}
is a table defined by the
tabular interpretation function 
$R_{2} \xrightarrow{\;T_{2}\;} \mathrmbf{Tbl}(\mathcal{A}_{2})$
of structure $\mathcal{M}_{2}$,
and
\item 
{\footnotesize{$T_{1}(r(r_{2}))=
{\langle{\sigma_{1}(r(r_{2})),\mathcal{A}_{1},\mathrmbfit{K}_{1}(r(r_{2})),\tau_{1,r(r_{2})}}\rangle}$}}
is the table defined by the
tabular interpretation function 
$R_{1} \xrightarrow{\;T_{1}\;} \mathrmbf{Tbl}(\mathcal{A}_{1})$
of structure$\mathcal{M}_{1}$.
\end{itemize}
\end{corollary}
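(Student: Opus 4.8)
The plan is to verify, for each source predicate $r_{2}\in R_{2}$, that the four-tuple ${\langle{\grave{\varphi}_{r_{2}},f,g,\kappa_{r_{2}}}\rangle}$ genuinely assembles into a $\mathrmbf{Tbl}$-morphism from the target table $T_{1}(r(r_{2}))$ to the source table $T_{2}(r_{2})$, and that the resulting assignment $r_{2}\mapsto\xi_{r_{2}}$ is a well-defined function into $\mathrmbf{mor}(\mathrmbf{Tbl})$. Since a morphism in $\mathrmbf{Tbl}$ is determined by an indexing signature morphism, a typed-domain morphism, and a key function subject to a single commuting square relating the two tuple functions, the only part carrying content is the verification of that square; everything else is bookkeeping.

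First I would fix $r_{2}\in R_{2}$ and record the two tables produced by the tabular interpretation functions of Def.\,\ref{def:struc:lax}: the source table $T_{2}(r_{2})={\langle{\sigma_{2}(r_{2}),\mathcal{A}_{2},\mathrmbfit{K}_{2}(r_{2}),\tau_{2,r_{2}}}\rangle}$ and the target table $T_{1}(r(r_{2}))={\langle{\sigma_{1}(r(r_{2})),\mathcal{A}_{1},\mathrmbfit{K}_{1}(r(r_{2})),\tau_{1,r(r_{2})}}\rangle}$. The candidate morphism $\xi_{r_{2}}$ takes its key component $\mathrmbfit{K}_{2}(r_{2})\xleftarrow{\kappa_{r_{2}}}\mathrmbfit{K}_{1}(r(r_{2}))$ from the lax entity infomorphism, its signature component $\grave{\varphi}_{r_{2}}$ from the schema morphism, and its typed-domain component ${\langle{f,g}\rangle}$ from the typed-domain morphism. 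I would then read off the bottom edge of the displayed square as the composite tuple map: the induced map $\mathrmbfit{tup}(\grave{\varphi}_{r_{2}},f,g)$ factors through the intermediate set $\mathrmbfit{tup}_{\mathcal{A}_{1}}({\scriptstyle\sum}_{f}(\sigma_{2}(r_{2})))$ as $\mathrmbfit{tup}_{\mathcal{A}_{1}}(\grave{\varphi}_{r_{2}}){\;\cdot\;}\grave{\tau}_{{\langle{f,g}\rangle}}(\sigma_{2}(r_{2}))$, exactly the two arrows running along the bottom of the diagram (recalling that $\mathrmbfit{tup}_{\mathcal{A}}$ is contravariant, so both point leftward, from $\mathrmbfit{tup}_{\mathcal{A}_{1}}(\sigma_{1}(r(r_{2})))$ down to $\mathrmbfit{tup}_{\mathcal{A}_{2}}(\sigma_{2}(r_{2}))$).

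The decisive step is that the commuting condition required of $\xi_{r_{2}}$, namely $\kappa_{r_{2}}{\;\cdot\;}\tau_{2,r_{2}}=\tau_{1,r(r_{2})}{\;\cdot\;}\mathrmbfit{tup}(\grave{\varphi}_{r_{2}},f,g)$, is \emph{precisely} condition (Disp.\,\ref{lax:struc:mor:cond}) imposed in the definition of a lax structure morphism (Def.\,\ref{def:lax:struc:mor}), which itself is the conclusion of the Key Proposition (Prop.\,\ref{prop:key}). Thus the square commutes by hypothesis, $\xi_{r_{2}}$ is a legitimate table morphism, and since each $r_{2}$ determines a unique such morphism, $\xi\colon R_{2}\to\mathrmbf{mor}(\mathrmbf{Tbl})$ is well defined.

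I do not anticipate any real obstacle here, since all the substantive work was front-loaded into Prop.\,\ref{prop:key} and into building (Disp.\,\ref{lax:struc:mor:cond}) directly into Def.\,\ref{def:lax:struc:mor}; this corollary is essentially an unpacking of definitions. The only points demanding care are keeping the direction of $\xi_{r_{2}}$ straight---it runs from the target table $T_{1}(r(r_{2}))$ back to the source table $T_{2}(r_{2})$, opposite to the predicate function $r$---and correctly reading the bottom composite off the diagram, so that the two factors $\mathrmbfit{tup}_{\mathcal{A}_{1}}(\grave{\varphi}_{r_{2}})$ and $\grave{\tau}_{{\langle{f,g}\rangle}}(\sigma_{2}(r_{2}))$ are composed in the order matching (Disp.\,\ref{lax:struc:mor:cond}) rather than conflated.
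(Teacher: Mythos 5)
Your proposal is correct and takes exactly the paper's route: the paper's entire proof is ``See Disp.\,\ref{lax:struc:mor:cond} above,'' i.e.\ the commuting square required of $\xi_{r_{2}}$ is precisely the condition $\kappa_{r_{2}}{\;\cdot\;}\tau_{2,r_{2}}=\tau_{1,r(r_{2})}{\;\cdot\;}\mathrmbfit{tup}_{\mathcal{A}_{1}}(\grave{\varphi}_{r_{2}}){\;\cdot\;}\grave{\tau}_{{\langle{f,g}\rangle}}(\sigma_{2}(r_{2}))$ built into Def.\,\ref{def:lax:struc:mor} (itself justified by the Key Prop.\,\ref{prop:key}), so the corollary is an unpacking of definitions. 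Your identification of that condition as the decisive step, including the factorization of the bottom composite $\mathrmbfit{tup}(\grave{\varphi}_{r_{2}},f,g)$ through $\mathrmbfit{tup}_{\mathcal{A}_{1}}({\scriptstyle\sum}_{f}(\sigma_{2}(r_{2})))$ and the reversed direction of $\xi_{r_{2}}$, matches the paper exactly.
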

\begin{proof}
See Disp.\,\ref{lax:struc:mor:cond} above.
\mbox{}\hfill\rule{5pt}{5pt}
\end{proof}
Let 
$\mathring{\mathrmbf{Struc}}$
denote the 
mathematical 
context of (lax) structures and their morphisms.
Since
any structure is a lax structure and
any structure morphism is a lax structure morphism,
there is an passage 
$\mathrmbf{Struc}\xrightarrow{\;\mathrmbfit{lax}\;}\mathring{\mathrmbf{Struc}}$.
\comment{
\begin{proposition}
There is a passage
$\mathring{\mathrmbf{Struc}}\xrightarrow{\;\mathrmbfit{tbl}\;}\mathrmbf{Tbl}$.
\end{proposition}
\begin{proof}
See note above. {\fbox{\textbf{How does this fit into the argument?}}}
\mbox{}\hfill\rule{5pt}{5pt}
\end{proof}
}
%
\comment{
Defining the lax entity infomorphism 
${\mathcal{E}_{2}} 
= {\langle{R_{2},\overset{{\mathrmbfit{K}_{2}}}{\mathrmbfit{ext}_{\mathcal{E}_{2}}}(r_{2})}\rangle} 
\xleftharpoondown{{\langle{r,\kappa}\rangle}}
{\langle{R_{1},\overset{{\mathrmbfit{K}_{1}}}{\mathrmbfit{ext}_{\mathcal{E}_{1}}}(r(r_{2}))}\rangle} = 
\mathcal{E}_{1}$
is straightforward.
}
%

%
\comment{ 
%
\begin{figure}
\begin{center}
{{\begin{tabular}{c}
\setlength{\unitlength}{0.7pt}
\begin{picture}(240,140)(0,-40)
\put(0,80){\makebox(0,0){\scriptsize{$r{\,\circ\;}\mathrmbfit{K}_{1}$}}}
\put(0,0){\makebox(0,0){\scriptsize{$\mathrmbfit{K}_{2}$}}}
\put(125,80){\makebox(0,0){\scriptsize{$
r{\,\circ\;}
\sigma_{1}
{\,\circ\;}\mathrmbfit{tup}_{\mathcal{A}_{1}}$}}}
\put(125,0){\makebox(0,0){\scriptsize{$
\sigma_{2}{\,\circ\;}\mathrmbfit{tup}_{\mathcal{A}_{2}}$}}}
\put(125,45){\makebox(0,0){\scriptsize{$({\grave{\varphi}}{\;\circ\;}\mathrmbfit{tup}_{\mathcal{A}_{1}})
{\;\bullet\;}(
\sigma_{2}{\,\circ\;}\grave{\tau}_{{\langle{f,g}\rangle}})$}}}
\put(50,90){\makebox(0,0){\scriptsize{$
r{\;\circ\;}\tau_{1}$}}}
\put(50,-10){\makebox(0,0){\scriptsize{$\tau_{2}$}}}
\put(-10,40){\makebox(0,0)[r]{\scriptsize{$\kappa$}}}
\put(50,80){\makebox(0,0){\large{$\xRightarrow{\;\;\;\;\;\;\;\;\;\;\;\;\;}$}}}
\put(50,0){\makebox(0,0){\large{$\xRightarrow{\;\;\;\;\;\;\;\;\;\;\;\;\;}$}}}
\put(0,40){\makebox(0,0){\large{$\bigg\Downarrow$}}}
\put(120,40){\makebox(0,0){\large{$\bigg\Downarrow$}}}
\put(220,80){\makebox(0,0){\scriptsize{$
{\scriptstyle\sum}_{f}RRRRRRRR{\circ\;}\mathrmbfit{tup}_{\mathcal{A}_{1}}$}}}
\put(220,0){\makebox(0,0){\scriptsize{$\mathrmbfit{tup}_{\mathcal{A}_{2}}$}}}
\put(230,45){\makebox(0,0)[l]{\scriptsize{$\grave{\tau}_{{\langle{f,g}\rangle}}$}}}
\put(220,40){\makebox(0,0){\large{$\bigg\Downarrow$}}}
\put(0,-20){\makebox(0,0){\scriptsize{\textit{lax entity}}}}
\put(0,-30){\makebox(0,0){\scriptsize{\textit{infomorphism}}}}
\put(230,-20){\makebox(0,0){\scriptsize{\textit{tuple attribute}}}}
\put(230,-30){\makebox(0,0){\scriptsize{\textit{infomorphism}}}}
\end{picture}
\end{tabular}}}
\end{center}
\caption{Lax Structure Morphism}
\label{fig:lax:struc:mor}
\end{figure}
} 
%

%
\comment{
\begin{figure}
\begin{center}
{{\begin{tabular}{c}
\setlength{\unitlength}{0.7pt}
\begin{picture}(240,140)(0,-40)
\put(0,80){\makebox(0,0){\scriptsize{$r{\,\circ\;}\mathrmbfit{K}_{1}$}}}
\put(0,0){\makebox(0,0){\scriptsize{$\mathrmbfit{K}_{2}$}}}
\put(125,80){\makebox(0,0){\scriptsize{$
r{\,\circ\;}
\sigma_{1}
{\,\circ\;}\mathrmbfit{tup}_{\mathcal{A}_{1}}$}}}
\put(125,0){\makebox(0,0){\scriptsize{$
\sigma_{2}{\,\circ\;}\mathrmbfit{tup}_{\mathcal{A}_{2}}$}}}
\put(125,45){\makebox(0,0){\scriptsize{
$(\grave{\varphi}{\;\circ\;}\mathrmbfit{tup}_{\mathcal{A}_{1}})
{\;\bullet\;}(
\sigma_{2}{\,\circ\;}\grave{\tau}_{{\langle{f,g}\rangle}})$}}}
\put(50,90){\makebox(0,0){\scriptsize{$
r{\;\circ\;}\tau_{1}$}}}
\put(50,-10){\makebox(0,0){\scriptsize{$\tau_{2}$}}}
\put(-10,40){\makebox(0,0)[r]{\scriptsize{$\kappa$}}}
\put(50,80){\makebox(0,0){\large{$\xRightarrow{\;\;\;\;\;\;\;\;\;\;\;\;\;}$}}}
\put(50,0){\makebox(0,0){\large{$\xRightarrow{\;\;\;\;\;\;\;\;\;\;\;\;\;}$}}}
\put(0,40){\makebox(0,0){\large{$\bigg\Downarrow$}}}
\put(120,40){\makebox(0,0){\large{$\bigg\Downarrow$}}}
\put(220,80){\makebox(0,0){\scriptsize{$
{\scriptstyle\sum}_{f}^{\,\mathrm{op}}{\circ\;}\mathrmbfit{tup}_{\mathcal{A}_{1}}$}}}
\put(220,0){\makebox(0,0){\scriptsize{$\mathrmbfit{tup}_{\mathcal{A}_{2}}$}}}
\put(230,45){\makebox(0,0)[l]{\scriptsize{$\grave{\tau}_{{\langle{f,g}\rangle}}$}}}
\put(220,40){\makebox(0,0){\large{$\bigg\Downarrow$}}}
\put(0,-20){\makebox(0,0){\scriptsize{\textit{lax entity}}}}
\put(0,-30){\makebox(0,0){\scriptsize{\textit{infomorphism}}}}
\put(230,-20){\makebox(0,0){\scriptsize{\textit{tuple attribute}}}}
\put(230,-30){\makebox(0,0){\scriptsize{\textit{infomorphism}}}}
\end{picture}
\end{tabular}}}
\end{center}
\caption{Lax Structure Morphism}
\label{fig:lax:struc:mor}
\end{figure}
}
%

%
\begin{figure}
\begin{center}
%
{{\begin{tabular}{c}
\setlength{\unitlength}{0.66pt}
\begin{picture}(240,140)(0,-40)
\put(0,80){\makebox(0,0){\scriptsize{$r{\,\circ\;}\mathrmbfit{K}_{1}$}}}
\put(0,0){\makebox(0,0){\scriptsize{$\mathrmbfit{K}_{2}$}}}
\put(125,80){\makebox(0,0){\scriptsize{$
r{\,\circ\;}\sigma_{1}{\,\circ\;}\mathrmbfit{tup}_{\mathcal{A}_{1}}$}}}
\put(125,0){\makebox(0,0){\scriptsize{$
\sigma_{2}{\,\circ\;}\mathrmbfit{tup}_{\mathcal{A}_{2}}$}}}
\put(125,45){\makebox(0,0){\scriptsize{$({\grave{\varphi}}{\;\circ\;}\mathrmbfit{tup}_{\mathcal{A}_{1}})
{\;\bullet\;}(\sigma_{2}{\,\circ\;}\grave{\tau}_{{\langle{f,g}\rangle}})$}}}
\put(50,90){\makebox(0,0){\scriptsize{$r{\;\circ\;}\tau_{1}$}}}
\put(50,-10){\makebox(0,0){\scriptsize{$\tau_{2}$}}}
\put(-10,40){\makebox(0,0)[r]{\scriptsize{$\kappa$}}}
\put(50,80){\makebox(0,0){\large{$\xRightarrow{\;\;\;\;\;\;\;\;\;\;\;\;\;}$}}}
\put(50,0){\makebox(0,0){\large{$\xRightarrow{\;\;\;\;\;\;\;\;\;\;\;\;\;}$}}}
\put(0,40){\makebox(0,0){\large{$\bigg\Downarrow$}}}
\put(120,40){\makebox(0,0){\large{$\bigg\Downarrow$}}}
\put(224,80){\makebox(0,0){\scriptsize{$
{\scriptstyle\sum}_{f}{\,\circ\;}\mathrmbfit{tup}_{\mathcal{A}_{1}}$}}}
\put(220,0){\makebox(0,0){\scriptsize{$\mathrmbfit{tup}_{\mathcal{A}_{2}}$}}}
\put(230,45){\makebox(0,0)[l]{\scriptsize{$\grave{\tau}_{{\langle{f,g}\rangle}}$}}}
\put(220,40){\makebox(0,0){\large{$\bigg\Downarrow$}}}
\put(0,-20){\makebox(0,0){\scriptsize{\textit{lax entity}}}}
\put(0,-30){\makebox(0,0){\scriptsize{\textit{infomorphism}}}}
\put(220,-20){\makebox(0,0){\scriptsize{\textit{tuple attribute}}}}
\put(220,-30){\makebox(0,0){\scriptsize{\textit{infomorphism}}}}
\end{picture}
\end{tabular}}}
\end{center}
\caption{Lax Structure Morphism}
\label{fig:lax:struc:mor}
\end{figure}

\comment{
&
{{\begin{tabular}{c}
\setlength{\unitlength}{0.42pt}
\begin{picture}(240,240)(-5,-30)
\put(125,190){\makebox(0,0){\scriptsize{$r$}}}
\put(124,72){\makebox(0,0){\scriptsize{${{\scriptstyle\sum}_{f}}$}}}
\put(120,-10){\makebox(0,0){\scriptsize{$\mathrmbfit{id}$}}}
\put(30,180){\vector(1,0){180}}
\put(100,60){\vector(1,0){46}}
\put(215,0){\vector(-1,0){190}}
\put(120,150){\makebox(0,0){\shortstack{\scriptsize{
${\grave{\varphi}}$}\\\large{$\Longleftarrow$}}}}
\put(120,114.5){\makebox(0,0){\large{$\overset{\rule[-2pt]{0pt}{5pt}\kappa}{\Longleftarrow}$}}}
\put(122,33){\makebox(0,0){\shortstack{\scriptsize{$\;\grave{\tau}_{{\langle{f,g}\rangle}}$}\\\large{$\Longleftarrow$}}}}
\qbezier[150](-50,107)(115,107)(280,107)
\put(-8,0){\begin{picture}(0,0)(0,0)
\put(0,180){\makebox(0,0){\footnotesize{$R_{2}$}}}
\put(68,60){\makebox(0,0){\scriptsize{${\mathrmbf{List}(X_{2})}$}}}
\put(0,0){\makebox(0,0){\footnotesize{$\mathrmbf{Set}$}}}
\put(-55,95){\makebox(0,0)[r]{\scriptsize{$\mathrmbfit{K}_{2}$}}}
\put(40,130){\makebox(0,0)[l]{\scriptsize{$\sigma_{2}$}}}
\put(40,26){\makebox(0,0)[l]{\scriptsize{$\mathrmbfit{tup}_{\mathcal{A}_{2}}$}}}
\put(0,85){\makebox(0,0){\shortstack{\scriptsize{$\;\tau_{2}$}\\\large{$\Longrightarrow$}}}}
\put(45,45){\vector(-1,-1){30}}
\qbezier(-18,167)(-80,90)(-20,13)\put(-20,13){\vector(1,-1){0}}
\put(12,167){\vector(1,-2){47}}
\end{picture}}
\put(240,0){\begin{picture}(0,0)(0,0)
\put(8,180){\makebox(0,0){\footnotesize{$R_{1}$}}}
\put(-52,60){\makebox(0,0){\scriptsize{${\mathrmbf{List}(X_{1})}$}}}
\put(0,0){\makebox(0,0){\footnotesize{$\mathrmbf{Set}$}}}
\put(55,95){\makebox(0,0)[l]{\scriptsize{$\mathrmbfit{K}_{1}$}}}
\put(-34,130){\makebox(0,0)[r]{\scriptsize{$\sigma_{1}$}}}
\put(-26,26){\makebox(0,0)[r]{\scriptsize{$\mathrmbfit{tup}_{\mathcal{A}_{1}}$}}}
\put(0,85){\makebox(0,0){\shortstack{\scriptsize{$\;\tau_{1}$}\\\large{$\Longleftarrow$}}}}
\put(-45,45){\vector(1,-1){30}}
\qbezier(18,167)(80,90)(20,13)\put(20,13){\vector(-1,-1){0}}
\put(-12,167){\vector(-1,-2){47}}
\end{picture}}
%
%
\end{picture}
\end{tabular}}}
\\
\textsl{bridge perspective}
&
\textsl{passage perspective}
\end{tabular}}

%
%

\newpage
\section{{\ttfamily FOLE} Specifications.}\label{sec:spec}

%
A {\ttfamily FOLE} specification
$\mathcal{R} = {\langle{\mathrmbf{R},\mathrmbfit{S}}\rangle}$
consists of
a context $\mathrmbf{R}$ of predicates linked by constraints and  
a diagram 
$\mathrmbfit{S} : \mathrmbf{R} 
\rightarrow \mathrmbf{Set}$
of lists.
A specification morphism 
is a diagram morphism
${\langle{\mathrmbfit{R},\zeta}\rangle} : 
{\langle{\mathrmbf{R}_{2},\mathrmbfit{S}_{2}}\rangle} \Rightarrow 
{\langle{\mathrmbf{R}_{1},\mathrmbfit{S}_{1}}\rangle}$
consisting of a shape-changing passage 
$\mathrmbf{R}_{2} \xrightarrow{\:\mathrmbfit{R}\:} \mathrmbf{R}_{1}$ 
and
a bridge
$\zeta : \mathrmbfit{S}_{2} \Rightarrow \mathrmbfit{R} \circ \mathrmbfit{S}_{1}$. 
Composition is component-wise.
The mathematical context of 
{\ttfamily FOLE} specifications 
is denoted by 
$\mathrmbf{SPEC}
= \mathrmbf{List}^{\scriptscriptstyle{\Downarrow}}
= \bigl({(\mbox{-})}{\,\Downarrow\,}\mathrmbf{List}\bigr)$.
The 
subcontext of 
{\ttfamily FOLE} specifications
(with fixed sort sets
and fixed sort functions) 
is denoted by $\mathrmbf{Spec}\subseteq\mathrmbf{SPEC}$.

%
%
%

%
\subsection{Specifications.}\label{sub:sec:spec}
%
Assume 
that we are given a schema 
$\mathcal{S} = {\langle{R,\sigma,X}\rangle}$
with a set of predicate symbols $R$
and a signature (header) map 
$R \xrightarrow{\sigma} \mathrmbf{List}(X)$.
%


\begin{definition}\label{def:fml:spec}
A (formal) $\mathcal{S}$-specification is a subgraph 
$\mathrmbf{R}{\;\sqsubseteq\;}\mathrmbf{Cons}(\mathcal{S})$,
whose nodes are $\mathcal{S}$-formulas and whose edges are $\mathcal{S}$-constraints.
%
\footnote{We can place axiomatic restrictions on specifications
in various manners.
A  {\ttfamily FOLE} specification requires entailment to be a preorder,
satisfying reflexivity and transitivity.
It could also require satisfaction of sufficient axioms 
(Tbl.\,3
of the paper \cite{kent:fole:era:supstruc})
to described the various logical operations 
(connectives, quantifiers, etc.) 
used to build formulas in first-order logic.}
%
\end{definition}
Let $\mathrmbf{Spec}(\mathcal{S})={\wp}\mathrmbf{Cons}(\mathcal{S})$ 
denote the set of all $\mathcal{S}$-specifications.
%
\footnote{For any graph $\mathcal{G}$,
${\wp}\mathcal{G} = {\langle{{\wp}\mathcal{G},\sqsubseteq}\rangle}$ 
denotes the power preorder of all subgraphs of $\mathcal{G}$.}
%
\comment{
%
Using the projection passage
$\mathrmbf{Cons}(\mathcal{S})\rightarrow\mathrmbf{List}(X)$,
there is a signature passage
$\mathrmbf{R}\xrightarrow{\;\mathrmbfit{S}\;}\mathrmbf{List}(X)$,
which extends the signature map
$\widehat{R}\xrightarrow{\;\hat{\sigma}\;}\mathrmbf{List}(X)$
to constraints.
:
a constraint $\varphi'\xrightarrow{\,h\,}\varphi$ 
is mapped to 
an $X$-signature morphism
$\mathrmbfit{S}(\varphi')=
\hat{\sigma}(\varphi')=
{\langle{I',s'}\rangle}
\xrightarrow[h]
{\mathrmbfit{S}(h)}
{\langle{I,s}\rangle}=
\hat{\sigma}(\varphi)
=\mathrmbfit{S}(\varphi)$.
%
\comment{
Satisfaction
will extend this to a table passage 
$\mathrmbf{R}^\text{op}
\xrightarrow{\;\mathrmbfit{T}\;}
\mathrmbf{Tbl}(\mathcal{A}).$}
%
%
%
%
An $\mathcal{S}$-structure $\mathcal{M} \in \mathrmbf{Struc}(\mathcal{S})$
\emph{satisfies} (is a model of) 
a formal $\mathcal{S}$-specification 
$\mathrmbf{R}{\;\sqsubseteq\;}\mathrmbf{Cons}(\mathcal{S})$,
symbolized
$\mathcal{M}{\;\models_{\mathcal{S}}\;}\mathrmbf{R}$,
when it satisfies every constraint in the specification:
$\mathcal{M}{\;\models_{\mathcal{S}}\;}\mathrmbf{R}$
\underline{iff}
$\mathcal{M}^{\mathcal{S}}\sqsupseteq\mathrmbf{R}$.
%
\footnote{The mathematical context $\mathcal{M}^{\mathcal{S}}$ 
was defined in Lem.\,\ref{lem:nat:cxt} of \S\,\ref{sub:sec:sat}.}
%
Hence,
$\mathcal{M}^{\mathcal{S}}$ is the largest 
and most specialized
formal
$\mathcal{S}$-specification satisfied by $\mathcal{M}$.


Let $\mathcal{S} = {\langle{R,\sigma,X}\rangle}$ be a schema.
An $\mathcal{S}$-specification $\mathrmbf{T}$ entails an $\mathcal{S}$-constraint $(\varphi'{\;\xrightarrow{h}\;}\varphi)$,
symbolized by $\mathrmbf{T}{\;\vdash_{\mathcal{S}}\;}(\varphi'{\xrightarrow{h}\,}\varphi)$,
when any model of the specification satisfies the constraint:
$\mathcal{M}{\;\models_{\mathcal{S}}\;}\mathrmbf{T}$
implies
$\mathcal{M}{\;\models_{\mathcal{S}}\;}(\varphi'{\xrightarrow{h}\,}\varphi)$
for any $\mathcal{S}$-structure $\mathcal{M}$;
that is,
when
$\mathcal{M}^{\mathcal{S}}{\,\sqsupseteq\;}\mathrmbf{T}$
implies
$\mathcal{M}^{\mathcal{S}}{\,\ni\,}(\varphi'{\xrightarrow{h}\,}\varphi)$
for any $\mathcal{S}$-structure $\mathcal{M}$.
\footnote{In particular,
the conceptual intent entails a constraint iff it satisfies the constraint:
$\mathcal{M}^{\mathcal{S}}{\;\vdash_{\mathcal{S}}\;}(\varphi'{\xrightarrow{h}\,}\varphi)$
\underline{iff}
$\mathcal{M}{\;\models_{\mathcal{S}}\;}(\varphi'{\xrightarrow{h}\,}\varphi)$.}
The graph 
\[\mbox{\footnotesize{$
\mathrmbf{T}^{\scriptstyle\bullet} 
= \bigl\{ \varphi'{\xrightarrow{h}\,}\varphi 
\mid \mathrmbf{T}{\;\vdash_{\mathcal{S}}\;}(\varphi'{\xrightarrow{h}\,}\varphi) \bigr\}
= \bigsqcap_{\mathcal{S}} \bigl\{ \mathcal{M}^{\mathcal{S}} \mid 
\mathcal{M}{\;\in\;}\mathrmbf{Struc}(\mathcal{S}), \mathcal{M}^{\mathcal{S}}{\,\sqsupseteq\;}\mathrmbf{T} \bigr\}
$}\normalsize}\]
of all constraints entailed by a specification $\mathrmbf{T}$ is called its consequence.
The consequence 
$\mathrmbf{T}^{\scriptstyle\bullet}$
is a mathematical context,
since each conceptual intent 
$\mathcal{M}^{\mathcal{S}}$
is a mathematical context.
}


%
\begin{definition}\label{def:abs:spec}
An (abstract) $\mathcal{S}$-specification
$\mathcal{T} = {\langle{\mathrmbf{R},\mathrmbfit{S},X}\rangle}$
%
\footnote{An abstract specification is also known as a database schema
(Def.\,\ref{def:db} in \S\,\ref{sub:sec:db:obj}).}
%
consists of:
a context $\mathrmbf{R}$, 
whose objects $r \in R$ are predicate symbols
and whose arrows $r'\xrightarrow{\,p\,}r$
are called \underline{abstract} $\mathcal{S}$-constraints, and
a passage
$\mathrmbf{R}\xrightarrow{\;\mathrmbfit{S}\;}\mathrmbf{List}(X)$,
which extends the signature map
${R}\xrightarrow{\;\sigma\;}\mathrmbf{List}(X)$
by mapping an 
abstract constraint $r'\xrightarrow{\,p\,}r$
to an $X$-signature morphism
$\mathrmbfit{S}(r')=\sigma(r')={\langle{I',s'}\rangle}
\xrightarrow[h]{\mathrmbfit{S}(p)=\sigma(p)}
{\langle{I,s}\rangle}=\sigma(r)=\mathrmbfit{S}(r)$.
%
\end{definition}
%
%
\comment{
%
An abstract $\mathcal{S}$-specification
$\mathcal{T} = {\langle{\mathrmbf{R},\mathrmbfit{S},X}\rangle}$
has a companion 
abstract $\mathcal{S}$-specification
$\mathcal{T}^{\ast} = {\langle{\mathrmbf{R}^{\ast},\mathrmbfit{S}^{\ast},X}\rangle}$
called its \emph{closure}.
%
$\mathrmbf{R}^{\ast}$
is the context generated by the graph $\mathrmbf{R}$,
whose objects are $\mathrmbf{R}$-nodes (predicate symbols), and 
whose morphisms are paths 
$r_{0}\xrightarrow{\,p_{0}\,}r_{1}
{\;...\;}
r_{n-1}\xrightarrow{\,p_{n-1}\,}r_{n}$
of $\mathrmbf{R}$-edges (paths of abstract constraints).
%
$\mathrmbf{R}^{\ast}\xrightarrow{\;\mathrmbfit{S}^{\ast}\;}\mathrmbf{List}(X)$
is a passage that extends the graph morphism
$\mathrmbf{R}\xrightarrow{\;\mathrmbfit{S}\;}|\mathrmbf{List}(X)|$
by composition of signature morphisms.
\mbox{}\newline
}
%
An abstract $\mathcal{S}$-specification
$\mathcal{T} = {\langle{\mathrmbf{R},\mathrmbfit{S},X}\rangle}$
has a companion formal $\mathcal{S}$-specification
$\widehat{\mathcal{T}} = {\langle{\widehat{\mathrmbf{R}},\widehat{\mathrmbfit{S}},X}\rangle}$,
whose graph
$\widehat{\mathrmbf{R}}
\sqsubseteq\mathrmbf{Cons}(\mathcal{S})$
is the set of all 
formal constraints
$\big\{ 
r'\xrightarrow[h]{\,\sigma(p)\,}r
\mid
r'\xrightarrow{\,p\,}r
\in \mathrmbf{R}
\big\}$.
%
\footnote{These are formal constraints,
since any predicate symbol is a formula.}
%
Since $\widehat{\mathrmbf{R}}$ is closed under composition and contains all identities,
$\widehat{\mathcal{T}}$ is an 
abstract $\mathcal{S}$-specification
with signature passage
$\widehat{\mathrmbf{R}}\xrightarrow{\widehat{\mathrmbfit{S}}}\mathrmbf{List}(X)$.
%
%

...

%
%

\comment{
Assume that we are given a schema morphism
\[\mbox{\footnotesize$
\mathcal{S}_{2}={\langle{R_{2},\sigma_{2},X_{2}}\rangle} 
\xrightarrow{\;{\langle{r,\grave{\varphi},f}\rangle}\;}
{\langle{R_{1},\sigma_{1},X_{1}}\rangle}=\mathcal{S}_{1}
$,\normalsize}\]
consisting of 
a function on relation symbols
$R_{2}\xrightarrow{\,\mathrmit{r}\;}R_{1}$,
a sort function
$X_{2}\xrightarrow{f}\mathcal{X}_{1}$,
and
a schema bridge
$\sigma_{2}{\;\cdot\;}{\scriptstyle\sum}_{f}
{\;\xRightarrow{\,\grave{\varphi}\;\,}\;}r{\;\cdot\;}\sigma_{1}$.
}

\subsection{Specification Morphisms.}\label{sub:sec:spec:mor}

A \texttt{FOLE} (abstract) specification morphism 
in $\mathrmbf{Spec}$,
with constant sort function
$f : X_{2} \rightarrow X_{1}$, 
is a diagram morphism 
${\langle{\mathrmbfit{R},\zeta}\rangle} : 
{\langle{\mathrmbf{R}_{2},\mathrmbfit{S}_{2}}\rangle} \rightarrow 
{\langle{\mathrmbf{R}_{1},\mathrmbfit{S}_{1}}\rangle}$
in $\mathrmbf{List}$
consisting of
a 
relation passage 
$\mathrmbfit{R} : \mathrmbf{R}_{2} \rightarrow \mathrmbf{R}_{1}$
and
a list interpretation bridge
{\footnotesize{${\mathrmbfit{S}_{2}
{\,\xRightarrow{\,\zeta\;\,}\,}
\mathrmbfit{R}{\,\circ\,}\mathrmbfit{S}_{1}}$}}
that factors
(Fig.\,\ref{fig:spec:mor:list})
\begin{equation}\label{eqn:spec:mor:def}
\zeta = (\grave{\varphi} \circ \mathrmbfit{inc}_{X_{1}}) 
\bullet (\mathrmbfit{S}_{2} \circ \grave{\iota}_{f})
\end{equation}
through the fiber adjunction 
$\mathrmbf{List}(X_{2})
\xleftarrow{\acute{\mathrmbfit{list}}_{f}\;\dashv\;\grave{\mathrmbfit{list}}_{f}}
\mathrmbf{List}(X_{1})$
%
\footnote{Fibered by signature over the adjunction
$\mathrmbf{List}(X_{2})
\xleftarrow
[{\langle{{\scriptscriptstyle\sum}_{f}{\;\dashv\;}f^{\ast}}\rangle}]
{{\langle{\acute{\mathrmbfit{list}}_{f}{\!\dashv\,}\grave{\mathrmbfit{list}}_{f}}\rangle}}
\mathrmbf{List}(X_{1})$
(Kent \cite{kent:fole:era:tbl})
representing list flow along a sort function
$f : X_{2} \rightarrow X_{1}$.}
%
in terms of 
\begin{itemize}
\item 
some bridge
$\grave{\varphi} : 
\mathrmbfit{S}_{2}\circ{\scriptstyle\sum}_{f}
\Rightarrow\mathrmbfit{R}^{\mathrm{op}}\circ\mathrmbfit{S}_{1}$
and 
\item 
the inclusion bridge
$\grave{\iota}_{f} : 
\mathrmbfit{inc}_{X_{2}}
\Rightarrow
{\scriptstyle\sum}_{f}\circ\mathrmbfit{inc}_{X_{1}}$.
\end{itemize}
%
We normally just use the bridge restriction $\grave{\varphi}$ for the specification morphism.
The original definition can be computed with the factorization in 
Disp.\,\ref{eqn:spec:mor:def}.
\comment{
Hence,
a specification morphism (with constant sort function) 
${\langle{\mathrmbfit{R},\grave{\varphi},f}\rangle} :
{\langle{\mathrmbf{R}_{2},\mathrmbfit{S}_{2},X_{2}}\rangle} \rightarrow
{\langle{\mathrmbf{R}_{1},\mathrmbfit{S}_{1},X_{1}}\rangle}$
consists of
a shape-changing relation passage 
$\mathrmbfit{R} : \mathrmbf{R}_{2} \rightarrow \mathrmbf{R}_{1}$,
a sort function
$\mathcal{X}_{2} \xrightarrow{\;f\;} X_{1}$, and
a bridge
$\grave{\varphi} : 
\mathrmbfit{S}_{2}\circ\grave{\mathrmbfit{list}}_{f}
\Rightarrow 
\mathrmbfit{R}^{\mathrm{op}}\circ\mathrmbfit{S}_{1}$.
}
%
\begin{definition}\label{def:abs:spec:mor}
An (abstract) specification morphism
(Tbl.\,\ref{tbl:fole:morph}
in \S\,\ref{sub:sec:adj:components})
%
\footnote{Visualized on the right side Fig.~\ref{fig:spec:mor:list}.}
%
%
\footnote{This is the same as a database schema morphism
of \S\,\ref{sub:sec:db:mor}.}
%
\newline\mbox{}\hfill
\rule[5pt]{0pt}{10pt}
{\footnotesize{$\mathcal{T}_{2}={\langle{\mathrmbf{R}_{2},\mathrmbfit{S}_{2},X_{2}}\rangle}
\xrightarrow{{\langle{\mathrmbfit{R},\grave{\varphi},f}\rangle}}
{\langle{\mathrmbf{R}_{1},\mathrmbfit{S}_{1},X_{1}}\rangle}=\mathcal{T}_{1}$,}}
\hfill\mbox{}\newline
along a schema morphism (Disp.\ref{struc:mor:assume} in \S\,\ref{sub:sec:struc:mor})
%
consists of 
a relation passage 
$\mathrmbf{R}_{2}\xrightarrow{\;\mathrmbfit{R}\;}\mathrmbf{R}_{1}$
extending the predicate function
$R_{2}\xrightarrow{\;r\;}R_{1}$ of the schema morphism to constraints,
the sort function $X_{2}\xrightarrow{\;f\;}X_{1}$ of the schema morphism,
and
a bridge
$\mathrmbfit{S}_{2}{\;\circ\;}{\scriptstyle\sum}_{f}\xRightarrow{\;\grave{\varphi}\;\,}
\mathrmbfit{R}{\;\circ\;}\mathrmbfit{S}_{1}$
extending the schema bridge to naturality.
%
\footnote{Naturality means that
for any source constraint
$r'_{2}\xrightarrow{\,p_{2}\,}r_{2}$
with target constraint
$\mathrmbfit{R}(r'_{2})=r(r'_{2})\xrightarrow[\mathrmbfit{R}(p_{2})]{\,p_{1}\,}r(r_{2})=\mathrmbfit{R}(r_{2})$,
if their signature morphisms are 
${\langle{I'_{2},s'_{2}}\rangle}\xrightarrow{\,h_{2}\,}{\langle{I_{2},s_{2}}\rangle}$
and
${\langle{I'_{1},s'_{1}}\rangle}\xrightarrow{\,h_{1}\,}{\langle{I_{1},s_{1}}\rangle}$,
we have the naturality diagram 
{\footnotesize{${\scriptstyle\sum}_{f}(h_{2}){\,\circ\,}\grave{\varphi}_{r_{2}}
=\grave{\varphi}_{r'_{2}}{\,\circ\,}h_{1}$.}}
}
%
%
\footnote{An example of a non-trivial bridge is projection:
assume that the arity functions 
$I'_{2}{\,\xhookrightarrow{\grave{\varphi}_{r'_{2}}}\,}I'_{1}$ and
$I_{2}{\,\xhookrightarrow{\grave{\varphi}_{r_{2}}}\,}I_{1}$
are inclusion,
thus defining signature projection, 
and assume that 
$I'_{2}\xrightarrow{\,h_{2}\,}I_{2}$
is a restriction of
$I'_{1}\xrightarrow{\,h_{1}\,}I_{1}$.}
%
\end{definition}
As noted before,
$\mathrmbf{Spec}\subseteq\mathrmbf{SPEC}$ denotes 
the mathematical context of abstract
{\ttfamily FOLE} specifications 
(with fixed sort sets).
%
\comment{When defining formalism (see later discussion),
we identify the consequence $\mathrmbf{T}^{\scriptstyle\bullet}$ of a formal specification $\mathrmbf{T}$
with the context of predicates/constraints $\mathrmbf{R}$
and we build strict specification morphisms from strict schema morphisms.}
%
This context 
is the same as 
the context of database schemas and their morphisms.
\begin{figure}
\begin{center}
{{\begin{tabular}{@{\hspace{5pt}}c@{\hspace{15pt}}c@{\hspace{5pt}}c@{\hspace{5pt}}}
{{\begin{tabular}[b]{c}
\setlength{\unitlength}{0.58pt}
\begin{picture}(80,160)(5,12)
\put(5,160){\makebox(0,0){\footnotesize{$\mathrmbf{R}_{2}$}}}
\put(85,160){\makebox(0,0){\footnotesize{$\mathrmbf{R}_{1}$}}}
\put(40,15){\makebox(0,0){\normalsize{$\mathrmbf{List}$}}}
\put(45,172){\makebox(0,0){\scriptsize{$\mathrmbfit{R}$}}}
\put(0,100){\makebox(0,0)[r]{\footnotesize{$\mathrmbfit{S}_{2}$}}}
\put(81,100){\makebox(0,0)[l]{\footnotesize{$\mathrmbfit{S}_{1}$}}}
\put(40,105){\makebox(0,0){\shortstack{\normalsize{$\xRightarrow{\,\,\zeta\;\;}$}}}}
\put(15,160){\vector(1,0){50}}
\qbezier(,150)(0,80)(25,30)\put(29,24){\vector(2,-3){0}}
\qbezier(80,150)(80,80)(55,30)\put(51,24){\vector(-2,-3){0}}
\end{picture}
\end{tabular}}}
&
{{\begin{tabular}[b]{c}
\setlength{\unitlength}{0.58pt}
\begin{picture}(80,160)(0,0)
\put(20,90){\makebox(0,0){\normalsize{$=$}}}
\end{picture}
\end{tabular}}}
&
{{\begin{tabular}[b]{c}
\setlength{\unitlength}{0.58pt}
\begin{picture}(120,160)(0,10)
\put(5,160){\makebox(0,0){\footnotesize{$\mathrmbf{R}_{2}$}}}
\put(125,160){\makebox(0,0){\footnotesize{$\mathrmbf{R}_{1}$}}}
\put(0,80){\makebox(0,0){\footnotesize{$\mathrmbf{List}(X_{2})$}}}
\put(120,80){\makebox(0,0){\footnotesize{$\mathrmbf{List}(X_{1})$}}}
\put(60,5){\makebox(0,0){\normalsize{$\mathrmbf{List}$}}}
\put(65,172){\makebox(0,0){\scriptsize{$\mathrmbfit{R}$}}}
\put(-5,125){\makebox(0,0)[r]{\scriptsize{$\mathrmbfit{S}_{2}$}}}
\put(125,125){\makebox(0,0)[l]{\scriptsize{$\mathrmbfit{S}_{1}$}}}
\put(60,92){\makebox(0,0){\scriptsize{$
{\scriptscriptstyle\sum}_{f}
$}}}
\put(24,38){\makebox(0,0)[r]{\scriptsize{$\mathrmbfit{inc}_{X_{2}}$}}}
\put(97,38){\makebox(0,0)[l]{\scriptsize{$\mathrmbfit{inc}_{X_{1}}$}}}
\put(60,130){\makebox(0,0){\shortstack{
\scriptsize{$\grave{\varphi}$}\\\large{$\Longrightarrow$}}}}
\put(60,54){\makebox(0,0){\shortstack{\footnotesize{$
\xRightarrow{\grave{\iota}_{f}}$}}}}
\put(20,160){\vector(1,0){80}}
\put(35,80){\vector(1,0){50}}
\put(0,145){\vector(0,-1){50}}
\put(120,145){\vector(0,-1){50}}
\put(9,68){\vector(3,-4){38}}
\put(111,68){\vector(-3,-4){38}}
\end{picture}
\end{tabular}}}
\end{tabular}}}
\end{center}
\caption{Specification Morphism}
\label{fig:spec:mor:list}
\end{figure}
%

%
%
\comment{ 
\begin{figure}
\begin{center}
{{\begin{tabular}{c@{\hspace{70pt}}c}
{{\begin{tabular}{c}
\setlength{\unitlength}{0.48pt}
\begin{picture}(120,100)(0,-5)
\put(0,80){\makebox(0,0){\footnotesize{$\mathrmbf{R}_{2}$}}}
\put(120,80){\makebox(0,0){\footnotesize{$\mathrmbf{R}_{1}$}}}
\put(0,0){\makebox(0,0){\footnotesize{$\mathrmbf{List}(X_{2})$}}}
\put(120,0){\makebox(0,0){\footnotesize{$\mathrmbf{List}(X_{1})$}}}
\put(60,88){\makebox(0,0){\scriptsize{$\mathrmbfit{R}$}}}
\put(60,-15){\makebox(0,0){\scriptsize{${\scriptstyle\sum}_{f}$}}}
\put(-7,40){\makebox(0,0)[r]{\scriptsize{$\mathrmbfit{S}_{2}$}}}
\put(128,40){\makebox(0,0)[l]{\scriptsize{$\mathrmbfit{S}_{1}$}}}
\put(60,45){\makebox(0,0){\normalsize{$\xRightarrow{\;\;\grave{\varphi}\;\;}$}}}
\put(20,80){\vector(1,0){80}}
\put(40,0){\vector(1,0){40}}
\put(0,65){\vector(0,-1){50}}
\put(120,65){\vector(0,-1){50}}
\end{picture}
\end{tabular}}}
&
{{\begin{tabular}{c}
\setlength{\unitlength}{0.5pt}
\begin{picture}(120,100)(0,-5)
\put(-5,80){\makebox(0,0){\footnotesize{${\scriptstyle\sum}_{f}(
I'_{2},s'_{2}
)$}}}
\put(-5,0){\makebox(0,0){\footnotesize{${\scriptstyle\sum}_{f}(
I_{2},s_{2}
)$}}}
\put(125,80){\makebox(0,0){\footnotesize{$
{\langle{I'_{1},s'_{1}}\rangle}
$}}}
\put(125,0){\makebox(0,0){\footnotesize{$
{\langle{I_{1},s_{1}}\rangle}
$}}}
\put(65,90){\makebox(0,0){\scriptsize{$\grave{\varphi}_{r'_{2}}$}}}
\put(65,-10){\makebox(0,0){\scriptsize{$\grave{\varphi}_{r_{2}}$}}}
\put(-7,37){\makebox(0,0)[r]{\scriptsize{$\underset{= h_{2}}{{\scriptstyle\sum}_{f}(h_{2})}$}}}
\put(128,40){\makebox(0,0)[l]{\scriptsize{$h_{1}$}}}
\put(45,80){\vector(1,0){40}}
\put(45,0){\vector(1,0){40}}
\put(0,65){\vector(0,-1){50}}
\put(120,65){\vector(0,-1){50}}
\put(60,-30){\makebox(0,0){\scriptsize{$
h_{2}{\,\circ\,}\grave{\varphi}_{r_{2}}=\grave{\varphi}_{r'_{2}}{\,\circ\,}h_{1}$}}}
\end{picture}
\end{tabular}}}
\end{tabular}}}
\end{center}
\caption{Specification Morphism}
\label{fig:spec:mor}
\end{figure}
} 
%
\mbox{}\newline
There is an (abstract) $\mathcal{S}$-specification morphism
(LHS Fig.\,\ref{fig:spec:pass} in \S\,\ref{sub:sec:spec:sat})
\newline\mbox{}\hfill
{{$\mathcal{T} = {\langle{\mathrmbf{R},\mathrmbfit{S},X}\rangle}
\xrightarrow{\;\mathrmbfit{R}\;}
{\langle{\widehat{\mathrmbf{R}},\widehat{\mathrmbfit{S}},X}\rangle} 
= \widehat{\mathcal{T}}$}}
\hfill\mbox{}\newline
with a object-identical passage
$\mathrmbf{R}\xrightarrow{\;\mathrmbfit{R}\;}\widehat{\mathrmbf{R}}$
that preserves signature
\newline
$\mathrmbf{R}\!\xrightarrow{\;\mathrmbfit{R}\;}
\widehat{\mathrmbf{R}}
\xrightarrow{\widehat{\mathrmbfit{S}}}
\mathrmbf{List}(X) =
\mathrmbf{R}\xrightarrow{\mathrmbfit{S}}\mathrmbf{List}(X)$.


\comment{
%
\begin{proposition}
There is a companion passage
$\mathrmbf{Spec}\xrightarrow[\widehat{()}]{\;\mathrmbfit{C}\;}\mathrmbf{Spec}$ 
on the context of specifications.
\end{proposition}
\begin{proof}
\mbox{}
\begin{itemize}
\item 
A specification
$\mathcal{T} = {\langle{\mathrmbf{R},\mathrmbfit{S},X}\rangle}$
is mapped to the specification
$\widehat{\mathcal{T}} = {\langle{\widehat{\mathrmbf{R}},\widehat{\mathrmbfit{S}},X}\rangle}$.
\item 
A specification morphism
\newline\mbox{}\hfill
\rule[5pt]{0pt}{10pt}
{\footnotesize{$\mathcal{T}_{2}={\langle{\mathrmbf{R}_{2},\mathrmbfit{S}_{2},X_{2}}\rangle}
\xrightarrow{{\langle{\mathrmbfit{R},\grave{\varphi},f}\rangle}}
{\langle{\mathrmbf{R}_{1},\mathrmbfit{S}_{1},X_{1}}\rangle}=\mathcal{T}_{1}$}}
\hfill\mbox{}\newline
is mapped to the 
companion specification morphism
\newline\mbox{}\hfill
\rule[5pt]{0pt}{10pt}
{\footnotesize{$\widehat{\mathcal{T}}_{2}=
{\langle{\widehat{\mathrmbf{R}}_{2},\widehat{\mathrmbfit{S}}_{2},X_{2}}\rangle}
\xrightarrow{{\langle{\widehat{\mathrmbfit{R}},\hat{\varphi},f}\rangle}}
{\langle{\widehat{\mathrmbf{R}}_{1},\widehat{\mathrmbfit{S}}_{1},X_{1}}\rangle}=
\widehat{\mathcal{T}}_{1}$}}
\hfill\mbox{}\newline
where
$\widehat{\mathrmbf{R}}_{2}\xrightarrow{\widehat{\mathrmbfit{R}}}\widehat{\mathrmbf{R}}_{1}$
maps 
\begin{itemize}
\item 
the 
formal source constraint 
$r'_{2}\xrightarrow[h_{2}]{\,\sigma_{2}(p_{2})\,}r_{2}$
\newline
of a source constraint
$r'_{2}\xrightarrow{\,p_{2}\,}r_{2}$
\item 
to the 
target formal constraint
$\widehat{\mathrmbfit{R}}(h_{2}) = 
r'_{1}\xrightarrow[h_{1}]{\,\sigma_{1}(p_{1})\,}r_{1}$
\newline
of the target constraint
$\mathrmbfit{R}(p_{2}) = 
r'_{1}=r(r'_{2})\xrightarrow
{\,p_{1}\,}r(r_{2})=r_{1}.
$
\end{itemize}
\end{itemize}
{\fbox{\textbf{Problem: not well-defined, since the formal is not necessarily unique!}}}
...
\mbox{}\hfill\rule{5pt}{5pt}
\end{proof}
%

%

{\fbox{\textbf{Define the companion specification morphism.}}}
\newline
{\fbox{\textbf{Show connection to abstract specification morphism.}}}
}

%
%

%
\newpage
\subsection{Specification Satisfaction.}\label{sub:sec:spec:sat}


Assume 
that we are given a schema 
$\mathcal{S} = {\langle{R,\sigma,X}\rangle}$
with a set of predicate symbols $R$
and a signature (header) map 
$R \xrightarrow{\sigma} \mathrmbf{List}(X)$.
\begin{definition}\label{satis:fml:spec}
(formal satisfaction)
An $\mathcal{S}$-structure $\mathcal{M} \in \mathrmbf{Struc}(\mathcal{S})$
\emph{satisfies} 
a (formal) $\mathcal{S}$-specification 
$\mathcal{T} = {\langle{\mathrmbf{R},\mathrmbfit{S},X}\rangle}$,
symbolized
$\mathcal{M}{\;\models_{\mathcal{S}}\;}\mathcal{T}$,
when it satisfies every constraint in the specification:
$\mathcal{M}{\;\models_{\mathcal{S}}\;}\mathcal{T}$
\underline{iff}
$\mathcal{M}^{\mathcal{S}}\sqsupseteq\mathrmbf{R}$.
Hence,
$\mathcal{M}^{\mathcal{S}}$ is the largest 
and most specialized (formal) 
$\mathcal{S}$-specification satisfied by $\mathcal{M}$.
\footnote{So,
$\mathcal{M}^{\mathcal{S}}$ is not just a mathematical context, but also an $\mathcal{S}$-specification.}
\end{definition}
\begin{definition}\label{satis:abs:spec}
(abstract satisfaction)
%
An $\mathcal{S}$-structure $\mathcal{M}\in\mathrmbf{Struc}(\mathcal{S})$
\emph{satisfies} 
an abstract $\mathcal{S}$-constraint $r'{\;\xrightarrow{p}\;}r$
in 
$\mathrmbf{R}$,
symbolized by
$\mathcal{M}{\;\models_{\mathcal{S}}\;}(r'{\;\xrightarrow{p}\;}r)$,
when it satisfies the associated formal constraint
$r'\xrightarrow{\,\sigma(p)\,}r$
in $\widehat{\mathrmbf{R}}$.
%
An $\mathcal{S}$-structure $\mathcal{M} \in \mathrmbf{Struc}(\mathcal{S})$
\emph{satisfies} (is a model of) 
an abstract $\mathcal{S}$-specification 
$\mathcal{T} = {\langle{\mathrmbf{R},\mathrmbfit{S},X}\rangle}$,
symbolized
$\mathcal{M}{\;\models_{\mathcal{S}}\;}\mathcal{T}$,
when it satisfies every abstract constraint in the specification.
%
Equivalently,
an $\mathcal{S}$-structure $\mathcal{M} \in \mathrmbf{Struc}(\mathcal{S})$
\emph{satisfies} 
an abstract $\mathcal{S}$-specification 
$\mathcal{T} = {\langle{\mathrmbf{R},\mathrmbfit{S},X}\rangle}$
when it satisfies 
the companion formal specification
$\widehat{\mathcal{T}} =
{\langle{\widehat{\mathrmbf{R}},\widehat{\mathrmbfit{S}},X}\rangle}$:
$\mathcal{M}{\;\models_{\mathcal{S}}\;}\mathcal{T}$
\underline{iff}
$\mathcal{M}^{\mathcal{S}}\sqsupseteq\widehat{\mathrmbf{R}}$.
\end{definition}
%
%
%

%
\begin{definition}\label{def:abs:tbl:pass}
When 
$\mathcal{M}{\;\models_{\mathcal{S}}\;}\mathcal{T}$
holds,
the abstract table passage
${\mathrmbf{R}}^{\mathrm{op}}\!\!\xrightarrow{\;\mathrmbfit{T}\;}\mathrmbf{Rel}(\mathcal{A})$
is defined to be the composition
(RHS Fig.\,\ref{fig:spec:pass})
of 
the object-identical passage
$\mathrmbf{R}\!\xrightarrow{\;\mathrmbfit{R}\;}\widehat{\mathrmbf{R}}$
with
the ``inclusion''
$\widehat{\mathrmbf{R}}\sqsubseteq\mathcal{M}^{\mathcal{S}}$
and the 
relation interpretation passage
${\mathcal{M}^{\mathcal{S}}}^{\mathrm{op}}\!
\xrightarrow{\;\mathrmbfit{R}_{\mathcal{M}}\;}
\mathrmbf{Rel}(\mathcal{A})$
of Lem.\,\ref{lem:interp:pass}
in
\S\,\ref{sub:sec:sat}.
%
\footnote{The passage
${\mathcal{M}^{\mathcal{S}}}^{\mathrm{op}}\!
\xrightarrow{\;\mathrmbfit{R}_{\mathcal{M}}\;}
\mathrmbf{Rel}(\mathcal{A})$
was defined in
Lemma.\,\ref{lem:interp:pass}
of \S\,\ref{sub:sec:sat} 
on satisfaction.
}
%
\end{definition}
Note that the
the abstract table passage
${\mathrmbf{R}}^{\mathrm{op}}\!\!\xrightarrow{\;\mathrmbfit{T}\;}\mathrmbf{Rel}(\mathcal{A})$
extends the 
table-valued function
$R \xrightarrow {T_{\mathcal{M}}} \mathrmbf{Tbl}(\mathcal{A})
\xhookrightarrow{im_{\mathcal{A}}}
\mathrmbf{Rel}(\mathcal{A})$
of the structure $\mathcal{M}$
to constraints.


%
\begin{proposition}[Key]\label{sat:tbl:interp}
Satisfaction
is equivalent to tabular interpretation.
\end{proposition}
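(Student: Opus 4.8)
The plan is to establish the equivalence in both directions, using as the essential tool the reflective characterization of constraint satisfaction from \S\,\ref{sub:sec:sat}: reading the relational condition of Disp.\,\ref{adj:fbr:ord} against the tabular condition of Disp.\,\ref{adj:fbr:cxt}, a (lax) structure $\mathcal{M}$ satisfies a formal constraint exactly when the corresponding relation/table morphism exists. The content of the proposition is that ``satisfaction,'' which Def.\,\ref{satis:abs:spec} phrases relationally through the conceptual intent $\mathcal{M}^{\mathcal{S}}$, carries precisely the same information as the ``tabular interpretation,'' i.e.\ the abstract table passage $\mathrmbf{R}^{\mathrm{op}}\xrightarrow{\mathrmbfit{T}}\mathrmbf{Rel}(\mathcal{A})$ of Def.\,\ref{def:abs:tbl:pass}; so I would prove that this passage is defined if and only if satisfaction holds, and that it then encodes the same morphisms.

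First I would treat the forward direction, which is little more than an unwinding of Def.\,\ref{def:abs:tbl:pass}. Assuming $\mathcal{M}\models_{\mathcal{S}}\mathcal{T}$, Def.\,\ref{satis:abs:spec} gives the containment $\widehat{\mathrmbf{R}}\sqsubseteq\mathcal{M}^{\mathcal{S}}$, so the inclusion $\widehat{\mathrmbf{R}}\hookrightarrow\mathcal{M}^{\mathcal{S}}$ is available; composing the object-identical companion passage $\mathrmbf{R}\xrightarrow{\mathrmbfit{R}}\widehat{\mathrmbf{R}}$ with this inclusion and with the relation interpretation passage $(\mathcal{M}^{\mathcal{S}})^{\mathrm{op}}\xrightarrow{\mathrmbfit{R}_{\mathcal{M}}}\mathrmbf{Rel}(\mathcal{A})$ of Lem.\,\ref{lem:interp:pass} yields the abstract table passage. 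Hence satisfaction produces a bona fide tabular interpretation.

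Next I would treat the backward direction pointwise on constraints. Given the abstract table passage, every abstract constraint $r'\xrightarrow{p}r$ in $\mathrmbf{R}$ is sent to a relation morphism in $\mathrmbf{Rel}(\mathcal{A})$ between $\mathrmbfit{R}_{\mathcal{M}}(r)$ and $\mathrmbfit{R}_{\mathcal{M}}(r')$; since morphisms in $\mathrmbf{Rel}(\mathcal{A})$ are exactly the relational inclusions of Disp.\,\ref{adj:fbr:ord}, this is the reflective witness that $\mathcal{M}$ satisfies the companion formal constraint $r'\xrightarrow{\sigma(p)}r$. As the companion formal constraints are precisely those comprising $\widehat{\mathrmbf{R}}$, we obtain $\widehat{\mathrmbf{R}}\sqsubseteq\mathcal{M}^{\mathcal{S}}$, which is satisfaction $\mathcal{M}\models_{\mathcal{S}}\mathcal{T}$ by Def.\,\ref{satis:abs:spec}.

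The step I expect to be the main obstacle is confirming that $\mathrmbfit{T}$ is a genuine passage rather than a mere graph morphism. The footnote to Lem.\,\ref{lem:interp:pass} warns that tabular interpretation is closed under composition only up to key equivalence, whereas relational interpretation is strictly functorial; this is exactly why Def.\,\ref{def:abs:tbl:pass} routes the abstract table passage through $\mathrmbf{Rel}(\mathcal{A})$. Both the object-identical companion passage $\mathrmbf{R}\to\widehat{\mathrmbf{R}}$ and the inclusion $\widehat{\mathrmbf{R}}\sqsubseteq\mathcal{M}^{\mathcal{S}}$ are functorial, so their composite with $\mathrmbfit{R}_{\mathcal{M}}$ is a passage; the one point requiring verification is that the companion assignment respects composition of abstract constraints, which follows from the closure of $\widehat{\mathrmbf{R}}$ under composition noted after Def.\,\ref{def:abs:spec}.
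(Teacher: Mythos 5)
Your proposal is correct and follows essentially the same route as the paper's own proof: the forward direction unwinds Def.\,\ref{def:abs:tbl:pass} (companion passage, inclusion into $\mathcal{M}^{\mathcal{S}}$, then $\mathrmbfit{R}_{\mathcal{M}}$ from Lem.\,\ref{lem:interp:pass}), and the backward direction reads a morphism assigned to each abstract constraint as the reflective witness of constraint satisfaction, with the detour through $\mathrmbf{Rel}(\mathcal{A})$ justified by strict functoriality of relational interpretation. The only cosmetic difference is that you invoke the relational form Disp.\,\ref{adj:fbr:ord} in the backward direction where the paper cites the tabular form Disp.\,\ref{adj:fbr:cxt} via adjoint flow; these are reflectively equivalent, so nothing is gained or lost.
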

\begin{proof}
On the one hand,
if $\mathcal{M}{\;\models_{\mathcal{S}}\;}\mathcal{T}$,
then the tabular interpretation passage
${\mathrmbf{R}}^{\mathrm{op}}\!\!\xrightarrow{\;\mathrmbfit{T}\;}
\mathrmbf{Rel}(\mathcal{A})
\xhookrightarrow{\mathrmbfit{inc}_{\mathcal{A}}} 
\mathrmbf{Tbl}(\mathcal{A})$
maps 
a constraint $r'\xrightarrow{\,p\,}r$ in $\mathrmbf{R}$
to the $\mathcal{A}$-relation morphism
%
\newline\mbox{}\hfill
$\mathrmbfit{T}(r')
={\langle{\mathrmbfit{S}(r'),{\wp}t'_{r'}(\mathrmbfit{K}(r'))}\rangle}
\xleftarrow[{\langle{h_{p},r_{p}}\rangle}]{\mathrmbfit{T}(p)}
{\langle{\mathrmbfit{S}(r),{\wp}t_{r}(\mathrmbfit{K}(r))}\rangle}=
\mathrmbfit{T}(r)$
\hfill\mbox{}\newline
as pictured in
Fig.\,\ref{fig:rel:tbl:refl}
of
\S\ref{sub:sec:sat}. 
%
\footnote{
We use relations here rather than tables,
since
(Lem.\,\ref{lem:interp:pass}
of
\S\,\ref{sub:sec:sat}): 
relational interpretation is closed under composition; but
tabular interpretation is closed under composition only up to key equivalence.}
%

On the other hand,
if there is a tabular interpretation passage
${\mathrmbf{R}}^{\mathrm{op}}\!\!\xrightarrow{\;\mathrmbfit{T}\;}\mathrmbf{Tbl}(\mathcal{A})$,
then the adjoint flow of $\mathcal{A}$-tables
%
\footnote{See the discussion about \textsl{type domain indexing}
in \S\,3.4.1 of the paper
\cite{kent:fole:era:tbl}.}
%
demonstrates that 
the $\mathcal{S}$-structure $\mathcal{M}$
satisfies 
each abstract $\mathcal{S}$-constraint $r'{\;\xrightarrow{p}\;}r$
in $\mathrmbf{R}$
(see
Disp.\ref{adj:fbr:cxt}
in
\S\,\ref{sub:sec:sat}). 
\mbox{}\hfill\rule{5pt}{5pt}
\end{proof}
%
\comment{
\begin{center}
{{\footnotesize{$
\underset{\textstyle{
\underset{{\text{in}\;\mathrmbf{Rel}(\mathcal{A})}}
{\mathrmbfit{R}_{\mathcal{M}}(r')
{\;\xleftarrow[{\langle{h,r}\rangle}]{\mathrmbfit{R}_{\mathcal{M}}(p)\;}\;}
\mathrmbfit{R}_{\mathcal{M}}(r)}
}}
{\underbrace{\underset{\text{in}\;\mathrmbf{Rel}_{\mathcal{A}}(I',s')}
{\mathrmbfit{R}_{\mathcal{M}}(r')
{\;\supseteq\;}
{\exists}_{h}(\mathrmbfit{R}_{\mathcal{M}}(r))}
{\;\;\;\;\;\;\;\;\rightleftarrows\;\;\;\;\;\;\;\;}
\underset{\text{in}\;\mathrmbf{Rel}_{\mathcal{A}}(\mathcal{S})}
{{h}^{-1}(\mathrmbfit{R}_{\mathcal{M}}(r'))
{\;\supseteq\;}
\mathrmbfit{R}_{\mathcal{M}}(r)}
}}
$;}}}
\end{center}
\begin{center}
{{\begin{tabular}{c}
\setlength{\unitlength}{0.51pt}
\begin{picture}(180,140)(0,-40)
\put(0,90){\makebox(0,0){\footnotesize{${\wp}\tau(\mathrmbfit{K}(r'))$}}}
\put(180,90){\makebox(0,0){\footnotesize{${\wp}\tau(\mathrmbfit{K}(r))$}}}
\put(-10,0){\makebox(0,0){\footnotesize{$\mathrmbfit{tup}_{\mathcal{A}}(\sigma(r'))$}}}
\put(190,0){\makebox(0,0){\footnotesize{$\mathrmbfit{tup}_{\mathcal{A}}(\sigma(r))$}}}
\put(90,100){\makebox(0,0){\scriptsize{$r$}}}
\put(90,14){\makebox(0,0){\scriptsize{$\mathrmbfit{tup}_{\mathcal{A}}(h)$}}}
\put(-1,50){\makebox(0,0)[r]{\scriptsize{$i_{r'}$}}}
\put(187,50){\makebox(0,0)[l]{\scriptsize{$i_{r}$}}}
\put(130,90){\vector(-1,0){80}}
\put(125,0){\vector(-1,0){70}}
\put(0,70){\vector(0,-1){55}}\put(6,70){\oval(12,12)[t]}
\put(180,70){\vector(0,-1){55}}\put(186,70){\oval(12,12)[t]}
\put(-10,-37){\makebox(0,0){\normalsize{$\underset{
\textstyle{\mathrmbfit{R}_{\mathcal{M}}(r')}}
{\underbrace{\rule{50pt}{0pt}}}$}}}
\put(190,-37){\makebox(0,0){\normalsize{$\underset{
\textstyle{\mathrmbfit{R}_{\mathcal{M}}(r)}}
{\underbrace{\rule{50pt}{0pt}}}$}}}
\put(140,-43){\vector(-1,0){100}}
\put(90,-34){\makebox(0,0){\scriptsize{$\mathrmbfit{R}_{\mathcal{M}}(p)$}}}
\put(90,-55){\makebox(0,0){\scriptsize{${\langle{h,k}\rangle}$}}}
\end{picture}
\end{tabular}}}
\end{center}
}

%
%


\comment{
\begin{proposition}\label{spec:equiv}
Abstract specifications and formal specifications are semantically equivalent.
\end{proposition}
\begin{proof}
\mbox{}
%
The set of all formulas form the largest abstract specification 
$\mathrmbf{Cons}(\mathcal{S})\xrightarrow{\mathrmbfit{S}_{\mathcal{S}}}\mathrmbf{List}(X)$.
%
Composition with the inclusion
$\mathrmbf{R}\sqsubseteq\mathrmbf{Cons}(\mathcal{S})$ 
(Tbl.\;\ref{tbl:spec:compare} (top))
makes any formal specification into an abstract specification
$\mathrmbf{R}\rightarrow\mathrmbf{List}(X)$:
any formal specification is an abstract specification,
and this \emph{formal-to-abstract} operation preserves satisfaction
(defined in \S\,\ref{sub:sec:sat}).
%
%
Hence,
formal constraints are special cases of abstract constraints,
and the formal constraint of this abstract constraint is the original formal constraint.
{\comment{However, the definition of abstract constraint satisfaction needs to be re-included.}}
%
Formal constraints are derived constraints,
since they are inductively defined from abstract constraints.
%
\hfill\rule{5pt}{5pt}
\end{proof}
}

%
\begin{figure}
\begin{center}
{\footnotesize{
\setlength{\extrarowheight}{3.5pt}
\begin{tabular}{@{\hspace{5pt}}c@{\hspace{55pt}}c@{\hspace{15pt}}}
{{\begin{tabular}{c}
\setlength{\unitlength}{0.6pt}
\begin{picture}(180,120)(-5,-10)
\put(0,80){\makebox(0,0){\footnotesize{${\mathrmbf{R}}$}}}
\put(80,80){\makebox(0,0){\footnotesize{$\widehat{\mathrmbf{R}}$}}}
\put(175,78){\makebox(0,0){\footnotesize{$\mathrmbf{Cons}(\mathcal{S})$}}}
\put(80,0){\makebox(0,0){\footnotesize{$\mathrmbf{List}(X)$}}}
\put(40,90){\makebox(0,0){\scriptsize{$\mathrmbfit{R}$}}}
\put(120,90){\makebox(0,0){\scriptsize{$\mathrmbfit{inc}$}}}
\put(27,45){\makebox(0,0)[r]{\scriptsize{$\mathrmbfit{S}$}}}
\put(74,50){\makebox(0,0)[r]{\scriptsize{$\widehat{\mathrmbfit{S}}$}}}
\put(140,45){\makebox(0,0)[l]{\scriptsize{$\mathrmbfit{S}_{\mathcal{S}}$}}}
\put(20,80){\vector(1,0){40}}
\put(100,80){\vector(1,0){40}}\put(100,84){\oval(8,8)[l]}
\put(15,65){\vector(1,-1){50}}
\put(80,65){\vector(0,-1){50}}
\put(150,65){\vector(-1,-1){50}}
%
%
\end{picture}
\end{tabular}}}
&
{{\begin{tabular}{c}
\setlength{\unitlength}{0.6pt}
\begin{picture}(180,120)(-5,-10)
\put(0,80){\makebox(0,0){\footnotesize{${\mathrmbf{R}}^{\mathrm{op}}$}}}
\put(80,80){\makebox(0,0){\footnotesize{$\widehat{\mathrmbf{R}}^{\mathrm{op}}$}}}
\put(175,78){\makebox(0,0){\footnotesize{${\mathcal{M}^{\mathcal{S}}}^{\mathrm{op}}$}}}
\put(60,0){\makebox(0,0)[l]{\footnotesize{$
\mathrmbf{Rel}(\mathcal{A})
\xhookrightarrow{\mathrmbfit{inc}_{\mathcal{A}}} 
\mathrmbf{Tbl}(\mathcal{A})
$}}}
\put(40,90){\makebox(0,0){\scriptsize{$\mathrmbfit{R}^{\mathrm{op}}$}}}
\put(120,90){\makebox(0,0){\scriptsize{$\mathrmbfit{inc}$}}}
\put(27,45){\makebox(0,0)[r]{\scriptsize{$\mathrmbfit{T}$}}}
\put(74,50){\makebox(0,0)[r]{\scriptsize{$\widehat{\mathrmbfit{T}}$}}}
\put(140,45){\makebox(0,0)[l]{\scriptsize{$\mathrmbfit{R}_{\mathcal{M}}$}}}
\put(20,80){\vector(1,0){40}}
\put(100,80){\vector(1,0){40}}\put(100,84){\oval(8,8)[l]}
\put(15,65){\vector(1,-1){50}}
\put(80,65){\vector(0,-1){50}}
\put(150,65){\vector(-1,-1){50}}
%
%
\end{picture}
\end{tabular}}}
\\
{\textsf{in general}}
&
{\textsf{with satisfaction}}
\end{tabular}}}
\end{center}
\caption{Specification Passages}
\label{fig:spec:pass}
\end{figure}

%
%


%
\begin{table}
\begin{center}
{\footnotesize{
\setlength{\extrarowheight}{3.5pt}
\begin{tabular}{|@{\hspace{5pt}}r@{\hspace{15pt}:\hspace{15pt}}l@{\hspace{5pt}}|}
\multicolumn{2}{l}{\textsf{Specification}}
\\ \hline
(formal) $\mathcal{S}$-specification
-- 
subgraph 
& 
$\mathrmbf{R}{\;\sqsubseteq\;}\mathrmbf{Cons}(\mathcal{S})$
\\
signature passage
&
$\mathrmbf{Cons}(\mathcal{S})\rightarrow\mathrmbf{List}(X)$
\\
(abstract) $\mathcal{S}$-specification
--
passage
&
$\mathrmbf{R}\xrightarrow{\;\mathrmbfit{S}\;}\mathrmbf{List}(X)$
\\\hline
\multicolumn{2}{l}{\textsf{Sound Logic}}
\\ \hline
(formal) $\mathcal{S}$-specification
-- 
subgraph
& 
$\mathrmbf{R}{\;\sqsubseteq\;}\mathcal{M}^{\mathcal{S}}$
\\
table passage
&
${\mathcal{M}^{\mathcal{S}}}^{\mathrm{op}}
\!\!\xrightarrow{\;\mathrmbfit{R}_{\mathcal{M}}\;}
\mathrmbf{Rel}(\mathcal{A})
\xhookrightarrow{\mathrmbfit{inc}_{\mathcal{A}}} 
\mathrmbf{Tbl}(\mathcal{A})$
\\
(abstract) $\mathcal{S}$-specification
--
table passage
& 
${\mathrmbf{R}}^{\mathrm{op}}
\!\xrightarrow{\;\mathrmbfit{T}\;}
\mathrmbf{Rel}(\mathcal{A})
\xhookrightarrow{\mathrmbfit{inc}_{\mathcal{A}}} 
\mathrmbf{Tbl}(\mathcal{A})$
\\\hline
\end{tabular}}}
\end{center}
\caption{Comparisons}
\label{tbl:spec:compare}
\end{table}
%


\newpage
\section{{\ttfamily FOLE} Sound Logics.}\label{sec:log}

\subsection{Sound Logics.}\label{sub:sub:sec:log}

%
\begin{definition}\label{def:snd:log}
A (lax) sound logic
$\mathcal{L} = {\langle{\mathcal{S},\mathcal{M},\mathcal{T}}\rangle}$
with
a schema $\mathcal{S} = {\langle{R,\sigma,X}\rangle}$,
consists of
a (lax) $\mathcal{S}$-structure $\mathcal{M} = {\langle{\mathcal{E},\sigma,\tau,\mathcal{A}}\rangle}$ and
an (abstract) $\mathcal{S}$-specification
$\mathcal{T}={\langle{\mathrmbf{R},\mathrmbfit{S},X}\rangle}$,
%
where 
the structure $\mathcal{M}$
satisfies
the specification $\mathcal{T}$:\;
$\mathcal{M}{\;\models_{\mathcal{S}}\;}\mathcal{T}$.
\end{definition}
%


%
\begin{proposition}\label{prop:snd:log:2:db}
Any (lax) {\ttfamily FOLE} sound logic
$\mathcal{L} = {\langle{\mathcal{S},\mathcal{M},\mathcal{T}}\rangle}$
defines a {\ttfamily FOLE} relational database
with constant type domain
$\mathcal{R} = {\langle{\mathrmbf{R},\mathrmbfit{T},\mathcal{A}}\rangle}$.
%
\footnote{A (lax) {\ttfamily FOLE} sound logic
is described in terms of its components
in Tbl.\,\ref{tbl:snd:log}.
In \S\,\ref{sub:sec:db:obj},
a {\ttfamily FOLE} relational database
is defined in 
Def.\,\ref{def:db}
and 
is described in terms of its components
in Tbl.\,\ref{tbl:db}.}
%
\end{proposition}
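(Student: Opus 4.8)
The plan is to assemble the three components of the target database directly from the data of the sound logic, using the satisfaction hypothesis to produce the required interpretation passage. Given $\mathcal{L} = {\langle{\mathcal{S},\mathcal{M},\mathcal{T}}\rangle}$, I would take the predicate context of the database to be the context $\mathrmbf{R}$ underlying the abstract specification $\mathcal{T} = {\langle{\mathrmbf{R},\mathrmbfit{S},X}\rangle}$ --- recall that an abstract $\mathcal{S}$-specification is literally the same thing as a database schema (footnote to Def.~\ref{def:abs:spec}) --- and take the fixed type domain to be the attribute classification $\mathcal{A}$ carried by the lax structure $\mathcal{M} = {\langle{\mathcal{E},\sigma,\tau,\mathcal{A}}\rangle}$. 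These two supply $\mathrmbf{R}$ and $\mathcal{A}$ in the triple $\mathcal{R} = {\langle{\mathrmbf{R},\mathrmbfit{T},\mathcal{A}}\rangle}$ with no further work.

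The substantive component is the tabular interpretation passage $\mathrmbfit{T}$. Since $\mathcal{L}$ is a sound logic, by definition the structure satisfies the specification, $\mathcal{M}{\;\models_{\mathcal{S}}\;}\mathcal{T}$. This is exactly the hypothesis required by Def.~\ref{def:abs:tbl:pass}, so the abstract table passage ${\mathrmbf{R}}^{\mathrm{op}}\!\xrightarrow{\;\mathrmbfit{T}\;}\mathrmbf{Rel}(\mathcal{A})\xhookrightarrow{\mathrmbfit{inc}_{\mathcal{A}}}\mathrmbf{Tbl}(\mathcal{A})$ is defined, namely as the composite of the object-identical companion passage $\mathrmbf{R}\xrightarrow{\mathrmbfit{R}}\widehat{\mathrmbf{R}}$, the inclusion $\widehat{\mathrmbf{R}}\sqsubseteq\mathcal{M}^{\mathcal{S}}$ guaranteed by abstract satisfaction (Def.~\ref{satis:abs:spec}), and the relation interpretation passage ${\mathcal{M}^{\mathcal{S}}}^{\mathrm{op}}\!\xrightarrow{\mathrmbfit{R}_{\mathcal{M}}}\mathrmbf{Rel}(\mathcal{A})$ of Lem.~\ref{lem:interp:pass}. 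It then remains to confirm that ${\langle{\mathrmbf{R},\mathrmbfit{T},\mathcal{A}}\rangle}$ meets Def.~\ref{def:db}: that $\mathrmbfit{T}$ is a genuine interpretation diagram from predicates to tables over the fixed domain $\mathcal{A}$. The key input is Prop.~\ref{sat:tbl:interp}, which asserts that satisfaction is equivalent to tabular interpretation; this is precisely what certifies that each abstract constraint $r'\xrightarrow{p}r$ in $\mathrmbf{R}$ is carried to a legitimate table morphism and that the assignment is well-defined on objects as the tables $\mathrmbfit{T}_{\mathcal{M}}(r)$.

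The main obstacle I anticipate is functoriality rather than existence. A database schema's interpretation must be a passage, so composites of constraints must map to composites of morphisms on the nose; but Lem.~\ref{lem:interp:pass} warns that tabular interpretation is closed under composition only up to key equivalence, whereas relational interpretation is strictly closed. This is exactly why the construction is routed through $\mathrmbf{Rel}(\mathcal{A})$ first --- where $\mathrmbfit{R}_{\mathcal{M}}$ is an honest passage --- and only then included into $\mathrmbf{Tbl}(\mathcal{A})$. Checking that this routing delivers the functoriality demanded by Def.~\ref{def:db}, and that the resulting $\mathrmbfit{T}$ restricts to the structure's table-valued function $R\xrightarrow{T_{\mathcal{M}}}\mathrmbf{Tbl}(\mathcal{A})$ on the underlying predicates, is where the care is needed; everything else is a matter of matching the named components of the two definitions.
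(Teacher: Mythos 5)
Your proposal is correct and follows essentially the same route as the paper's own proof: the paper likewise invokes the definition of satisfaction (Def.\,\ref{satis:abs:spec} together with Def.\,\ref{def:abs:tbl:pass}) to extend the tabular interpretation $R \xrightarrow{T} \mathrmbf{Tbl}(\mathcal{A})$ to a table passage $\mathrmbf{R}^{\mathrm{op}}\!\xrightarrow{\mathrmbfit{T}}\mathrmbf{Rel}(\mathcal{A})\subseteq\mathrmbf{Tbl}(\mathcal{A})$, which constitutes the database ${\langle{\mathrmbf{R},\mathrmbfit{T},\mathcal{A}}\rangle}$. Your explicit treatment of the functoriality issue (routing through $\mathrmbf{Rel}(\mathcal{A})$ because tabular interpretation composes only up to key equivalence) is the same point the paper relegates to a footnote in the proof of Prop.\,\ref{sat:tbl:interp}, so you have simply made the paper's terse argument explicit.
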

\begin{proof}
By the definition of satisfaction 
(Def.\,\ref{satis:abs:spec} 
and Def.\,\ref{def:abs:tbl:pass} 
in \S~\ref{sub:sec:spec:sat}),
the tabular interpretation 
$R \xrightarrow{T} \mathrmbf{Tbl}(\mathcal{A})$
(Def.\,\ref{def:struc:lax} of 
\S\,\ref{sub:sec:struc:lax})
extends to a
table passage
$\mathrmbf{R}^{\text{op}}\!\xrightarrow{\mathrmbfit{T}}
\mathrmbf{Rel}(\mathcal{A})\subseteq
\mathrmbf{Tbl}(\mathcal{A})$,
%
\comment{
is an interpretation diagram of $\mathcal{A}$-tables,
which
extends 
to constraints,
mapping 
a constraint $r'\xrightarrow{p}r$ 
to the $\mathcal{A}$-relation morphism
(\textbf{Key} Prop.\,\ref{sat:tbl:interp} of 
\S\,\ref{sub:sec:struc:lax})
\newline\mbox{}\hfill
$\mathrmbfit{T}(r')
={\langle{\mathrmbfit{S}(r'),{\wp}t'(\mathrmbfit{K}(r'))}\rangle}
\xleftarrow[{\langle{h,\tilde{r}}\rangle}]{\mathrmbfit{T}(p)}
{\langle{\mathrmbfit{S}(r),{\wp}t(\mathrmbfit{K}(r))}\rangle}=
\mathrmbfit{T}(r)$,
\hfill\mbox{}\newline
}
%
which forms a relational database
$\mathcal{R} = {\langle{\mathrmbf{R},\mathrmbfit{T},\mathcal{A}}\rangle}$.
%
\mbox{}\hfill\rule{5pt}{5pt}
\end{proof}
%
%
\comment{
Using $\mathcal{A}$-table projections passages,
a (lax) 
sound logic
consists of the following projective components:
\begin{description}
\item[(signature)] 
passage
$\mathrmbfit{S} = 
\mathrmbfit{T}^{\mathrm{op}}{\circ\;}\mathrmbfit{sign}_{\mathcal{A}} : 
\mathrmbf{R}\rightarrow\mathrmbf{List}(X)$,
mapping a constraint $r'\xrightarrow{p}r$ 
to the $X$-signature morphism
$\mathrmbfit{S}(r')={\langle{I',s'}\rangle}
\xrightarrow[{h}]{\mathrmbfit{S}(p)}
{\langle{I,s}\rangle}=\mathrmbfit{S}(r)$;
\item[(key)] 
passage
$\mathrmbfit{K} =
\mathrmbfit{T}{\;\circ\;}\mathrmbfit{key}_{\mathcal{A}} : 
\mathrmbf{R}^\text{op}\rightarrow\mathrmbf{Set}$,
mapping a constraint $r'\xrightarrow{p}r$ 
to the function
${\wp}t'_{r'}(\mathrmbfit{K}(r'))
\xleftarrow[\tilde{r}]{\mathrmbfit{K}(p)}
{\wp}t_r(\mathrmbfit{K}(r))$; and
%
\item[(tuple)] 
bridge
$\tau = 
\mathrmbfit{T}{\;\circ\;}\tau_{\mathcal{A}} :
\mathrmbfit{K}\Rightarrow\mathrmbfit{S}^{\mathrm{op}}{\circ\;}\mathrmbfit{tup}_{\mathcal{A}}$,
which satisfies the naturality diagram
{\footnotesize{$\tilde{r}{\,\cdot\,}\iota_{r'}
=\iota_{r}{\,\cdot\,}\mathrmbfit{tup}_{\mathcal{A}}(h)$}},
where
we use the relation interpretation
defined in
Lem.\,\ref{lem:interp:pass}
of \S\,\ref{sub:sec:sat}. 
%
\end{description}
}
%
%
\comment{Note that the signature aspect
$\mathrmbfit{S}(r')
\xrightarrow{\mathrmbfit{S}(p)}\mathrmbfit{sign}_{\mathcal{A}}(r)$
is given by the $X$-signature passage
$\mathrmbf{R}\xrightarrow{\;\mathrmbfit{S}\;}\mathrmbf{List}(X)$
of the $\mathcal{S}$-specification
$\mathcal{T}$, 
but that the key aspect
$\mathrmbfit{K}(r')\xrightarrow{\mathrmbfit{K}(p)}\mathrmbfit{K}(r)$
is new.}

\vspace{-10pt}

\begin{table}
\begin{center}
{\footnotesize{\setlength{\extrarowheight}{2pt}
\begin{tabular}{|@{\hspace{5pt}}r@{\hspace{20pt}}l@{\hspace{15pt}$$\hspace{0pt}}l
@{\hspace{5pt}}|}
\multicolumn{3}{l}{
$\mathcal{L} = {\langle{\mathcal{S},\mathcal{M},\mathcal{T}}\rangle}$ 
\hfill
\textsf{sound logic}
}
\\ \hline
schema 
& $\mathcal{S} = {\langle{R,\sigma,X}\rangle}$ 
& 
\\ \cline{1-1}
signature function
&
$R \xrightarrow
{\;\sigma\;}\mathrmbf{List}(X)$
&
\\\hline\hline
(lax) $\mathcal{S}$-structure 
& $\mathcal{M} = {\langle{\mathcal{E},\sigma,\tau,\mathcal{A}}\rangle}$ 
&
\\ \cline{1-1}
(lax) entity classification 
& $\mathcal{E}={\langle{R,\mathrmbfit{K}}\rangle}$
& 
\\ \cline{1-1}
attribute classification 
& $\mathcal{A} = {\langle{X,Y,\models_{\mathcal{A}}}\rangle}$
& 
\\ \cline{1-1}
tuple bridge
& $\mathrmbfit{K}\xRightarrow{\;\tau\;}\sigma{\,\circ\;}\mathrmbfit{tup}_{\mathcal{A}}$
& 
\\
table function
& $R \xrightarrow{\;T_{\mathcal{M}}\;} \mathrmbf{Tbl}(\mathcal{A})$
& 
\\\hline\hline 
$\mathcal{S}$-specification 
& $\mathcal{T}={\langle{\mathrmbf{R},\mathrmbfit{S},X}\rangle}$ 
& 
\\ \cline{1-1}
signature passage
&
$\mathrmbf{R}\xrightarrow{\;\mathrmbfit{S}\;}\mathrmbf{List}(X)$
&
\\\hline\hline
satisfaction 
&
$\mathcal{M}{\;\models_{\mathcal{S}}\;}\mathcal{T}$
&
\comment{
\\\hline
table morphism
&
$\mathrmbfit{T}(r')
\xleftarrow[{\langle{h_{p},r_{p}}\rangle}]{\mathrmbfit{T}(p)}
\mathrmbfit{T}(r)
\in
\mathrmbf{Rel}(\mathcal{A})$
&
\\
constraint
&
$r' \xrightarrow{p} r \in \mathrmbf{R}$
&
}
\\\hline
table passage
&
$\mathrmbf{R}^{\mathrm{op}}\!\xrightarrow{\;\mathrmbfit{T}\;}
\mathrmbf{Rel}(\mathcal{A})
\xhookrightarrow{\mathrmbfit{inc}_{\mathcal{A}}} 
\mathrmbf{Tbl}(\mathcal{A})$
&
\\\hline
\end{tabular}}}
\end{center}
\caption{Sound Logic}
\label{tbl:snd:log}
\end{table}

\comment{
\\\hline
\multicolumn{3}{l}{}
\\
\multicolumn{3}{l}{
$\mathcal{R} = {\langle{\mathrmbf{R},\mathrmbfit{T},\mathcal{A}}\rangle}$ 
\hfill
\textsf{relational database}
}
\\\hline
table passage
&
$\mathrmbf{R}^{\mathrm{op}}\!\xrightarrow{\;\mathrmbfit{T}\;}
\mathrmbf{Rel}(\mathcal{A})
\xhookrightarrow{\mathrmbfit{inc}_{\mathcal{A}}} 
\mathrmbf{Tbl}(\mathcal{A})$
&
\\\hline
signature passage
&
$\mathrmbf{R}
\xrightarrow
[\mathrmbfit{T}^{\mathrm{op}}{\circ\;}\mathrmbfit{sign}_{\mathcal{A}}]
{\mathrmbfit{S}}
\mathrmbf{List}(X)$
&
\\
tuple bridge 
&
$\mathrmbfit{K}
\xRightarrow[\mathrmbfit{T}{\,\circ\,}\tau_{\mathcal{A}}]{\tau}
\mathrmbfit{S}^{\mathrm{op}}\!{\,\circ\,}\mathrmbfit{tup}_{\mathcal{A}}$
&
table morphism
&
$\mathrmbfit{T}(r')
\xleftarrow[{\langle{h_{p},r_{p}}\rangle}]{\mathrmbfit{T}(p)}
\mathrmbfit{T}(r)
\in
\mathrmbf{Rel}(\mathcal{A})$
&
\\
constraint
&
$r' \xrightarrow{p} r \in \mathrmbf{R}$
&
\\ \cline{1-1}
key passage
& 
$
\mathrmbfit{K}(r)
\xrightarrow{\;e_{r}\;}
{\wp}t_{r}(\mathrmbfit{K}(r))
\xhookrightarrow{\;i_{r}\;}
\mathrmbfit{tup}_{\mathcal{A}}(\sigma(r))
$
&
\\
signature passage
&
$\mathrmbf{R}
\xrightarrow
[\mathrmbfit{T}^{\mathrm{op}}{\circ\;}\mathrmbfit{sign}_{\mathcal{A}}]
{\mathrmbfit{S}}
\mathrmbf{List}(X)$
&
\\
tuple bridge 
&
$\mathrmbfit{K}
\xRightarrow[\mathrmbfit{T}{\,\circ\,}\tau_{\mathcal{A}}]{\tau}
\mathrmbfit{S}^{\mathrm{op}}\!{\,\circ\,}\mathrmbfit{tup}_{\mathcal{A}}$
&
}

%
%

%
\newpage
\subsection{Sound Logic Morphisms.}\label{sub:sub:sec:log:mor}

%
\begin{definition}\label{def:snd:log:mor}
For any two sound logics 
$\mathcal{L}_{2} = {\langle{\mathcal{S}_{2},\mathcal{M}_{2},\mathcal{T}_{2}}\rangle}$
and
$\mathcal{L}_{1} = {\langle{\mathcal{S}_{1},\mathcal{M}_{1},\mathcal{T}_{1}}\rangle}$,
a (lax) sound logic morphism
(Tbl.\,\ref{tbl:fole:morph}
in \S\,\ref{sub:sec:adj:components})
\newline\mbox{}\hfill
\rule[5pt]{0pt}{10pt}
{\footnotesize{
$\mathcal{L}_{2}={\langle{\mathcal{S}_{2},\mathcal{M}_{2},\mathcal{T}_{2}}\rangle}
\xrightleftharpoons{{\langle{\mathrmbfit{R},\kappa,\grave{\varphi},f,g}\rangle}}
{\langle{\mathcal{S}_{1},\mathcal{M}_{1},\mathcal{T}_{1}}\rangle}=\mathcal{L}_{1}$
}}
\hfill\mbox{}\newline
\rule[2pt]{0pt}{10pt}
consists of
a (lax) structure morphism
\newline\mbox{}\hfill
\rule[2pt]{0pt}{10pt}
{\footnotesize{
$\mathcal{M}_{2}={\langle{\mathcal{E}_{2},{\langle{\sigma_{2},\tau_{2}}\rangle},\mathcal{A}_{2}}\rangle}
\xrightleftharpoons{{\langle{r,\kappa,\grave{\varphi},f,g}\rangle}}
{\langle{\mathcal{E}_{1},{\langle{\sigma_{1},\tau_{1}}\rangle},\mathcal{A}_{1}}\rangle}=\mathcal{M}_{1}$ and
}}\hfill\mbox{}\newline
\rule[2pt]{0pt}{10pt}
an (abstract) specification morphism
%
\newline\mbox{}\hfill
\rule[5pt]{0pt}{10pt}
{\footnotesize{
$\mathcal{T}_{2}={\langle{\mathrmbf{R}_{2},\mathrmbfit{S}_{2},X_{2}}\rangle}
\xrightarrow{{\langle{\mathrmbfit{R},\grave{\varphi},f}\rangle}}
{\langle{\mathrmbf{R}_{1},\mathrmbfit{S}_{1},X_{1}}\rangle}=\mathcal{T}_{1}$}}
\rule[5pt]{0pt}{10pt}
\hfill\mbox{}\newline
\rule[2pt]{0pt}{10pt}
along a common schema morphism
\newline\mbox{}\hfill
\rule[5pt]{0pt}{10pt}
{\footnotesize{$
\mathcal{S}_{2} = {\langle{R_{2},\sigma_{2},X_{2}}\rangle}
\xRightarrow{{\langle{r,\grave{\varphi},f}\rangle}}
{\langle{R_{1},\sigma_{1},X_{1}}\rangle} = \mathcal{S}_{1}$.}}
%
\comment{This schema morphism
(Disp.\,\ref{struc:mor:assume} in \S\,\ref{sub:sec:struc:mor})
consists of 
a function on relation symbols
$R_{2}\xrightarrow{\,\mathrmit{r}\;}R_{1}$,
a sort function
$X_{2}\xrightarrow{f}\mathcal{X}_{1}$,
and
a schema bridge
$\sigma_{2}{\;\cdot\;}{\scriptstyle\sum}_{f}
{\;\xRightarrow{\,\grave{\varphi}\;\,}\;}r{\;\cdot\;}\sigma_{1}$.}
%
\hfill\mbox{}
\rule[6pt]{0pt}{10pt}
%
\end{definition}
Let $\mathring{\mathrmbf{Snd}}$ denote the context of (lax) sound logics and their morphisms.
\begin{note}
At this point we know that a (lax) sound logic morphism
\newline\mbox{}\hfill
{\footnotesize{
$\mathcal{L}_{2}={\langle{\mathcal{S}_{2},\mathcal{M}_{2},\mathcal{T}_{2}}\rangle}
\xrightleftharpoons{{\langle{\mathrmbfit{R},\kappa,\grave{\varphi},f,g}\rangle}}
{\langle{\mathcal{S}_{1},\mathcal{M}_{1},\mathcal{T}_{1}}\rangle}=\mathcal{L}_{1}$
}}\hfill\mbox{}\newline
has 
tabular interpretation passages
$\mathrmbf{R}_{2}\xrightarrow{\;\mathrmbfit{T}_{2}\;}\mathrmbf{Rel}(\mathcal{A}_{2})\subseteq\mathrmbf{Tbl}(\mathcal{A}_{2})$
and
$\mathrmbf{R}_{1}\xrightarrow{\;\mathrmbfit{T}_{1}\;}\mathrmbf{Rel}(\mathcal{A}_{1})\subseteq\mathrmbf{Tbl}(\mathcal{A}_{1})$
at the source and target sound logics
and 
a predicate passage
$\mathrmbf{R}_{2}\xrightarrow{\;\mathrmbfit{R}\;}\mathrmbf{R}_{1}$.
We complete the picture in 
Prop.\;\ref{prop:sat:preserve}
by proving that
sound logic morphisms preserve satisfaction
in some way.
%
%
\comment{
by defining  the database morphism
\newline\mbox{}\hfill
\rule[5pt]{0pt}{10pt}
{\footnotesize{
$\mathcal{R}_{2}=
{\langle{\mathrmbf{R}_{2},\mathrmbfit{T}_{2}{\circ}\mathcal{A}_{2}}\rangle}
\xleftarrow
{{\langle{\mathrmbfit{R},\xi}\rangle}}
{\langle{\mathrmbf{R}_{1},\mathrmbfit{T}_{1}{\circ}\mathcal{A}_{1}}\rangle}
=\mathcal{R}_{1}$,
}}\hfill\mbox{}\newline
whose tabular interpretation bridge
{\footnotesize{${\mathrmbfit{T}_{2}
{\,\xLeftarrow{\;\,\xi\,}\,}
\mathrmbfit{R}{\,\circ\,}\mathrmbfit{T}_{1}}$}}
consists of the collection of table functions
$\mathrmit{R}_{2}\xrightarrow{\;\xi\;}\mathrmbf{mor}(\mathrmbf{Tbl})$
of Cor.\ref{cor:tbl:func}.
}
\end{note}
%

%
\begin{proposition}\label{prop:sat:preserve}
There is a (tabular interpretation) bridge
\begin{center}
{\footnotesize{$
\xi = (\grave{\psi} \circ \mathrmbfit{inc}_{\mathcal{A}_{1}}) 
\bullet (\mathrmbfit{T}_{2} \circ \grave{\chi}_{{\langle{f,g}\rangle}})
: \mathrmbfit{T}_{2}\Leftarrow\mathrmbfit{R}^{\mathrm{op}}\!{\,\circ\,}\mathrmbfit{T}_{1}$}}
\end{center}
that extends the collection of tabular interpretation bridge functions
\begin{center}
$\bigl\{
{\footnotesize{
\mathrmbfit{T}_{2}(r_{2})
\;\xleftarrow[{\langle{\grave{\varphi}_{r_{2}},f,g,\kappa_{r_{2}}}\rangle}]{\xi_{r_{2}}}\;
\mathrmbfit{T}_{1}(r(r_{2}))
}}
\mid r_{2} \in R_{2}
\bigr\}$
%
\end{center}
(see Cor.\;\ref{cor:tbl:func} of \S\,\ref{sub:sec:struc:mor:lax})
to constraints:
%
%
\comment{
\newline\mbox{}\hfill
{\footnotesize{$
{\scriptstyle{
\overset{\textstyle{\mathrmbfit{T}_{2}(r_{2})}}
{\overbrace{\langle{\sigma_{2}(r_{2}),\mathcal{A}_{2},\mathrmbfit{K}_{2}(r_{2}),\tau_{2,r_{2}}}\rangle}}
\;\xleftarrow[{\langle{\grave{\varphi}_{r_{2}},f,g,\kappa_{r_{2}}}\rangle}]{\xi_{r_{2}}}\;
\overset{\textstyle{\mathrmbfit{T}_{1}(r(r_{2}))}}
{\overbrace{\langle{\sigma_{1}(r(r_{2})),\mathcal{A}_{1},\mathrmbfit{K}_{1}(r(r_{2})),\tau_{1,r(r_{2})}}\rangle}}
.}}
$}}
\hfill\mbox{}\newline
(see Cor.\;\ref{cor:tbl:func} of \S\,\ref{sub:sec:struc:mor:lax})
}
for any source constraint $r'_{2}\xrightarrow{p_{2}}r_{2}$ in $\mathrmbf{R}_{2}$ 
with $\mathrmbfit{R}$-image target constraint $r'_{1}\xrightarrow{p_{1}}r_{1}$ in $\mathrmbf{R}_{1}$,
we have the following naturality diagram,
which represents ``preservation or linkage of satisfaction''.
%
\footnote{The sound logic morphism
$\mathcal{L}_{2}
\xrightleftharpoons{{\langle{\mathrmbfit{R},\kappa,\grave{\varphi},f,g}\rangle}}
\mathcal{L}_{1}$
maps 
the table morphism
$\mathrmbfit{T}_{1}(r'_{1})\xleftarrow{\mathrmbfit{T}_{1}(p_{1})}\mathrmbfit{T}_{1}(r_{1})$
representing the satisfaction
$\mathcal{M}_{1}{\;\models_{\mathcal{S}_{1}}\;}(r'_{1}{\;\xrightarrow{p_{1}}\;}r_{1})$
to the table morphism
$\mathrmbfit{T}_{2}(r'_{2})\xleftarrow{\mathrmbfit{T}_{2}(p_{2})}\mathrmbfit{T}_{2}(r_{2})$
representing the satisfaction
$\mathcal{M}_{2}{\;\models_{\mathcal{S}_{2}}\;}(r'_{2}{\;\xrightarrow{p_{2}}\;}r_{2})$.}
%
\begin{center}
{{\begin{tabular}{c}
\setlength{\unitlength}{0.54pt}
\begin{picture}(240,130)(-20,-15)
\put(0,100){\makebox(0,0){\footnotesize{${\mathrmbfit{T}_{2}(r'_{2})}$}}}
\put(180,100){\makebox(0,0){\footnotesize{${\mathrmbfit{T}_{1}(r'_{1})}$}}}
\put(-5,0){\makebox(0,0){\footnotesize{${\mathrmbfit{T}_{2}(r_{2})}$}}}
\put(180,0){\makebox(0,0){\footnotesize{${\mathrmbfit{T}_{1}(r_{1})}$}}}
%
\put(90,111){\makebox(0,0){\scriptsize{$\xi_{r'_{2}}$}}}
\put(90,88){\makebox(0,0){\scriptsize{${\langle{\grave{\varphi}_{r'_{2}},f,g,\kappa_{r'_{2}}}\rangle}$}}}
\put(90,10){\makebox(0,0){\scriptsize{$\xi_{r_{2}}$}}}
\put(90,-11){\makebox(0,0){\scriptsize{${\langle{\grave{\varphi}_{r_{2}},f,g,\kappa_{r_{2}}}\rangle}$}}}
\put(-6,50){\makebox(0,0)[r]{\scriptsize{${\mathrmbfit{T}_{2}(p_{2})}$}}}
\put(186,50){\makebox(0,0)[l]{\scriptsize{${\mathrmbfit{T}_{1}(p_{1})}$}}}
\put(150,100){\vector(-1,0){120}}
\put(150,0){\vector(-1,0){120}}
\put(0,20){\vector(0,1){60}}
\put(180,20){\vector(0,1){60}}
\put(90,55){\makebox(0,0){\scriptsize{$
\overset{\textstyle{{\mathrmbfit{T}_{2}
{\,\xLeftarrow{\;\,\xi\,}\,}
\mathrmbfit{R}{\,\circ\,}\mathrmbfit{T}_{1}}}}{\text{naturality}}$}}}
\put(-55,50){\makebox(0,0)[r]{\scriptsize{$
\stackrel{\textstyle{\mathrmbfit{source}}}{\mathrmbfit{interpretation}}\left\{\rule{0pt}{30pt}\right.$}}}
\put(235,50){\makebox(0,0)[l]{\scriptsize{$
\left.\rule{0pt}{30pt}\right\}\stackrel{\textstyle{\mathrmbfit{target}}}{\mathrmbfit{interpretation}}$}}}
\end{picture}
\end{tabular}}}
\end{center}
%
\end{proposition}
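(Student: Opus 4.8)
The plan is to build $\xi$ from the object-components already in hand and then verify the single naturality square, reducing the latter to data that is guaranteed by the specification morphism. First I would collect the two ingredients. The horizontal components $\xi_{r_2}$ are exactly the table morphisms produced by Cor.~\ref{cor:tbl:func} from the underlying (lax) structure morphism $\mathcal{M}_2 \xrightleftharpoons{{\langle{r,\kappa,\grave{\varphi},f,g}\rangle}} \mathcal{M}_1$, so they are already well-defined on predicate symbols. The vertical components $\mathrmbfit{T}_2(p_2)$ and $\mathrmbfit{T}_1(p_1)$ are supplied by the Key Prop.~\ref{sat:tbl:interp}: since $\mathcal{M}_2 \models_{\mathcal{S}_2} \mathcal{T}_2$ and $\mathcal{M}_1 \models_{\mathcal{S}_1} \mathcal{T}_1$, satisfaction of each abstract constraint is witnessed by a table morphism, assembling into the tabular interpretation passages $\mathrmbf{R}_2^{\mathrm{op}} \xrightarrow{\mathrmbfit{T}_2} \mathrmbf{Rel}(\mathcal{A}_2) \subseteq \mathrmbf{Tbl}(\mathcal{A}_2)$ and $\mathrmbf{R}_1^{\mathrm{op}} \xrightarrow{\mathrmbfit{T}_1} \mathrmbf{Rel}(\mathcal{A}_1) \subseteq \mathrmbf{Tbl}(\mathcal{A}_1)$. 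The claimed factorization $\xi = (\grave{\psi} \circ \mathrmbfit{inc}_{\mathcal{A}_1}) \bullet (\mathrmbfit{T}_2 \circ \grave{\chi}_{{\langle{f,g}\rangle}})$ is then obtained as the table-level reflection of the specification factorization in Disp.~\ref{eqn:spec:mor:def}: $\grave{\psi}$ reflects the schema/specification bridge $\grave{\varphi}$ through interpretation, while $\grave{\chi}_{{\langle{f,g}\rangle}}$ reflects the tuple bridge $\grave{\tau}_{{\langle{f,g}\rangle}}$, both passing through the fiber adjunction of tables.

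Second, I would verify naturality, i.e.\ that for a source constraint $r'_2 \xrightarrow{p_2} r_2$ with $\mathrmbfit{R}$-image $r'_1 \xrightarrow{p_1} r_1$ the displayed square commutes. I split this across the two projective components of a table morphism. At the signature level the square of signature morphisms is exactly the condition ${\scriptstyle\sum}_f(h_2) \circ \grave{\varphi}_{r_2} = \grave{\varphi}_{r'_2} \circ h_1$, which is the naturality already required of the abstract specification morphism in Def.~\ref{def:abs:spec:mor}; so that face commutes for free. At the tuple/key level I would argue reflectively: replace each table $\mathrmbfit{T}_i(r)$ by its relation $\mathrmbfit{R}_i(r)$ and each of the four edges by its relational reflection, which is legitimate because both interpretation passages land in $\mathrmbf{Rel}(\mathcal{A}) \subseteq \mathrmbf{Tbl}(\mathcal{A})$ (Key Prop.~\ref{sat:tbl:interp}). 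In $\mathrmbf{Rel}(\mathcal{A})$ a morphism over a prescribed signature morphism is \emph{unique} --- it is the restriction of the tuple map $\mathrmbfit{tup}_{\mathcal{A}}(h)$, which exists precisely when the subset containment holds --- so the two composites around the square, both relation morphisms over the common composite signature morphism between the same source and target relations, must coincide.

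Finally, I would transport the conclusion back along $\mathrmbfit{inc}_{\mathcal{A}} : \mathrmbf{Rel}(\mathcal{A}) \hookrightarrow \mathrmbf{Tbl}(\mathcal{A})$: the commuting relational square, together with the chosen key functions of Cor.~\ref{cor:tbl:func}, realizes the naturality square of table morphisms displayed in the statement, which is the asserted ``preservation/linkage of satisfaction'' recorded in the accompanying footnote.

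I expect the main obstacle to lie at the tuple/key level rather than at the signature level. The footnote to Lem.~\ref{lem:interp:pass} warns that tabular interpretation is closed under composition only \emph{up to key equivalence}, because key functions are produced by choice and the choice for a composite signature morphism is only equivalent to, not equal to, the composite of the component choices. Hence a naive attempt to check naturality directly in $\mathrmbf{Tbl}(\mathcal{A})$ would stall on exactly this choice-dependence. The point of routing the verification through $\mathrmbf{Rel}(\mathcal{A})$, where morphisms over a fixed signature component are uniquely determined, is precisely to sidestep this ambiguity; the reflection then pins down the table data up to the harmless key equivalence.
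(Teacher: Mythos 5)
Your proposal is correct, and it assembles the same raw ingredients as the paper: the components $\xi_{r_2}$ from Cor.\;\ref{cor:tbl:func}, the vertical morphisms $\mathrmbfit{T}_{i}(p_{i})$ from satisfaction, the signature-level square from the naturality clause of Def.\;\ref{def:abs:spec:mor}, and the table-relation reflection to handle the key level. Where you genuinely diverge is in how the naturality square itself is verified. The paper expands the square into the combined diagram of Fig.\;\ref{fig:nat:comb} and checks it facet by facet: front/back (Cor.\;\ref{cor:tbl:func}), left/right (satisfaction), bottom-right (naturality of $\grave{\varphi}$ pushed through $\mathrmbfit{tup}_{\mathcal{A}_{1}}$), bottom-left (naturality of the tuple bridge $\grave{\tau}_{{\langle{f,g}\rangle}}$ of the type-domain morphism), and then derives the top (extent/key) facet from these, observing that it holds strictly only if the tuple maps are injective and otherwise holds after applying the image part of the reflection ${\langle{\mathrmbfit{im}\dashv\mathrmbfit{inc}}\rangle}:\mathrmbf{Tbl}\rightleftarrows\mathrmbf{Rel}$ together with a diagonal fill-in. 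You instead collapse the entire tuple/key verification into a single uniqueness principle: in $\mathrmbf{Rel}$ a morphism over prescribed signature and type-domain data is unique when it exists (it is the restriction of $\mathrmbfit{tup}(h,f,g)$ to the relation subsets), so once the signature square commutes, the two relational composites must agree. This is sound, because both composites are indeed relation morphisms between $\mathrmbfit{R}_{1}(r_{1})$ and $\mathrmbfit{R}_{2}(r'_{2})$ (relational interpretation is closed under composition, per Lem.\;\ref{lem:interp:pass}), and it is more economical — you never need to verify the $\grave{\tau}_{{\langle{f,g}\rangle}}$ facet or perform the fill-in. What the paper's longer chase buys in exchange is the explicit tuple-level identity $k_{p_{1}}{\,\cdot\,}\kappa_{r'_{2}}{\,\cdot\,}\tau_{r'_{2}} = \kappa_{r_{2}}{\,\cdot\,}k_{p_{2}}{\,\cdot\,}\tau_{r'_{2}}$ and a visible record of which naturality conditions feed in; this concrete form is what gets reused in Prop.\;\ref{prop:snd:log:mor:2:db:mor}, where $\xi$ must be exhibited with the factorization through the fiber adjunction and the condition of Disp.\;\ref{eqn:db:mor:typ:dom:factor}. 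Your closing diagnosis of the key-equivalence obstruction, and of passage to $\mathrmbf{Rel}$ as the cure, matches the paper's own caveat exactly.
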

%

%
\comment{ 
\begin{figure}
\begin{center}
{{\begin{tabular}{@{\hspace{5pt}}c@{\hspace{15pt}}c@{\hspace{5pt}}c@{\hspace{5pt}}}
{{\begin{tabular}[b]{c}
\setlength{\unitlength}{0.58pt}
\begin{picture}(80,160)(5,12)
\put(5,160){\makebox(0,0){\footnotesize{$\mathrmbf{R}_{2}^{\mathrm{op}}$}}}
\put(85,160){\makebox(0,0){\footnotesize{$\mathrmbf{R}_{1}^{\mathrm{op}}$}}}
\put(40,15){\makebox(0,0){\normalsize{$\mathrmbf{Rel}$}}}
\put(45,172){\makebox(0,0){\scriptsize{$\mathrmbfit{R}^{\mathrm{op}}$}}}
\put(0,100){\makebox(0,0)[r]{\footnotesize{$\mathrmbfit{T}_{2}$}}}
\put(81,100){\makebox(0,0)[l]{\footnotesize{$\mathrmbfit{T}_{1}$}}}
\put(40,105){\makebox(0,0){\shortstack{\normalsize{$\xLeftarrow{\;\;\;\xi\;\;}$}}}}
\put(15,160){\vector(1,0){50}}
\qbezier(,150)(0,80)(25,30)\put(29,24){\vector(2,-3){0}}
\qbezier(80,150)(80,80)(55,30)\put(51,24){\vector(-2,-3){0}}
\end{picture}
\end{tabular}}}
&
{{\begin{tabular}[b]{c}
\setlength{\unitlength}{0.58pt}
\begin{picture}(80,160)(0,0)
\put(20,90){\makebox(0,0){\normalsize{$=$}}}
\end{picture}
\end{tabular}}}
&
{{\begin{tabular}[b]{c}
\setlength{\unitlength}{0.58pt}
\begin{picture}(120,160)(0,10)
\put(5,160){\makebox(0,0){\footnotesize{$\mathrmbf{R}_{2}^{\mathrm{op}}$}}}
\put(125,160){\makebox(0,0){\footnotesize{$\mathrmbf{R}_{1}^{\mathrm{op}}$}}}
\put(0,80){\makebox(0,0){\footnotesize{$\mathrmbf{Rel}(\mathcal{A}_{2})$}}}
\put(120,80){\makebox(0,0){\footnotesize{$\mathrmbf{Rel}(\mathcal{A}_{1})$}}}
\put(60,5){\makebox(0,0){\normalsize{$\mathrmbf{Rel}$}}}
\put(65,172){\makebox(0,0){\scriptsize{$\mathrmbfit{R}^{\mathrm{op}}$}}}
\put(-5,125){\makebox(0,0)[r]{\scriptsize{$\mathrmbfit{T}_{2}$}}}
\put(125,125){\makebox(0,0)[l]{\scriptsize{$\mathrmbfit{T}_{1}$}}}
\put(60,92){\makebox(0,0){\scriptsize{$\grave{\mathrmbfit{tbl}}_{{\langle{f,g}\rangle}}$}}}
\put(24,38){\makebox(0,0)[r]{\scriptsize{$\mathrmbfit{inc}_{\mathcal{A}_{2}}$}}}
\put(97,38){\makebox(0,0)[l]{\scriptsize{$\mathrmbfit{inc}_{\mathcal{A}_{1}}$}}}
\put(60,130){\makebox(0,0){\shortstack{
\scriptsize{$\grave{\psi}$}\\\large{$\Longleftarrow$}}}}
\put(60,54){\makebox(0,0){\shortstack{\footnotesize{$\xLeftarrow{\grave{\chi}_{{\langle{f,g}\rangle}}}$}}}}
\put(20,160){\vector(1,0){80}}
\put(35,80){\vector(1,0){50}}
\put(0,145){\vector(0,-1){50}}
\put(120,145){\vector(0,-1){50}}
\put(9,68){\vector(3,-4){38}}
\put(111,68){\vector(-3,-4){38}}
\end{picture}
\end{tabular}}}
\end{tabular}}}
\end{center}
\caption{Database Morphism}
\label{fig:db:mor:rel}
\end{figure}
} 
%

\newpage

\begin{proof}
The above diagram is expanded into more detail in Fig.~\ref{fig:nat:comb}.
%
\begin{figure}
\begin{center}
{{\begin{tabular}{c}
\setlength{\unitlength}{0.68pt}
\begin{picture}(340,190)(0,-90)
\put(-5,80){\makebox(0,0){\footnotesize{$\mathrmbfit{ext}_{\mathcal{E}_{2}}(r'_{2})$}}}
\put(285,80){\makebox(0,0){\footnotesize{$\mathrmbfit{ext}_{\mathcal{E}_{1}}(r'_{1})$}}}
\put(-20,0){\makebox(0,0){\footnotesize{$\mathrmbfit{tup}_{\mathcal{A}_{2}}(\sigma_{2}(r'_{2}))$}}}
\put(130,0){\makebox(0,0){\footnotesize{$
\mathrmbfit{tup}_{\mathcal{A}_{1}}({\scriptstyle\sum}_{f}(\sigma_{2}(r'_{2})))$}}}
\put(280,0){\makebox(0,0){\footnotesize{$\mathrmbfit{tup}_{\mathcal{A}_{1}}(\sigma_{1}(r'_{1}))$}}}
\put(140,90){\makebox(0,0){\scriptsize{$\kappa_{r'_{2}}$}}}
\put(55,10){\makebox(0,0){\scriptsize{$\grave{\tau}_{{\langle{f,g}\rangle}}(\sigma_{2}(r'_{2}))$}}}
\put(185,10){\makebox(0,0)[l]{\scriptsize{$\mathrmbfit{tup}_{\mathcal{A}_{1}}(\grave{\varphi}_{r'_{2}})$}}} 
\put(140,-22){\makebox(0,0)[r]{\scriptsize{$\mathrmbfit{tup}(\grave{\varphi}_{r'_{2}},f,g)$}}}
\put(-5,40){\makebox(0,0)[r]{\scriptsize{$\tau_{r'_{2}}$}}}
\put(285,40){\makebox(0,0)[l]{\scriptsize{$\tau_{r'_{1}}$}}}
\put(245,80){\vector(-1,0){210}}
\put(70,0){\vector(-1,0){40}}
\put(230,0){\vector(-1,0){40}}
\put(0,65){\vector(0,-1){50}}
\put(280,65){\vector(0,-1){50}}
\put(10,-20){\oval(20,20)[bl]}
\put(10,-30){\line(1,0){260}}
\put(0,-14){\vector(0,1){0}}
\put(270,-20){\oval(20,20)[br]}
\put(25,50){\makebox(0,0)[r]{\scriptsize{$k_{p_{2}}$}}}
\put(325,50){\makebox(0,0)[l]{\scriptsize{$k_{p_{1}}$}}}
\put(20,-33){\makebox(0,0)[r]{\scriptsize{$\mathrmbfit{tup}_{\mathcal{A}_{2}}(h_{p_{2}})$}}}
\put(168,-30){\makebox(0,0)[l]{\scriptsize{$
\mathrmbfit{tup}_{\mathcal{A}_{1}}({\scriptstyle\sum}_{f}(h_{p_{2}}))$}}}
\put(310,-30){\makebox(0,0)[l]{\scriptsize{$\mathrmbfit{tup}_{\mathcal{A}_{1}}(h_{p_{1}})$}}}
\put(35,34){\vector(-3,4){26}}
\put(35,-46){\vector(-3,4){26}}
\put(315,34){\vector(-3,4){26}}
\put(315,-46){\vector(-3,4){26}}
\put(175,-46){\vector(-3,4){26}}
\put(50,-115){
\setlength{\unitlength}{0.68pt}
\begin{picture}(240,160)(10,-60)
\put(-5,80){\makebox(0,0){\footnotesize{$\mathrmbfit{ext}_{\mathcal{E}_{2}}(r_{2})$}}}
\put(285,80){\makebox(0,0){\footnotesize{$\mathrmbfit{ext}_{\mathcal{E}_{1}}(r_{1})$}}}
\put(-20,0){\makebox(0,0){\footnotesize{$\mathrmbfit{tup}_{\mathcal{A}_{2}}(\sigma_{2}(r_{2}))$}}}
\put(130,0){\makebox(0,0){\footnotesize{$
\mathrmbfit{tup}_{\mathcal{A}_{1}}({\scriptstyle\sum}_{f}(\sigma_{2}(r_{2})))$}}}
\put(280,0){\makebox(0,0){\footnotesize{$\mathrmbfit{tup}_{\mathcal{A}_{1}}(\sigma_{1}(r_{1}))$}}}
\put(140,90){\makebox(0,0){\scriptsize{$\kappa_{r_{2}}$}}}
\put(55,10){\makebox(0,0){\scriptsize{$\grave{\tau}_{{\langle{f,g}\rangle}}(\sigma_{2}(r_{2}))$}}}
\put(185,10){\makebox(0,0)[l]{\scriptsize{$\mathrmbfit{tup}_{\mathcal{A}_{1}}(\grave{\varphi}_{r_{2}})$}}} 
\put(180,-23){\makebox(0,0)[r]{\scriptsize{$\mathrmbfit{tup}(\grave{\varphi}_{r_{2}},f,g)$}}}
\put(-5,40){\makebox(0,0)[r]{\scriptsize{$\tau_{r_{2}}$}}}
\put(285,40){\makebox(0,0)[l]{\scriptsize{$\tau_{r_{1}}'$}}}
\put(245,80){\vector(-1,0){210}}
\put(70,0){\vector(-1,0){40}}
\put(230,0){\vector(-1,0){40}}
\put(0,65){\vector(0,-1){50}}
\put(280,65){\vector(0,-1){50}}
\put(10,-20){\oval(20,20)[bl]}
\put(10,-30){\line(1,0){260}}
\put(0,-14){\vector(0,1){0}}
\put(270,-20){\oval(20,20)[br]}
\end{picture}}
\end{picture}
\end{tabular}}}
\end{center}
\caption{Naturality Combined}
\label{fig:nat:comb}
\end{figure}
To understand this, we discuss each part (facet) separately.
In short:
%
the front/back 
is due to
Cor.\;\ref{cor:tbl:func} in \S\,\ref{sub:sec:struc:mor:lax}
for (lax) structure morphisms;
%
the left/right hold by 
satisfaction of source/target sound logics;
%
the bottom-right is due to the (abstract) specification morphism;
%
the bottom-left is due to naturality of
$\mathrmbfit{tup}_{\mathcal{A}_{2}}
\xLeftarrow{\;\grave{\tau}_{{\langle{f,g}\rangle}}\;}
{\scriptstyle\sum}_{f}^{\mathrm{op}}
{\;\circ\;}\mathrmbfit{tup}_{\mathcal{A}_{1}}$;
and
%
the top holds for relations.
%
We now discuss each facet in more detail.
\begin{itemize}
%
\item [\textbf{front/back:}]
For each source predicate $r_{2} \in \mathrmbf{R}_{2}$,
the (lax) structure morphism
$\mathcal{M}_{2}
\xrightleftharpoons{{\langle{r,\kappa,\grave{\varphi},f,g}\rangle}}\mathcal{M}_{1}$
defines 
(Cor.\;\ref{cor:tbl:func} of \S\,\ref{sub:sec:struc:mor:lax})
the table morphism
\newline\mbox{}\hfill
\rule[6pt]{0pt}{10pt}
{\footnotesize{$\underset{\mathrmbfit{T}_{2}(r_{2})}
{\langle{\overset{\mathrmbfit{S}}{\sigma}(r_{2}),\mathcal{A}_{2},\overset{\mathrmbfit{K}}{\mathrmbfit{ext}}_{\mathcal{E}_{2}}(r_{2}),\tau_{r_{2}}}\rangle}
\xleftarrow[\xi_{r_{2}}]
{\;{\langle{\grave{\varphi}_{r_{2}},f,g,\kappa_{r_{2}}}\rangle}\;}
\underset{\mathrmbfit{T}_{1}(\mathrmbfit{R}(r_{2}))}
{\langle{\overset{\mathrmbfit{S}}{\sigma}(r_{1}),\mathcal{A}_{1},\overset{\mathrmbfit{K}}{\mathrmbfit{ext}}_{\mathcal{E}_{1}}(r_{1}),\tau_{r_{1}}}\rangle}$}}
\hfill\mbox{}\newline
(same for $r'_{2} \in \mathrmbf{R}_{2}$).
%
\item [\textbf{left/right (sat):}]
Satisfaction for source/target sound logics 
(\S~\ref{sub:sec:spec:sat})
define table morphisms
\newline\mbox{}\hfill
\rule[6pt]{0pt}{10pt}
{\footnotesize{$
{\langle{\sigma(r'_{i}),\mathrmbfit{ext}_{\mathcal{E}}(r'_{i}),\tau_{r'_{i}}}\rangle}
\xleftarrow[\mathrmbfit{T}_{i}(p_{i})]{\;{\langle{h_{p_{i}},k_{p_{i}}}\rangle}\;}
{\langle{\sigma(r_{i}),\mathrmbfit{ext}_{\mathcal{E}}(r_{i}),\tau_{r_{i}}}\rangle}
$,}}
\hfill\mbox{}\newline
which are equivalent to 
\underline{naturality} of the bridges 
$\mathrmbfit{K}_{i}\xRightarrow[\mathrmbfit{T}_{i}{\,\circ\,}\tau_{\mathcal{A}_{i}}]{\,\tau_{i}\;} 
\mathrmbfit{S}_{i}^{\mathrm{op}}\!{\circ\;}\mathrmbfit{tup}_{\mathcal{A}_{i}}$.
\newline
\item [\textbf{bottom right:}]
The structure and specification morphisms have the same underlying schema morphism
$\mathcal{S}_{2} = {\langle{R_{2},\sigma_{2},X_{2}}\rangle}
\xRightarrow{{\langle{r,\grave{\varphi},f}\rangle}}
{\langle{R_{1},\sigma_{1},X_{1}}\rangle} = \mathcal{S}_{1}$.
Apply the tuple function
$\mathrmbfit{tup}_{\mathcal{A}_{1}}$
to the \underline{naturality} of 
the bridge
$\mathrmbfit{S}_{2}{\;\circ\;}{\scriptstyle\sum}_{f}\xRightarrow{\;\grave{\varphi}\;\,}
\mathrmbfit{R}{\;\circ\;}\mathrmbfit{S}_{1}$
for any abstract $\mathcal{S}$-constraint
$r'_{2}\xrightarrow{\,p_{2}\,}r_{2}$
in $\mathrmbf{R}_{2}$: 
for any constraint
$r'_{2}\xrightarrow{p_{2}}r_{2}$ in $\mathrmbf{R}_{2}$,
the following diagram commutes.
%
\begin{center}
{{\begin{tabular}{c}
\setlength{\unitlength}{0.5pt}
\begin{picture}(240,150)(-30,-25)
\put(-5,100){\makebox(0,0){\footnotesize{${\scriptstyle\sum}_{f}(\sigma_{2}(r'_{2}))$}}}
\put(180,100){\makebox(0,0){\footnotesize{$\sigma_{1}(r'_{1})$}}}
\put(-5,0){\makebox(0,0){\footnotesize{${\scriptstyle\sum}_{f}(\sigma_{2}(r_{2}))$}}}
\put(180,0){\makebox(0,0){\footnotesize{$\sigma_{1}(r'_{1})$}}}
\put(90,110){\makebox(0,0){\scriptsize{$\grave{\varphi}_{r'_{2}}$}}}
\put(90,87){\makebox(0,0){\scriptsize{$\grave{\varphi}_{r'_{2}}$}}}
\put(90,10){\makebox(0,0){\scriptsize{$\grave{\varphi}_{r_{2}}$}}}
\put(90,-17){\makebox(0,0){\scriptsize{$\grave{\varphi}_{r_{2}}$}}}
\put(-8,50){\makebox(0,0)[r]{\scriptsize{${\scriptstyle\sum}_{f}(h_{p_{2}})=h_{p_{2}}$}}}
\put(188,50){\makebox(0,0)[l]{\scriptsize{$h_{p_{1}}$}}}
\put(45,100){\vector(1,0){95}}
\put(45,0){\vector(1,0){95}}
\put(0,20){\vector(0,1){60}}
\put(180,20){\vector(0,1){60}}
\put(90,45){\makebox(0,0){\footnotesize{$\grave{\varphi}$ naturality}}}
\end{picture}
\end{tabular}}}
\end{center}
\item [\textbf{bottom left:}]
The tuple bridge
$\mathrmbfit{tup}_{\mathcal{A}_{2}}
\xLeftarrow{\;\grave{\tau}_{{\langle{f,g}\rangle}}\;}
{\scriptstyle\sum}_{f}^{\mathrm{op}}
{\;\circ\;}\mathrmbfit{tup}_{\mathcal{A}_{1}}$
of the type domain morphism 
$\mathcal{A}_{2}
\xrightleftharpoons{{\langle{f,g}\rangle}} 
\mathcal{A}_{1}$
(see footnote \ref{tup:bridge} in \S\;\ref{sub:sec:struc:mor})
satisfies \underline{naturality}: 
for any constraint
$r'_{2}\xrightarrow{p_{2}}r_{2}$ in $\mathrmbf{R}_{2}$,
the following diagram commutes.
%
\begin{center}
{{\begin{tabular}{c}
\setlength{\unitlength}{0.5pt}
\begin{picture}(240,150)(0,-25)
\put(0,100){\makebox(0,0){\footnotesize{$\mathrmbfit{tup}_{\mathcal{A}_{2}}(
\sigma_{2}(r_{2}))$}}}
\put(240,100){\makebox(0,0){\footnotesize{$\mathrmbfit{tup}_{\mathcal{A}_{1}}({\scriptstyle\sum}_{f}(
\sigma_{2}(r_{2})))$}}}
\put(-5,0){\makebox(0,0){\footnotesize{$\mathrmbfit{tup}_{\mathcal{A}_{2}}(
\sigma_{2}(r_{2}))$}}}
\put(245,0){\makebox(0,0){\footnotesize{$\mathrmbfit{tup}_{\mathcal{A}_{1}}({\scriptstyle\sum}_{f}(
\sigma_{2}(r_{2})))$}}}
\put(110,110){\makebox(0,0){\scriptsize{$\grave{\tau}_{{\langle{f,g}\rangle}}(
\sigma_{2}(r_{2}))$}}}
\put(110,90){\makebox(0,0){\scriptsize{${(\mbox{-})}{\,\cdot\,}g$}}}
\put(110,10){\makebox(0,0){\scriptsize{${(\mbox{-})}{\,\cdot\,}g$}}}
\put(110,-14){\makebox(0,0){\scriptsize{$\grave{\tau}_{{\langle{f,g}\rangle}}(
\sigma_{2}(r_{2}))$}}}
\put(-8,50){\makebox(0,0)[r]{\scriptsize{$k_{p_{2}}$}}}
\put(248,50){\makebox(0,0)[l]{\scriptsize{$k_{p_{1}}$}}}
\put(160,100){\vector(-1,0){95}}
\put(160,0){\vector(-1,0){95}}
\put(0,20){\vector(0,1){60}}
\put(240,20){\vector(0,1){60}}
\put(120,45){\makebox(0,0){\footnotesize{$\grave{\tau}_{{\langle{f,g}\rangle}}$ naturality}}}
\end{picture}
\end{tabular}}}
\end{center}
%
\item [\textbf{top:}]
By the other naturality conditions just proven,
for any constraint
$r'_{2}\xrightarrow{p_{2}}r_{2}$ in $\mathrmbf{R}_{2}$
with $\mathrmbfit{R}$-image
$r'_{1}\xrightarrow{p_{1}}r_{1}$ in $\mathrmbf{R}_{1}$,
we know that
$k_{p_{1}}{\,\cdot\,}\kappa_{r'_{2}}{\,\cdot\,}\tau_{r'_{2}}
= \kappa_{r_{2}}{\,\cdot\,}k_{p_{2}}{\,\cdot\,}\tau_{r'_{2}}$.
If the tuple map
$\mathrmbfit{ext}_{\mathcal{E}_{2}}(r'_{2})
\xrightarrow{\tau_{r'_{2}}}
\mathrmbfit{tup}_{\mathcal{A}_{2}}(\sigma_{2}(r'_{2}))$
were injective,
the bridge 
$\overset{\mathrmbfit{K}_{2}}{\mathrmbfit{ext}_{\mathcal{E}_{2}}}
\xLeftarrow{\;\kappa\;}
\overset{\mathrmbfit{R}}{r}{\;\circ\;}
\overset{\mathrmbfit{K}_{1}}{\mathrmbfit{ext}_{\mathcal{E}_{1}}}$
would satisfy ``extent naturality'':
%
\begin{center}
{{\begin{tabular}{c}
\setlength{\unitlength}{0.5pt}
\begin{picture}(240,150)(-25,-25)
\put(0,100){\makebox(0,0){\footnotesize{$\mathrmbfit{ext}_{\mathcal{E}_{2}}(r'_{2})$}}}
\put(180,100){\makebox(0,0){\footnotesize{$\mathrmbfit{ext}_{\mathcal{E}_{1}}(r'_{1})$}}}
\put(-5,0){\makebox(0,0){\footnotesize{$\mathrmbfit{ext}_{\mathcal{E}_{2}}(r_{2})$}}}
\put(185,0){\makebox(0,0){\footnotesize{$\mathrmbfit{ext}_{\mathcal{E}_{1}}(r_{1})$}}}
\put(90,110){\makebox(0,0){\scriptsize{$\kappa_{r'_{2}}$}}}
\put(90,90){\makebox(0,0){\scriptsize{$k$}}}
\put(90,10){\makebox(0,0){\scriptsize{$k$}}}
\put(90,-14){\makebox(0,0){\scriptsize{$\kappa_{r_{2}}$}}}
\put(-8,50){\makebox(0,0)[r]{\scriptsize{$k_{p_{2}}$}}}
\put(188,50){\makebox(0,0)[l]{\scriptsize{$k_{p_{1}}$}}}
\put(135,100){\vector(-1,0){95}}
\put(135,0){\vector(-1,0){95}}
\put(0,20){\vector(0,1){60}}
\put(180,20){\vector(0,1){60}}
\put(90,45){\makebox(0,0){\footnotesize{$\kappa$ naturality}}}
\end{picture}
\end{tabular}}}
\end{center}
%
Here,
we use 
the image part of 
the table-relation reflection
${\langle{\mathrmbfit{im}{\;\dashv\;}\mathrmbfit{inc}}\rangle}
:\mathrmbf{Tbl}{\;\rightleftarrows\;}\mathrmbf{Rel}$.
%
\footnote{The reflection 
${\langle{\mathrmbfit{im}{\;\dashv\;}\mathrmbfit{inc}}\rangle}
:\mathrmbf{Tbl}{\;\rightleftarrows\;}\mathrmbf{Rel}$
(``The {\ttfamily FOLE} Table''\cite{kent:fole:era:tbl})
of the context of relations into the context of tables
embodies the notion of ``informational equivalence''.}
%
Then, 
we use diagonal fill-in Fig.\,\ref{fig:nat:comb}.
Hence, extent naturality holds for the relational interpretation
into $\mathrmbf{Rel}$.
%
\mbox{}\hfill\rule{5pt}{5pt}
\end{itemize}
\end{proof}
%
\comment{
\mbox{}\hfill
{\footnotesize{$
{\scriptstyle{
\overset{\textstyle{\mathrmbfit{T}_{2}(r_{2})}}
{\overbrace{\langle{\sigma_{2}(r_{2}),\mathcal{A}_{2},\mathrmbfit{K}_{2}(r_{2}),\tau_{2,r_{2}}}\rangle}}
\;\xleftarrow[{\langle{\grave{\varphi}_{r_{2}},f,g,\kappa_{r_{2}}}\rangle}]{\xi_{r_{2}}}\;
\overset{\textstyle{\mathrmbfit{T}_{1}(r(r_{2}))}}
{\overbrace{\langle{\sigma_{1}(r(r_{2})),\mathcal{A}_{1},\mathrmbfit{K}_{1}(r(r_{2})),\tau_{1,r(r_{2})}}\rangle}}
.}}
$}}
\hfill\mbox{}\newline
%
}
\newpage
\begin{proposition}\label{prop:snd:log:mor:2:db:mor}
A (lax) sound logic morphism
\newline\mbox{}\hfill
\rule[5pt]{0pt}{10pt}
{\footnotesize{
$\mathcal{L}_{2}={\langle{\mathcal{S}_{2},\mathcal{M}_{2},\mathcal{R}_{2}}\rangle}
\xrightleftharpoons{{\langle{\mathrmbfit{R},\kappa,\grave{\varphi},f,g}\rangle}}
{\langle{\mathcal{S}_{1},\mathcal{M}_{1},\mathcal{R}_{1}}\rangle}=\mathcal{L}_{1}$
\comment{See Tbl.\,\ref{tbl:snd:log:mor}.}
}}\hfill\mbox{}\newline
defines 
a database morphism
%
\[\mbox{\footnotesize$
\mathcal{R}_{2}=
{\langle{\mathrmbf{R}_{2},\mathrmbfit{S}_{2},\mathcal{A}_{2},\mathrmbfit{K}_{2},\tau_{2}}\rangle}
\xrightarrow[{\langle{\mathrmbfit{R},\xi}\rangle}]
{{\langle{\mathrmbfit{R},\kappa,\grave{\varphi},f,g}\rangle}}
{\langle{\mathrmbf{R}_{1},\mathrmbfit{S}_{1},\mathcal{A}_{1},\mathrmbfit{K}_{1},\tau_{1}}\rangle}=\mathcal{R}_{1}$\normalsize}
\footnote{See 
Def.\,\ref{def:db:mor:proj}
and
Tbl.\,\ref{tbl:db:mor} 
in \S\;\ref{sub:sec:db:mor}.}
\]
whose tabular interpretation bridge
{\footnotesize{${\mathrmbfit{T}_{2}
{\,\xLeftarrow{\;\,\xi\,}\,}
\mathrmbfit{R}{\,\circ\,}\mathrmbfit{T}_{1}}$}}
factors
\newline\mbox{}\hfill
\rule[-5pt]{0pt}{10pt}
$\xi = (\grave{\psi} \circ \mathrmbfit{inc}_{\mathcal{A}_{1}}) 
\bullet (\mathrmbfit{T}_{2} \circ \grave{\chi}_{{\langle{f,g}\rangle}})$
\hfill\mbox{}\newline
%
%
through the fiber adjunction 
$\mathrmbf{Tbl}(\mathcal{A}_{2})
\xleftarrow{\acute{\mathrmbfit{tbl}}_{{\langle{f,g}\rangle}}\;\dashv\;\grave{\mathrmbfit{tbl}}_{{\langle{f,g}\rangle}}}
\mathrmbf{Tbl}(\mathcal{A}_{1})$.

\comment{ 
consisting of 
a relation passage
$\mathrmbf{R}_{2}\xrightarrow{\,\mathrmbfit{R}\;}\mathrmbf{R}_{1}$, 
a type domain morphism
$\mathcal{A}_{2}\xrightleftharpoons{{\langle{f,g}\rangle}}\mathcal{A}_{1}$,
schema bridge
$\grave{\varphi}
:
\mathrmbfit{S}_{2}{\;\circ\;}{\scriptstyle\sum}_{f}
\Rightarrow
\mathrmbfit{R}{\;\circ\;}\mathrmbfit{S}_{1}$,
\footnote{This defines a database schema morphism
\newline\mbox{}\hfill
$\mathcal{T}_{2} = {\langle{\mathrmbf{R}_{2},\mathrmbfit{S}_{2},X_{2}}\rangle}
\xrightleftharpoons{{\langle{\mathrmbfit{R},\grave{\varphi},f}\rangle}}
{\langle{\mathrmbf{R}_{1},\mathrmbfit{S}_{1},X_{1}}\rangle} = \mathcal{T}_{1}$.
\hfill\mbox{}\newline
This is the same as the (abstract) specification morphism in
Def.\,\ref{def:abs:spec:mor}
of \S\,\ref{sub:sec:spec:mor}.}
and
a key bridge
$\kappa
:
\mathrmbfit{K}_{2}
\Leftarrow
\mathrmbfit{R}^{\mathrm{op}}{\circ\;}\mathrmbfit{K}_{1}$.
\footnote{This defines a lax entity infomorphism
$\mathrmbfit{K}_{2}\xLeftarrow{\;\,\kappa\;}{r}{\;\circ\;}\mathrmbfit{K}_{1}$
consisting of
an $R_{2}$-indexed collection of key functions 
$\bigl\{
\mathrmbfit{K}_{2}(r_{2})\xleftarrow{\kappa_{r_{2}}}\mathrmbfit{K}_{1}(r(r_{2})) 
\mid r_{2} \in R_{2}
\bigr\}$.}
} 
%
\end{proposition}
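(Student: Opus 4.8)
The plan is to assemble the three components of a database morphism directly from the data of the sound logic morphism and then invoke the naturality already established in Prop.\ref{prop:sat:preserve}. By Prop.\ref{prop:snd:log:2:db}, each of the source and target sound logics $\mathcal{L}_i$ already determines a relational database $\mathcal{R}_i = {\langle{\mathrmbf{R}_i,\mathrmbfit{T}_i,\mathcal{A}_i}\rangle}$ with tabular interpretation passage $\mathrmbf{R}_i^{\mathrm{op}}\xrightarrow{\mathrmbfit{T}_i}\mathrmbf{Rel}(\mathcal{A}_i)\subseteq\mathrmbf{Tbl}(\mathcal{A}_i)$, so the object part of the correspondence is already handled; it remains to read off the morphism data and verify the database-morphism axioms of Def.\ref{def:db:mor:proj}.

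First I would extract the shape-changing relation passage $\mathrmbf{R}_2\xrightarrow{\mathrmbfit{R}}\mathrmbf{R}_1$ and the type domain infomorphism $\mathcal{A}_2\xrightleftharpoons{{\langle{f,g}\rangle}}\mathcal{A}_1$ from the underlying (abstract) specification morphism and (lax) structure morphism, respectively; both are literally among the components ${\langle{\mathrmbfit{R},\kappa,\grave{\varphi},f,g}\rangle}$ of the sound logic morphism. For the tabular interpretation bridge $\mathrmbfit{T}_2\xLeftarrow{\xi}\mathrmbfit{R}^{\mathrm{op}}\circ\mathrmbfit{T}_1$ I would take precisely the bridge constructed in Prop.\ref{prop:sat:preserve}: on objects $r_2\in R_2$ its component $\xi_{r_2}$ is the table morphism supplied by Cor.\ref{cor:tbl:func}, and the naturality square for every constraint $r'_2\xrightarrow{p_2}r_2$ is the content of Prop.\ref{prop:sat:preserve} (visualized in Fig.\ref{fig:nat:comb}). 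Thus $\xi$ is a genuine bridge, and ${\langle{\mathrmbfit{R},\xi}\rangle}$ carries the three constituents required of a database morphism.

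Next I would verify the factorization $\xi=(\grave{\psi}\circ\mathrmbfit{inc}_{\mathcal{A}_1})\bullet(\mathrmbfit{T}_2\circ\grave{\chi}_{{\langle{f,g}\rangle}})$ through the fiber adjunction $\mathrmbf{Tbl}(\mathcal{A}_2)\xleftarrow{\acute{\mathrmbfit{tbl}}_{{\langle{f,g}\rangle}}\,\dashv\,\grave{\mathrmbfit{tbl}}_{{\langle{f,g}\rangle}}}\mathrmbf{Tbl}(\mathcal{A}_1)$. This mirrors, at the level of tables rather than lists, the specification-morphism factorization $\zeta=(\grave{\varphi}\circ\mathrmbfit{inc}_{X_1})\bullet(\mathrmbfit{S}_2\circ\grave{\iota}_f)$ of Disp.\ref{eqn:spec:mor:def}: the restriction bridge $\grave{\psi}$ plays the role of $\grave{\varphi}$, and the tuple-level inclusion bridge $\grave{\chi}_{{\langle{f,g}\rangle}}$ plays the role of the list inclusion bridge $\grave{\iota}_f$. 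Here I would only need to check that evaluating the composite bridge at each $r_2$ reproduces the explicit component $\xi_{r_2}$ of Cor.\ref{cor:tbl:func}, using the adjoint flow of $\mathcal{A}$-tables and the type-domain indexing cited there.

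The main obstacle is this last coherence: matching the abstractly defined composite with the explicit component table morphisms, i.e. confirming that the projection/inflation adjoints $\acute{\mathrmbfit{tbl}}_{{\langle{f,g}\rangle}}\dashv\grave{\mathrmbfit{tbl}}_{{\langle{f,g}\rangle}}$ decompose $\xi_{r_2}$ exactly as $\grave{\tau}_{{\langle{f,g}\rangle}}(\sigma_2(r_2))$ followed by $\mathrmbfit{tup}_{\mathcal{A}_1}(\grave{\varphi}_{r_2})$, the two arrows forming the bottom face of Fig.\ref{fig:nat:comb}. Once that bookkeeping is in place, naturality in all faces of Fig.\ref{fig:nat:comb} (front/back from Cor.\ref{cor:tbl:func}, left/right from satisfaction of the source/target logics, bottom-right from $\grave{\varphi}$, bottom-left from $\grave{\tau}_{{\langle{f,g}\rangle}}$, and the top for relations) is exactly Prop.\ref{prop:sat:preserve}, and the database-morphism axioms then follow directly by comparison with Def.\ref{def:db:mor:proj}.
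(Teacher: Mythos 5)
Your proposal is correct and follows essentially the same route as the paper: the source and target databases come from Prop.\,\ref{prop:snd:log:2:db}, and the database morphism with its bridge $\xi$ comes from Prop.\,\ref{prop:sat:preserve}, whose statement already exhibits $\xi$ in the factored form $(\grave{\psi}\circ\mathrmbfit{inc}_{\mathcal{A}_{1}})\bullet(\mathrmbfit{T}_{2}\circ\grave{\chi}_{{\langle{f,g}\rangle}})$, with components supplied by Cor.\,\ref{cor:tbl:func}. The coherence check you flag as the ``main obstacle'' is thus not an additional step but precisely the content the paper delegates to Prop.\,\ref{prop:sat:preserve} and Fig.\,\ref{fig:nat:comb}.
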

\begin{proof}
By Prop.~\ref{prop:snd:log:2:db}
the sound logics 
$\mathcal{L}_{2}$ 
and
$\mathcal{L}_{1}$
define two databases
$\mathcal{R}_{2}$
and
$\mathcal{R}_{1}$.
By Prop.\,\ref{prop:sat:preserve}
the (lax) sound logic morphism
{{
$\mathcal{L}_{2}
\xrightleftharpoons{{\langle{\mathrmbfit{R},\kappa,\grave{\varphi},f,g}\rangle}}
\mathcal{L}_{1}$
}}
defines a
database morphism
$ 
{\langle{\mathrmbf{R}_{2},\mathrmbfit{T}_{2}}\rangle} 
\xleftarrow{{\langle{\mathrmbfit{R},\xi}\rangle}} 
{\langle{\mathrmbf{R}_{1},\mathrmbfit{T}_{1}}\rangle}$,
whose tabular interpretation bridge
{\footnotesize{${\mathrmbfit{T}_{2}
{\,\xLeftarrow{\;\,\xi\,}\,}
\mathrmbfit{R}{\,\circ\,}\mathrmbfit{T}_{1}}$}}
factors as above.
%
\comment{ 
\begin{itemize}
\item 
The (abstract) specification morphism
\newline\mbox{}\hfill
\rule[5pt]{0pt}{10pt}
{\footnotesize{
$\mathcal{T}_{2}={\langle{\mathrmbf{R}_{2},\mathrmbfit{S}_{2},X_{2}}\rangle}
\xrightarrow{{\langle{\mathrmbfit{R},\grave{\varphi},f}\rangle}}
{\langle{\mathrmbf{R}_{1},\mathrmbfit{S}_{1},X_{1}}\rangle}=\mathcal{T}_{1}$.}}
\rule[-6pt]{0pt}{10pt}
\hfill\mbox{}\newline
%
\comment{
a relation passage 
$\mathrmbf{R}_{2}\xrightarrow{\;\mathrmbfit{R}\;}\mathrmbf{R}_{1}$
extending the predicate function
$R_{2}\xrightarrow{\;r\;}R_{1}$ of the schema morphism to constraints,
and 
a signature bridge
$\mathrmbfit{S}_{2}{\;\circ\;}{\scriptstyle\sum}_{f}\xRightarrow{\;\grave{\varphi}\,\;}\mathrmbfit{R}{\;\circ\;}\mathrmbfit{S}_{1}$. 
the sort function $X_{2}\xrightarrow{\;f\;}X_{1}$ of the schema morphism,
and
a bridge
$\mathrmbfit{S}_{2}{\;\circ\;}{\scriptstyle\sum}_{f}\xRightarrow{\;\grave{\varphi}\;\,}
\mathrmbfit{R}{\;\circ\;}\mathrmbfit{S}_{1}$
extending the schema bridge to naturality.
along a schema morphism (Disp.\ref{struc:mor:assume} in \S\,\ref{sub:sec:struc:mor})
}
is the same as a database schema morphism.
It consists of 
a relation passage 
$\mathrmbf{R}_{2}\xrightarrow{\;\mathrmbfit{R}\;}\mathrmbf{R}_{1}$
extending the predicate function
of the common schema morphism (Def.\,\ref{def:snd:log:mor})
to constraints,
the sort function 
of the common schema morphism,
and
a bridge
$\mathrmbfit{S}_{2}{\;\circ\;}{\scriptstyle\sum}_{f}\xRightarrow{\;\grave{\varphi}\;\,}
\mathrmbfit{R}{\;\circ\;}\mathrmbfit{S}_{1}$
extending the schema bridge 
of the common schema morphism
to naturality.
\newline 
\item 
The (lax) structure morphism
\newline\mbox{}\hfill
\rule[5pt]{0pt}{10pt}
{\footnotesize{
$\mathcal{M}_{2}={\langle{\mathcal{E}_{2},{\langle{\sigma_{2},\tau_{2}}\rangle},\mathcal{A}_{2}}\rangle}
\xrightleftharpoons{{\langle{r,\kappa,\grave{\varphi},f,g}\rangle}}
{\langle{\mathcal{E}_{1},{\langle{\sigma_{1},\tau_{1}}\rangle},\mathcal{A}_{1}}\rangle}=\mathcal{M}_{1}$
}}\hfill\mbox{}\newline
consists of 
\begin{itemize}
\item 
a type domain morphism 
$\mathcal{A}_{2}
\xrightleftharpoons{{\langle{f,g}\rangle}} 
\mathcal{A}_{1}$, 
\item 
the common schema morphism (Def.\,\ref{def:snd:log:mor})
$\mathcal{S}_{2} = {\langle{R_{2},\sigma_{2},X_{2}}\rangle} 
\xRightarrow{\;{\langle{\mathrmit{r},\grave{\varphi},f}\rangle}\;}
{\langle{R_{1},\sigma_{1},X_{1}}\rangle} = \mathcal{S}_{1}$,
and
\item 
a lax entity informorphism
$\mathrmbfit{K}_{2}\xLeftarrow{\;\,\kappa\;}r{\,\circ\,}\mathrmbfit{K}_{1}$
consisting of 
a collection of 
key functions
$\bigl\{{\mathrmbfit{K}_{2}}(r_{2})\xleftarrow{\kappa_{r_{2}}}{\mathrmbfit{K}_{1}}(r(r_{2}))
\mid r_{2} \in R_{2}
\bigr\}$,
which satisfy the conditions:
\begin{center}
{{
$
\kappa_{r_{2}}{\;\cdot\;}\tau_{2,r_{2}}
= 
\tau_{1,r(r_{2})}
{\;\cdot\;}{\mathrmbfit{tup}_{\mathcal{A}_{1}}({\grave{\varphi}_{r_{2}}})}
{\;\cdot\;}{\grave{\tau}_{{\langle{f,g}\rangle}}({\sigma_{2}}(r_{2}))} 
$}}
\end{center}
for each predicate $r_{2} \in R_{2}$.
%
\end{itemize}
%
%
%
\end{itemize}
} 
%
\mbox{}\hfill\rule{5pt}{5pt}
\end{proof}
%


%
\begin{theorem}\label{thm:snd:log:2:db}
There is a passage
$\mathring{\mathrmbf{Snd}}^{\mathrm{op}}\!\xrightarrow{\;\mathring{\mathrmbfit{db}}\;}\mathrmbf{Db}$.
%
\end{theorem}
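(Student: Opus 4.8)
The plan is to assemble the object-level and morphism-level constructions already in hand into a single passage, and then to verify functoriality. On objects, Prop.\,\ref{prop:snd:log:2:db} sends each (lax) sound logic $\mathcal{L} = {\langle{\mathcal{S},\mathcal{M},\mathcal{T}}\rangle}$ to the relational database $\mathcal{R} = {\langle{\mathrmbf{R},\mathrmbfit{T},\mathcal{A}}\rangle}$, whose tabular interpretation passage ${\mathrmbf{R}}^{\mathrm{op}}\!\xrightarrow{\;\mathrmbfit{T}\;}\mathrmbf{Rel}(\mathcal{A})\subseteq\mathrmbf{Tbl}(\mathcal{A})$ is the extension of the structure interpretation to constraints guaranteed by satisfaction (Key Prop.\,\ref{sat:tbl:interp}). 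On morphisms, Prop.\,\ref{prop:snd:log:mor:2:db:mor} sends each (lax) sound logic morphism to the database morphism whose tabular interpretation bridge $\mathrmbfit{T}_{2}\xLeftarrow{\;\xi\;}\mathrmbfit{R}{\,\circ\,}\mathrmbfit{T}_{1}$ is supplied by the satisfaction-linkage bridge of Prop.\,\ref{prop:sat:preserve}. Taking $\mathring{\mathrmbfit{db}}$ to be these two assignments, it remains only to check that $\mathring{\mathrmbfit{db}}$ respects identities and composition.

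The contravariance recorded by the ${}^{\mathrm{op}}$ in the statement is exactly the contravariance of tabular interpretation: a constraint $r'\xrightarrow{p}r$ is interpreted as a morphism $\mathrmbfit{T}(r')\xleftarrow{\mathrmbfit{T}(p)}\mathrmbfit{T}(r)$ in the opposite direction, so the bridge $\xi$ of a sound logic morphism points from $\mathrmbfit{R}{\,\circ\,}\mathrmbfit{T}_{1}$ back to $\mathrmbfit{T}_{2}$. First I would record that the identity sound logic morphism on $\mathcal{L}$ — identity relation passage, identity key and schema bridges, and identity type-domain infomorphism — induces, through the factorization $\xi = (\grave{\psi}\circ\mathrmbfit{inc}_{\mathcal{A}_{1}})\bullet(\mathrmbfit{T}_{2}\circ\grave{\chi}_{{\langle{f,g}\rangle}})$ of Prop.\,\ref{prop:snd:log:mor:2:db:mor}, the identity bridge, hence the identity database morphism of Def.\,\ref{def:db:mor:proj} in \S\,\ref{sub:sec:db:mor}; this is a direct unwinding of the definitions.

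The hard part is preservation of composition. Given composable sound logic morphisms, their predicate passages and their schema and type-domain components compose strictly, so the issue is confined to the tabular interpretation bridges. Here the subtlety noted after Lem.\,\ref{lem:interp:pass} intervenes: the key functions assigned to constraints are chosen, and the choice attached to a composite signature morphism agrees with the composite of the component choices only up to key equivalence, so the composite of two $\xi$-bridges need not be \emph{strictly} the bridge assigned to the composite morphism at the level of $\mathrmbf{Tbl}$. I would resolve this exactly as the paper resolves the analogous closure problem for a single logic: pass through the table-relation reflection ${\langle{\mathrmbfit{im}{\;\dashv\;}\mathrmbfit{inc}}\rangle}:\mathrmbf{Tbl}{\;\rightleftarrows\;}\mathrmbf{Rel}$ and work with the relation interpretation $\mathrmbfit{R}_{\mathcal{M}}$ of Lem.\,\ref{lem:interp:pass}, which \emph{is} closed under composition. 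In $\mathrmbf{Rel}(\mathcal{A})$ the naturality squares established facet-by-facet in Prop.\,\ref{prop:sat:preserve} and assembled in Fig.\,\ref{fig:nat:comb} compose on the nose, so the induced database morphisms compose correctly and $\mathring{\mathrmbfit{db}}$ is a genuine passage into $\mathrmbf{Db}$. This reliance on relations rather than tables is the crux, and it is what forces the whole construction to be routed through $\mathrmbf{Rel}(\mathcal{A})\subseteq\mathrmbf{Tbl}(\mathcal{A})$.
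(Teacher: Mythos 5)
Your proposal is correct and follows essentially the same route as the paper: the passage $\mathring{\mathrmbfit{db}}$ is assembled from Prop.~\ref{prop:snd:log:2:db} on objects and Prop.~\ref{prop:snd:log:mor:2:db:mor} on morphisms. Your additional explicit check of functoriality---in particular routing composition through the relation interpretation to sidestep the fact that table-level key choices compose only up to key equivalence---is exactly the mechanism the paper builds into its construction (the interpretation passage lands in $\mathrmbf{Rel}(\mathcal{A})\subseteq\mathrmbf{Tbl}(\mathcal{A})$, per Prop.~\ref{sat:tbl:interp} and Lem.~\ref{lem:interp:pass}), though the paper itself leaves that verification implicit.
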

\begin{proof}
A sound logic 
$\mathcal{L}={\langle{\mathcal{S},\mathcal{M},\mathcal{T}}\rangle}$
is mapped to its associated database 
$\mathcal{R}={\langle{\mathrmbf{R},\mathrmbfit{K},\mathrmbfit{S},\tau,\mathcal{A}}\rangle}$
by Prop.~\ref{prop:snd:log:2:db}.
A sound logic morphism 
$\mathcal{L}_{2}
\xrightleftharpoons{{\langle{\mathrmbfit{R},k,\grave{\varphi},f,g}\rangle}}
\mathcal{L}_{1}$
is mapped to its associated database morphism
$\mathcal{R}_{2}
\xrightarrow{{\langle{\mathrmbfit{R},\kappa,\grave{\varphi},f,g}\rangle}}
\mathcal{R}_{1}$
by Prop.~\ref{prop:snd:log:mor:2:db:mor}.
\mbox{}\hfill\rule{5pt}{5pt}
\end{proof}
%
\comment{
The table-relation reflection
${\langle{\mathrmbfit{im}{\;\dashv\;}\mathrmbfit{inc}}\rangle}
:\mathrmbf{Tbl}{\;\rightleftarrows\;}\mathrmbf{Rel}$
embodies the notion of informational equivalence.}

%
\begin{table}
\begin{flushleft}
{\fbox{\footnotesize{\begin{minipage}{360pt}
\begin{description}
\item[sound logic morphism] 
\mbox{}
\comment{$\equiv$ database morphism
\hfill
{\scriptsize{$\mathcal{R}_{2} = 
{\langle{\mathrmbf{R}_{2},\mathcal{S}_{2},\mathcal{A}_{2},
\mathrmbfit{K}_{2},\tau_{2}}\rangle}
\xleftarrow{{\langle{\mathrmbfit{R},\grave{\varphi},f,g,\kappa}\rangle}}
{\langle{\mathrmbf{R}_{1},\mathcal{S}_{1},\mathcal{A}_{1},\mathrmbfit{K}_{1},\tau_{1}}\rangle}
 = \mathcal{R}_{1}$}\normalsize}}
\hfill
{\footnotesize{$\mathcal{L}_{2}={\langle{\mathcal{S}_{2},\mathcal{M}_{2},\mathcal{T}_{2}}\rangle}
\xrightleftharpoons{{\langle{\mathrmbfit{R},\kappa,\grave{\varphi},f,g}\rangle}}
{\langle{\mathcal{S}_{1},\mathcal{M}_{1},\mathcal{T}_{1}}\rangle}=\mathcal{L}_{1}$}}
\begin{itemize}
%
\item schema morphism
\hfill
{\scriptsize{$\mathcal{S}_{2}={\langle{R_{2},\sigma_{2},X_{2}}\rangle}
\xRightarrow{\;{\langle{r,\grave{\varphi},f}\rangle}\;}
{\langle{R_{1},\sigma_{1},X_{1}}\rangle}=\mathcal{S}_{1}$}}
\item (lax) struc morph
\comment{$\equiv$ constr-free database morphism
\hfill
{\scriptsize{${\langle{R_{2},\sigma_{2},\mathcal{A}_{2},
\mathrmbfit{K}_{2},\tau_{2}}\rangle}
\xleftarrow{{\langle{r,\grave{\varphi},f,g,\kappa}\rangle}}
{\langle{R_{1},\sigma_{1},\mathcal{A}_{1},\mathrmbfit{K}_{1},\tau_{1}}\rangle}
$}\normalsize}}
\hfill
{\scriptsize{$\mathcal{M}_{2} 
= {\langle{\overset{\textstyle{\mathcal{E}_{2}}}{\langle{R_{2},\mathrmbfit{K}_{2}}\rangle},\sigma_{2},\tau_{2},\mathcal{A}_{2}}\rangle}
\xrightleftharpoons{{\langle{r,\kappa,\grave{\varphi},f,g}\rangle}}
{\langle{\overset{\textstyle{\mathcal{E}_{1}}}{\langle{R_{1},\mathrmbfit{K}_{1}}\rangle},\sigma_{1},\tau_{1},\mathcal{A}_{1}}\rangle} = 
\mathcal{M}_{1}$}}
\begin{itemize}
%
%
\item type domain morphism
\hfill
{\scriptsize{$\mathcal{A}_{2} = {\langle{X_{2},Y_{2},\models_{\mathcal{A}_{2}}}\rangle} 
\xrightleftharpoons{{\langle{f,g}\rangle}} 
{\langle{X_{1},Y_{1},\models_{\mathcal{A}_{1}}}\rangle} = \mathcal{A}_{1}$}}
%
%
\item (lax) entity infomorphism
\hfill
{\scriptsize{$\mathcal{K}_{2}={\langle{\mathrmbf{R}_{2},\mathrmbfit{K}_{2}}\rangle} 
\xrightarrow{\;{\langle{\mathrmbfit{R},\kappa}\rangle}\;}
{\langle{\mathrmbf{R}_{1},\mathrmbfit{K}_{1}}\rangle}=\mathcal{K}_{1}$}}
%
%
\end{itemize}
\item[]
which satisfy the constraint 
\newline
{{\mbox{}\hspace{20pt} 
$\kappa\bullet\tau_{2}
= (\mathrmbfit{R}^{\mathrm{op}}\!{\circ\;}\tau_{1})
{\;\bullet\;}(\grave{\varphi}^{\mathrm{op}}\!{\circ\;}\mathrmbfit{tup}_{\mathcal{A}_{1}})
{\;\bullet\;}(\mathrmbfit{S}_{2}^{\mathrm{op}}{\;\circ\;}\grave{\tau}_{{\langle{f,g}\rangle}})$}}
%
%
%
%
\item specification morphism
\hfill
{\scriptsize{$\mathcal{T}_{2}={\langle{\mathrmbf{R}_{2},\mathrmbfit{S}_{2},X_{2}}\rangle}
\xrightarrow{{\langle{\mathrmbfit{R},\grave{\varphi},f}\rangle}}
{\langle{\mathrmbf{R}_{1},\mathrmbfit{S}_{1},X_{1}}\rangle}=\mathcal{T}_{1}$}}
\end{itemize}
\end{description}
\end{minipage}}}}
\end{flushleft}
\caption{Sound Logic Morphism}
\label{tbl:snd:log:mor}
\end{table}

\comment{
\item[\textbf{alt.}]\textbf{decomposition} \mbox{}\hrulefill\mbox{} 
\item schemed domain morphism
\hfill
{\scriptsize{$
{\langle{R_{2},\sigma_{2},\mathcal{A}_{2}}\rangle} 
\xrightarrow{\;{\langle{r,\grave{\varphi},f,g}\rangle}\;}
{\langle{R_{1},\sigma_{1},\mathcal{A}_{1}}\rangle}$}}
\begin{itemize}
\item specification morphism
\hfill
{\scriptsize{$\mathcal{T}_{2}={\langle{\mathrmbf{R}_{2},\mathrmbfit{S}_{2},X_{2}}\rangle}
\xrightarrow{{\langle{\mathrmbfit{R},\grave{\varphi},f}\rangle}}
{\langle{\mathrmbf{R}_{1},\mathrmbfit{S}_{1},X_{1}}\rangle}=\mathcal{T}_{1}$}}
\item type domain morphism
\hfill
{\scriptsize{$\mathcal{A}_{2}
= {\langle{X_{2},Y_{2},\models_{\mathcal{A}_{2}}}\rangle} 
\xrightleftharpoons{{\langle{f,g}\rangle}} 
{\langle{X_{1},Y_{1},\models_{\mathcal{A}_{1}}}\rangle} = 
\mathcal{A}_{1}$}}
\end{itemize}
\item lax entity infomorphism
\hfill
{\scriptsize{$\mathcal{K}_{2}={\langle{\mathrmbf{R}_{2},\mathrmbfit{K}_{2}}\rangle} 
\xrightarrow{\;{\langle{\mathrmbfit{R},\kappa}\rangle}\;}
{\langle{\mathrmbf{R}_{1},\mathrmbfit{K}_{1}}\rangle}=\mathcal{K}_{1}$}}
\item
{\scriptsize{$\kappa\bullet\tau_{2}
= (\mathrmbfit{R}^{\mathrm{op}}\!{\circ\;}\tau_{1})
{\;\bullet\;}(\grave{\varphi}^{\mathrm{op}}\!{\circ\;}\mathrmbfit{tup}_{\mathcal{A}_{1}})
{\;\bullet\;}
(\mathrmbfit{S}_{2}^{\mathrm{op}}{\;\circ\;}\grave{\tau}_{{\langle{f,g}\rangle}})$}}}

%
%

%
\newpage
\section{{\ttfamily FOLE} Databases.}\label{sec:db}


%
A {\ttfamily FOLE}
relational database
$\mathcal{R} = {\langle{\mathrmbf{R},\mathrmbfit{T}}\rangle}$
consists of
a context $\mathrmbf{R}$ of predicates linked by constraints and  
an interpretation diagram 
$\mathrmbfit{T} : \mathrmbf{R}^{\mathrm{op}} 
\rightarrow \mathrmbf{Tbl}$
of tables.
%
A relational database morphism 
${\langle{\mathrmbfit{R},\xi}\rangle} : 
{\langle{\mathrmbf{R}_{2},\mathrmbfit{T}_{2}}\rangle} \leftarrow 
{\langle{\mathrmbf{R}_{1},\mathrmbfit{T}_{1}}\rangle}$
(Tbl.\,\ref{tbl:fole:morph}
in \S\,\ref{sub:sec:adj:components})
is 
a diagram morphism
(LHS Fig.\,\ref{fig:db:mor}
in \S\,\ref{sub:sec:db:mor})
consisting of a shape-changing passage 
$\mathrmbf{R}_{2} \xrightarrow{\:\mathrmbfit{R}\:} \mathrmbf{R}_{1}$ 
and
a bridge
$\xi : \mathrmbfit{T}_{2} \Leftarrow \mathrmbfit{R}^{\mathrm{op}} \circ \mathrmbfit{T}_{1}$. 
%
\comment{
\begin{itemize}
\item[schema:] 
The projection
$\mathrmbfit{sign}^{\mathrm{op}} : 
\mathrmbf{Tbl}^{\mathrm{op}}
\rightarrow\mathrmbf{List}$
maps a database 
${\langle{\mathrmbf{R},\mathrmbfit{T}}\rangle}$
to a
database schema 
(abstract specification)
${\langle{\mathrmbf{R},\mathrmbfit{S}}\rangle}= 
{\langle{\mathrmbf{R},\mathrmbfit{T}^{\mathrm{op}}{\circ}\mathrmbfit{sign}^{\mathrm{op}}
}\rangle}$
and
maps a database morphism 
${\langle{\mathrmbfit{R},\xi}\rangle} : 
{\langle{\mathrmbf{R}_{2},\mathrmbfit{T}_{2}}\rangle} \leftarrow 
{\langle{\mathrmbf{R}_{1},\mathrmbfit{T}_{1}}\rangle}$
to a
database schema morphism
(abstract specification morphism)
\newline\mbox{}\hfill
${\langle{\mathrmbfit{R},\zeta}\rangle}=
{\langle{\mathrmbfit{R},\xi^{\mathrm{op}}{\circ}\mathrmbfit{sign}^{\mathrm{op}}
}\rangle}
: {\langle{\mathrmbf{R}_{2},\mathrmbfit{S}_{2}}\rangle} \Rightarrow 
{\langle{\mathrmbf{R}_{1},\mathrmbfit{S}_{1}}\rangle}$.
\hfill\mbox{}
%
\comment{
\item[data:] \fbox{looks all confused}
The projection
$\mathrmbfit{data}^{\mathrm{op}}=\mathrmbfit{cls}
: \mathrmbf{Tbl}^{\mathrm{op}}\rightarrow\mathrmbf{Cls}$
maps 
a database 
${\langle{\mathrmbf{R},\mathrmbfit{T}}\rangle}$
to a diagram of
type domains
${\langle{\mathrmbf{R},\mathrmbfit{K}{\circ}\mathrmbfit{lax}}\rangle}= 
{\langle{\mathrmbf{R},\mathrmbfit{T}^{\mathrm{op}}{\circ}\mathrmbfit{data}^{\mathrm{op}}
}\rangle}$
and
maps a database morphism 
${\langle{\mathrmbfit{R},\xi}\rangle} : 
{\langle{\mathrmbf{R}_{2},\mathrmbfit{T}_{2}}\rangle} \leftarrow 
{\langle{\mathrmbf{R}_{1},\mathrmbfit{T}_{1}}\rangle}$
to a diagram morphism of
type domain morphisms
\newline\mbox{}\hfill
${\langle{\mathrmbfit{R},\kappa{\circ}\mathrmbfit{lax}}\rangle}=
{\langle{\mathrmbfit{R},\xi^{\mathrm{op}}{\circ}\mathrmbfit{data}^{\mathrm{op}}{\circ}\mathrmbfit{lax}
}\rangle}
: {\langle{\mathrmbf{R}_{2},\mathrmbfit{K}_{2}{\circ}\mathrmbfit{lax}}\rangle} 
\rightarrow 
{\langle{\mathrmbf{R}_{1},\mathrmbfit{K}_{1}{\circ}\mathrmbfit{lax}}\rangle}$.
\hfill\mbox{}\newline
with bridge
$\kappa :
\mathrmbf{R}_{2}\Leftarrow\mathrmbfit{R}^{\mathrm{op}}{\circ}\mathrmbf{R}_{1}$,
whose constraint-free aspect is a
lax entity infomorphism
\newline\mbox{}\hfill
$\mathcal{E}_{2}
 = {\langle{R_{2},\mathrmbfit{K}_{2}}\rangle} 
\xleftharpoondown{{\langle{r,\kappa}\rangle}}
{\langle{R_{1},\mathrmbfit{K}_{1}}\rangle} = 
\mathcal{E}_{1}$.
\hfill\mbox{}\newline
}
%
\item[key:] 
The projection
$\mathrmbf{Tbl}\xrightarrow{\;\mathrmbfit{key}\;}\mathrmbf{Set}$
maps a database 
${\langle{\mathrmbf{R},\mathrmbfit{T}}\rangle}$
to a
key diagram
${\langle{\mathrmbf{R},\mathrmbfit{K}}\rangle}= 
{\langle{\mathrmbf{R},\mathrmbfit{T}{\circ}\mathrmbfit{key}
}\rangle}$
and
maps a database morphism 
${\langle{\mathrmbfit{R},\xi}\rangle} : 
{\langle{\mathrmbf{R}_{2},\mathrmbfit{T}_{2}}\rangle} \leftarrow 
{\langle{\mathrmbf{R}_{1},\mathrmbfit{T}_{1}}\rangle}$
to a
key diagram morphism
\newline\mbox{}\hfill
${\langle{\mathrmbfit{R},\kappa}\rangle}=
{\langle{\mathrmbfit{R},\xi{\circ}\mathrmbfit{key}}\rangle}
: 
{\langle{\mathrmbf{R}_{2},\mathrmbfit{K}_{2}}\rangle} 
\rightarrow 
{\langle{\mathrmbf{R}_{1},\mathrmbfit{K}_{1}}\rangle}$
\hfill\mbox{}\newline
with bridge
$\kappa :
\mathrmbf{R}_{2}\Leftarrow\mathrmbfit{R}^{\mathrm{op}}{\circ}\mathrmbf{R}_{1}$,
\comment{,
whose constraint-free aspect is a
lax entity infomorphism
\newline\mbox{}\hfill
$\mathcal{E}_{2}
 = {\langle{R_{2},\mathrmbfit{K}_{2}}\rangle} 
\xleftharpoondown{{\langle{r,\kappa}\rangle}}
{\langle{R_{1},\mathrmbfit{K}_{1}}\rangle} = 
\mathcal{E}_{1}$.
\hfill\mbox{}\newline
}
\end{itemize}
%
}
%
Composition is component-wise.
The mathematical context of 
{\ttfamily FOLE} relational databases 
is denoted by 
$\mathrmbf{DB}
= \mathrmbf{Tbl}^{\scriptscriptstyle{\Downarrow}}
= \bigl({(\mbox{-})}^{\mathrm{op}}{\Downarrow\,}\mathrmbf{Tbl}\bigr)$.
%
\footnote{See \S\,4.2.1 in the paper ``The \texttt{FOLE} Database'' \cite{kent:fole:era:db}.}

\subsection{Databases.}\label{sub:sec:db:obj}

%
\begin{definition}\label{def:db}
A {\ttfamily FOLE} database
$\mathcal{R} = {\langle{\mathrmbf{R},\mathrmbfit{T},\mathcal{A}}\rangle}$,
with constant type domain $\mathcal{A}$,
is a database 
with an interpretation diagram 
$\mathrmbfit{T} : \mathrmbf{R}^{\mathrm{op}} 
\rightarrow \mathrmbf{Tbl}(\mathcal{A})
\hookrightarrow \mathrmbf{Tbl}$
that factors through the context of $\mathcal{A}$-tables.
\end{definition}
\begin{figure}
\begin{center}
{{\begin{tabular}{c}
\setlength{\unitlength}{0.5pt}
\begin{picture}(120,180)(-60,0)
\put(10,180){\makebox(0,0){\footnotesize{$\mathrmbf{R}^{\mathrm{op}}$}}}
\put(0,118){\makebox(0,0){\footnotesize{$\mathrmbf{Tbl}(\mathcal{A})$}}}
\put(75,60){\makebox(0,0){\footnotesize{$\mathrmbf{List}(X)^{\mathrm{op}}$}}}
\put(0,0){\makebox(0,0){\footnotesize{$\mathrmbf{Set}$}}}
\put(6,148){\makebox(0,0)[l]{\scriptsize{$\mathrmbfit{T}$}}}
\put(-75,95){\makebox(0,0)[r]{\scriptsize{$\mathrmbfit{K}$}}}
\put(-37,72){\makebox(0,0)[r]{\scriptsize{$\mathrmbfit{key}_{\mathcal{A}}$}}}
\put(36,93){\makebox(0,0)[l]{\scriptsize{$\mathrmbfit{sign}_{\mathcal{A}}^{\mathrm{op}}$}}}
\put(65,130){\makebox(0,0)[l]{\scriptsize{$\mathrmbfit{S}^{\mathrm{op}}$}}}
\put(36,26){\makebox(0,0)[l]{\scriptsize{$\mathrmbfit{tup}_{\mathcal{A}}$}}}
\put(0,60){\makebox(0,0){\shortstack{\scriptsize{$\;\tau_{\mathcal{A}}$}\\\large{$\Longrightarrow$}}}}
\put(0,165){\vector(0,-1){34}}
\put(15,105){\vector(1,-1){30}}
\put(45,45){\vector(-1,-1){30}}
\qbezier(-18,167)(-120,90)(-20,13)\put(-20,13){\vector(1,-1){0}}
\qbezier(-12,105)(-60,60)(-12,15)\put(-12,15){\vector(1,-1){0}}
\qbezier(18,167)(70,140)(66,76)\put(66,76){\vector(0,-1){0}}
\end{picture}
\end{tabular}}}
%
\comment{{\begin{tabular}{c}
\setlength{\unitlength}{0.46pt}
\begin{picture}(240,170)(-20,-45)
\put(0,108){\makebox(0,0){\footnotesize{$\overset{\mathrmbfit{ext}_{\mathcal{E}}(r')}
{\mathrmbfit{K}(r')}$}}}
\put(180,106){\makebox(0,0){\footnotesize{$\overset{\mathrmbfit{ext}_{\mathcal{E}}(r)}
{\mathrmbfit{K}(r)}$}}}
\put(-10,0){\makebox(0,0){\footnotesize{$\mathrmbfit{tup}_{\mathcal{A}}(\mathrmbfit{S}(r'))$}}}
\put(190,0){\makebox(0,0){\footnotesize{$\mathrmbfit{tup}_{\mathcal{A}}(\mathrmbfit{S}(r))$}}}
\put(90,110){\makebox(0,0){\scriptsize{$\mathrmbfit{K}(p)$}}}
\put(90,91){\makebox(0,0){\scriptsize{$k_{p}$}}}
\put(90,14){\makebox(0,0){\scriptsize{$\mathrmbfit{tup}_{\mathcal{A}}(h_{p})$}}}
\put(90,-14){\makebox(0,0){\scriptsize{$\mathrmbfit{tup}_{\mathcal{A}}(\mathrmbfit{S}(p))$}}}
\put(-8,50){\makebox(0,0)[r]{\scriptsize{$\tau_{r'}$}}}
\put(188,50){\makebox(0,0)[l]{\scriptsize{$\tau_{r}$}}}
\put(135,100){\vector(-1,0){95}}
\put(125,0){\vector(-1,0){70}}
\put(180,80){\vector(0,-1){60}}
\put(0,80){\vector(0,-1){60}}
\put(-10,-30){\makebox(0,0){\normalsize{$\underset{\textstyle{\mathrmbfit{T}(r')}}
{\underbrace{\rule{50pt}{0pt}}}$}}}
\put(190,-30){\makebox(0,0){\normalsize{$\underset{\textstyle{\mathrmbfit{T}(r)}}
{\underbrace{\rule{50pt}{0pt}}}$}}}
\put(85,50){\makebox(0,0){\footnotesize{$\tau$ naturality}}}
\end{picture}
\end{tabular}}}
\end{center}
\caption{Database: Type Domain}
\label{fig:db}
\end{figure}
%


%
\begin{definition}\label{def:db:proj}
Using $\mathcal{A}$-table projection passages
\footnote{See \S\,3.4.1 of the paper
``The {\ttfamily FOLE} Table''
\cite{kent:fole:era:tbl}.}
(Fig.\,\ref{fig:db}),
a database 
$\mathcal{R} =
{\langle{\mathrmbf{R},\mathrmbfit{K},\mathrmbfit{S},\tau,\mathcal{A}}\rangle}$,
with constant type domain $\mathcal{A}$,
consists of
\begin{itemize}
\item 
a context $\mathrmbf{R}$ of predicates,
%
\item 
%
a relational database schema
$\mathcal{S} = {\langle{\mathrmbf{R},\mathrmbfit{S},X}\rangle}$
with
signature diagram
$\mathrmbfit{S} = \mathrmbfit{T}^{\mathrm{op}}{\circ\;}\mathrmbfit{sign}_{\mathcal{A}} :
\mathrmbf{R}\rightarrow\mathrmbf{List}(X)$,
\item 
%
a key diagram
$\mathrmbfit{K} = 
\mathrmbfit{T}{\;\circ\;}\mathrmbfit{key}_{\mathcal{A}} :
\mathrmbf{R}^{\mathrm{op}}\rightarrow\mathrmbf{Set}$,
%
and
\item 
a tuple bridge
$\tau = \mathrmbfit{T}{\,\circ\,}\tau_{\mathcal{A}} :
\mathrmbfit{K}\Rightarrow\mathrmbfit{S}^{\mathrm{op}}\!{\,\circ\,}\mathrmbfit{tup}_{\mathcal{A}}$.
%
\end{itemize}
\end{definition}
%

%
\comment{
\begin{itemize}
\item 
The passage
$\mathrmbfit{T} : \mathrmbf{R}^{\mathrm{op}}\rightarrow \mathrmbf{Tbl}(\mathcal{A})$
maps a predicate $r \in \mathrmbf{R}$
to the $\mathcal{A}$-table 
$\mathrmbfit{T}(r) = {\langle{I,s,K,t}\rangle} \in \mathrmbf{Tbl}(\mathcal{A})$
consisting of 
an $X$-sorted signature 
$\mathrmbfit{S}(r) = \sigma(r) = {\langle{I,s}\rangle} \in \mathrmbf{List}(X)$, 
a set $\mathrmbfit{K}(r) = K$ of keys, and
a tuple function $
\tau_{r}={t}
: K\rightarrow\mathrmbfit{tup}_{\mathcal{A}}(I,s)$.
%
\item 
The passage
$\mathrmbfit{T} : \mathrmbf{R}^{\mathrm{op}}\rightarrow \mathrmbf{Tbl}(\mathcal{A})$
maps a constraint $r_{2} \xrightarrow{p} r_{1}$ in $\mathrmbf{R}$
to the $\mathcal{A}$-table morphism 
$\mathcal{T}_{2}={\langle{I_{2},s_{2},K_{2},t_{2}}\rangle}
\xleftarrow[\mathrmbfit{T}(p)]{\langle{h,k}\rangle}
{\langle{I_{1},s_{1},K_{1},t_{1}}\rangle}=\mathcal{T}_{1}$
consisting of 
an indexing $X$-sorted signature morphism 
${\langle{I_{2},s_{2}}\rangle}
\xrightarrow[\mathrmbfit{S}(p)]{\,h\,}
{\langle{I_{1},s_{1}}\rangle}$ and 
a key function $K_{2}\xleftarrow[\mathrmbfit{K}(p)]{\,k\,}K_{1}$
satisfying the naturality condition 
$k{\;\cdot\;}t_{2}=t_{1}{\;\cdot\;}\mathrmbfit{tup}_{\mathcal{A}}(h)$.
\end{itemize}
}
%

%
\begin{table}
\begin{center}
{\footnotesize{\setlength{\extrarowheight}{2pt}
\begin{tabular}
{|@{\hspace{5pt}}r@{\hspace{20pt}}l@{\hspace{10pt}$
$\hspace{0pt}}l@{\hspace{0pt}}|}
\multicolumn{3}{l}{
$\mathcal{R} 
= {\langle{\mathrmbf{R},\mathrmbfit{T},\mathcal{A}}\rangle}$
\hfill
\textsf{relational database}
}
\\ \hline
predicate context
& $\mathrmbf{R}$
&
\\
type domain (attribute classification) 
& $\mathcal{A} = {\langle{X,Y,\models_{\mathcal{A}}}\rangle}$
& 
\\
table passage
&
$\mathrmbf{R}^\text{op}\xrightarrow{\;\mathrmbfit{T}\;}\mathrmbf{Tbl}(\mathcal{A})$
&
\\
\hline
\multicolumn{3}{l}{
$\mathcal{R} 
= {\langle{\mathrmbf{R},\mathrmbfit{S},\mathrmbfit{K},\tau,\mathcal{A}}\rangle}$}
\\ \hline
database schema (specification) 
& 
$\mathcal{T} = {\langle{\mathrmbf{R},\mathrmbfit{S},X}\rangle}$
& 
\\
signature passage 
&
$\mathrmbf{R}
\xrightarrow
[\mathrmbfit{T}^{\mathrm{op}}{\circ\;}\mathrmbfit{sign}_{\mathcal{A}}]
{\mathrmbfit{S}}
\mathrmbf{List}(X)$
&
\\ \cline{1-1}
key passage 
& $\mathrmbf{R}^{\mathrm{op}}
\xrightarrow[\mathrmbfit{T}{\;\circ\;}\mathrmbfit{key}_{\mathcal{A}}]{\mathrmbfit{K}}
\mathrmbf{Set}$
&
\\ \cline{1-1}
tuple bridge 
& 
$\mathrmbfit{K}
\xRightarrow[\mathrmbfit{T}{\,\circ\,}\tau_{\mathcal{A}}]{\tau}
\mathrmbfit{S}^{\mathrm{op}}\!{\,\circ\,}\mathrmbfit{tup}_{\mathcal{A}}$
& 
\\
\hline
\end{tabular}}}
\end{center}
\caption{Relational Database}
\label{tbl:db}
\end{table}

\newpage

\begin{proposition}
\label{prop:db:to:struc}
The constraint-free aspect of a {\ttfamily FOLE} database 
$\mathcal{R} =
{\langle{\mathrmbf{R},\mathrmbfit{K},\mathrmbfit{S},\tau,\mathcal{A}}\rangle}$,
with constant type domain $\mathcal{A}$,
is the same as
%
%
a (lax) {\ttfamily FOLE} structure
$\mathcal{M} = {\langle{\mathcal{E},\sigma,\tau,\mathcal{A}}\rangle}$
in $\mathring{\mathrmbf{Struc}}$.
%
%
\end{proposition}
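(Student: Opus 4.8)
The plan is to observe that this proposition is exactly the converse reading of Prop.\ref{prop:struc:2:|db|}, now stated from the database side, so the argument reduces to extracting the object-level data of the database and matching it against Def.\ref{def:struc:lax}. First I would make precise what the \emph{constraint-free aspect} of $\mathcal{R} = {\langle{\mathrmbf{R},\mathrmbfit{K},\mathrmbfit{S},\tau,\mathcal{A}}\rangle}$ is: forget the constraints, i.e.\ the morphisms of $\mathrmbf{R}$, retaining only the underlying set of predicate symbols $R = |\mathrmbf{R}|$.

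Restricting the three projective components of Def.\ref{def:db:proj} along the object inclusion $R \hookrightarrow \mathrmbf{R}$ then yields: (i) the signature function $R \xrightarrow{\sigma} \mathrmbf{List}(X)$ obtained from the signature diagram $\mathrmbfit{S}$, giving the schema $\mathcal{S} = {\langle{R,\sigma,X}\rangle}$; (ii) the key function $R \xrightarrow{\mathrmbfit{K}} \mathrmbf{Set}$ obtained from the key diagram, which is precisely a lax entity classification $\mathcal{E} = {\langle{R,\mathrmbfit{K}}\rangle}$; and (iii) the $R$-indexed collection of tuple maps $\{ \mathrmbfit{K}(r) \xrightarrow{\tau_r} \mathrmbfit{tup}_{\mathcal{A}}(\sigma(r)) \mid r \in R \}$ read off from the object components of the tuple bridge $\tau$. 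Next I would note that items (i)--(iii), together with the fixed type domain $\mathcal{A}$, are exactly the data listed in the tuple-bridge description of a (lax) $\mathcal{S}$-structure in Def.\ref{def:struc:lax}, namely $\mathcal{M} = {\langle{\mathcal{E},\sigma,\tau,\mathcal{A}}\rangle}$ with tuple bridge $\mathrmbfit{K} \xRightarrow{\tau} \sigma \circ \mathrmbfit{tup}_{\mathcal{A}}$. Since passing to the constraint-free aspect simply discards the morphism components (the naturality squares of the tuple bridge over constraints), no further condition survives, and the two descriptions carry identical data.

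The only point needing care — and the main, if modest, obstacle — is the bookkeeping that the object-restriction of the tuple bridge coincides with the tuple bridge of Def.\ref{def:struc:lax}: that $\mathrmbfit{S}$ evaluated on objects equals $\sigma$, and that the $\tau_r$ recovered from the key diagram and tuple bridge are exactly the restrictions of $\tau$ described there. This is precisely the content already verified in Prop.\ref{prop:struc:2:|db|}, so I would close the argument by appealing to that proposition and observing that the construction is symmetric: Prop.\ref{prop:struc:2:|db|} builds the database components from a lax structure, and here I invert the identical construction to recover the lax structure from the constraint-free database components. Hence the two assignments are mutually inverse, which establishes the claimed identity ``is the same as'' in $\mathring{\mathrmbf{Struc}}$.
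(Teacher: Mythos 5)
Your proposal is correct and follows essentially the same route as the paper's own proof: the paper likewise proceeds by a component-by-component identification, taking $\mathcal{A}$ as given, reading the schema $\mathcal{S}={\langle{R,\sigma,X}\rangle}$ off as the constraint-free aspect of the signature diagram $\mathrmbfit{S}$, the lax entity classification $\mathcal{E}={\langle{R,\mathrmbfit{K}}\rangle}$ off as the constraint-free aspect of the key diagram, and the tuple bridge off as the constraint-free aspect of the database tuple bridge, exactly as in your items (i)--(iii). Your closing appeal to Prop.\,\ref{prop:struc:2:|db|} and the remark that the two constructions are mutually inverse is a harmless elaboration of the same identification, not a different argument.
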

\begin{proof}
We define the various components
of the (lax) structure 
$\mathcal{M} = {\langle{\mathcal{E},\sigma,\tau,\mathcal{A}}\rangle}$
listed in 
Def.\,\ref{def:struc:lax} of \S\,\ref{sub:sec:struc:lax}
and pictured in 
Fig.\,\ref{fig:fole:struc} of \S\;\ref{sub:sec:struc}.
\begin{itemize}
\item[$\mathcal{A}$\textbf{:}]
The attribute classification (typed domain)
$\mathcal{A} = {\langle{X,Y,\models_{\mathcal{A}}}\rangle}$ is given.
%
\item[$\sigma$\textbf{:}]
%
The schema (type hypergraph)
$\mathcal{S}={\langle{R,\sigma,X}\rangle}$
consists of 
the set $R$ of relation symbols (predicates)
and
the signature map 
$\sigma : R\rightarrow\mathrmbf{List}(X) : r \mapsto \sigma(r)=\mathrmbfit{S}(r)$.
This is the constraint-free aspect of the database schema
$\mathcal{S} = {\langle{\mathrmbf{R},\mathrmbfit{S},X}\rangle}$
with signature diagram
$\mathrmbfit{S} 
: \mathrmbf{R}\rightarrow\mathrmbf{List}(X)$.
%
\item[$\mathcal{E}$\textbf{:}] 
%
The (lax) entity classification
$\mathcal{E}={\langle{R,\mathrmbfit{K}}\rangle}$
consists
of 
the set $R$ of relation symbols (predicates)
and
the key function $R \xrightarrow{\mathrmbfit{K}}\mathrmbf{Set}$.
This is the constraint-free aspect of the key diagram
$\mathrmbf{R}^{\mathrm{op}}\xrightarrow{\mathrmbfit{K}}\mathrmbf{Set}$.
\item[$\tau$\textbf{:}]
The tuple bridge
$\tau :
\mathrmbfit{K}\Rightarrow\sigma{\;\circ\;}\mathrmbfit{tup}_{\mathcal{A}}$
is the constraint-free aspect of the tuple bridge 
$\tau :
\mathrmbfit{K}
\xRightarrow{\,\tau}
\mathrmbfit{S}^{\mathrm{op}}\!{\,\circ\,}\mathrmbfit{tup}_{\mathcal{A}}$.
%
\hfill\rule{5pt}{5pt}
\end{itemize}
\end{proof}
%
%
\comment{
Extending these to
the $R$-indexed collection of tuple maps
$\bigl\{
\mathrmbfit{K}(r)\xrightarrow{\tau_{r}}\mathrmbfit{tup}_{\mathcal{A}}(\sigma(r))\xhookrightarrow{\mathrmbfit{inc}}\mathrmbf{List}(Y)
\mid r \in R \bigr\}$,
defines
the tuple map
$K = \bigsqcup_{r \in R} \mathrmbfit{K}(r)
\xrightarrow{\;\tau\;}\mathrmbf{List}(Y)$
of the instance hypergraph (universe)
$\mathcal{U} = {\langle{K,\tau,Y}\rangle}$.
\item[${\langle{\sigma,\tau}\rangle}$\textbf{:}] 
The defining condition that
\newline
$k{\;\models_{\mathcal{E}}\;}r$
iff
$k \in \mathrmbfit{K}(r)$
implies
$\tau_{r}(k) \in \mathrmbfit{tup}_{\mathcal{A}}(\sigma(r))$
iff
$\tau(k){\;\models_{\mathrmbf{List}(\mathcal{A})}\;}\sigma(r)$
%
\footnote{
$\mathrmbfit{tup}_{\mathcal{A}} = 
\mathrmbfit{ext}_{\mathrmbf{List}(\mathcal{A})}$.}
%
\newline
implies
that we have
a list designation 
${\langle{\sigma,\tau}\rangle} : \mathcal{E} \rightrightarrows \mathrmbf{List}(\mathcal{A})$.}
%


%
\begin{proposition}
\label{prop:db:2:snd:log}
Any {\ttfamily FOLE} database 
$\mathcal{R} 
= {\langle{\mathrmbf{R},\mathrmbfit{T},\mathcal{A}}\rangle}
= {\langle{\mathrmbf{R},\mathrmbfit{K},\mathrmbfit{S},\tau,\mathcal{A}}\rangle}$
in $\mathrmbf{Db}(\mathcal{A})$
defines  
a (lax)
{\ttfamily FOLE} sound logic
$\mathcal{L}={\langle{\mathcal{S},\mathcal{M},\mathcal{T}}\rangle}$
in $\mathring{\mathrmbf{Snd}}$.
\end{proposition}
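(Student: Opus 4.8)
The plan is to read off the three components of a sound logic directly from the projective presentation of the database $\mathcal{R} = \langle\mathrmbf{R},\mathrmbfit{K},\mathrmbfit{S},\tau,\mathcal{A}\rangle$ supplied by Def.\ref{def:db:proj}, and then to obtain the satisfaction condition essentially for free by appeal to the (Key) Prop.\ref{sat:tbl:interp}. The essential observation is that a database already carries the full interpretation diagram $\mathrmbfit{T}:\mathrmbf{R}^{\mathrm{op}}\to\mathrmbf{Tbl}(\mathcal{A})$, whereas a sound logic asks only for a structure, a specification, and satisfaction between them; so the task is to recognize these three as aspects of the single passage $\mathrmbfit{T}$.

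First I would take the abstract specification to be the database schema itself: set $\mathcal{T} = \langle\mathrmbf{R},\mathrmbfit{S},X\rangle$, where $\mathrmbf{R}$ is the predicate context and $\mathrmbfit{S} = \mathrmbfit{T}^{\mathrm{op}}\circ\mathrmbfit{sign}_{\mathcal{A}}$ is the signature passage. By the remarks accompanying Def.\ref{def:abs:spec}, abstract specifications and database schemas coincide, so $\mathcal{T}$ is an abstract $\mathcal{S}$-specification with no further argument. The underlying schema $\mathcal{S} = \langle R,\sigma,X\rangle$ of the sound logic is then the constraint-free aspect of $\mathcal{T}$: its objects $R$ are the predicates of $\mathrmbf{R}$ and its signature map is $\sigma(r) = \mathrmbfit{S}(r)$. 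Next I would invoke Prop.\ref{prop:db:to:struc} to obtain the (lax) structure $\mathcal{M} = \langle\mathcal{E},\sigma,\tau,\mathcal{A}\rangle$ as the constraint-free aspect of $\mathcal{R}$, with lax entity classification $\mathcal{E} = \langle R,\mathrmbfit{K}\rangle$, the same typed domain $\mathcal{A}$, and the tuple bridge $\tau$ read on objects. This fixes $\mathcal{M}$ over exactly the schema $\mathcal{S}$ chosen above.

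The one substantive obligation is to verify satisfaction $\mathcal{M}{\;\models_{\mathcal{S}}\;}\mathcal{T}$. Here I would use that the database provides the action of $\mathrmbfit{T}$ on \emph{morphisms}, not merely on objects: $\mathrmbfit{T}:\mathrmbf{R}^{\mathrm{op}}\to\mathrmbf{Tbl}(\mathcal{A})$ is precisely a tabular interpretation passage on the predicate context. Thus the ``other hand'' direction of the (Key) Prop.\ref{sat:tbl:interp} applies verbatim: the adjoint flow of $\mathcal{A}$-tables shows that $\mathcal{M}$ satisfies every abstract constraint $r'{\;\xrightarrow{p}\;}r$ in $\mathrmbf{R}$, equivalently every constraint of the companion formal specification $\widehat{\mathrmbf{R}}$. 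This is exactly the inclusion $\mathcal{M}^{\mathcal{S}}\sqsupseteq\widehat{\mathrmbf{R}}$ defining abstract satisfaction in Def.\ref{satis:abs:spec}. With satisfaction in hand, $\langle\mathcal{S},\mathcal{M},\mathcal{T}\rangle$ is a (lax) sound logic by Def.\ref{def:snd:log}, lying in $\mathring{\mathrmbf{Snd}}$.

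I expect the main obstacle to be one of compatibility bookkeeping rather than genuine difficulty: one must confirm that the object-level tabular interpretation $R\xrightarrow{T_{\mathcal{M}}}\mathrmbf{Tbl}(\mathcal{A})$ extracted for $\mathcal{M}$ by Prop.\ref{prop:db:to:struc} is the restriction to objects of the very passage $\mathrmbfit{T}$ whose morphism-action witnesses satisfaction. Once this identification is made explicit, so that the structure and the satisfaction witness are built from a single interpretation diagram, the proposition follows immediately, being essentially the inverse of the passage Prop.\ref{prop:snd:log:2:db} that turns a sound logic into a database.
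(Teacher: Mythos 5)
Your proposal is correct and follows essentially the same route as the paper's own proof: the structure comes from Prop.\,\ref{prop:db:to:struc}, the specification is identified with the database schema, the schema is its constraint-free aspect, and satisfaction is obtained by applying the ``other hand'' direction of the (Key) Prop.\,\ref{sat:tbl:interp} to the database's interpretation passage $\mathrmbf{R}^{\mathrm{op}}\!\xrightarrow{\mathrmbfit{T}}\mathrmbf{Tbl}(\mathcal{A})$. Your explicit compatibility remark (that the object part of $\mathrmbfit{T}$ restricts to the structure's tabular interpretation $T_{\mathcal{M}}$) is a point the paper leaves implicit, but it does not change the argument.
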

\begin{proof}
The schema 
$\mathcal{S}={\langle{R,\sigma,X}\rangle}$ 
is the structure schema (mentioned above), 
the constraint-free aspect of the database schema
${\langle{\mathrmbf{R},\mathrmbfit{S},X}\rangle}$.
The  $\mathcal{S}$-structure
$\mathcal{M} = {\langle{\mathcal{E},\sigma,\tau,\mathcal{A}}\rangle}$
is defined in Prop.~\ref{prop:db:to:struc} above.
The (abstract) 
$\mathcal{S}$-specification
$\mathcal{T} = {\langle{\mathrmbf{R},\mathrmbfit{S},X}\rangle}$ 
is the same as
the database schema,
as discussed in \S\,\ref{sub:sec:spec}.
By Prop\;\ref{sat:tbl:interp} in \S\,\ref{sub:sec:spec:sat},
the 
tabular interpretation passage
$\mathrmbf{R}^{\mathrm{op}}\!\xrightarrow{\mathrmbfit{T}\;}\mathrmbf{Tbl}(\mathcal{A})$
demonstrates satisfaction
$\mathcal{M}{\;\models_{\mathcal{S}}\;}\mathcal{T}$.
\hfill\rule{5pt}{5pt}
\end{proof}
%

%
\newpage
\subsection{Database Morphisms.}\label{sub:sec:db:mor}

A \texttt{FOLE} database morphism, 
with constant type domain morphism
${\langle{f,g}\rangle} : \mathcal{A}_{2} \rightleftarrows \mathcal{A}_{1}$, 
is a 
\texttt{FOLE} database morphism
${\langle{\mathrmbfit{R},\xi}\rangle} : 
{\langle{\mathrmbf{R}_{2},\mathrmbfit{T}_{2}}\rangle} \leftarrow 
{\langle{\mathrmbf{R}_{1},\mathrmbfit{T}_{1}}\rangle}$,
whose tabular interpretation bridge
{\footnotesize{${\mathrmbfit{T}_{2}
{\,\xLeftarrow{\;\,\xi\,}\,}
\mathrmbfit{R}{\,\circ\,}\mathrmbfit{T}_{1}}$}}
factors
(Fig.\,\ref{fig:db:mor}) 
%
\begin{equation}\label{eqn:db:mor:def}
\xi = (\grave{\psi} \circ \mathrmbfit{inc}_{\mathcal{A}_{1}}) 
\bullet (\mathrmbfit{T}_{2} \circ \grave{\chi}_{{\langle{f,g}\rangle}})
\end{equation}
through the fiber adjunction 
$\mathrmbf{Tbl}(\mathcal{A}_{2})
\xleftarrow{\acute{\mathrmbfit{tbl}}_{{\langle{f,g}\rangle}}\;\dashv\;\grave{\mathrmbfit{tbl}}_{{\langle{f,g}\rangle}}}
\mathrmbf{Tbl}(\mathcal{A}_{1})$
%
\footnote{
\label{tbl:fbr:adj}
There is a table fiber adjunction
$\mathrmbf{Tbl}(\mathcal{A}_{2})
\xleftarrow{{\langle{
\acute{\mathrmbfit{tbl}}_{{\langle{f,g}\rangle}}{\!\dashv\,}\grave{\mathrmbfit{tbl}}_{{\langle{f,g}\rangle}}
}\rangle}}
\mathrmbf{Tbl}(\mathcal{A}_{1})$
representing tabular flow along a type domain morphism
$\mathcal{A}_{2}\xrightleftharpoons{{\langle{f,g}\rangle}}\mathcal{A}_{1}$.
See \S\,3.4.2 of Kent \cite{kent:fole:era:tbl}.
The fiber passage
$\mathrmbf{Tbl}(\mathcal{A}_{2})
\xleftarrow{\acute{\mathrmbfit{tbl}}_{{\langle{f,g}\rangle}}}
\mathrmbf{Tbl}(\mathcal{A}_{1})$
is define in terms of 
the tuple bridge
{\footnotesize{${f^{\ast}}^{\mathrm{op}}{\circ\;}\mathrmbfit{tup}_{\mathcal{A}_{2}}
\xLeftarrow{\;\acute{\tau}_{{\langle{f,g}\rangle}}\;}
\mathrmbfit{tup}_{\mathcal{A}_{1}}$}}
and the substitution function
$\mathrmbf{List}(X_{2})\xleftarrow{f^{\ast}}\mathrmbf{List}(X_{1})$.
The adjoint fiber passage
$\mathrmbf{Tbl}(\mathcal{A}_{2})
\xrightarrow{\grave{\mathrmbfit{tbl}}_{{\langle{f,g}\rangle}}}
\mathrmbf{Tbl}(\mathcal{A}_{1})$
is define in terms the adjoints,
the tuple bridge
{\footnotesize{$\mathrmbfit{tup}_{\mathcal{A}_{2}}
\xLeftarrow{\;\grave{\tau}_{{\langle{f,g}\rangle}}\;}
{\scriptstyle\sum}_{f}^{\mathrm{op}}{\circ\;}\mathrmbfit{tup}_{\mathcal{A}_{1}}$}}
and the existential quantifier function
$\mathrmbf{List}(X_{2})\xrightarrow{{\scriptscriptstyle\sum}_{f}}\mathrmbf{List}(X_{1})$.}
\comment{Fibered by signature over the adjunction
$\mathrmbf{List}(X_{2})
\xrightarrow{{\langle{{{\Sigma}_{f}}\;\dashv\;{f^{\ast}}}\rangle}}
\mathrmbf{List}(X_{2})$.}
in terms of 
\begin{itemize}
\item 
some bridge
$\grave{\psi} : 
\mathrmbfit{T}_{2}\circ\grave{\mathrmbfit{tbl}}_{{\langle{f,g}\rangle}}
\Leftarrow 
\mathrmbfit{R}^{\mathrm{op}}\circ\mathrmbfit{T}_{1}$
and 
\item 
the inclusion bridge
$
\grave{\chi}_{{\langle{f,g}\rangle}} : 
\mathrmbfit{inc}_{\mathcal{A}_{2}}
\Leftarrow
\grave{\mathrmbfit{tbl}}_{{\langle{f,g}\rangle}}\circ\mathrmbfit{inc}_{\mathcal{A}_{1}}$
%
\footnote{Equivalently, 
in terms of their \textbf{levo} bridge adjoints
in Tbl.\,\ref{tbl:adjoints} of \S\,\ref{sub:sec:adj:components}.}
.
\end{itemize}
%
\comment{
For any 
source predicate symbol $r_{2} \in R_{2}$
with 
target predicate symbol $r_{1} = \mathrmbfit{R}(r_{2}) \in R_{1}$, 
the $r_{2}^{\text{th}}$ component of 
the tabular interpretation bridge
{\footnotesize{${\mathrmbfit{T}_{2}
{\,\xLeftarrow{\;\,\xi\,}\,}
\mathrmbfit{R}{\,\circ\,}\mathrmbfit{T}_{1}}$}}
is the table morphism
{\footnotesize{${\mathrmbfit{T}_{2}(r_{2})
{\,\xleftarrow[{\langle{\grave{\varphi}_{r_{2}},f,g,\kappa_{r_{2}}}\rangle}]
{\;\,\xi_{r_{2}}\,}\,}\mathrmbfit{T}_{1}(r_{1})}$}}
which factors as
\begin{equation}\label{eqn:db:mor:factor}
\mathrmbfit{T}_{2}(r_{2})
\xleftarrow{
{\text{$\grave{\chi}_{{\langle{f,g}\rangle}}$}}_{r_{2}}}
{\grave{\mathrmbfit{tbl}}_{{\langle{f,g}\rangle}}}(\mathrmbfit{T}_{2}(r_{2}))
\xleftarrow{\grave{\psi}_{r_{2}}} 
\mathrmbfit{T}_{1}(r_{1}).
\end{equation}
%
%
}
We normally just use the bridge restriction $\grave{\psi}$ for the database morphism.
The original definition can be computed with the factorization in 
Disp.\,\ref{eqn:db:mor:def}.

\begin{figure}
\begin{center}
{{\begin{tabular}{@{\hspace{5pt}}c@{\hspace{15pt}}c@{\hspace{5pt}}c@{\hspace{5pt}}}
{{\begin{tabular}[b]{c}
\setlength{\unitlength}{0.58pt}
\begin{picture}(80,160)(5,12)
\put(5,160){\makebox(0,0){\footnotesize{$\mathrmbf{R}_{2}^{\mathrm{op}}$}}}
\put(85,160){\makebox(0,0){\footnotesize{$\mathrmbf{R}_{1}^{\mathrm{op}}$}}}
\put(40,15){\makebox(0,0){\normalsize{$\mathrmbf{Tbl}$}}}
\put(45,172){\makebox(0,0){\scriptsize{$\mathrmbfit{R}^{\mathrm{op}}$}}}
\put(0,100){\makebox(0,0)[r]{\footnotesize{$\mathrmbfit{T}_{2}$}}}
\put(81,100){\makebox(0,0)[l]{\footnotesize{$\mathrmbfit{T}_{1}$}}}
\put(40,105){\makebox(0,0){\shortstack{\normalsize{$\xLeftarrow{\;\;\;\xi\;\;}$}}}}
\put(15,160){\vector(1,0){50}}
\qbezier(,150)(0,80)(25,30)\put(29,24){\vector(2,-3){0}}
\qbezier(80,150)(80,80)(55,30)\put(51,24){\vector(-2,-3){0}}
\end{picture}
\end{tabular}}}
&
{{\begin{tabular}[b]{c}
\setlength{\unitlength}{0.58pt}
\begin{picture}(80,160)(0,0)
\put(20,90){\makebox(0,0){\normalsize{$=$}}}
\end{picture}
\end{tabular}}}
&
{{\begin{tabular}[b]{c}
\setlength{\unitlength}{0.58pt}
\begin{picture}(120,160)(0,10)
\put(5,160){\makebox(0,0){\footnotesize{$\mathrmbf{R}_{2}^{\mathrm{op}}$}}}
\put(125,160){\makebox(0,0){\footnotesize{$\mathrmbf{R}_{1}^{\mathrm{op}}$}}}
\put(0,80){\makebox(0,0){\footnotesize{$\mathrmbf{Tbl}(\mathcal{A}_{2})$}}}
\put(120,80){\makebox(0,0){\footnotesize{$\mathrmbf{Tbl}(\mathcal{A}_{1})$}}}
\put(60,5){\makebox(0,0){\normalsize{$\mathrmbf{Tbl}$}}}
\put(65,172){\makebox(0,0){\scriptsize{$\mathrmbfit{R}^{\mathrm{op}}$}}}
\put(-5,125){\makebox(0,0)[r]{\scriptsize{$\mathrmbfit{T}_{2}$}}}
\put(125,125){\makebox(0,0)[l]{\scriptsize{$\mathrmbfit{T}_{1}$}}}
\put(60,92){\makebox(0,0){\scriptsize{$\grave{\mathrmbfit{tbl}}_{{\langle{f,g}\rangle}}$}}}
\put(24,38){\makebox(0,0)[r]{\scriptsize{$\mathrmbfit{inc}_{\mathcal{A}_{2}}$}}}
\put(97,38){\makebox(0,0)[l]{\scriptsize{$\mathrmbfit{inc}_{\mathcal{A}_{1}}$}}}
\put(60,130){\makebox(0,0){\shortstack{
\scriptsize{$\grave{\psi}$}\\\large{$\Longleftarrow$}}}}
\put(60,54){\makebox(0,0){\shortstack{\footnotesize{$\xLeftarrow{\grave{\chi}_{{\langle{f,g}\rangle}}}$}}}}
\put(20,160){\vector(1,0){80}}
\put(35,80){\vector(1,0){50}}
\put(0,145){\vector(0,-1){50}}
\put(120,145){\vector(0,-1){50}}
\put(9,68){\vector(3,-4){38}}
\put(111,68){\vector(-3,-4){38}}
\end{picture}
\end{tabular}}}
\end{tabular}}}
\end{center}
\caption{Database Morphism}
\label{fig:db:mor}
\end{figure}
\begin{figure}
\begin{center}
\begin{tabular}{c}
\begin{tabular}{c@{\hspace{50pt}}c}
\\
\textbf{levo} & \textbf{dextro}
\\&\\&\\
{{\begin{tabular}[b]{c}
\setlength{\unitlength}{0.55pt}
\begin{picture}(120,80)(0,0)
\put(0,80){\makebox(0,0){\footnotesize{$\mathrmbf{Tbl}(\mathcal{A}_{2})$}}}
\put(125,80){\makebox(0,0){\footnotesize{$\mathrmbf{Tbl}(\mathcal{A}_{1})$}}}
\put(60,5){\makebox(0,0){\footnotesize{$\mathrmbf{Tbl}$}}}
\put(60,92){\makebox(0,0){\scriptsize{$\acute{\mathrmbfit{tbl}}_{{\langle{f,g}\rangle}}$}}}
\put(24,38){\makebox(0,0)[r]{\scriptsize{$\mathrmbfit{inc}_{\mathcal{A}_{2}}$}}}
\put(97,38){\makebox(0,0)[l]{\scriptsize{$\mathrmbfit{inc}_{\mathcal{A}_{1}}$}}}
\put(60,54){\makebox(0,0){\shortstack{\scriptsize{$\acute{\chi}_{{\langle{f,g}\rangle}}$}\\\large{$\Longleftarrow$}}}}
\put(85,80){\vector(-1,0){50}}
\put(10,67){\vector(3,-4){38}}
\put(111,68){\vector(-3,-4){38}}
\end{picture}
\end{tabular}}}
&
{{\begin{tabular}[b]{c}
\setlength{\unitlength}{0.55pt}
\begin{picture}(120,80)(0,0)
\put(0,80){\makebox(0,0){\footnotesize{$\mathrmbf{Tbl}(\mathcal{A}_{2})$}}}
\put(125,80){\makebox(0,0){\footnotesize{$\mathrmbf{Tbl}(\mathcal{A}_{1})$}}}
\put(60,5){\makebox(0,0){\footnotesize{$\mathrmbf{Tbl}$}}}
\put(60,92){\makebox(0,0){\scriptsize{$\grave{\mathrmbfit{tbl}}_{{\langle{f,g}\rangle}}$}}}
\put(24,38){\makebox(0,0)[r]{\scriptsize{$\mathrmbfit{inc}_{\mathcal{A}_{2}}$}}}
\put(97,38){\makebox(0,0)[l]{\scriptsize{$\mathrmbfit{inc}_{\mathcal{A}_{1}}$}}}
\put(60,54){\makebox(0,0){\shortstack{\scriptsize{$\grave{\chi}_{{\langle{f,g}\rangle}}$}\\\large{$\Longleftarrow$}}}}
\put(35,80){\vector(1,0){50}}
\put(9,68){\vector(3,-4){38}}
\put(111,68){\vector(-3,-4){38}}
\end{picture}
\end{tabular}}}
\\
\multicolumn{2}{c}{{\scriptsize{$\mathcal{A}_{2}\xrightleftharpoons{\;{\langle{f,g}\rangle}\;}\mathcal{A}_{1}$}}}
\end{tabular}
\\\\
{\scriptsize\setlength{\extrarowheight}{4pt}$\begin{array}{|@{\hspace{5pt}}l@{\hspace{15pt}}l@{\hspace{5pt}}|}
\multicolumn{1}{l}{\text{\bfseries levo}}
& 
\multicolumn{1}{l}{\text{\bfseries dextro}} 
\\ \hline
\acute{\chi}_{{\langle{f,g}\rangle}} : 
\acute{\mathrmbfit{tbl}}_{{\langle{f,g}\rangle}}\circ\mathrmbfit{inc}_{\mathcal{A}_{2}}\Leftarrow\mathrmbfit{inc}_{\mathcal{A}_{1}}
&
\grave{\chi}_{{\langle{f,g}\rangle}} : 
\mathrmbfit{inc}_{\mathcal{A}_{2}}\Leftarrow\grave{\mathrmbfit{tbl}}_{{\langle{f,g}\rangle}}\circ\mathrmbfit{inc}_{\mathcal{A}_{1}}
\\
\acute{\chi}_{{\langle{f,g}\rangle}} =
(\eta_{{\langle{f,g}\rangle}}\circ\mathrmbfit{inc}_{\mathcal{A}_{1}})
\bullet
(\acute{\mathrmbfit{tbl}}_{{\langle{f,g}\rangle}}\circ\grave{\chi}_{{\langle{f,g}\rangle}})
&
\grave{\chi}_{{\langle{f,g}\rangle}} =
(\grave{\mathrmbfit{tbl}}_{{\langle{f,g}\rangle}}\circ\acute{\chi}_{{\langle{f,g}\rangle}})
\bullet
(\varepsilon_{{\langle{f,g}\rangle}}\circ\mathrmbfit{inc}_{\mathcal{A}_{2}})
\rule[-7pt]{0pt}{10pt}
\\\hline
\end{array}$}
\end{tabular}
\end{center}
\caption{Inclusion Bridge: Tables}
\label{fig:incl:bridge:tbl}
\end{figure}
The inclusion bridges for the table fiber adjunction
%
\footnote{Notation from \S~3.4.2.}
%
\begin{center}
{{\begin{tabular}[b]{c}
\setlength{\unitlength}{0.6pt}
\begin{picture}(160,40)(0,0)
\put(0,20){\makebox(0,0){\footnotesize{$\mathrmbf{Tbl}(\mathcal{A}_{2})$}}}
\put(160,20){\makebox(0,0){\footnotesize{$\mathrmbf{Tbl}(\mathcal{A}_{1})$}}}
\put(80,40){\makebox(0,0){\scriptsize{$\acute{\mathrmbfit{tbl}}_{{\langle{f,g}\rangle}}$}}}
\put(80,0){\makebox(0,0){\scriptsize{$\grave{\mathrmbfit{tbl}}_{{\langle{f,g}\rangle}}$}}}
\put(55,20){\makebox(0,0){\tiny{$\eta_{{\langle{f,g}\rangle}}$}}}
\put(80,20){\makebox(0,0){\scriptsize{${\;\dashv\;}$}}}
\put(105,20){\makebox(0,0){\tiny{$\varepsilon_{{\langle{f,g}\rangle}}$}}}
\put(120,30){\vector(-1,0){80}}
\put(40,10){\vector(1,0){80}}
\end{picture}
\end{tabular}}}
\end{center}
are defined by
\[\mbox{\footnotesize{
$\overset{\textstyle{\acute{\mathrmbfit{tbl}}_{{\langle{f,g}\rangle}}(\mathcal{T}_{1})}}
{\overbrace{\langle{f^{\ast}(I_{1},s_{1}),{\scriptscriptstyle\sum}_{\acute{\tau}_{{\langle{f,g}\rangle}}(I_{1},s_{1})}(K_{1},t_{1})}\rangle}}
\xleftarrow
[{{\langle{\varepsilon_{f}(I_{1},s_{1}),f,g,1_{K_{1}}}\rangle}}]
{\acute{\chi}_{{\langle{f,g}\rangle}}(\mathcal{T}_{1})}
\overset{\textstyle{\mathcal{T}_{1}}}
{\overbrace{\langle{I_{1},s_{1},K_{1},t_{1}}\rangle}}$,
}}\]
\[\mbox{\footnotesize{$
\overset{\textstyle{\mathcal{T}_{2}}}
{\overbrace{\langle{I_{2},s_{2},K_{2},t_{2}}\rangle}}
\xrightarrow
[{{\langle{1_{I_{2}},f,g,\hat{k}}\rangle}}]
{\grave{\chi}_{{\langle{f,g}\rangle}}(\mathcal{T}_{2})}
\overset{\textstyle{\grave{\mathrmbfit{tbl}}_{{\langle{f,g}\rangle}}(\mathcal{T}_{2})}}
{\overbrace{\langle{{\scriptstyle\sum}_{f}(I_{2},s_{2}),(\grave{\tau}_{{\langle{f,g}\rangle}}(I_{2},s_{2}))^{\ast}(K_{2},t_{2})}\rangle}}
$.}}\]

\newpage

\begin{definition}\label{def:db:mor}
For any two databases
$\mathcal{R}_{2} = {\langle{\mathrmbf{R}_{2},\mathrmbfit{T}_{2},\mathcal{A}_{2}}\rangle}$
and
$\mathcal{R}_{1} = {\langle{\mathrmbf{R}_{1},\mathrmbfit{T}_{1},\mathcal{A}_{1}}\rangle}$,
a database morphism, 
with constant type domain morphism
$\mathcal{A}_{2}\xrightleftharpoons{{\langle{f,g}\rangle}}\mathcal{A}_{1}$, 
(Fig.\,\ref{fig:db:mor}) 
\begin{center}
{\normalsize{
$\mathcal{R}_{2} = {\langle{\mathrmbf{R}_{2},\mathrmbfit{T}_{2},\mathcal{A}_{2}}\rangle} 
\xleftarrow{{\langle{\mathrmbfit{R},\grave{\psi},f,g}\rangle}}
{\langle{\mathrmbf{R}_{1},\mathrmbfit{T}_{1},\mathcal{A}_{1}}\rangle} = \mathcal{R}_{1}$
}}
\end{center}
(Tbl.\,\ref{tbl:fole:morph}
in \S\,\ref{sub:sec:adj:components})
consists of
a shape-changing relation passage 
$\mathrmbfit{R} : \mathrmbf{R}_{2} \rightarrow \mathrmbf{R}_{1}$
and
a bridge
$\grave{\psi} : 
\mathrmbfit{T}_{2}\circ\grave{\mathrmbfit{tbl}}_{{\langle{f,g}\rangle}}
\Leftarrow 
\mathrmbfit{R}^{\mathrm{op}}\circ\mathrmbfit{T}_{1}$.
\end{definition}
The 
subcontext of 
{\ttfamily FOLE} relational databases 
(with constant type domains
and constant type domain morphisms) 
is denoted by $\mathrmbf{Db}\subseteq\mathrmbf{DB}$.
%
%
\footnote{See \S\,4.2.2 in the paper ``The {\ttfamily FOLE} Database'' \cite{kent:fole:era:db}.}
%

\newpage

\begin{definition}\label{def:db:mor:proj}
Using projections
(Fig.\,\ref{fig:db:mor:typ:dom}),
a \texttt{FOLE} database morphism, 
with constant type domain morphism
$\mathcal{A}_{2}\xrightleftharpoons{{\langle{f,g}\rangle}}\mathcal{A}_{1}$, 
%
\[\mbox{\footnotesize$
\mathcal{R}_{2}={\langle{\mathrmbf{R}_{2},\mathrmbfit{S}_{2},\mathcal{A}_{2},\mathrmbfit{K}_{2},\tau_{2}}\rangle}
\xleftarrow{{\langle{\mathrmbfit{R},\kappa,\grave{\varphi},f,g}\rangle}}
{\langle{\mathrmbf{R}_{1},\mathrmbfit{S}_{1},\mathcal{A}_{1},\mathrmbfit{K}_{1},\tau_{1}}\rangle}=\mathcal{R}_{1}
$\normalsize}\]
(Tbl.\,\ref{tbl:fole:morph}
in \S\,\ref{sub:sec:adj:components})
consists of 
a relation passage
$\mathrmbf{R}_{2}\xrightarrow{\,\mathrmbfit{R}\;}\mathrmbf{R}_{1}$ 
as above,
and
%
\begin{itemize}
\item 
a schema bridge
$
\grave{\varphi}
=
\grave{\psi}^{\mathrm{op}}{\circ\;}\mathrmbfit{sign}
:
\mathrmbfit{S}_{2}{\;\circ\;}{\scriptstyle\sum}_{f}
\Rightarrow
\mathrmbfit{R}{\;\circ\;}\mathrmbfit{S}_{1}$
\footnote{This defines a database schema morphism
\newline\mbox{}\hfill
$\mathcal{T}_{2} = {\langle{\mathrmbf{R}_{2},\mathrmbfit{S}_{2},X_{2}}\rangle}
\xrightleftharpoons{{\langle{\mathrmbfit{R},\grave{\varphi},f}\rangle}}
{\langle{\mathrmbf{R}_{1},\mathrmbfit{S}_{1},X_{1}}\rangle} = \mathcal{T}_{1}$,
\hfill\mbox{}\newline
which is the same as the (abstract) specification morphism in
Def.\,\ref{def:abs:spec:mor}
of \S\,\ref{sub:sec:spec:mor}.}
,
and
\item 
a key bridge
$\kappa
= \grave{\psi}{\;\circ\;}\mathrmbfit{key}
:
\mathrmbfit{K}_{2}
\Leftarrow
\mathrmbfit{R}^{\mathrm{op}}{\circ\;}\mathrmbfit{K}_{1}$
consisting of an $R_{2}$-indexed collection
$\{
\mathrmbfit{K}_{2}(r_{2})\xleftarrow{\kappa_{r_{2}}}\mathrmbfit{K}_{1}(r(r_{2})) 
\mid r_{2} \in R_{2}
\}$
of key functions.
\footnote{This defines a lax entity infomorphism
$\mathrmbfit{K}_{2}\xLeftarrow{\;\,\kappa\;}{r}{\;\circ\;}\mathrmbfit{K}_{1}$
consisting of
an $R_{2}$-indexed collection of key functions 
$\bigl\{
\mathrmbfit{K}_{2}(r_{2})\xleftarrow{\kappa_{r_{2}}}\mathrmbfit{K}_{1}(r(r_{2})) 
\mid r_{2} \in R_{2}
\bigr\}$}
%
\end{itemize}
These components satisfy the condition
%
\begin{equation}\label{eqn:db:mor:typ:dom:factor}
\kappa{\;\bullet\;}\tau_{2}
= 
(\mathrmbfit{R}^{\mathrm{op}}\!{\circ\;}\tau_{1})
{\;\bullet\;}(\grave{\varphi}^{\mathrm{op}}\!{\circ\;}\mathrmbfit{tup}_{\mathcal{A}_{1}})
{\;\bullet\;}(\mathrmbfit{S}_{2}^{\mathrm{op}}{\;\circ\;}\grave{\tau}_{{\langle{f,g}\rangle}}),
\end{equation}
%
as pictured in the following bridge diagram.
\begin{center}
{{\begin{tabular}{c}
\setlength{\unitlength}{0.66pt}
\begin{picture}(240,140)(0,-40)
\put(0,80){\makebox(0,0){\scriptsize{$\mathrmbfit{R}^{\mathrm{op}}{\circ\;}\mathrmbfit{K}_{1}$}}}
\put(0,0){\makebox(0,0){\scriptsize{$\mathrmbfit{K}_{2}$}}}
\put(125,80){\makebox(0,0){\scriptsize{$
\mathrmbfit{R}^{\mathrm{op}}{\circ\;}\mathrmbfit{S}_{1}^{\mathrm{op}}
{\circ\;}\mathrmbfit{tup}_{\mathcal{A}_{1}}$}}}
\put(125,0){\makebox(0,0){\scriptsize{$
\mathrmbfit{S}_{2}^{\mathrm{op}}
{\circ\;}\mathrmbfit{tup}_{\mathcal{A}_{2}}$}}}
\put(125,45){\makebox(0,0){\scriptsize{$({\grave{\varphi}}^{\mathrm{op}}{\circ\;}\mathrmbfit{tup}_{\mathcal{A}_{1}})
{\;\bullet\;}(\mathrmbfit{S}_{2}^{\mathrm{op}}{\circ\;}\grave{\tau}_{{\langle{f,g}\rangle}})$}}}
\put(50,96){\makebox(0,0){\scriptsize{$\mathrmbfit{R}^{\mathrm{op}}{\circ\;}\tau_{1}$}}}
\put(50,-10){\makebox(0,0){\scriptsize{$\tau_{2}$}}}
\put(-10,40){\makebox(0,0)[r]{\scriptsize{$\kappa$}}}
\put(50,80){\makebox(0,0){\large{$\xRightarrow{\;\;\;\;\;\;\;\;\;\;\;\;\;}$}}}
\put(50,0){\makebox(0,0){\large{$\xRightarrow{\;\;\;\;\;\;\;\;\;\;\;\;\;}$}}}
\put(0,40){\makebox(0,0){\large{$\bigg\Downarrow$}}}
\put(120,40){\makebox(0,0){\large{$\bigg\Downarrow$}}}
\put(220,80){\makebox(0,0){\scriptsize{$
{\scriptstyle\sum}_{f}^{\,\mathrm{op}}{\circ\;}\mathrmbfit{tup}_{\mathcal{A}_{1}}$}}}
\put(220,0){\makebox(0,0){\scriptsize{$\mathrmbfit{tup}_{\mathcal{A}_{2}}$}}}
\put(230,45){\makebox(0,0)[l]{\scriptsize{$\grave{\tau}_{{\langle{f,g}\rangle}}$}}}
\put(220,40){\makebox(0,0){\large{$\bigg\Downarrow$}}}
\put(0,-20){\makebox(0,0){\scriptsize{\textit{lax entity}}}}
\put(0,-30){\makebox(0,0){\scriptsize{\textit{infomorphism}}}}
\put(220,-20){\makebox(0,0){\scriptsize{\textit{tuple attribute}}}}
\put(220,-30){\makebox(0,0){\scriptsize{\textit{infomorphism}}}}
\end{picture}
\end{tabular}}}
\end{center}
\end{definition}
\begin{table}
\begin{center}
{\fbox{\footnotesize{\begin{minipage}{280pt}
\begin{description}
\item[ ] 
{\footnotesize{$
\mathcal{R}_{2} = 
{\langle{\mathrmbf{R}_{2},\mathrmbfit{K}_{2},\mathrmbfit{S}_{2},\tau_{2},\mathcal{A}_{2}}\rangle}
\xleftarrow{{\langle{\mathrmbfit{R},\kappa,\grave{\varphi},f,g}\rangle}}
{\langle{\mathrmbf{R}_{1},\mathrmbfit{K}_{1},\mathrmbfit{S}_{1},\tau_{1},\mathcal{A}_{1}}\rangle}
 = \mathcal{R}_{1}
$}}
\begin{itemize}
\comment{
\item schema morphism
\hfill
{\scriptsize{$\mathcal{S}_{2}={\langle{R_{2},\sigma_{2},X_{2}}\rangle}
\xRightarrow{\;{\langle{r,\grave{\varphi},f}\rangle}\;}
{\langle{R_{1},\sigma_{1},X_{1}}\rangle}=\mathcal{S}_{1}$}}
\item (lax) entity infomorphism
\hfill
{\scriptsize{$\mathcal{K}_{2}={\langle{\mathrmbf{R}_{2},\mathrmbfit{K}_{2}}\rangle} 
\xrightarrow{\;{\langle{\mathrmbfit{R},\kappa}\rangle}\;}
{\langle{\mathrmbf{R}_{1},\mathrmbfit{K}_{1}}\rangle}=\mathcal{K}_{1}$}}
}
\item 
schema bridge
\hfill
{\footnotesize{$\grave{\varphi} 
:
\mathrmbfit{S}_{2}{\;\circ\;}{\scriptstyle\sum}_{f}
\Rightarrow
\mathrmbfit{R}{\;\circ\;}\mathrmbfit{S}_{1}$}}
\item 
a key bridge 
\hfill
{\footnotesize{$\kappa 
:
\mathrmbfit{K}_{2}\Leftarrow\mathrmbfit{R}^{\mathrm{op}}{\circ\;}\mathrmbfit{K}_{1}$}}
\item
constraint 
\hfill
{{$\kappa\bullet\tau_{2}
= (\mathrmbfit{R}^{\mathrm{op}}\!{\circ\;}\tau_{1})
{\;\bullet\;}(\grave{\varphi}^{\mathrm{op}}\!{\circ\;}\mathrmbfit{tup}_{\mathcal{A}_{1}})
{\;\bullet\;}(\mathrmbfit{S}_{2}^{\mathrm{op}}{\;\circ\;}\grave{\tau}_{{\langle{f,g}\rangle}})$}}
\end{itemize}
\end{description}
\end{minipage}}}}
\end{center}
\caption{Database Morphism}
\label{tbl:db:mor}
\end{table}
\begin{center}
\begin{figure}
{\begin{tabular}{@{\hspace{106pt}}c}
\setlength{\unitlength}{0.55pt}
\begin{picture}(240,200)(0,-10)
\put(125,190){\makebox(0,0){\scriptsize{$\mathrmbfit{R}^{\mathrm{op}}$}}}
\put(120,60){\makebox(0,0){\scriptsize{${\scriptstyle\sum}_{f}^{\mathrm{op}}
$}}}
\put(120,0){\makebox(0,0){\scriptsize{$\mathrmbfit{id}$}}}
\put(30,180){\vector(1,0){180}}
\put(95,60){\vector(1,0){45}}
\put(25,0){\vector(1,0){190}}
\put(25,0){\vector(-1,0){0}}
\put(120,146){\makebox(0,0){\shortstack{\scriptsize{$\;\grave{\varphi}^{\mathrm{op}}$}
\\\large{$\Longleftarrow$}}}}
\put(120,92){\makebox(0,0){\large{$\overset{\rule[-2pt]{0pt}{5pt}\kappa}
{\Longleftarrow}$}}}
\put(120,35){\makebox(0,0){\large{$
\overset{\grave{\tau}_{{\langle{f,g}\rangle}}}
{\Longleftarrow}
$}}}
\qbezier[60](42,140)(110,140)(180,140)
\qbezier[150](-75,85)(120,85)(300,85)
\put(-8,0){\begin{picture}(0,0)(0,0)
\put(8,180){\makebox(0,0){\footnotesize{$\mathrmbf{R}_{2}^{\mathrm{op}}$}}}
\put(0,118){\makebox(0,0){\footnotesize{$\mathrmbf{Tbl}(\mathcal{A}_{2})$}}}
\put(65,60){\makebox(0,0){\footnotesize{$\mathrmbf{List}(X_{2})^{\mathrm{op}}$}}}
\put(0,0){\makebox(0,0){\footnotesize{$\mathrmbf{Set}$}}}
\put(-6,148){\makebox(0,0)[r]{\scriptsize{$\mathrmbfit{T}_{2}$}}}
\put(-75,95){\makebox(0,0)[r]{\scriptsize{$\mathrmbfit{K}_{2}$}}}
\put(55,145){\makebox(0,0)[l]{\scriptsize{$\mathrmbfit{S}_{2}^{\mathrm{op}}$}}}
\put(-37,72){\makebox(0,0)[r]{\scriptsize{$\mathrmbfit{key}_{\mathcal{A}_{2}}$}}}
\put(38,93){\makebox(0,0)[l]{\scriptsize{$\mathrmbfit{sign}_{\mathcal{A}_{2}}^{\mathrm{op}}$}}}
\put(36,26){\makebox(0,0)[l]{\scriptsize{$\mathrmbfit{tup}_{\mathcal{A}_{2}}$}}}
\put(0,60){\makebox(0,0){\shortstack{\scriptsize{$
\;\tau_{\mathcal{A}_{2}}$}\\\large{$\Longrightarrow$}}}}
\put(0,165){\vector(0,-1){34}}
\put(15,105){\vector(1,-1){30}}
\put(45,45){\vector(-1,-1){30}}
\qbezier(-18,167)(-120,90)(-20,13)\put(-20,13){\vector(1,-1){0}}
\qbezier(-12,105)(-60,60)(-12,15)\put(-12,15){\vector(1,-1){0}}
\qbezier(18,167)(70,140)(66,76)\put(66,76){\vector(0,-1){0}}
\end{picture}}
\put(233,0){\begin{picture}(0,0)(0,0)
\put(8,180){\makebox(0,0){\footnotesize{$\mathrmbf{R}_{1}^{\mathrm{op}}$}}}
\put(0,118){\makebox(0,0){\footnotesize{$\mathrmbf{Tbl}(\mathcal{A}_{1})$}}}
\put(-48,60){\makebox(0,0){\footnotesize{$\mathrmbf{List}(X_{1})^{\mathrm{op}}$}}}
\put(0,0){\makebox(0,0){\footnotesize{$\mathrmbf{Set}$}}}
\put(6,148){\makebox(0,0)[l]{\scriptsize{$\mathrmbfit{T}_{1}$}}}
\put(75,95){\makebox(0,0)[l]{\scriptsize{$\mathrmbfit{K}_{1}$}}}
\put(-50,145){\makebox(0,0)[r]{\scriptsize{$\mathrmbfit{S}_{1}^{\mathrm{op}}$}}}
\put(37,72){\makebox(0,0)[l]{\scriptsize{$\mathrmbfit{key}_{\mathcal{A}_{1}}$}}}
\put(-32,93){\makebox(0,0)[r]{\scriptsize{$\mathrmbfit{sign}_{\mathcal{A}_{1}}^{\mathrm{op}}$}}}
\put(-36,26){\makebox(0,0)[r]{\scriptsize{$\mathrmbfit{tup}_{\mathcal{A}_{1}}$}}}
\put(0,60){\makebox(0,0){\shortstack{\scriptsize{$
\;\tau_{\mathcal{A}_{1}}$}\\\large{$\Longleftarrow$}}}}
\put(0,165){\vector(0,-1){34}}
\put(-15,105){\vector(-1,-1){30}}
\put(-45,45){\vector(1,-1){30}}
\qbezier(18,167)(120,90)(20,13)\put(20,13){\vector(-1,-1){0}}
\qbezier(12,105)(60,60)(12,15)\put(12,15){\vector(-1,-1){0}}
\qbezier(-18,167)(-70,140)(-66,76)\put(-66,76){\vector(0,-1){0}}
\end{picture}}
%
%
\end{picture}
\end{tabular}}
\caption{Database Morphism: Type Domain}
\label{fig:db:mor:typ:dom}
\end{figure}
\end{center}
%

%
\comment{
\begin{figure}
\begin{center}
{{\begin{tabular}{@{\hspace{0pt}}c@{\hspace{60pt}}c}
{{\begin{tabular}{c}
\setlength{\unitlength}{0.66pt}
\begin{picture}(240,140)(0,-40)
\put(0,80){\makebox(0,0){\scriptsize{$\mathrmbfit{R}^{\mathrm{op}}{\circ\;}\mathrmbfit{K}_{1}$}}}
\put(0,0){\makebox(0,0){\scriptsize{$\mathrmbfit{K}_{2}$}}}
\put(125,80){\makebox(0,0){\scriptsize{$
\mathrmbfit{R}^{\mathrm{op}}{\circ\;}\mathrmbfit{S}_{1}^{\mathrm{op}}
{\circ\;}\mathrmbfit{tup}_{\mathcal{A}_{1}}$}}}
\put(125,0){\makebox(0,0){\scriptsize{$
\mathrmbfit{S}_{2}^{\mathrm{op}}
{\circ\;}\mathrmbfit{tup}_{\mathcal{A}_{2}}$}}}
\put(125,45){\makebox(0,0){\scriptsize{$({\grave{\varphi}}^{\mathrm{op}}{\circ\;}\mathrmbfit{tup}_{\mathcal{A}_{1}})
{\;\bullet\;}(\mathrmbfit{S}_{2}^{\mathrm{op}}{\circ\;}\grave{\tau}_{{\langle{f,g}\rangle}})$}}}
\put(50,96){\makebox(0,0){\scriptsize{$\mathrmbfit{R}^{\mathrm{op}}{\circ\;}\tau_{1}$}}}
\put(50,-10){\makebox(0,0){\scriptsize{$\tau_{2}$}}}
\put(-10,40){\makebox(0,0)[r]{\scriptsize{$\kappa$}}}
\put(50,80){\makebox(0,0){\large{$\xRightarrow{\;\;\;\;\;\;\;\;\;\;\;\;\;}$}}}
\put(50,0){\makebox(0,0){\large{$\xRightarrow{\;\;\;\;\;\;\;\;\;\;\;\;\;}$}}}
\put(0,40){\makebox(0,0){\large{$\bigg\Downarrow$}}}
\put(120,40){\makebox(0,0){\large{$\bigg\Downarrow$}}}
\put(220,80){\makebox(0,0){\scriptsize{$
{\scriptstyle\sum}_{f}^{\,\mathrm{op}}{\circ\;}\mathrmbfit{tup}_{\mathcal{A}_{1}}$}}}
\put(220,0){\makebox(0,0){\scriptsize{$\mathrmbfit{tup}_{\mathcal{A}_{2}}$}}}
\put(230,45){\makebox(0,0)[l]{\scriptsize{$\grave{\tau}_{{\langle{f,g}\rangle}}$}}}
\put(220,40){\makebox(0,0){\large{$\bigg\Downarrow$}}}
\put(0,-20){\makebox(0,0){\scriptsize{\textit{lax entity}}}}
\put(0,-30){\makebox(0,0){\scriptsize{\textit{infomorphism}}}}
\put(220,-20){\makebox(0,0){\scriptsize{\textit{tuple attribute}}}}
\put(220,-30){\makebox(0,0){\scriptsize{\textit{infomorphism}}}}
\end{picture}
\end{tabular}}}
%
&
{{\begin{tabular}{c}
\setlength{\unitlength}{0.42pt}
\begin{picture}(240,240)(-5,-30)
\put(125,190){\makebox(0,0){\scriptsize{$\mathrmbfit{R}^{\mathrm{op}}$}}}
\put(126,70){\makebox(0,0){\scriptsize{${{\scriptstyle\sum}_{f}}^{\mathrm{op}}$}}}
\put(120,-10){\makebox(0,0){\scriptsize{$\mathrmbfit{id}$}}}
\put(30,180){\vector(1,0){180}}
\put(105,60){\vector(1,0){30}}
\put(215,0){\vector(-1,0){190}}
\put(120,150){\makebox(0,0){\shortstack{\scriptsize{
$\;\;{\grave{\varphi}}^{\mathrm{op}}$}\\\large{$\Longleftarrow$}}}}
\put(120,114.5){\makebox(0,0){\large{$\overset{\rule[-2pt]{0pt}{5pt}\kappa}{\Longleftarrow}$}}}
\put(120,33){\makebox(0,0){\shortstack{\scriptsize{$\;\grave{\tau}_{{\langle{f,g}\rangle}}$}\\\large{$\Longleftarrow$}}}}
\qbezier[150](-50,107)(115,107)(280,107)
\put(-8,0){\begin{picture}(0,0)(0,0)
\put(8,180){\makebox(0,0){\footnotesize{$\mathrmbf{R}_{2}^{\mathrm{op}}$}}}
\put(68,60){\makebox(0,0){\scriptsize{${\mathrmbf{List}(X_{2})}^{\mathrm{op}}$}}}
\put(0,0){\makebox(0,0){\footnotesize{$\mathrmbf{Set}$}}}
\put(-55,95){\makebox(0,0)[r]{\scriptsize{$\mathrmbfit{K}_{2}$}}}
\put(40,130){\makebox(0,0)[l]{\scriptsize{$\mathrmbfit{S}_{2}^{\mathrm{op}}$}}}
\put(40,26){\makebox(0,0)[l]{\scriptsize{$\mathrmbfit{tup}_{\mathcal{A}_{2}}$}}}
\put(0,85){\makebox(0,0){\shortstack{\scriptsize{$\;\tau_{2}$}\\\large{$\Longrightarrow$}}}}
\put(45,45){\vector(-1,-1){30}}
\qbezier(-18,167)(-80,90)(-20,13)\put(-20,13){\vector(1,-1){0}}
\put(12,167){\vector(1,-2){47}}
\end{picture}}
\put(240,0){\begin{picture}(0,0)(0,0)
\put(8,180){\makebox(0,0){\footnotesize{$\mathrmbf{R}_{1}^{\mathrm{op}}$}}}
\put(-52,60){\makebox(0,0){\scriptsize{${\mathrmbf{List}(X_{1})}^{\mathrm{op}}$}}}
\put(0,0){\makebox(0,0){\footnotesize{$\mathrmbf{Set}$}}}
\put(55,95){\makebox(0,0)[l]{\scriptsize{$\mathrmbfit{K}_{1}$}}}
\put(-34,130){\makebox(0,0)[r]{\scriptsize{$\mathrmbfit{S}_{1}^{\mathrm{op}}$}}}
\put(-26,26){\makebox(0,0)[r]{\scriptsize{$\mathrmbfit{tup}_{\mathcal{A}_{1}}$}}}
\put(0,85){\makebox(0,0){\shortstack{\scriptsize{$\;\tau_{1}$}\\\large{$\Longleftarrow$}}}}
\put(-45,45){\vector(1,-1){30}}
\qbezier(18,167)(80,90)(20,13)\put(20,13){\vector(-1,-1){0}}
\put(-12,167){\vector(-1,-2){47}}
\end{picture}}
%
%
\end{picture}
\end{tabular}}}
\\
\textsl{bridge perspective}
&
\textsl{passage perspective}
\end{tabular}}}
\end{center}
\caption{Database Morphism: Two Perspectives}
\label{fig:mor:comp}
\end{figure}
}
%

\newpage

\begin{proposition}\label{db:mor:2:struc:mor}
\label{prop:db:mor:to:struc:mor}
%
\comment{
Any 
{\ttfamily FOLE} database morphism in $\mathrmbf{Db}$
\newline\mbox{}\hfill
$\mathcal{R}_{2}={\langle{\mathrmbf{R}_{2},\mathrmbfit{S}_{2},\mathcal{A}_{2},\mathrmbfit{K}_{2},\tau_{2}}\rangle}
\xleftarrow{{\langle{\mathrmbfit{R},\kappa,\grave{\varphi},f,g}\rangle}}
{\langle{\mathrmbf{R}_{1},\mathrmbfit{S}_{1},\mathcal{A}_{1},\mathrmbfit{K}_{1},\tau_{1}}\rangle}=\mathcal{R}_{1}$,
\hfill\mbox{}\newline
has a companion (lax) {\ttfamily FOLE} structure morphism
$\mathcal{M}_{2}\xrightleftharpoons{{\langle{r,\kappa,\grave{\varphi},f,g}\rangle}}\mathcal{M}_{1}$
in $\mathrmbf{Struc}$.
}
%
The constraint-free aspect of a {\ttfamily FOLE} database morphism 
\newline\mbox{}\hfill
$\mathcal{R}_{2}={\langle{\mathrmbf{R}_{2},\mathrmbfit{S}_{2},\mathcal{A}_{2},\mathrmbfit{K}_{2},\tau_{2}}\rangle}
\xleftarrow{{\langle{\mathrmbfit{R},\kappa,\grave{\varphi},f,g}\rangle}}
{\langle{\mathrmbf{R}_{1},\mathrmbfit{S}_{1},\mathcal{A}_{1},\mathrmbfit{K}_{1},\tau_{1}}\rangle}=\mathcal{R}_{1}$,
\hfill\mbox{}\newline
in $\mathrmbf{Db}$
defines,
is the same as,
a (lax) {\ttfamily FOLE} structure morphism 
\newline\mbox{}\hfill
$\mathcal{M}_{2}=
{\langle{\mathcal{E}_{2},\sigma_{2},\tau_{2},\mathcal{A}_{2}}\rangle}
\xrightleftharpoons{{\langle{r,\kappa,\grave{\varphi},f,g}\rangle}}
{\langle{\mathcal{E}_{1},\sigma_{1},\tau_{1},\mathcal{A}_{1}}\rangle}=\mathcal{M}_{1}$
\hfill\mbox{}\newline
in $\mathring{\mathrmbf{Struc}}$.
\end{proposition}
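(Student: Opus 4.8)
The plan is to mirror the object-level argument of Prop.~\ref{prop:db:to:struc} one categorical level up, passing from predicates to the constraints between them. First I would recall that a database morphism in $\mathrmbf{Db}$, presented in projective form (Def.~\ref{def:db:mor:proj}), already consists of exactly the data $\langle \mathrmbfit{R}, \kappa, \grave{\varphi}, f, g\rangle$: a relation passage $\mathrmbf{R}_{2}\xrightarrow{\mathrmbfit{R}}\mathrmbf{R}_{1}$, a key bridge $\kappa : \mathrmbfit{K}_{2}\Leftarrow\mathrmbfit{R}^{\mathrm{op}}{\circ\,}\mathrmbfit{K}_{1}$, a schema bridge $\grave{\varphi} : \mathrmbfit{S}_{2}{\circ\,}{\scriptstyle\sum}_{f}\Rightarrow\mathrmbfit{R}{\circ\,}\mathrmbfit{S}_{1}$, and a type domain morphism $\mathcal{A}_{2}\xrightleftharpoons{{\langle{f,g}\rangle}}\mathcal{A}_{1}$, subject to the single bridge equation in Disp.~\ref{eqn:db:mor:typ:dom:factor}.

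Next I would take the constraint-free aspect, i.e.\ restrict attention to the objects (predicates) of $\mathrmbf{R}_{2}$ and $\mathrmbf{R}_{1}$ and forget their constraints. Under this restriction the passage $\mathrmbfit{R}$ collapses to its object function $R_{2}\xrightarrow{r}R_{1}$, while the bridges $\kappa$ and $\grave{\varphi}$ retain precisely their predicate-indexed components $\{\kappa_{r_{2}}\}$ and $\{\grave{\varphi}_{r_{2}}\}$; the type domain morphism $\langle f, g\rangle$ is untouched. By Prop.~\ref{prop:db:to:struc} the constraint-free aspects of the source and target databases are exactly the (lax) structures $\mathcal{M}_{2}$ and $\mathcal{M}_{1}$, so the remaining task is only to match the morphism data and its defining condition.

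The crux is then to expand the bridge equation Disp.~\ref{eqn:db:mor:typ:dom:factor} component-wise. Evaluating the whiskered composite $(\mathrmbfit{R}^{\mathrm{op}}{\circ}\tau_{1}){\bullet}(\grave{\varphi}^{\mathrm{op}}{\circ}\mathrmbfit{tup}_{\mathcal{A}_{1}}){\bullet}(\mathrmbfit{S}_{2}^{\mathrm{op}}{\circ}\grave{\tau}_{{\langle{f,g}\rangle}})$ at each predicate $r_{2}\in R_{2}$ yields $\tau_{1,r(r_{2})}{\cdot}\mathrmbfit{tup}_{\mathcal{A}_{1}}(\grave{\varphi}_{r_{2}}){\cdot}\grave{\tau}_{{\langle{f,g}\rangle}}(\sigma_{2}(r_{2}))$, while the left-hand side $\kappa{\bullet}\tau_{2}$ evaluates to $\kappa_{r_{2}}{\cdot}\tau_{2,r_{2}}$. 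This is precisely the condition of Disp.~\ref{lax:struc:mor:cond}, the defining equation of a (lax) structure morphism (Def.~\ref{def:lax:struc:mor}), exactly as anticipated in the footnote there. Thus the constraint-free data $\langle r, \kappa, \grave{\varphi}, f, g\rangle$ satisfies the structure-morphism condition, and conversely any (lax) structure morphism supplies the object-indexed data and the family of local conditions from which the constraint-free aspect of a database morphism is reconstructed, giving the claimed identification in $\mathring{\mathrmbf{Struc}}$.

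Finally, the main obstacle I anticipate is notational rather than conceptual: verifying that horizontal composition of bridges (the whiskerings by $\mathrmbfit{R}$, $\mathrmbfit{S}_{2}$, ${\scriptstyle\sum}_{f}$, and $\mathrmbfit{tup}_{\mathcal{A}_{1}}$) together with vertical composition ${\bullet}$ unwind correctly to the component-wise ${\cdot}$-composites at a fixed predicate, so that the global bridge identity Disp.~\ref{eqn:db:mor:typ:dom:factor} is genuinely equivalent to the family of local identities Disp.~\ref{lax:struc:mor:cond}. Since no constraint (edge) data enters either condition, this bookkeeping is routine and the passage between the two forms of data is a bijection, establishing the ``is the same as'' claim.
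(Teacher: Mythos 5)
Your proposal is correct and follows essentially the same route as the paper's own proof: both extract the constraint-free aspect of each component of the projective-form database morphism (Def.\,\ref{def:db:mor:proj}) --- the object function $r$ from $\mathrmbfit{R}$, the given type domain morphism ${\langle{f,g}\rangle}$, the schema morphism from $\grave{\varphi}$, and the lax entity infomorphism from $\kappa$ --- and then identify the predicate-indexed restriction of the bridge equation Disp.\,\ref{eqn:db:mor:typ:dom:factor} with the defining condition Disp.\,\ref{lax:struc:mor:cond} of a (lax) structure morphism. Your component-wise unwinding of the whiskered composite is exactly the content of the paper's footnoted remark that the local family is the constraint-free aspect of the global bridge condition.
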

\begin{proof}
We define the various components of the (lax) structure morphism
above, 
as defined in Def.\ref{def:lax:struc:mor} 
and pictured in Fig.\,\ref{fig:lax:struc:mor} 
of \S\,\ref{sub:sec:struc:mor:lax}.
\begin{itemize}
\item[$r$\textbf{:}]
The predicate function $R_{2}\xrightarrow{\;r\,}R_{1}$
is the constraint-free aspect of the relation passage 
$\mathrmbf{R}_{2} \xrightarrow{\mathrmbfit{R}} \mathrmbf{R}_{1}$.
\item[${\langle{f,g}\rangle}$\textbf{:}]
The type domain morphism
$\mathcal{A}_{2} 
\xrightleftharpoons{{\langle{f,g}\rangle}} 
\mathcal{A}_{1}$
is given.
\item[${\langle{r,\grave{\varphi},f}\rangle}$\textbf{:}]
The 
schema morphism
{\footnotesize{
$\mathcal{S}_{2}={\langle{R_{2},{\sigma_{2}},X_{2}}\rangle}
\xRightarrow{{\langle{r,\grave{\varphi},f}\rangle}}
{\langle{R_{1},{\sigma_{1}},X_{1}}\rangle}=\mathcal{S}_{1}$,}}
consisting of the
$R_{2}$-indexed 
collection 
of signature morphisms
$\{
{\scriptstyle\sum}_{f}({\sigma_{2}}(r_{2}))
\xrightarrow[h]{\;\grave{\varphi}_{r_{2}}\;}{\sigma_{1}}(r(r_{2}))
\mid r_{2} \in R_{2}
\}$,
is the constraint-free aspect of the database schema morphism
in Def.\,\ref{def:db:mor:proj}.
\item[${\langle{r,\kappa}\rangle}$\textbf{:}]
The lax entity infomorphism
$\mathcal{E}_{2}
 = {\langle{R_{2},\mathrmbfit{K}_{2}}\rangle} 
\xleftharpoondown{{\langle{r,\kappa}\rangle}}
{\langle{R_{1},\mathrmbfit{K}_{1}}\rangle} = 
\mathcal{E}_{1}$,
consisting of
the 
$R_{2}$-indexed 
collection of key functions
$\bigl\{{\mathrmbfit{K}_{2}}(r_{2})\xleftarrow{\kappa_{r_{2}}}{\mathrmbfit{K}_{1}}(r(r_{2}))
\mid r_{2} \in R_{2}
\bigr\}$,
is the constraint-free aspect of the key bridge
$\kappa :
\mathrmbfit{K}_{2}
\Leftarrow
\mathrmbfit{R}^{\mathrm{op}}{\circ\;}\mathrmbfit{K}_{1}$.
These components 
satisfy the condition
%
\footnote{This is the constraint-free aspect of the database morphism condition
\newline\mbox{}\hfill
{\footnotesize$
\kappa{\;\bullet\;}\tau_{2}
= 
(\mathrmbfit{R}^{\mathrm{op}}\!{\circ\;}\tau_{1})
{\;\bullet\;}(\grave{\varphi}^{\mathrm{op}}\!{\circ\;}\mathrmbfit{tup}_{\mathcal{A}_{1}})
{\;\bullet\;}(\mathrmbfit{S}_{2}^{\mathrm{op}}{\;\circ\;}\grave{\tau}_{{\langle{f,g}\rangle}})
$.\normalsize}
\hfill\mbox{}\newline
See Def. \ref{def:db:mor:proj} above.}
%
\footnote{See Disp.\ref{lax:struc:mor:cond}
in Def.\ref{def:lax:struc:mor} 
of 
\S\,\ref{sub:sec:struc:mor:lax}.}
%
\[\mbox
{\footnotesize{$
{{
\Big\{
\kappa_{r_{2}}{\;\cdot\;}\tau_{2,r_{2}}
= 
\tau_{1,r(r_{2})}{\;\cdot\;}
\underset{\mathrmbfit{tup}(\grave{\varphi}_{r_{2}},f,g)}
{\underbrace{\mathrmbfit{tup}_{\mathcal{A}_{1}}(\grave{\varphi}_{r_{2}})
{\;\cdot\;}
\grave{\tau}_{{\langle{f,g}\rangle}}({\sigma_{2}}(r_{2}))}}
\mid r_{2} \in R_{2}
\Bigr\}
.}}
$}.
\normalsize}
\]
\end{itemize}
\hfill\rule{5pt}{5pt}
\end{proof}
%


%
\begin{proposition}\label{prop:db:mor:2:snd:log:mor}
Any 
{\ttfamily FOLE} database morphism 
\newline\mbox{}\hfill
$\mathcal{R}_{2}={\langle{\mathrmbf{R}_{2},\mathrmbfit{S}_{2},\mathcal{A}_{2},\mathrmbfit{K}_{2},\tau_{2}}\rangle}
\xleftarrow{{\langle{\mathrmbfit{R},\kappa,\grave{\varphi},f,g}\rangle}}
{\langle{\mathrmbf{R}_{1},\mathrmbfit{S}_{1},\mathcal{A}_{1},\mathrmbfit{K}_{1},\tau_{1}}\rangle}=\mathcal{R}_{1}$
\hfill\mbox{}\newline
in $\mathrmbf{Db}$
defines a (lax) {\ttfamily FOLE} sound logic morphism
\newline\mbox{}\hfill
$\mathcal{L}_{2}={\langle{\mathcal{S}_{2},\mathcal{M}_{2},\mathcal{T}_{2}}\rangle}
\xrightleftharpoons{{\langle{\mathrmbfit{R},\kappa,\grave{\varphi},f,g}\rangle}}
{\langle{\mathcal{S}_{1},\mathcal{M}_{1},\mathcal{T}_{1}}\rangle}=\mathcal{L}_{1}$
\hfill\mbox{}\newline
in $\mathring{\mathrmbf{Snd}}$.
\end{proposition}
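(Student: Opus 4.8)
The plan is to recognize this proposition as the converse of Prop.~\ref{prop:snd:log:mor:2:db:mor} and to prove it by decomposing the given database morphism into the three pieces demanded by Def.~\ref{def:snd:log:mor}: a (lax) structure morphism, an (abstract) specification morphism, and a common underlying schema morphism. The object-level work is already done by Prop.~\ref{prop:db:2:snd:log}, which converts each of the databases $\mathcal{R}_{2}$ and $\mathcal{R}_{1}$ into (lax) sound logics ${\langle{\mathcal{S}_{2},\mathcal{M}_{2},\mathcal{T}_{2}}\rangle}$ and ${\langle{\mathcal{S}_{1},\mathcal{M}_{1},\mathcal{T}_{1}}\rangle}$, so that only the morphic assembly remains.

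First I would extract the (lax) structure morphism. By Prop.~\ref{prop:db:mor:to:struc:mor}, the constraint-free aspect of the database morphism ${\langle{\mathrmbfit{R},\kappa,\grave{\varphi},f,g}\rangle}$ is literally a (lax) structure morphism $\mathcal{M}_{2}\xrightleftharpoons{{\langle{r,\kappa,\grave{\varphi},f,g}\rangle}}\mathcal{M}_{1}$, where $r$ is the object part of the relation passage $\mathrmbf{R}_{2}\xrightarrow{\mathrmbfit{R}}\mathrmbf{R}_{1}$ and where the required compatibility condition Disp.~\ref{lax:struc:mor:cond} is exactly the constraint-free aspect of the database morphism condition Disp.~\ref{eqn:db:mor:typ:dom:factor}. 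This supplies the structure-morphism component together with its type domain morphism $\mathcal{A}_{2}\xrightleftharpoons{{\langle{f,g}\rangle}}\mathcal{A}_{1}$ and its schema morphism ${\langle{r,\grave{\varphi},f}\rangle}$.

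Next I would identify the (abstract) specification morphism. Since an abstract specification is the same thing as a database schema (Def.~\ref{def:abs:spec}) and an abstract specification morphism is the same thing as a database schema morphism (Def.~\ref{def:abs:spec:mor}), the relation passage $\mathrmbf{R}_{2}\xrightarrow{\mathrmbfit{R}}\mathrmbf{R}_{1}$, the sort function $f$, and the schema bridge $\mathrmbfit{S}_{2}\circ{\scriptstyle\sum}_{f}\xRightarrow{\grave{\varphi}}\mathrmbfit{R}\circ\mathrmbfit{S}_{1}$ furnished by the database morphism (the footnote to Def.~\ref{def:db:mor:proj}) constitute precisely the required morphism $\mathcal{T}_{2}={\langle{\mathrmbf{R}_{2},\mathrmbfit{S}_{2},X_{2}}\rangle}\xrightarrow{{\langle{\mathrmbfit{R},\grave{\varphi},f}\rangle}}{\langle{\mathrmbf{R}_{1},\mathrmbfit{S}_{1},X_{1}}\rangle}=\mathcal{T}_{1}$. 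Its naturality is inherited from the fact that $\grave{\varphi}$ is a bridge.

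Finally I would check that these two components share a common schema morphism and then assemble them. Both the structure-morphism component and the specification-morphism component are built over the same predicate-level data: the object part $r$ of $\mathrmbfit{R}$, the same schema bridge $\grave{\varphi}$, and the same sort function $f$, giving the schema morphism $\mathcal{S}_{2}\xRightarrow{{\langle{r,\grave{\varphi},f}\rangle}}\mathcal{S}_{1}$. Feeding the three pieces into Def.~\ref{def:snd:log:mor} yields the desired (lax) sound logic morphism. I expect the only point requiring care — and hence the main obstacle — to be confirming this compatibility: one must verify that the predicate function $r$ used by the (lax) structure morphism coincides with the constraint-free aspect of the passage $\mathrmbfit{R}$ used by the specification morphism, and that the single bridge $\grave{\varphi}$ legitimately serves simultaneously as the signature component of the structure morphism (indexed over predicates) and as the naturality bridge of the specification morphism (over all constraints). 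Both follow once one recalls that $\mathrmbfit{R}$ extends $r$ to constraints and that $\grave{\varphi}$ is the same $\mathrmbfit{R}$-component bridge under either reading.
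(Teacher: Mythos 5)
Your proposal is correct and takes essentially the same route as the paper's own proof: the source and target sound logics come from Prop.~\ref{prop:db:2:snd:log}, the (lax) structure morphism comes from Prop.~\ref{prop:db:mor:to:struc:mor} (the constraint-free aspect of the database morphism), and the (abstract) specification morphism is identified with the database schema morphism as in \S\,\ref{sub:sec:spec}. The compatibility check you single out (that both components ride on the same schema morphism data $r$, $\grave{\varphi}$, $f$) is treated as immediate in the paper, so your argument is simply a slightly more explicit rendering of the same proof.
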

\begin{proof}
The source and target sound logics are defined by 
Prop.~\ref{prop:db:2:snd:log} above.
The structure morphism is given by Prop.~\ref{db:mor:2:struc:mor} above.
%
The (abstract) specification morphism
is the same as
the database schema morphism
\newline\mbox{}\hfill
$\mathcal{T}_{2} = {\langle{\mathrmbf{R}_{2},\mathrmbfit{S}_{2},X_{2}}\rangle}
\xrightleftharpoons{{\langle{\mathrmbfit{R},\grave{\varphi},f}\rangle}}
{\langle{\mathrmbf{R}_{1},\mathrmbfit{S}_{1},X_{1}}\rangle} = \mathcal{T}_{1}$,
\hfill\mbox{}\newline
as discussed in \S\,\ref{sub:sec:spec}.
\mbox{}\hfill\rule{5pt}{5pt}
\end{proof}
\begin{theorem}\label{thm:db:2:snd:log}
There is a passage
$\mathrmbf{Db}^{\mathrm{op}}\xrightarrow{\;\mathring{\mathrmbfit{snd}}\;}
\mathring{\mathrmbf{Snd}}$.
%
\end{theorem}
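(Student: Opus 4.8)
The plan is to mirror the proof of Thm~\ref{thm:snd:log:2:db}, assembling the object-level and morphism-level assignments already supplied by Prop.~\ref{prop:db:2:snd:log} and Prop.~\ref{prop:db:mor:2:snd:log:mor} into a single passage $\mathring{\mathrmbfit{snd}}$ and then verifying functoriality. First I would define $\mathring{\mathrmbfit{snd}}$ on objects: a {\ttfamily FOLE} database $\mathcal{R} = {\langle{\mathrmbf{R},\mathrmbfit{T},\mathcal{A}}\rangle}$ in $\mathrmbf{Db}$ is sent to its associated (lax) sound logic $\mathcal{L} = {\langle{\mathcal{S},\mathcal{M},\mathcal{T}}\rangle}$ in $\mathring{\mathrmbf{Snd}}$, namely the sound logic constructed in Prop.~\ref{prop:db:2:snd:log} (its structure $\mathcal{M}$ is the constraint-free aspect of $\mathcal{R}$ from Prop.~\ref{prop:db:to:struc}, its specification $\mathcal{T}$ is the database schema, and satisfaction $\mathcal{M}{\;\models_{\mathcal{S}}\;}\mathcal{T}$ is witnessed by the tabular interpretation passage via the Key Prop.~\ref{sat:tbl:interp}).

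Next I would define $\mathring{\mathrmbfit{snd}}$ on morphisms: a database morphism $\mathcal{R}_{2}\xleftarrow{{\langle{\mathrmbfit{R},\kappa,\grave{\varphi},f,g}\rangle}}\mathcal{R}_{1}$ in $\mathrmbf{Db}$ is sent to the (lax) sound logic morphism $\mathcal{L}_{2}\xrightleftharpoons{{\langle{\mathrmbfit{R},\kappa,\grave{\varphi},f,g}\rangle}}\mathcal{L}_{1}$ produced by Prop.~\ref{prop:db:mor:2:snd:log:mor}. The essential observation is that the defining data tuple ${\langle{\mathrmbfit{R},\kappa,\grave{\varphi},f,g}\rangle}$ is carried over verbatim: the relation passage $\mathrmbf{R}_{2}\xrightarrow{\mathrmbfit{R}}\mathrmbf{R}_{1}$, the key bridge $\kappa$, the schema bridge $\grave{\varphi}$, and the type domain morphism ${\langle{f,g}\rangle}$ all coincide for the database morphism and its sound-logic image. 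The reversal $\mathrmbf{Db}^{\mathrm{op}}$ in the statement is precisely the bookkeeping that reconciles the contravariance of the relation passage $\mathrmbfit{R}$ (pointing $2\to1$) against the direction in which the database and sound-logic morphisms are displayed, exactly as the $\mathrm{op}$ functions in the companion Thm~\ref{thm:snd:log:2:db}.

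Finally I would verify functoriality. Because $\mathring{\mathrmbfit{snd}}$ leaves the underlying tuple unchanged, preservation of identities is immediate, the identity database morphism (identity relation passage, identity key bridge, identity schema bridge, identity type domain morphism) mapping to the identity sound logic morphism. Preservation of composition reduces to comparing the composition laws of the two contexts; since composition is component-wise in both $\mathrmbf{Db}$ and $\mathring{\mathrmbf{Snd}}$ (relation passages compose as passages, and the bridges $\kappa,\grave{\varphi}$ compose by the evident horizontal/vertical pasting), the composite tuple computed on the database side is literally the composite tuple computed on the sound-logic side. I expect the only point requiring genuine care, the \emph{main obstacle}, to be confirming that these component-wise composites agree once the $\mathrm{op}$-direction of $\mathrmbfit{R}$ and the variances of the bridges are accounted for; but this is a routine pasting-diagram check rather than new mathematics, so I would record it and check it on the bridge components $\kappa$ and $\grave{\varphi}$ without grinding through every entry.
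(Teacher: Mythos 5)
Your proposal is correct and takes essentially the same route as the paper: the paper's own proof simply maps a database to its associated sound logic by Prop.~\ref{prop:db:2:snd:log} and a database morphism to its associated sound logic morphism by Prop.~\ref{prop:db:mor:2:snd:log:mor}, exactly as you do. Your closing functoriality check (identities and component-wise composition of the tuple ${\langle{\mathrmbfit{R},\kappa,\grave{\varphi},f,g}\rangle}$) is left tacit in the paper, so including it only makes the argument more complete.
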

\begin{proof}
A 
database 
$\mathcal{R} = {\langle{\mathrmbf{R},\mathrmbfit{T},\mathcal{A}}\rangle}$
in $\underline{\mathrmbf{Db}}$
is mapped to its associated 
sound logic 
$\mathcal{L}={\langle{\mathcal{S},\mathcal{M},\mathrmbf{T}}\rangle}$
by Prop.~\ref{prop:db:2:snd:log}.
A
database morphism
$\mathcal{R}_{2} 
\xleftarrow{{\langle{\mathrmbfit{R},\psi,f,g}\rangle}}
\mathcal{R}_{1}$
in $\underline{\mathrmbf{Db}}$
is mapped to its associated 
sound logic morphism 
$\mathcal{L}_{2}
\xrightleftharpoons{{\langle{\mathrmbfit{R},\kappa,\alpha,f,g}\rangle}}
\mathcal{L}_{1}$
by Prop.~\ref{prop:db:mor:2:snd:log:mor}.
\mbox{}\hfill\rule{5pt}{5pt}
%
\end{proof}

\begin{theorem}\label{thm:db:iso:snd:log}
The contexts of 
databases
and
(lax) sound logics form a reflection
\begin{center}
{{\footnotesize{\setlength{\extrarowheight}{2pt}{$\begin{array}{c}
{\setlength{\unitlength}{0.6pt}\begin{picture}(120,30)(0,-12)
\put(10,0){\makebox(0,0){\footnotesize{$\mathrmbf{Db}^{\mathrm{op}}$}}}
\put(110,2){\makebox(0,0){\footnotesize{$\mathring{\mathrmbf{Snd}}$}}}
\put(60,22){\makebox(0,0){\scriptsize{$\mathrmbfit{im}$}}}
\put(62,0){\makebox(0,0){\scriptsize{$\dashv$}}}
\put(66,-20){\makebox(0,0){\scriptsize{$\mathrmbfit{inc}$}}}
\put(35,10){\vector(1,0){50}}
\put(85,-10){\vector(-1,0){50}}
\put(85,-6){\oval(8,8)[r]}
\end{picture}}
\end{array}$,}}}}
\end{center}
so that these two representation of \texttt{FOLE}  
are ``informationally equivalent''. 
%
\footnote{The database-logic reflection 
\newline\mbox{}\hfill
{{$\mathrmbf{Db}(\mathcal{A})
\;\xrightarrow{{\langle{\mathrmbfit{im}_{\mathcal{A}}{\;\dashv\;}\mathrmbfit{inc}_{\mathcal{A}}}\rangle}}\;
\mathrmbf{Log}(\mathcal{A})$}}
\hfill\mbox{}\newline
generalizes 
the table-relation reflection
$\mathrmbf{Tbl}(\mathcal{A})
\;\xrightarrow{{\langle{\mathrmbfit{im}_{\mathcal{A}}{\;\dashv\;}\mathrmbfit{inc}_{\mathcal{A}}}\rangle}}\;
\mathrmbf{Rel}(\mathcal{A})$.
These reflections embody the notion of informational equivalence.}
%
%
\footnote{See Prop.\,12 in \S\,A.1 of the paper
\cite{kent:fole:era:tbl}
``The {\ttfamily FOLE} Table''.}
%
%
\end{theorem}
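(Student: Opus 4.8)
The plan is to assemble the asserted reflection directly from the two passages already in hand: the inclusion $\mathrmbfit{inc}$ is the passage $\mathring{\mathrmbfit{db}}$ of Thm~\ref{thm:snd:log:2:db}, read as $\mathring{\mathrmbf{Snd}}\to\mathrmbf{Db}^{\mathrm{op}}$, while the reflector $\mathrmbfit{im}$ is the passage $\mathring{\mathrmbfit{snd}}$ of Thm~\ref{thm:db:2:snd:log}. I would then show that $\mathrmbfit{im}\dashv\mathrmbfit{inc}$ with $\mathrmbfit{inc}$ fully faithful, exhibiting $\mathring{\mathrmbf{Snd}}$ as a reflective subcategory of $\mathrmbf{Db}^{\mathrm{op}}$. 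Conceptually the whole construction is the lift, from the fiber contexts of tables to the diagram contexts of databases, of the table-relation reflection ${\langle{\mathrmbfit{im}_{\mathcal{A}}{\;\dashv\;}\mathrmbfit{inc}_{\mathcal{A}}}\rangle}:\mathrmbf{Tbl}(\mathcal{A}){\;\rightleftarrows\;}\mathrmbf{Rel}(\mathcal{A})$: a sound logic is exactly a database whose interpretation diagram factors through $\mathrmbf{Rel}(\mathcal{A})$, and post-composition with a reflective subcontext inclusion induces a reflective subcontext of the diagram context.

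First I would verify that the counit is an isomorphism, equivalently that $\mathrmbfit{im}\circ\mathrmbfit{inc}$ is (naturally isomorphic to) the identity on $\mathring{\mathrmbf{Snd}}$. A sound logic $\mathcal{L}={\langle{\mathcal{S},\mathcal{M},\mathcal{T}}\rangle}$ resolves into three parts, each of which round-trips exactly: by Prop~\ref{prop:struc:2:|db|} and Prop~\ref{prop:db:to:struc} the (lax) structure $\mathcal{M}$ \emph{is} the constraint-free aspect of its database and conversely; the abstract specification $\mathcal{T}$ is by definition the database schema (Def~\ref{def:abs:spec}, cf.\ Def~\ref{def:db}); and by the Key Prop~\ref{sat:tbl:interp} satisfaction is recovered as the tabular interpretation passage, and vice versa. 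Since passing $\mathcal{L}$ through $\mathrmbfit{inc}$ and back through $\mathrmbfit{im}$ simply reassembles these three components, the counit $\mathrmbfit{im}\circ\mathrmbfit{inc}\Rightarrow 1_{\mathring{\mathrmbf{Snd}}}$ is the identity bridge, hence invertible; this is what upgrades the adjunction to a reflection.

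Next I would construct the unit $\eta:1_{\mathrmbf{Db}^{\mathrm{op}}}\Rightarrow\mathrmbfit{inc}\circ\mathrmbfit{im}$. Given a database $\mathcal{R}={\langle{\mathrmbf{R},\mathrmbfit{T},\mathcal{A}}\rangle}$ with $\mathrmbfit{T}:\mathrmbf{R}^{\mathrm{op}}\to\mathrmbf{Tbl}(\mathcal{A})$, the composite $\mathrmbfit{inc}\circ\mathrmbfit{im}$ replaces each table $\mathrmbfit{T}(r)$ by its relational image $\mathrmbfit{inc}_{\mathcal{A}}(\mathrmbfit{im}_{\mathcal{A}}(\mathrmbfit{T}(r)))$, since $\mathrmbfit{im}$ passes through the relation interpretation of Lem~\ref{lem:interp:pass}. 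I would take the $r$-component of $\eta$ to be the fiber-reflection unit $\mathrmbfit{T}(r)\Rightarrow\mathrmbfit{inc}_{\mathcal{A}}\mathrmbfit{im}_{\mathcal{A}}(\mathrmbfit{T}(r))$, check that these components form a bridge (compatible with $\mathrmbfit{T}(p)$ for each constraint $r'\xrightarrow{p}r$ and with database morphisms), and reduce the triangle identities to those of ${\langle{\mathrmbfit{im}_{\mathcal{A}}{\;\dashv\;}\mathrmbfit{inc}_{\mathcal{A}}}\rangle}$ applied component-wise. Together with the identity counit, this yields the universal property and the reflection, whence the two representations are informationally equivalent.

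The hard part will be naturality of $\eta$ at the level of constraint and database \emph{morphisms}, not merely objects: one must check that forming relational images commutes with the key functions $\kappa_{r}$ carried by a database/sound-logic morphism, so that the image operation is genuinely functorial on the whole diagram context. This is precisely the ``extent naturality via diagonal fill-in'' argument already carried out in the \textbf{top} facet of Fig~\ref{fig:nat:comb} in the proof of Prop~\ref{prop:sat:preserve}, where the reflection ${\langle{\mathrmbfit{im}{\;\dashv\;}\mathrmbfit{inc}}\rangle}:\mathrmbf{Tbl}{\;\rightleftarrows\;}\mathrmbf{Rel}$ supplies the diagonal filling the required square. I would invoke that argument to conclude that $\mathrmbfit{im}$ is well-defined and $\eta$ natural, completing the reflection.
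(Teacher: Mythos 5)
Your skeleton is the same as the paper's: the paper's (very terse) proof likewise takes the adjoint pair to be $\mathrmbfit{inc}=\mathring{\mathrmbfit{db}}$ from Thm.~\ref{thm:snd:log:2:db} and $\mathrmbfit{im}=\mathring{\mathrmbfit{snd}}$ from Thm.~\ref{thm:db:2:snd:log}, and obtains the reflection by appeal to the table--relation reflection ${\langle{\mathrmbfit{im}{\;\dashv\;}\mathrmbfit{inc}}\rangle}:\mathrmbf{Tbl}{\;\rightleftarrows\;}\mathrmbf{Rel}$. However, your execution has a genuine gap at the counit. You claim that passing a sound logic $\mathcal{L}={\langle{\mathcal{S},\mathcal{M},\mathcal{T}}\rangle}$ through $\mathrmbfit{inc}$ and back through $\mathrmbfit{im}$ ``simply reassembles'' its three components, so that the counit is the identity. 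It does not. By Def.~\ref{def:abs:tbl:pass} (forced by the footnote to the Key Prop.~\ref{sat:tbl:interp}: tabular interpretation is closed under composition only up to key equivalence, so the database assembled from a sound logic must use the \emph{relation} interpretation $\mathrmbfit{R}_{\mathcal{M}}$), the leg $\mathrmbfit{inc}$ sends $\mathcal{L}$ to the database whose table at $r$ is the image relation ${\langle{\sigma(r),{\wp}\tau_{r}(\mathrmbfit{K}(r))}\rangle}$, \emph{not} the structure's table $T_{\mathcal{M}}(r)={\langle{\sigma(r),\mathrmbfit{K}(r),\tau_{r}}\rangle}$. The return leg $\mathrmbfit{im}$ (Prop.~\ref{prop:db:to:struc}) keeps whatever key sets the database carries, so the round trip yields a sound logic whose structure has key sets ${\wp}\tau_{r}(\mathrmbfit{K}(r))$ and inclusion tuple maps. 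The counit is therefore the component-wise relationalization (quotient) morphism, not the identity, and it is not even invertible in general: if two keys $k_{1}\neq k_{2}$ in $\mathrmbfit{K}(r)$ satisfy $\tau_{r}(k_{1})=\tau_{r}(k_{2})$, they are collapsed to a single tuple, and no lax structure isomorphism can undo this.

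This also makes your proposal internally inconsistent: your third paragraph correctly observes that the database round trip $\mathrmbfit{inc}\circ\mathrmbfit{im}$ replaces every table by its relational image (though the replacement happens in the $\mathrmbfit{inc}$ leg, not the $\mathrmbfit{im}$ leg as you attribute it), and the sound-logic round trip $\mathrmbfit{im}\circ\mathrmbfit{inc}$ contains exactly that same leg, so the same collapse must occur there too -- you cannot have a relationalizing unit and an identity counit from these two passages. This is precisely the point where the paper's one-line remark ``we use the image part $\mathrmbf{Tbl}\xhookrightarrow{\mathrmbfit{im}}\mathrmbf{Rel}$ of the table--relation reflection'' does its unspoken work: both round trips are relationalizations, and the ``informational equivalence'' asserted by the theorem is exactly that weaker relation rather than an isomorphism of contexts. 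To repair your argument you must either normalize sound-logic structures to relational ones (tuple maps injective, i.e.\ structures lying in the image of $\mathrmbfit{im}$), in which case your counit computation and the component-wise reduction of the triangle identities go through, or else carry the relationalization morphism as the counit throughout and verify the triangle identities for it explicitly; as written, the identity-counit step fails.
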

\begin{proof}
The passages in 
Thm.\,\ref{thm:snd:log:2:db}
and
Thm.\,\ref{thm:db:2:snd:log}
form a reflection.
Here,
we use 
the image part 
$\mathrmbf{Tbl}{\;\xhookrightarrow{\;\mathrmbfit{im}\;}\;}\mathrmbf{Rel}$
of 
the table-relation reflection.
%
\footnote{\label{tbl:rel:refl}The reflection 
${\langle{\mathrmbfit{im}{\;\dashv\;}\mathrmbfit{inc}}\rangle}
:\mathrmbf{Tbl}{\;\rightleftarrows\;}\mathrmbf{Rel}$
(``The {\ttfamily FOLE} Table''\cite{kent:fole:era:tbl})
of the context of relations into the context of tables
embodies the notion of ``informational equivalence''.}
%
For an alternate proof, 
first compose with image part of the table-relation reflection
$^{\mathrm{\ref{tbl:rel:refl}}}$, 
and then restrict to fixed type domains.
\mbox{}\hfill\rule{5pt}{5pt}
\end{proof}
\comment{
\begin{figure}
\begin{center}
{{\begin{tabular}{c}
{{\begin{tabular}{c}
\setlength{\unitlength}{0.6pt}
\begin{picture}(240,200)(-60,10)
\put(4,200){\makebox(0,0){\footnotesize{$R_{2}^{\mathrm{op}}$}}}
\put(124,200){\makebox(0,0){\footnotesize{$R_{1}^{\mathrm{op}}$}}}
\put(-80,140){\makebox(0,0){\footnotesize{$\mathrmbf{Tbl}(\mathcal{A}_{2})$}}}
\put(200,140){\makebox(0,0){\footnotesize{$\mathrmbf{Tbl}(\mathcal{A}_{1})$}}}
\put(0,80){\makebox(0,0){\footnotesize{$\mathrmbf{List}(X_{2})^{\mathrm{op}}$}}}
\put(124,80){\makebox(0,0){\footnotesize{$\mathrmbf{List}(X_{1})^{\mathrm{op}}$}}}
\put(60,5){\makebox(0,0){\footnotesize{$\mathrmbf{Set}$}}}
\put(65,210){\makebox(0,0){\scriptsize{${\mathrmbfit{R}}^{\mathrm{op}}$}}}
\put(-44,174){\makebox(0,0)[r]{\scriptsize{$\mathrmbfit{T}_{2}$}}}
\put(6,146){\makebox(0,0)[l]{\scriptsize{${\mathrmbfit{S}}_{2}^{\mathrm{op}}$}}}
\put(-132,120){\makebox(0,0)[r]{\scriptsize{$\mathrmbfit{K}_{2}$}}}
\put(-44,108){\makebox(0,0)[r]{\scriptsize{$\mathrmbfit{sign}^{\mathrm{op}}_{\mathcal{A}_{2}}$}}}
\put(-67,60){\makebox(0,0)[r]{\shortstack{\scriptsize{$\mathrmbfit{key}_{\mathcal{A}_{2}}$}}}}
\put(164,174){\makebox(0,0)[l]{\scriptsize{$\mathrmbfit{T}_{1}$}}}
\put(126,146){\makebox(0,0)[l]{\scriptsize{${\mathrmbfit{S}}_{1}^{\mathrm{op}}$}}}
\put(252,120){\makebox(0,0)[l]{\scriptsize{$\mathrmbfit{K}_{1}$}}}
\put(164,108){\makebox(0,0)[l]{\scriptsize{$\mathrmbfit{sign}^{\mathrm{op}}_{\mathcal{A}_{1}}$}}}
\put(192,60){\makebox(0,0)[l]{\shortstack{\scriptsize{$\mathrmbfit{key}_{\mathcal{A}_{1}}$}}}}
\put(64,90){\makebox(0,0){\scriptsize{${\scriptstyle\sum}_{f}^{\,\mathrm{op}}$}}}
\put(24,38){\makebox(0,0)[r]{\scriptsize{$\mathrmbfit{tup}_{\mathcal{A}_{2}}$}}}
\put(97,38){\makebox(0,0)[l]{\scriptsize{$\mathrmbfit{tup}_{\mathcal{A}_{1}}$}}}
\put(-80,85){\makebox(0,0){$\xRightarrow{\,\;\tau_{2}\;\;}$}}
\put(200,85){\makebox(0,0){{$\xLeftarrow{\;\;\,\tau_{1}\;}$}}}
\put(-30,62){\makebox(0,0){$\xRightarrow{\,\tau_{\mathcal{A}_{2}}\;}$}}
\put(150,62){\makebox(0,0){{$\xLeftarrow{\;\,\tau_{\mathcal{A}_{1}}}$}}}
\put(60,150.5){\makebox(0,0){{$\xLeftarrow{\;\;{\grave{\varphi}^{\mathrm{op}}}}$}}}
\put(35,114.5){\makebox(0,0){{$\xLeftarrow{\;\;\,\kappa\;\;}$}}}
\put(60,54.8){\makebox(0,0){$\xLeftarrow{\;\grave{\tau}_{{\langle{f,g}\rangle}}}$}}
\put(20,200){\vector(1,0){80}}
\put(0,185){\vector(0,-1){90}}
\put(-16,188){\vector(-4,-3){48}}
\put(-64,128){\vector(4,-3){48}}
\put(120,185){\vector(0,-1){90}}
\put(136,188){\vector(4,-3){48}}
\put(184,128){\vector(-4,-3){48}}
\qbezier[25](25,48)(60,48)(95,48)
\put(35,80){\vector(1,0){50}}
\put(9,68){\vector(3,-4){40}}
\put(111,68){\vector(-3,-4){40}}
\qbezier(-20,200)(-130,180)(-130,120)\qbezier(-130,120)(-130,30)(35,3)\put(35,3){\vector(4,-1){0}}
\qbezier(140,200)(250,180)(250,120)\qbezier(250,120)(250,30)(85,3)\put(85,3){\vector(-4,-1){0}}
\qbezier(-84,125)(-80,20)(40,10)\put(40,10){\vector(1,0){0}}
\qbezier(204,125)(200,20)(80,10)\put(80,10){\vector(-1,0){0}}
\qbezier[50](0,140)(60,140)(120,140)
\qbezier[150](-128,110)(60,110)(248,110)
\end{picture}
\\
\\
{\scriptsize $\kappa{\;\bullet\;}\tau_{2} = 
(\mathrmbfit{R}^{\mathrm{op}}{\circ\;}\tau_{1})
{\;\bullet\;}(\grave{\varphi}^{\mathrm{op}}{\circ\;}\mathrmbfit{tup}_{\mathcal{A}_{1}}) 
{\;\bullet\;}(\mathrmbfit{S}_{2}^{\mathrm{op}}{\circ\;}
\grave{\tau}_{{\langle{f,g}\rangle}})$}
\end{tabular}}}
\end{tabular}}}
\end{center}
\caption{Relational Database Morphism}
\label{fig:rel:db:mor}
\end{figure}
}
%

%


\newpage
\appendix
\section{Appendix}\label{sec:append}


\subsection{\texttt{FOLE} Components}
\label{sub:sec:adj:components}


\paragraph{Bridges.}

Although adjointly equivalent,
no levo bridges are used throughout this paper,
except for the levo tuple bridge
{\footnotesize{${f^{\ast}}^{\mathrm{op}}{\circ\;}\mathrmbfit{tup}_{\mathcal{A}_{2}}
\xLeftarrow{\;\acute{\tau}_{{\langle{f,g}\rangle}}\;}
\mathrmbfit{tup}_{\mathcal{A}_{1}}$}}
discussed in footnote\,\ref{tbl:fbr:adj}
in \S\,\ref{sub:sec:db:mor}.

%
\begin{table}
\begin{center}
{\footnotesize\setlength{\extrarowheight}{4pt}$\begin{array}{|@{\hspace{5pt}}l@{\hspace{15pt}}l@{\hspace{5pt}}|}
\multicolumn{1}{l}{\text{\bfseries levo}} & \multicolumn{1}{l}{\text{\bfseries dextro}}
\\ \hline
\acute{\varphi}:\mathrmbfit{S}_{2}\Rightarrow\mathrmbfit{R}\circ\mathrmbfit{S}_{1}\circ{f^{\ast}}
& 
\grave{\varphi}:\mathrmbfit{S}_{2}\circ{{\Sigma}_{f}}\Rightarrow\mathrmbfit{R}\circ\mathrmbfit{S}_{1}
\\
\acute{\varphi}=(\mathrmbfit{S}_{2}\circ\eta_{f})\bullet(\grave{\varphi}\circ{f^{\ast}})
& 
\grave{\varphi}=(\acute{\varphi}\circ{{\Sigma}_{f}})\bullet(\mathrmbfit{R}\circ\mathrmbfit{S}_{1}\circ\varepsilon_{f})
\\
\acute{\varphi}=(\acute{\psi}\circ\mathrmbfit{sign}_{\mathcal{A}_{2}}^{\mathrm{op}})^{\mathrm{op}}
&
\grave{\varphi}=(\grave{\psi}\circ\mathrmbfit{sign}_{\mathcal{A}_{1}}^{\mathrm{op}})^{\mathrm{op}}
\\ \hline
\acute{\psi}:\mathrmbfit{T}_{2}\Leftarrow\mathrmbfit{R}^{\mathrm{op}}\circ\mathrmbfit{T}_{1}\circ\acute{\mathrmbfit{tbl}}_{{\langle{f,g}\rangle}}
&
\grave{\psi}:\mathrmbfit{T}_{2}\circ\grave{\mathrmbfit{tbl}}_{{\langle{f,g}\rangle}}\Leftarrow\mathrmbfit{R}^{\mathrm{op}}\circ\mathrmbfit{T}_{1}
\\
\acute{\psi}=(\grave{\psi}\circ\acute{\mathrmbfit{tbl}}_{{\langle{f,g}\rangle}})\bullet(\mathrmbfit{T}_{2}\circ\varepsilon_{{\langle{f,g}\rangle}})
&
\grave{\psi}=(\mathrmbfit{R}^{\mathrm{op}}\circ\mathrmbfit{T}_{1})\eta_{{\langle{f,g}\rangle}}\bullet(\acute{\psi}\circ\grave{\mathrmbfit{tbl}}_{{\langle{f,g}\rangle}})
\\ \hline
\acute{\chi}_{{\langle{f,g}\rangle}}:\acute{\mathrmbfit{tbl}}_{{\langle{f,g}\rangle}}\circ\mathrmbfit{inc}_{\mathcal{A}_{2}}\Leftarrow\mathrmbfit{inc}_{\mathcal{A}_{1}}
&
\grave{\chi}_{{\langle{f,g}\rangle}}:\mathrmbfit{inc}_{\mathcal{A}_{2}}\Leftarrow\grave{\mathrmbfit{tbl}}_{{\langle{f,g}\rangle}}\circ\mathrmbfit{inc}_{\mathcal{A}_{1}}
\\
\acute{\chi}_{{\langle{f,g}\rangle}}=(\eta_{{\langle{f,g}\rangle}}\circ\mathrmbfit{inc}_{\mathcal{A}_{1}})\bullet(\acute{\mathrmbfit{tbl}}_{{\langle{f,g}\rangle}}\circ\grave{\chi}_{{\langle{f,g}\rangle}})
&
\grave{\chi}_{{\langle{f,g}\rangle}}=(\grave{\mathrmbfit{tbl}}_{{\langle{f,g}\rangle}}\circ{\chi}_{{\langle{f,g}\rangle}})\bullet(\varepsilon_{{\langle{f,g}\rangle}}\circ\mathrmbfit{inc}_{\mathcal{A}_{2}})
\\ \hline
\acute{\tau}_{{\langle{f,g}\rangle}}:(f^{\ast})^{\mathrm{op}} \circ \mathrmbfit{tup}_{\mathcal{A}_{2}}\Leftarrow\mathrmbfit{tup}_{\mathcal{A}_{1}} 
&
\grave{\tau}_{{\langle{f,g}\rangle}}:\mathrmbfit{tup}_{\mathcal{A}_{2}}\Leftarrow{\scriptstyle\sum}_{f}^{\mathrm{op}} \circ \mathrmbfit{tup}_{\mathcal{A}_{1}} 
\\
\acute{\tau}_{{\langle{f,g}\rangle}}=(\varepsilon_{f}^{\mathrm{op}} \circ \mathrmbfit{tup}_{\mathcal{A}_{1}})\bullet((f^{\ast})^{\mathrm{op}} \circ \grave{\tau}_{{\langle{f,g}\rangle}})
&
\grave{\tau}_{{\langle{f,g}\rangle}}=({\scriptstyle\sum}_{f}^{\mathrm{op}} \circ \acute{\tau}_{{\langle{f,g}\rangle}})\bullet(\eta_{f}^{\mathrm{op}} \circ \mathrmbfit{tup}_{\mathcal{A}_{2}})
\\ \hline
\multicolumn{2}{|c|}{\xi:\mathrmbfit{T}_{2}\circ\mathrmbfit{inc}_{\mathcal{A}_{2}}\Leftarrow\mathrmbfit{R}^{\mathrm{op}}\circ\mathrmbfit{T}_{1}\circ\mathrmbfit{inc}_{\mathcal{A}_{1}}}
\\
\multicolumn{2}{|c|}
{
(\mathrmbfit{R}^{\mathrm{op}}{\!\circ}\mathrmbfit{T}_{1}\circ\acute{\chi}_{{\langle{f,g}\rangle}})
\bullet
(\acute{\psi}\circ\mathrmbfit{inc}_{\mathcal{A}_{2}})
\;=\;\xi\;=\;
(\grave{\psi}\circ\mathrmbfit{inc}_{\mathcal{A}_{1}})
\bullet
(\mathrmbfit{T}_{2}{\;\circ\;}\grave{\chi}_{{\langle{f,g}\rangle}})
}
\\ 
\hline
\end{array}$}
\end{center}
\caption{\texttt{FOLE} Adjoint Bridges}
\label{tbl:adjoints}
\end{table}
%

%
%

\newpage
\paragraph{Morphisms.}

The \texttt{FOLE} equivalence is explained and understood 
principally in terms of its various morphisms (Tbl.\,\ref{tbl:fole:morph}).
\begin{table}
\begin{center}
{\footnotesize{\setlength{\extrarowheight}{2pt}
\begin{tabular}
{|@{\hspace{5pt}}r@{\hspace{20pt}}c@{\hspace{10pt}}|}
\hline
schema morphism 
&
$\mathcal{S}_{2}
= {\langle{R_{2},\sigma_{2},X_{2}}\rangle}
\xRightarrow{\;{\langle{r,\grave{\varphi},f}\rangle}\;}
{\langle{R_{1},\sigma_{1},X_{1}}\rangle} = 
\mathcal{S}_{1}$
\\
schemed domain morphism
&
{\footnotesize{$
\mathcal{D}_{2}
= 
{\langle{R_{2},\sigma_{2},\mathcal{A}_{2}}\rangle} 
\xrightarrow{\;{\langle{r,\grave{\varphi},f,g}\rangle}\;}
{\langle{R_{1},\sigma_{1},\mathcal{A}_{1}}\rangle}
= \mathcal{D}_{1}
$}}
\\
(lax) structure morphism
&
{\footnotesize{
$\mathcal{M}_{2} = {\langle{\mathcal{E}_{2},\sigma_{2},\tau_{2},\mathcal{A}_{2}}\rangle}
\xrightleftharpoons{{\langle{r,\kappa,\grave{\varphi},f,g}\rangle}}
{\langle{\mathcal{E}_{1},\sigma_{1},\tau_{1},\mathcal{A}_{1}}\rangle} = \mathcal{M}_{1}$
}}
\\
specification morphism
&
{\footnotesize{$\mathcal{T}_{2}={\langle{\mathrmbf{R}_{2},\mathrmbfit{S}_{2},X_{2}}\rangle}
\xrightarrow{{\langle{\mathrmbfit{R},\grave{\varphi},f}\rangle}}
{\langle{\mathrmbf{R}_{1},\mathrmbfit{S}_{1},X_{1}}\rangle}=\mathcal{T}_{1}$}}
\\
sound logic morphism
&
{\footnotesize{$\mathcal{L}_{2}={\langle{\mathcal{S}_{2},\mathcal{M}_{2},\mathcal{T}_{2}}\rangle}
\xrightleftharpoons{{\langle{\mathrmbfit{R},\kappa,\grave{\varphi},f,g}\rangle}}
{\langle{\mathcal{S}_{1},\mathcal{M}_{1},\mathcal{T}_{1}}\rangle}=\mathcal{L}_{1}$}}
\\\hline
\multicolumn{2}{c}{}
\\\hline
$\text{database morphism}_{1}$
&
$\mathcal{R}_{2} = {\langle{\mathrmbf{R}_{2},\mathrmbfit{T}_{2}}\rangle} 
\xleftarrow{{\langle{\mathrmbfit{R},\xi}\rangle}}
{\langle{\mathrmbf{R}_{1},\mathrmbfit{T}_{1}}\rangle} = \mathcal{R}_{1}
$
\\
$\text{database morphism}_{2}$
&
$\mathcal{R}_{2} = {\langle{\mathrmbf{R}_{2},\mathrmbfit{T}_{2},\mathcal{A}_{2}}\rangle} 
\xleftarrow{{\langle{\mathrmbfit{R},\grave{\psi},f,g}\rangle}}
{\langle{\mathrmbf{R}_{1},\mathrmbfit{T}_{1},\mathcal{A}_{1}}\rangle} = \mathcal{R}_{1}$
\\
$\text{database morphism}_{3}$
&
$\mathcal{R}_{2} = {\langle{\mathrmbf{R}_{2},\mathrmbfit{S}_{2},\mathcal{A}_{2},\mathrmbfit{K}_{2},\tau_{2}}\rangle}
\xleftarrow{{\langle{\mathrmbfit{R},\kappa,\grave{\varphi},f,g}\rangle}}
{\langle{\mathrmbf{R}_{1},\mathrmbfit{S}_{1},\mathcal{A}_{1},\mathrmbfit{K}_{1},\tau_{1}}\rangle} = \mathcal{R}_{1}$
\\\cline{1-1}
\multicolumn{2}{|l|}{\hspace{30pt}
{$\scriptsize{\text{
1 full, 2 fixed type domain, 3 with projections
}}$}}
\\\hline
\end{tabular}}}
\end{center}
\caption{\texttt{FOLE} Morphisms}
\label{tbl:fole:morph}
\end{table}
%

%
\newpage
\subsection{Classifications and Infomorphisms}
\label{sub:sec:class:info}

The concept of a ``classification'' comes from the theory of Information Flow:
see the book 
{\itshape Information Flow: The Logic of Distributed Systems}
by Barwise and Seligman
\cite{barwise:seligman:97}.
A classification is also important in the theory of Formal Concept Analysis,
where it is called a ``formal context'':
see the book 
{\itshape Formal Concept Analysis: Mathematical Foundations}
by Ganter and Wille
\cite{ganter:wille:99}.


\paragraph{Classification.}

A classification
$\mathcal{A}={\langle{X,Y,\models_{\mathcal{A}}}\rangle}$ consists of:
a set
$Y = \mathrmbfit{inst}(\mathcal{A})$
of objects to be classified, called tokens or instances;
a set
$X = \mathrmbfit{typ}(\mathcal{A})$
of objects that classify the instances, called types; and
a binary relation $\models_{\mathcal{A}}{\;\subseteq\;}Y{\times}X$ between instances and types.
%
The \textit{extent} of any type $x{\,\in\,}X$
is the subset of instances classified by $x$:
$\mathrmbfit{ext}_{\mathcal{A}}(x) 
= \{ y \in Y \mid y\;\models_{\mathcal{A}}x \}$.
Dually,
the \textit{intent} of any instance $y{\,\in\,}Y$
is the subset of types that classify $y$:
$\mathrmbfit{int}_{\mathcal{A}}(y)
= \{ x \in X \mid y\;\models_{\mathcal{A}}x \}$.
Two types $x,x' \in X$ are \textit{coextensive} in $\mathcal{A}$ when
$\mathrmbfit{ext}_{\mathcal{A}}(x)
=\mathrmbfit{ext}_{\mathcal{A}}(x')$.
Two instances $y,y' \in Y$ are \textit{cointensive} or \textit{indistinguishable} in $\mathcal{A}$ when
$\mathrmbfit{int}_{\mathcal{A}}(y)
=\mathrmbfit{int}_{\mathcal{A}}(y')$.
A classification
$\mathcal{A}$ is \textit{separated} or \textit{intensional} 
when there are no two indistinguishable instances:
$y \neq y'$ implies $\mathrmbfit{int}_{\mathcal{A}}(y)\neq\mathrmbfit{int}_{\mathcal{A}}(y')$.
A classification
$\mathcal{A}$ is \textit{extensional} when all coextensive types are identical:
$\mathrmbfit{ext}_{\mathcal{A}}(x)
=\mathrmbfit{ext}_{\mathcal{A}}(x')$
implies $x=x'$;
equivalently,
$x \neq x'$
implies
$\mathrmbfit{ext}_{\mathcal{A}}(x)
\neq\mathrmbfit{ext}_{\mathcal{A}}(x')$.
More strongly,
a classification
$\mathcal{A}$ is \textit{disjoint}
when
distinct types have disjoint extents:
$x \neq x'$
implies
$\mathrmbfit{ext}_{\mathcal{A}}(x)\cap\mathrmbfit{ext}_{\mathcal{A}}(x')=\emptyset$.
Even more strongly,
a classification
$\mathcal{A}$ is \textit{partitioned}
when
it is disjoint
and
all instances are classified:
$\mathrmbfit{int}_{\mathcal{A}}(y)\neq\emptyset$
equivalently
$y\in\mathrmbfit{ext}_{\mathcal{A}}(x)$
some $x \in X$
for all 
$y \in Y$.
A classification
$\mathcal{A}$ is \textit{pseudo-partitioned}
when
it is disjoint
and
all instances are classified except for one special instance ${\cdot} \in Y$:
$\mathrmbfit{int}_{\mathcal{A}}({\cdot})=\emptyset$.

\paragraph{Infomorphism.}

An infomorphism 
$\mathcal{A}_{2}={\langle{X_{2},Y_{2},\models_{\mathcal{A}_{2}}}\rangle}
\xrightleftharpoons{{\langle{f,g}\rangle}}
{\langle{X_{1},Y_{1},\models_{\mathcal{A}_{1}}}\rangle}=\mathcal{A}_{1}$
consists of a type map 
$X_{2}\xrightarrow{\;f\;}X_{1}$ 
and an instance map
$Y_{2}\xleftarrow{\;g\;}Y_{1}$, 
which satisfy the following:
\begin{center}
{\footnotesize{$\setlength{\extrarowheight}{4pt}{
\begin{array}{c}
g(y_{1}){\;\models_{\mathcal{A}_{2}}\;}x_{2}
{\;\;\;\text{\underline{iff}}\;\;\;}
y_{1}{\;\models_{\mathcal{A}_{1}}\;}x_{1}=f(x_{2})
\\
g(y_{1}){\,\in\,}\underset{\mathrmbfit{Y}_{2}(x_{2})}{\mathrmbfit{ext}_{\mathcal{A}_{2}}(x_{2})}
{\;\;\;\text{\underline{iff}}\;\;\;}
y_{1}{\,\in\,}\underset{\mathrmbfit{Y}_{1}(f(x_{2}))}{\mathrmbfit{ext}_{\mathcal{A}_{1}}(f(x_{2}))}
\\
g^{-1}(\mathrmbfit{ext}_{\mathcal{A}_{2}}(x_{2})) = \mathrmbfit{ext}_{\mathcal{A}_{1}}(f(x_{2}))
\end{array}}$}}
\end{center}

\begin{figure}
\begin{center}
{{\begin{tabular}{c
@{\hspace{33pt}{\textit{or}}\hspace{20pt}}
c}
{{\begin{tabular}{c}
\setlength{\unitlength}{0.6pt}
\begin{picture}(120,90)(0,0)
\put(2,80){\makebox(0,0){\footnotesize{$X_{2}$}}}
\put(120,80){\makebox(0,0){\footnotesize{$X_{1}$}}}
\put(2,0){\makebox(0,0){\footnotesize{$Y_{2}$}}}
\put(120,0){\makebox(0,0){\footnotesize{$Y_{1}$}}}
\put(8,40){\makebox(0,0)[l]{\scriptsize{$\models_{\mathcal{A}_{2}}$}}}
\put(130,40){\makebox(0,0)[l]{\scriptsize{$\models_{\mathcal{A}_{1}}$}}}
\put(60,94){\makebox(0,0){\scriptsize{$f$}}}
\put(62,-14){\makebox(0,0){\scriptsize{$g$}}}
\put(20,80){\vector(1,0){80}}
\put(100,0){\vector(-1,0){80}}
\put(0,65){\line(0,-1){50}}
\put(120,65){\line(0,-1){50}}
\end{picture}
\end{tabular}}}
&
{{\begin{tabular}{c}
\setlength{\unitlength}{0.6pt}
\begin{picture}(120,90)(0,0)
\put(0,80){\makebox(0,0){\footnotesize{$X_{2}$}}}
\put(120,80){\makebox(0,0){\footnotesize{$X_{1}$}}}
\put(0,0){\makebox(0,0){\footnotesize{${\wp}Y_{2}$}}}
\put(120,0){\makebox(0,0){\footnotesize{${\wp}Y_{1}$}}}
\put(8,40){\makebox(0,0)[l]{\scriptsize{$
\mathrmbfit{ext}_{\mathcal{A}_{2}}$}}}
\put(128,40){\makebox(0,0)[l]{\scriptsize{$
\mathrmbfit{ext}_{\mathcal{A}_{1}}$}}}
\put(60,94){\makebox(0,0){\scriptsize{$f$}}}
\put(62,-14){\makebox(0,0){\scriptsize{$g^{-1}$}}}
\put(20,80){\vector(1,0){80}}
\put(20,0){\vector(1,0){80}}
\put(0,65){\vector(0,-1){50}}
\put(120,65){\vector(0,-1){50}}
\end{picture}
\end{tabular}}}
\end{tabular}}}
\end{center}
\caption{Infomorphism}
\label{fig:info:mor}
\end{figure}

If two distinct source types $x_{2},x'_{2} \in X_{2}$, 
$x_{2} \not= x'_{2}$,
are mapped by the type map 
to the same target type 
\footnote{Are you producting the two types by doing this?
Remember how the sum and product of classifications is defined.}
$f(x_{2}) = f(x'_{2}) = x_{1} \in X_{1}$,
then any target instance in the extent 
$y_{1} \in \mathrmbfit{ext}_{\mathcal{A}_{1}}(x_{1})$
is mapped by the instance map 
to the extent intersection
$g(y_{1}) \in 
\mathrmbfit{ext}_{\mathcal{A}_{2}}(x_{2}){\;\cap\;}\mathrmbfit{ext}_{\mathcal{A}_{2}}(x'_{2})$.
Hence,
if extent sets are disjoint in the source classification,
then the type map must be injective.
Note however
that there is restricted map
$\mathrmbfit{ext}_{\mathcal{A}_{2}}(x_{2}){\;\times\;}\mathrmbfit{ext}_{\mathcal{A}_{2}}(x'_{2})
\xleftarrow{\;g\;}
\mathrmbfit{ext}_{\mathcal{A}_{1}}(x_{1})$.
Let $\mathrmbf{Cls}$
denote the context of classifications and infomorphisms.



%


\paragraph{Type Domains.}

%
In the \texttt{FOLE} theory of data-types \cite{kent:fole:era:found},
a classification 
$\mathcal{A} = {\langle{X,Y,\models_{\mathcal{A}}}\rangle}$
is known as a type domain.
A type domain
\cite{kent:fole:era:tbl}
is an sort-indexed collection of data types 
from which a table's tuples are chosen.
It consists of 
a set of sorts (data types) $X$,
a set of data values (instances) $Y$,
and a binary (classification) relation
{\footnotesize{$\models_{\mathcal{A}}$}}
between data values and sorts.
The extent
of any sort (data type) $x \in X$
is the subset
$\mathrmbfit{ext}_{\mathcal{A}}(x) = A_{x} 
= \{ y \in Y \mid y{\;\models_{\mathcal{A}}\;}x \}$.
Hence,
a type domain is equivalent to be a sort-indexed collection of subsets of data values
$X \xrightarrow{\;\mathcal{A}\;} {\wp}Y: x \mapsto \mathrmbfit{ext}_{\mathcal{A}}(x) = A_{x}$.
%
\footnote{Some examples of data-types 
useful in databases are as follows.
The real numbers might use 
sort symbol $\Re$ 
with extent $\{-\infty, \cdots, 0, \cdots, \infty\}$.
The alphabet might use
sort symbol $\aleph$ 
with extent $\{a,b,c, \cdots, x,y,z\}$.
Words, as a data-type, would be lists of alphabetic symbols
with
sort symbol $\aleph^{\ast}$ 
with extent $\{a,b,c, \cdots, x,y,z\}^{\ast}$
being all strings of alphabetic symbols.
The periodic table of elements might use
sort symbol $\textbf{E}$
with extent $\{\text{H, He, Li, He,} \cdots\text{, Hs,Mt}\}$.
Of course,
chemical elements can also be regarded as 
\underline{entity} types in a database 
with various properties
such as
name, 
symbol, 
atomic number, 
atomic mass, 
density, 
melting point, 
boiling point, etc.}
%
The list classification 
$\mathrmbf{List}(\mathcal{A}) = {\langle{\mathrmbf{List}(X),\mathrmbf{List}(Y),\models_{\mathrmbf{List}(\mathcal{A})}}\rangle}$
has $X$-signatures as types and
$Y$-tuples as instances,
with classification by common arity and universal $\mathcal{A}$-classification:
a $Y$-tuple ${\langle{J,t}\rangle}$ 
is classified by 
an $X$-signature ${\langle{I,s}\rangle}$ 
when
$J = I$ and
$t_{k} \models_{\mathcal{A}} s_{k}$
for all $k \in J = I$.
%
\footnote{In particular,
when $I = 1$ is a singleton,
an $X$-signature ${\langle{1,s}\rangle}$ 
is the same as a sort  
$s({\cdot)} = x \in X$,
a $Y$-tuple ${\langle{1,t}\rangle}$ 
is the same as a data value  
$t({\cdot}) = y \in Y$,
and
$\mathrmbfit{tup}(1,s,\mathcal{A})
=\mathrmbfit{tup}_{\mathcal{A}}(1,s)
=\mathrmbfit{ext}_{\mathrmbf{List}(\mathcal{A})}(1,s) 
=\mathcal{A}_{x}$.}
%
In the \texttt{FOLE} theory of data-types,
an infomorphism 
$\mathcal{A}'\xrightleftharpoons{{\langle{f,g}\rangle}}\mathcal{A}$
is known as a type domain morphism,
and consists of
a sort function $X'\xrightarrow{\;f\;}X$ and
a data value function $Y'\xleftarrow{\;g\;}Y$
that satisfy the condition
{\footnotesize{
$g(y){\;\models_{\mathcal{A}'}\;}x'$
\underline{iff}
$y{\;\models_{\mathcal{A}}\;}f(x')$
}\normalsize}
%
for any source sort $x'{\,\in\,}X'$ 
and target data value $y{\,\in\,}Y$.
%


\paragraph{Lax Classification.}

From an extensional point-of-view,
a classification 
$\mathcal{A} = {\langle{X,Y,\models_{\mathcal{A}}}\rangle}$ 
is a set-valued function
$X \xrightarrow[\mathrmbfit{Y}]{\mathrmbfit{ext}_{\mathcal{A}}} {\wp}Y$;
that is,
an indexed collection of (sub)sets
$\bigl\{ \mathrmbfit{ext}_{\mathcal{A}}(x)=\mathrmbfit{Y}(x) \mid x \in X \bigr\}$.
In a lax classification we omit the global set of data values $Y$.
\footnote{We can always define a global set of instances; 
for example,
$Y = \bigcup_{x \in X} \mathrmbfit{Y}(x)$,
$Y = \coprod_{x \in X} \mathrmbfit{Y}(x)$,
etc.}
Hence, 
a (lax) classification is just the pair
$\mathcal{A} = {\langle{X,\mathrmbfit{Y}}\rangle}$
consisting of a set
$X = \mathrmbfit{typ}(\mathcal{A})$
of objects called types that classify the instances, and
a collection of indexed subsets
$\bigl\{ \mathrmbfit{Y}(x) \mid x \in X \bigr\}$,
where
$\mathrmbfit{Y}(x) = \mathrmbfit{ext}_{\mathcal{A}}(x)$ is the set 
of instances (tokens) classified by the type $x \in X$.
More briefly,
a (lax) classification 
is 
a $\mathrmbf{Set}$-valued diagram
$X\xrightarrow{\,\mathrmbfit{Y}\;}\mathrmbf{Set}$.

\comment{
\paragraph{Lax Infomorphism.}
%
A lax infomorphism 
$\mathcal{A}_{2}={\langle{X_{2},\mathrmbfit{Y}_{2}}\rangle}
\xrightleftharpoons{{\langle{f,\gamma}\rangle}}
{\langle{X_{1},\mathrmbfit{Y}_{1}}\rangle}=\mathcal{A}_{1}$
between two lax classifications 
$\mathcal{A}_{2}$ and $\mathcal{A}_{1}$
consists of a type map 
$X_{2}\xrightarrow{\;f\;}X_{1}$ 
and an instance bridge
$\mathrmbfit{Y}_{2}\xLeftarrow{\;\,\gamma\;}r{\;\circ\;}\mathrmbfit{Y}_{1}$
consisting of an indexed collection of instance functions
$\bigl\{
{\mathrmbfit{Y}_{2}}(x_{2})
\xleftarrow{\gamma_{x_{2}}}{\mathrmbfit{Y}_{1}}(f(x_{2}))
\mid x_{2} \in X_{2}
\bigr\}$.
}

\paragraph{Lax Infomorphism.}

%
Tbl.\;\ref{tbl:lax:info} motivates the definition of a lax infomorphism.
In a lax infomorphism
$\mathcal{A}_{2} = {\langle{X_{2},\mathrmbfit{Y}_{2}}\rangle} 
\xleftharpoondown{{\langle{f,\gamma}\rangle}}
{\langle{X_{1},\mathrmbfit{Y}_{1}}\rangle} = 
\mathcal{A}_{1}$
the 
instance function
$Y_{2}\xleftarrow{\;g\,}Y_{1}$
is replaced by the collection of instance functions
$\bigl\{
\mathrmbfit{Y}_{2}(x_{2})
\xleftarrow{\gamma_{r_{2}}}
\mathrmbfit{Y}_{1}
(f(x_{2}))
\mid x_{2} \in X_{2}
\bigr\}$.
Hence,
an infomorphism 
is 
a $\mathrmbf{Set}$-valued instance bridge 
$\mathrmbfit{Y}_{2}\xLeftarrow{\;\,\gamma\;}f{\,\circ\,}\mathrmbfit{Y}_{1}$.
Let 
$\mathring{\mathrmbf{Cls}}$
denote the context of lax classifications and lax infomorphisms.
There is 
a passage
$\mathrmbf{Cls} \xrightarrow{\mathrmbfit{lax}} \dot{\mathrmbf{Cls}}$.

%
\begin{table}
\begin{center}
{\fbox{\setlength{\extrarowheight}{2pt}
$\begin{array}{r@{\hspace{5pt}}c@{\hspace{5pt}}l}
\multicolumn{3}{c}{\underline{\text{infomorphism}}}\rule[-10pt]{0pt}{10pt}
\\
\mathcal{A}_{2} = {\langle{X_{2},Y_{2},\models_{\mathcal{A}_{2}}}\rangle} 
&
\xrightleftharpoons{{\langle{f,g}\rangle}}
&
{\langle{X_{1},Y_{1},\models_{\mathcal{A}_{1}}}\rangle} = \mathcal{A}_{1}
\\
g(y_{1}){\;\models_{\mathcal{A}_{2}}\;}x_{2}
&
\;\Longleftrightarrow\;
&
y_{1}{\;\models_{\mathcal{A}_{1}}\;}f(x_{2})
\\
g(y_{1}){\;\in\;}\mathrmbfit{ext}_{\mathcal{A}_{2}}(x_{2})
&
\;\Longleftrightarrow\;
&
y_{1}{\;\in\;}\mathrmbfit{ext}_{\mathcal{A}_{1}}(f(x_{2}))
\\
g^{-1}(\mathrmbfit{ext}_{\mathcal{A}_{2}}(x_{2}))
&
\;=\;
&
\mathrmbfit{ext}_{\mathcal{A}_{1}}(f(x_{2}))
\\
...............................................&\text{\underline{implies}}&...............................................
\\
g^{-1}(\overset{\mathrmbfit{Y}_{2}}{\mathrmbfit{ext}_{\mathcal{A}_{2}}}(x_{2}))
&
\;\supseteq\;
&
\overset{\mathrmbfit{Y}_{1}}{\mathrmbfit{ext}_{\mathcal{A}_{1}}}(f(x_{2}))
\\
g(x_{1}){\;\in\;}\overset{\mathrmbfit{Y}_{2}}{\mathrmbfit{ext}_{\mathcal{A}_{2}}}(x_{2})
&
\;\Longleftarrow\;
&
x_{1}{\;\in\;}\overset{\mathrmbfit{Y}_{1}}{\mathrmbfit{ext}_{\mathcal{A}_{1}}}(f(x_{2}))
\\
g(y_{1}){\;\models_{\mathcal{A}_{2}}\;}x_{2}
&
\;\Longleftarrow\;
&
y_{1}{\;\models_{\mathcal{A}_{1}}\;}f(x_{2})
\\
\mathcal{A}_{2} = {\langle{X_{2},\mathrmbfit{Y}_{2}}\rangle} 
&
\xleftharpoondown{{\langle{f,\gamma}\rangle}}
&
{\langle{X_{1},\mathrmbfit{Y}_{1}}\rangle} = \mathcal{A}_{1}
\\
\multicolumn{3}{c}{\text{\underline{lax infomorphism}}}\rule[6pt]{0pt}{10pt}
\\&&
\end{array}$}}
\end{center}
\caption{Lax Infomorphisms}
\label{tbl:lax:info}
\end{table}
%


%
\begin{figure}
\begin{center}
{{\begin{tabular}{c@{\hspace{50pt}}c}
{{\begin{tabular}{c}
\setlength{\unitlength}{0.6pt}
\begin{picture}(120,90)(0,0)
\put(0,80){\makebox(0,0){\footnotesize{$X'$}}}
\put(120,80){\makebox(0,0){\footnotesize{$X$}}}
\put(0,0){\makebox(0,0){\footnotesize{${\wp}Y'$}}}
\put(120,0){\makebox(0,0){\footnotesize{${\wp}Y$}}}
\put(-8,40){\makebox(0,0)[r]{\scriptsize{$
\mathrmbfit{Y}'$}}}
\put(128,40){\makebox(0,0)[l]{\scriptsize{$
\mathrmbfit{Y}$}}}
\put(60,94){\makebox(0,0){\scriptsize{$f$}}}
\put(60,-14){\makebox(0,0){\scriptsize{$g^{-1}$}}}
\put(20,80){\vector(1,0){80}}
\put(20,0){\vector(1,0){80}}
\put(0,65){\vector(0,-1){50}}
\put(120,65){\vector(0,-1){50}}
\end{picture}
\end{tabular}}}
&
{{\begin{tabular}{c}
\setlength{\unitlength}{0.6pt}
\begin{picture}(120,90)(0,0)
\put(0,80){\makebox(0,0){\footnotesize{$X'$}}}
\put(120,80){\makebox(0,0){\footnotesize{$X$}}}
\put(60,0){\makebox(0,0){\footnotesize{$\mathrmbf{Set}$}}}
\put(12,40){\makebox(0,0)[r]{\scriptsize{$\mathrmbfit{Y}'$}}}
\put(108,40){\makebox(0,0)[l]{\scriptsize{$\mathrmbfit{Y}$}}}
\put(60,94){\makebox(0,0){\scriptsize{$f$}}}
%
\put(20,80){\vector(1,0){80}}
\put(60,50){\makebox(0,0){{$\xLeftarrow{\;\;\psi\;\;}$}}}
\put(0,65){\vector(1,-1){50}}
\put(120,65){\vector(-1,-1){50}}
\end{picture}
\end{tabular}}}
\\&\\&\\
\textsf{strict}
&
\textsf{lax}
\end{tabular}}}
\end{center}
\caption{Lax Infomorphism}
\label{fig:lax:info:mor}
\end{figure}
%

\comment{ 
\newpage
\subsection{Classifications.}\label{sub:sec:cls:info}

\paragraph{Strict Version.}

A classification 
$\mathcal{A} = {\langle{X,Y,\models_{\mathcal{A}}}\rangle}$
\cite{barwise:seligman:97} 
consists of 
a set of types $X$,
a set of instances (tokens) $Y$,
and a binary (classification) relation
{\footnotesize{$\models_{\mathcal{A}}$}}
between instances and types.
The extent of any type $x \in X$
is the subset
$\mathrmbfit{ext}_{\mathcal{A}}(x) 
= \mathcal{A}_{x} 
= \mathrmbfit{Y}(x) 
= \{ y \in Y \mid y{\;\models_{\mathcal{A}}\;}x \}$.
A classification is equivalent to 
a triple 
$\mathcal{A} = {\langle{X,Y,\mathrmbfit{Y}}\rangle}$
with a type-indexed collection of subsets of instances
$X \xrightarrow{\;\mathrmbfit{Y}\;} {\wp}Y: 
x \mapsto \mathrmbfit{Y}(x)$.
%
\footnote{Attribute classifications are called type domains
in \cite{kent:fole:era:tbl} (\S.2.1).
Both attribute classifications (type domains)
and entity classifications
are used in \cite{kent:fole:era:found} (\S.4.3).}
%
%
An infomorphism
$\mathcal{A}'
={\langle{X',Y',\models_{\mathcal{A}'}}\rangle}
={\langle{X',Y',\mathrmbfit{Y}'}\rangle}
\xrightleftharpoons{{\langle{f,g}\rangle}}
{\langle{X,Y,\mathrmbfit{Y}}\rangle}
={\langle{X,Y,\models_{\mathcal{A}}}\rangle}=\mathcal{A}$
consists of
a type function $X'\xrightarrow{\;f\;}X$
and
an instance function $Y'\xleftarrow{\;g\;}Y$
that satisfy the condition
{\footnotesize{
$g(y){\;\models_{\mathcal{A}'}\;}x'$
\underline{iff}
$y{\;\models_{\mathcal{A}}\;}f(x')$
}\normalsize}
%
for source type $x'{\,\in\,}X'$ and target instance $y{\,\in\,}Y$.
%
\footnote{Equivalently,
\mbox{}\hfill
$g(y){\;\in\;}\mathrmbfit{Y}'(x')
$
\underline{iff}
$y{\;\in\;}
\mathrmbfit{Y}(f(x'))$;
\hfill\mbox{}\newline
equivalently,
\mbox{}\hfill
$g^{-1}(\mathrmbfit{Y}'(x'))=\mathrmbfit{Y}(f(x'))$;
\hfill\mbox{}\newline
equivalently,
\mbox{}\hfill
$X'\xrightarrow{\;\mathrmbfit{Y}'\;}{\wp}Y'\xrightarrow{\;g^{-1}\;}{\wp}Y  = X'\xrightarrow{\;f\;}X\xrightarrow{\;\mathrmbfit{Y}\;}{\wp}Y$.
\hfill\mbox{}}
%
Let $\mathrmbf{Cls}$
denote the context of classifications and infomorphisms.

\paragraph{Lax Version.}

By forgetting the instance set $Y$ in the classification
$\mathcal{A} = {\langle{X,Y,\mathrmbfit{Y}}\rangle}$
we get a lax classification
$\mathcal{A} = {\langle{X,\mathrmbfit{Y}}\rangle}$
with set function
$X \xrightarrow{\;\mathrmbfit{Y}\;} \mathrmbf{Set}$.
%
By forgetting the instance sets and the instance function
in an infomorphism 
$\mathcal{A}'=
{\langle{X',Y',\mathrmbfit{Y}'}\rangle}
\xrightleftharpoons{{\langle{f,g}\rangle}}
{\langle{X,Y,\models_{\mathcal{A}}}\rangle}=\mathcal{A}$
we get a \textbf{(lax)} infomorphism
$\mathcal{A}'={\langle{X',\mathrmbfit{Y}'}\rangle}
\xrightleftharpoons{{\langle{f,\psi}\rangle}}
{\langle{X,\mathrmbfit{Y}}\rangle}=\mathcal{A}$
consisting of
a type function $X'\xrightarrow{\;f\;}X$
and a bridge
$\overset{\mathrmbfit{Y}'}{\mathrmbfit{ext}_{\mathcal{A}'}}
\xLeftarrow{\;\,\psi\;}
{f}{\;\circ\;}
\overset{\mathrmbfit{Y}}{\mathrmbfit{ext}_{\mathcal{A}}}$
with individual instance functions
\newline\mbox{}\hfill
{\footnotesize{$\bigl\{
\overset{{\mathrmbfit{Y}'(x')}}
{\mathrmbfit{ext}_{\mathcal{A}'}(x')}
\xleftarrow{\;\psi_{x'}}
\overset{
{\mathrmbfit{Y}(f(x'))}}
{\mathrmbfit{ext}_{\mathcal{A}}(f(x'))}
\mid x' \in X'
\bigr\}$.}}
\hfill\mbox{}\newline
Let 
$\mathring{\mathrmbf{Cls}}$
denote the context of lax classifications and lax infomorphisms.
\paragraph{Lax Passage.}
There is 
a ``lax'' passage
$\mathrmbf{Cls} \xrightarrow{\mathrmbfit{lax}} \dot{\mathrmbf{Cls}}$.


%
\begin{figure}
\begin{center}
{{\begin{tabular}{c@{\hspace{50pt}}c}
{{\begin{tabular}{c}
\setlength{\unitlength}{0.6pt}
\begin{picture}(120,90)(0,0)
\put(0,80){\makebox(0,0){\footnotesize{$X'$}}}
\put(120,80){\makebox(0,0){\footnotesize{$X$}}}
\put(0,0){\makebox(0,0){\footnotesize{${\wp}Y'$}}}
\put(120,0){\makebox(0,0){\footnotesize{${\wp}Y$}}}
\put(-8,40){\makebox(0,0)[r]{\scriptsize{$
\mathrmbfit{Y}'$}}}
\put(128,40){\makebox(0,0)[l]{\scriptsize{$
\mathrmbfit{Y}$}}}
\put(60,94){\makebox(0,0){\scriptsize{$f$}}}
\put(60,-14){\makebox(0,0){\scriptsize{$g^{-1}$}}}
\put(20,80){\vector(1,0){80}}
\put(20,0){\vector(1,0){80}}
\put(0,65){\vector(0,-1){50}}
\put(120,65){\vector(0,-1){50}}
\end{picture}
\end{tabular}}}
&
{{\begin{tabular}{c}
\setlength{\unitlength}{0.6pt}
\begin{picture}(120,90)(0,0)
\put(0,80){\makebox(0,0){\footnotesize{$X'$}}}
\put(120,80){\makebox(0,0){\footnotesize{$X$}}}
\put(60,0){\makebox(0,0){\footnotesize{$\mathrmbf{Set}$}}}
\put(12,40){\makebox(0,0)[r]{\scriptsize{$\mathrmbfit{Y}'$}}}
\put(108,40){\makebox(0,0)[l]{\scriptsize{$\mathrmbfit{Y}$}}}
\put(60,94){\makebox(0,0){\scriptsize{$f$}}}
%
\put(20,80){\vector(1,0){80}}
\put(60,50){\makebox(0,0){{$\xLeftarrow{\;\;\psi\;\;}$}}}
\put(0,65){\vector(1,-1){50}}
\put(120,65){\vector(-1,-1){50}}
\end{picture}
\end{tabular}}}
\\&\\&\\
\textsf{strict}
&
\textsf{lax}
\end{tabular}}}
\end{center}
\caption{Infomorphism}
\label{fig:info:mor}
\end{figure}
} 


%
\comment{
\begin{figure}
\begin{center}
{{\begin{tabular}{c}
\setlength{\unitlength}{0.6pt}
\begin{picture}(120,90)(0,0)
\put(0,80){\makebox(0,0){\footnotesize{$X'$}}}
\put(120,80){\makebox(0,0){\footnotesize{$X$}}}
\put(0,0){\makebox(0,0){\footnotesize{${\wp}Y'$}}}
\put(120,0){\makebox(0,0){\footnotesize{${\wp}Y$}}}
\put(-8,40){\makebox(0,0)[r]{\scriptsize{$
\mathrmbfit{Y}'$}}}
\put(128,40){\makebox(0,0)[l]{\scriptsize{$
\mathrmbfit{Y}$}}}
\put(60,94){\makebox(0,0){\scriptsize{$f$}}}
\put(60,-14){\makebox(0,0){\scriptsize{$g^{-1}$}}}
\put(20,80){\vector(1,0){80}}
\put(20,0){\vector(1,0){80}}
\put(0,65){\vector(0,-1){50}}
\put(120,65){\vector(0,-1){50}}
\end{picture}
\end{tabular}}}
\end{center}
\caption{Lax Infomorphism}
\label{fig:lax:info:mor}
\end{figure}
}
%

\comment{ 
\begin{aside}\label{aside:cls:lax:cls}
%
%
\begin{tabular}{l}
\textbf{Aside:}
\\
\begin{tabular}{|@{\hspace{5pt}}p{334pt}@{\hspace{5pt}}|}
\hline
{\footnotesize{
A classification 
$\mathcal{A} = {\langle{X,Y,\models_{\mathcal{A}}}\rangle}$
\cite{barwise:seligman:97} 
consists of 
a set of types $X$,
a set of instances (tokens) $Y$,
and a binary (classification) relation
{\footnotesize{$\models_{\mathcal{A}}$}}
between instances and types.
The extent of any type $x \in X$
is the subset
$\mathrmbfit{ext}_{\mathcal{A}}(x) 
= \mathcal{A}_{x} 
= \mathrmbfit{Y}(x) 
= \{ y \in Y \mid y{\;\models_{\mathcal{A}}\;}x \}$.
A classification is equivalent to 
a triple 
$\mathcal{A} = {\langle{X,Y,\mathrmbfit{Y}}\rangle}$
with a type-indexed collection of subsets of instances
$X \xrightarrow{\;\mathrmbfit{Y}\;} {\wp}Y: 
x \mapsto \mathrmbfit{Y}(x)$.
%
\footnote{Attribute classifications are called type domains
in \cite{kent:fole:era:tbl} (\S.2.1).
Both attribute classifications (type domains)
and entity classifications
are used in \cite{kent:fole:era:found} (\S.4.3).}
%
%
An infomorphism
$\mathcal{A}'
={\langle{X',Y',\models_{\mathcal{A}'}}\rangle}
={\langle{X',Y',\mathrmbfit{Y}'}\rangle}
\xrightleftharpoons{{\langle{f,g}\rangle}}
{\langle{X,Y,\mathrmbfit{Y}}\rangle}
={\langle{X,Y,\models_{\mathcal{A}}}\rangle}=\mathcal{A}$
consists of
a type function $X'\xrightarrow{\;f\;}X$
and
an instance function $Y'\xleftarrow{\;g\;}Y$
that satisfy the condition
{\footnotesize{
$g(y){\;\models_{\mathcal{A}'}\;}x'$
\underline{iff}
$y{\;\models_{\mathcal{A}}\;}f(x')$
}\normalsize}
%
for source type $x'{\,\in\,}X'$ and target instance $y{\,\in\,}Y$.
%
\footnote{Equivalently,
\mbox{}\hfill
$g(y){\;\in\;}\mathrmbfit{Y}'(x')
$
\underline{iff}
$y{\;\in\;}
\mathrmbfit{Y}(f(x'))$;
\hfill\mbox{}\newline
equivalently,
\mbox{}\hfill
$g^{-1}(\mathrmbfit{Y}'(x'))=\mathrmbfit{Y}(f(x'))$;
\hfill\mbox{}\newline
equivalently,
\mbox{}\hfill
$X'\xrightarrow{\;\mathrmbfit{Y}'\;}{\wp}Y'\xrightarrow{\;g^{-1}\;}{\wp}Y  = X'\xrightarrow{\;f\;}X\xrightarrow{\;\mathrmbfit{Y}\;}{\wp}Y$.
\hfill\mbox{}}
%
The context of classifications and Infomorphisms
is denoted by $\mathrmbf{Cls}$.


By forgetting the instance set $Y$ in the classification
$\mathcal{A} = {\langle{X,Y,\mathrmbfit{Y}}\rangle}$
we get a lax classification
$\mathcal{A} = {\langle{X,\mathrmbfit{Y}}\rangle}$
with set function
$X \xrightarrow{\;\mathrmbfit{Y}\;} \mathrmbf{Set}$.
%
By forgetting the instance sets and the instance function
in an infomorphism 
$\mathcal{A}'=
{\langle{X',Y',\mathrmbfit{Y}'}\rangle}
\xrightleftharpoons{{\langle{f,g}\rangle}}
{\langle{X,Y,\models_{\mathcal{A}}}\rangle}=\mathcal{A}$
we get a \textbf{(lax)} infomorphism
$\mathcal{A}'={\langle{X',\mathrmbfit{Y}'}\rangle}
\xrightleftharpoons{{\langle{f,\psi}\rangle}}
{\langle{X,\mathrmbfit{Y}}\rangle}=\mathcal{A}$
consisting of
a type function $X'\xrightarrow{\;f\;}X$
and a bridge
$\overset{\mathrmbfit{Y}'}{\mathrmbfit{ext}_{\mathcal{A}'}}
\xLeftarrow{\;\,\psi\;}
{f}{\;\circ\;}
\overset{\mathrmbfit{Y}}{\mathrmbfit{ext}_{\mathcal{A}}}$
with individual instance functions
\newline\mbox{}\hfill
{\footnotesize{$\bigl\{
\overset{{\mathrmbfit{Y}'(x')}}
{\mathrmbfit{ext}_{\mathcal{A}'}(x')}
\xleftarrow{\;\psi_{x'}}
\overset{
{\mathrmbfit{Y}(f(x'))}}
{\mathrmbfit{ext}_{\mathcal{A}}(f(x'))}
\mid x' \in X'
\bigr\}$.}}
\hfill\mbox{}\newline
Let 
$\mathring{\mathrmbf{Cls}}$
denote the context of lax classifications and lax infomorphisms.
There is 
a ``lax'' passage
$\mathrmbf{Cls} \xrightarrow{\mathrmbfit{lax}} \dot{\mathrmbf{Cls}}$.
}}
\\\hline
\end{tabular}
\\
\end{tabular}
\end{aside}
} 





\end{document}